\documentclass[11pt]{article}

\usepackage[utf8]{inputenc}
\usepackage[margin=1in]{geometry}
\usepackage{amsmath}
\usepackage{amssymb}
\usepackage{amsthm}
\usepackage{mathtools}
\usepackage{bbm}
\usepackage{bm}
\usepackage{color}
\usepackage{graphicx}
\usepackage{array}
\usepackage{multirow}
\usepackage{enumitem}
\usepackage[ruled,vlined]{algorithm2e}
\usepackage{relsize}
\usepackage[dvipsnames]{xcolor}
\usepackage[linktocpage=true,
	pagebackref=true,colorlinks,
	linkcolor=BrickRed,citecolor=blue]
	{hyperref}
\usepackage{cleveref}
\usepackage[nottoc]{tocbibind}
\usepackage[normalem]{ulem}

\usepackage{thmtools,thm-restate}
\usepackage{nicefrac}
\usepackage{makecell}

\newcommand{\vb}{\boldsymbol}

\newcommand{\amplify}{\mathord{\raisebox{0.0ex}{\scalebox{0.75}{$+$}}}}

\DeclareMathOperator*{\One}{\mathbbm{1}}
\DeclareMathOperator*{\E}{\mathbb{E}}

\DeclareMathOperator{\diff}{\triangle}
\DeclareMathOperator{\contract}{/}
\DeclareMathOperator{\given}{\vert}

\DeclareMathOperator{\bbgiven}{\bigg\vert}
\DeclareMathOperator{\mutual}{;\,}

\DeclareMathOperator*{\TV}{TV}
\DeclareMathOperator*{\Ent}{\mathbb{H}}

\DeclareMathOperator*{\I}{\mathbb{I}}

\DeclareMathOperator*{\argmax}{argmax}

\DeclareMathOperator{\ALG}{ALG}

\DeclareMathOperator{\MECH}{MECH}

\DeclareMathOperator{\poly}{poly}
\DeclareMathOperator{\Law}{Law}

\DeclareMathOperator{\Unif}{Unif}
\DeclareMathOperator{\sign}{sign}

\newcommand{\DSICgs}{\ensuremath{\mathsf{DSIC_{GS}}}}
\newcommand{\DSICxos}{\ensuremath{\mathsf{DSIC_{XOS}}}}
\newcommand{\DSICgen}{\ensuremath{\mathsf{DSIC_{GEN}}}}
\newcommand{\GS}{\ensuremath{\mathsf{GS}}}
\newcommand{\XOS}{\ensuremath{\mathsf{XOS}}}
\newcommand{\GEN}{\ensuremath{\mathsf{GEN}}}
\newcommand{\VAL}{\ensuremath{\mathsf{VAL}}}

\newcommand{\calA}{\mathcal{A}}
\newcommand{\calB}{\mathcal{B}}

\newcommand{\calD}{\mathcal{D}}
\newcommand{\calE}{\mathcal{E}}
\newcommand{\calF}{\mathcal{F}}

\newcommand{\calH}{\mathcal{H}}
\newcommand{\calI}{\mathcal{I}}

\newcommand{\calM}{\mathcal{M}}
\newcommand{\calN}{\mathcal{N}}

\newcommand{\calP}{\mathcal{P}}

\newcommand{\calR}{\mathcal{R}}

\newcommand{\calU}{\mathcal{U}}
\newcommand{\calV}{\mathcal{V}}

\newcommand{\calX}{\mathcal{X}}
\newcommand{\calY}{\mathcal{Y}}

\let\abs\relax
\DeclarePairedDelimiter{\abs}{\lvert}{\rvert}

\newcommand{\IGNORE}[1]{}

\newcounter{note}[section]

\newcommand{\PreserveBackslash}[1]{\let\temp=\\#1\let\\=\temp}
\newcolumntype{C}[1]{>{\PreserveBackslash\centering}p{#1}}
\newcolumntype{R}[1]{>{\PreserveBackslash\raggedleft}p{#1}}
\newcolumntype{L}[1]{>{\PreserveBackslash\raggedright}p{#1}}

\newcommand{\ols}[1]{\mskip.5\thinmuskip\overline{\mskip-.5\thinmuskip {#1} \mskip-.5\thinmuskip}\mskip.5\thinmuskip} 
\newcommand{\olsi}[1]{\,\overline{\!{#1}}} 
\makeatletter
\newcommand\closure[1]{
  \tctestifnum{\count@stringtoks{#1}>1} 
  {\ols{#1}} 
  {\olsi{#1}} 
}
\long\def\count@stringtoks#1{\tc@earg\count@toks{\string#1}}
\long\def\count@toks#1{\the\numexpr-1\count@@toks#1.\tc@endcnt}
\long\def\count@@toks#1#2\tc@endcnt{+1\tc@ifempty{#2}{\relax}{\count@@toks#2\tc@endcnt}}
\def\tc@ifempty#1{\tc@testxifx{\expandafter\relax\detokenize{#1}\relax}}
\long\def\tc@earg#1#2{\expandafter#1\expandafter{#2}}
\long\def\tctestifnum#1{\tctestifcon{\ifnum#1\relax}}
\long\def\tctestifcon#1{#1\expandafter\tc@exfirst\else\expandafter\tc@exsecond\fi}
\long\def\tc@testxifx{\tc@earg\tctestifx}
\long\def\tctestifx#1{\tctestifcon{\ifx#1}}
\long\def\tc@exfirst#1#2{#1}
\long\def\tc@exsecond#1#2{#2}
\makeatother

\newtheorem{theorem}{Theorem}[section]

\Crefname{claim}{Claim}{Claims}
\newtheorem{proposition}[theorem]{Proposition}
\newtheorem{lemma}[theorem]{Lemma}
\newtheorem{corollary}[theorem]{Corollary}

\newtheorem{observation}[theorem]{Observation}
\newtheorem{fact}[theorem]{Fact}
\Crefname{fact}{Fact}{Facts}
\theoremstyle{definition}

\newtheorem{definition}[theorem]{Definition}

\makeatletter

\renewcommand*{\theHtheorem}{\theHsection.\arabic{theorem}}

\@for\thmenv:=%
  claim,%
  proposition,%
  lemma,%
  corollary,%
  conjecture,%
  observation,%
  fact,%
  example,%
  question,%
  definition,%
  assumption,%
  remark,%
  problem,%
  intuition%
\do{%
  \expandafter\let
    \csname theH\thmenv\endcsname
    \theHtheorem
}

\makeatother

\title{The Hardness of Dominant Strategy Mechanism Design, Revisited}
\author{Frederick V.\ Qiu}

\allowdisplaybreaks[3]

\begin{document}

\maketitle

\begin{abstract}
    We study the communication complexity of \emph{dominant-strategy incentive-compatible} (DSIC) mechanisms for combinatorial auctions. For $\gamma \in [\log m, m]$, let $\DSICgen(m, \gamma)$, $\DSICxos(m, \gamma)$, and $\DSICgs(m, \gamma)$ denote the best approximation ratio attainable by a deterministic, individually rational, no-negative-transfers, DSIC mechanism using at most $2^\gamma$ communication over $m$ items, for general monotone, XOS, and gross substitutes (GS) valuations, respectively. We give a unified proof that shows $\DSICgen(m, \gamma) = \Omega(m/\gamma)$, $\DSICxos(m, \gamma) = \Omega((m/\gamma)^{1/5})$, and $\DSICgs(m, \gamma) = \Omega((m/\gamma)^{1/7})$. The GS lower bound answers an open question of~\cite{DobzinskiRV22}: although polynomial-communication welfare maximization for GS valuations admits a polynomial-communication deterministic truthful mechanism via VCG, no good approximation is possible in polynomial-communication for deterministic DSIC mechanisms. Additionally, the general lower bound establishes that $\DSICgen(m, \gamma) = \Theta(m/\gamma)$ due to the deterministic DSIC $O(m/\gamma)$-approximation of~\cite{QiuW24}.

    We also obtain several results in the two-bidder setting. We show that attaining a $1.0001$-approximation for two weighted matroid-rank valuations (a subclass of GS) with a universally DSIC mechanism requires exponential communication. On the other hand, we give a polynomial-communication $(1+\sqrt{5})/2 \approx 1.618$-approximation for two submodular bidders using a universally DSIC mechanism. Prior to this work, it was not known whether even a polynomial-communication universally truthful mechanism could beat a $2$-approximation for two submodular valuations, nor whether a polynomial-communication universally DSIC mechanism could beat a $2$-approximation for two weighted matroid rank valuations.
\end{abstract}

\section{Introduction}

In a combinatorial auction, there is a set of items $M \coloneqq [m]$ we wish to allocate among $n$ bidders. Each bidder $i \in [n]$ holds a private monotone valuation function $v_i : 2^M \to \mathbb{R}_+$. An \emph{allocation} is an $n$-tuple of pairwise disjoint subsets of $M$. The objective is to find an allocation $(S_1, \dots, S_n)$ that approximately maximizes the social welfare $\sum_{i \in [n]} v_i(S_i)$.

Since each bidder $i$ only knows their own valuation $v_i$, two challenges arise:
\begin{enumerate}[topsep=1pt,itemsep=1pt]
    \item The bidders must communicate in order to find a high welfare allocation.
    
    \item A bidder may attempt to strategically manipulate the protocol by lying about their valuation, reducing overall welfare in order to achieve higher personal utility.
\end{enumerate}

Combinatorial auctions are well understood with respect to challenge (1). Ignoring incentives, polynomial-communication algorithms can attain a nearly tight $\Theta(\sqrt m)$-approximation for general valuations~\cite{LehmannOS02,NisanS06,BlumrosenN05a}, tight and near-tight $\Theta(1)$-approximations for subadditive, XOS, and submodular valuations~\cite{Feige09,DobzinskiNS10,EzraFNTW19,DobzinskiV13,FeigeV10}, and exact welfare maximization for gross substitutes (GS) valuations~\cite{NisanS06}.

Combinatorial auctions are also well understood with respect to challenge (2). A \emph{mechanism} outputs both an allocation $(S_1, \dots, S_n)$ and prices $(p_1, \dots, p_n)$; for bidders seeking to maximize their utility $v_i(S_i) - p_i$, a carefully designed price can incentivize them to behave \emph{truthfully} in the protocol. The VCG mechanism~\cite{Vickrey61,Clarke71,Groves73} shows how to design a truthful mechanism given a welfare-maximizing algorithm. However, welfare-maximization (and hence the VCG auction) requires exponential communication for all previously mentioned domains except GS.

When both challenges are considered together, surprisingly little is known. The best known deterministic truthful mechanisms achieve an $O(m/\log m)$-approximation for general valuations and $O(\sqrt{m/\log m})$-approximation for subadditive, XOS, and submodular valuations~\cite{QiuW24}, nearly a factor of $\sqrt{m}$ worse than their fully algorithmic counterparts. Meanwhile, lower bounds separating deterministic truthful mechanisms from algorithms are known only in restricted settings~\cite{AssadiKSW20,RonTWZ24}. As such, understanding the interplay between efficiency and incentive compatibility is a central problem at the intersection of Economics and Computing.

\subsection{Notions of Truthfulness}

To be exact, a mechanism describes a communication protocol and prescribes a strategy for each bidder depending on their valuation. In the combinatorial auctions literature, we say that a mechanism is \emph{truthful} if each bidder following their prescribed strategy results in an ex-post Nash equilibrium. In that sense, truthfulness is the guarantee that each bidder is best off playing truthfully, assuming the other bidders are playing consistently with some prescribed strategy.

However, we could also consider a \emph{dominant-strategy incentive-compatible} (DSIC) mechanism, where each bidder is best off playing truthfully, \emph{no matter how} the other bidders behave, be it rationally, adversarially, nonsensically, etc.

A natural question is whether DSIC mechanism design is inherently more challenging than truthful mechanism design.~\cite{RubinsteinSTWZ21} study precisely this question, and are the first to show that there exist (artificially constructed) domains where DSIC mechanisms require exponentially more communication than truthful mechanisms.~\cite{DobzinskiRV22} later studied this question in the natural setting of combinatorial auctions. Among other results, they prove that
\begin{itemize}[topsep=1pt,itemsep=1pt]
    \item No polynomial-communication DSIC mechanism can maximize the welfare for GS valuations.
    \item No polynomial-communication deterministic DSIC mechanism can attain an $m^{1-o(1)}$-approx-imation for general valuations.
\end{itemize}

Welfare maximization in polynomial communication is possible for GS valuations, so VCG yields truthful mechanisms doing the same. Thus, the first result directly extends the result of~\cite{RubinsteinSTWZ21} to combinatorial auction domains.

The second result will be the focus of this paper: it gives a hardness-of-approximation result for deterministic DSIC mechanisms for general valuations.~\cite{DobzinskiRV22} leaves open whether it is also hard to approximate the welfare for GS valuations with a deterministic DSIC mechanism.

Note that while our main results will concern \emph{deterministic} DSIC mechanisms, we will also study \emph{universally} DSIC, a randomized analogue of DSIC that we will formally define in~\Cref{sec:Preliminaries}. Randomization allows us to design more powerful mechanisms, but there are reasons to study both the randomized and deterministic settings. We defer further discussion to the related work section.

\subsection{Results}

For a parameter $\gamma$, let $\DSICgs(m, \gamma)$, $\DSICxos(m, \gamma)$, and $\DSICgen(m, \gamma)$ denote the optimal approximation ratio attainable on $m$ items with $2^\gamma$ communication with a deterministic, individually rational (IR), no-negative-transfers (NNT), DSIC mechanism for GS valuations, XOS valuations, and general valuations, respectively.\footnote{Since our GS construction only uses convolutions of weighted matroid rank valuations, the lower bound also holds for the class of Rado valuations, which are a strict subclass of GS.}

\begin{theorem} \label{thm:GSDSICHardness}
    $\DSICgs(m, \gamma) = \Omega((m/\gamma)^{1/7})$ for $\gamma \in [\log m, m]$.
\end{theorem}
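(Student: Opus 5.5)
We plan to follow the template of~\cite{DobzinskiRV22} for general valuations, packaged so that the valuation class only enters through one embedding step. The first step is to reduce deterministic DSIC mechanisms to a combinatorial object. By the taxation principle, a deterministic, IR, NNT, DSIC mechanism corresponds, for each bidder $i$ and each $v_{-i}$, to a menu of (bundle, price) pairs from which bidder $i$ receives a utility-maximizing option. Dominant-strategy incentive compatibility is stronger than ex-post equilibrium: the menu offered to $i$ must be well defined on every transcript of the other bidders, including off-equilibrium ones. We will use this to argue that a protocol of cost $2^\gamma$ yields, for a designated bidder (say bidder $1$), a deterministic object with at most $2^{2^{\gamma}}$ distinct menus. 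More usefully, by the simultaneous/sequential simulation arguments of~\cite{DobzinskiRV22}, we get a family of at most $2^{O(\gamma)}$ \emph{items or bundles} that are ``cheaply priced'' in the menus of the structure. The concrete intermediate goal is the following structural lemma. If the mechanism achieves an $\alpha$-approximation, then there is a bidder and a set $T$ of at most $\mathrm{poly}(\gamma)\cdot \alpha^{c}$ items such that the mechanism essentially only ever allocates high value to that bidder on bundles intersecting $T$, or else it must ``commit'' to many disjoint bundles.

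The second step is the construction of hard instances, using only convolutions of weighted matroid rank functions so that everything lies in $\GS$ (indeed Rado). Split $M$ into $k \approx (m/\gamma)^{\beta}$ blocks. Each bidder's valuation is a weighted matroid rank function: a partition matroid over the blocks plus a small number of ``special'' items carrying weight $\approx k$. Welfare is $\Theta(k)$ only if special items are routed correctly, and $O(1)$ otherwise. The hidden structure lives in which items are special. Because a GS valuation cannot express the strong complementarities used for general valuations, the gap between the planted optimum and what a menu-based mechanism can guarantee shrinks polynomially. Balancing the block size $k$, the number of candidate special sets (which must exceed $2^{O(\gamma)}$ to fool any menu family that is describable in $2^\gamma$ communication), and the loss from matroid smoothing should yield the exponent $1/7$. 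In comparison, the same accounting gives exponent $1/5$ for $\XOS$ and $1$ for general valuations.

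The third step is the counting/fooling argument. Fix the menu family induced on bidder $1$ by all possible behaviors of the others. Since DSIC forces each menu price to be independent of bidder $1$'s report, a union bound over the $\le 2^{2^\gamma}$ menus shows the following: for a random choice of the special structure, no menu simultaneously prices the planted good bundle cheaply and the decoy bundles expensively. Hence some bidder takes a decoy, and welfare is at most $O(k^{1-\beta'})$ against an optimum of $\Theta(k)$. For the counting to go through we need the number of planted structures to be $\exp(\Omega(m/\mathrm{poly}(k)))$ while the menus number only $2^{2^\gamma}$. This is the point where the condition $\gamma \ge \log m$ and the ratio $m/\gamma$ enter.

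The main obstacle is the second step. We must make the instances GS while preserving a polynomial gap and keeping the decoy and planted bundles indistinguishable to the menu structure. Weighted matroid rank functions are close to additive, so the gadget will likely need convolutions of two such functions. We also need to check that the resulting valuations still have the ``marginal-value'' structure that DSIC pricing can exploit. If the direct gadget fails, we would first try an amplification: nest the construction $O(1)$ levels deep, losing a constant in the exponent at each level. That nesting is plausibly the source of the $7$.
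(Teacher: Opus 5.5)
Your proposal does not contain a working argument yet, and two of its concrete components fail.

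\textbf{The hard family is too small.} Suppose every valuation in your family can be written in $L$ bits. Then one simultaneous round of direct revelation followed by Clarke-pivot VCG uses $nL$ communication. It is deterministic, IR, NNT and exactly optimal. It is also DSIC, because a single simultaneous round leaves no room for contingent strategies. A valuation of the form ``a partition matroid over blocks plus a few special items'' has $L = \mathrm{poly}(m)$. So no instance of that shape can give a lower bound once $2^\gamma \ge n\cdot\mathrm{poly}(m)$.

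The family must instead be doubly exponential in $\gamma$. In the paper, each bidder holds a vector $\vb{x}_i \in \{0,1\}^\lambda$ over the $\lambda = 2^{\Theta(\gamma)}$ sets of the Balcan--Harvey construction (\Cref{thm:BalcanHarvey}). The referee does not know which index $\vartheta$ matters, so each group is effectively playing an $\eta$-ary INDEX game.

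\textbf{The counting in Step 3 goes the wrong way.} You compare $\exp(\Omega(m/\mathrm{poly}(k)))$ planted structures against $2^{2^\gamma} \ge 2^m$ menus. A union bound over the menus would need per-menu failure probability about $2^{-2^\gamma}$, which your planted family is far too small to provide for every $\gamma \ge \log m$.

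\textbf{The menu claim is false in this model.} You assert that ``DSIC forces each menu price to be independent of bidder $1$'s report.'' Under DSIC, the other bidders may play strategies contingent on bidder $1$'s earlier messages. That freedom is the lever the lower bound pulls, not something you can assume away. Relatedly, the Step 1 structural lemma is never stated precisely or derived, and nothing explains why the truthful allocation must ``take a decoy.''

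\textbf{The paper's reduction does not use menus.} It proceeds in three stages.
\begin{enumerate}
\item It locates the first informative node and rescales so that this node is the root (\Cref{lemma:FindingTheNode,lemma:MinimalMechanism}).
\item A contingent strategy conditioned on $i^\dagger$'s first message shows that every other bidder must commit, in the first round, to its value for $\dagger$ (\Cref{lemma:MustSignalSpecialOutcome}).
\item First-round messages are turned into a simultaneous algorithm whose referee computes best-guarantee valuations. This algorithm is ruled out by the information-theoretic bound of \Cref{thm:KSSHDIsHard}.
\end{enumerate}

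\textbf{GS needs amplification, which your ``nesting'' does not supply.} For GS valuations, a positive guaranteed utility does not certify a constant fraction of value. The paper therefore has bidders randomly double a sparse set of coordinates via $u_{\vb{x}} = r_{\vb{x}^-} \star (2 r_{\vb{x}^+})$, which is a convolution of weighted matroid ranks and hence GS. A short first message rarely rules out a doubling at $\vartheta$ (\Cref{lemma:ObscuredAmplification}). Combining IR on the undoubled outcome, DSIC for the doubled type, and a second contingent-strategy argument gives bidder $i_k$ a value guarantee against every message-consistent continuation (\Cref{lemma:HallucinatedGuarantee}). That guarantee is roughly $\abs{S \cap A_k} - 2\abs{S \cap B} - 10\alpha/\eta^2$, where $S$ is what $i_k$ receives under amplified truthful play.

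\textbf{The exponent $1/7$ does not come from nesting.} It comes from the Balcan--Harvey parameters. The sets have size $d = \alpha$ and low rank $b = 10\gamma = 10\alpha/\eta^2$, so decoy bidders contribute only $O(\alpha/\eta)$ per group. The ground set has size $\beta = 100d^3/b^2$, which must equal $m/\eta$ so that the shared pool $B$ contributes only $O(m/\eta)$. Solving gives $m = \gamma\eta^7$. For XOS, random sets need only $\beta = \alpha^2/\gamma$, giving $m = \gamma\eta^5$.
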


\begin{theorem} \label{thm:XOSDSICHardness}
    $\DSICxos(m, \gamma) = \Omega((m/\gamma)^{1/5})$ for $\gamma \in [\log m, m]$.
\end{theorem}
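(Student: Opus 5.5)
We would prove \Cref{thm:XOSDSICHardness} through the same template that underlies the general-valuation bound of~\cite{DobzinskiRV22}, but with a more careful, parameterized construction. The template reduces deterministic DSIC mechanisms to a simple combinatorial object and then shows that this object cannot approximate well.

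\textbf{Step 1: characterize low-communication DSIC mechanisms.} We first apply the taxation principle. A deterministic, IR, NNT, DSIC mechanism with communication $2^\gamma$ induces, for each bidder $i$ and each fixed behavior of the others, a menu of (bundle, price) pairs. Bidder $i$ receives a utility-maximizing menu entry. Because the mechanism must be DSIC against \emph{arbitrary} behavior of the others, every transcript-consistent behavior of the others fixes such a menu. The protocol tree then shows that the mechanism is determined by at most $2^{2^\gamma}$-ish leaves. Following~\cite{DobzinskiRV22}, we then argue that DSIC plus limited communication forces a ``simultaneous'' structure. We fix bidder $2$'s valuation from a small family and look at bidder $1$'s menu. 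If bidder $1$ is allowed to deviate adversarially, the menu $\mathcal{M}_1$ presented to bidder $1$ can depend on bidder $2$ only through a set of $2^\gamma$ possibilities. Each menu then has only $2^{\gamma}$ effective bundles with finite price, since bundles that are never output can be priced at $\infty$.

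\textbf{Step 2: construct a hard distribution.} Next we build XOS valuations over a random family $\mathcal{F}$ of $\exp(\Omega(\gamma))$-many ``large'' bundles, each of size $k$, with small pairwise intersections. Bidder $1$ takes additive clauses on a hidden collection of such sets, and bidder $2$ takes values on a complementary structure. An XOS valuation is a max of additive clauses, so planting a single high-value set $T$ is natural: $v(S)=\max(\,|S\cap T|,\ \epsilon|S|)$. We choose the bundle size $k$, the background weight $\epsilon$, and the number of candidate sets as powers of $m/\gamma$. A counting or union-bound argument over the at most $2^\gamma$ menus then shows the following: with positive probability, no menu bundle of size $\ll k$-overlap captures the planted set. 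Since only $2^\gamma$ bundles are available, a random planted $T$ overlaps each by roughly $k^2/m$ plus a $\sqrt{\gamma}$-type deviation term.

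\textbf{Step 3: argue the approximation gap.} Finally we consider the two cases. Either the menu price of bundles near $T$ is low, in which case we swap roles so that bidder $2$'s planted set forces a welfare-losing allocation, or it is high, in which case bidder $1$ picks a low-value bundle. In both cases the welfare is $O(\text{overlap}+\epsilon m)$ against an optimum of $\Omega(k)$. Balancing $k$, $\epsilon$, and the concentration error $\sqrt{k\gamma}$ against $k^2/m$ gives the exponent $1/5$.

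\textbf{Main obstacle.} The hard step is Step 1's structural claim that the menu depends on the other bidder only through few possibilities. We would first try adapting the DRV ``adversarial opponent'' argument. The key is that DSIC forces bidder $1$'s allocation to be optimal for every possible opponent transcript, so pricing must be consistent across exponentially many opponents with only $2^\gamma$ bits to distinguish them.
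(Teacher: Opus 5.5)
\textbf{Two bidders cannot give this bound.} Your Steps 1--3 pit bidder $1$ against bidder $2$ (``swap roles''). With a constant number of bidders, no super-constant lower bound is possible. Consider the mechanism that asks each bidder for $v_i(M)$, gives $M$ to the highest bid, and charges the second-highest bid. It is a deterministic, IR, NNT, DSIC mechanism with $O(nm)$ communication, and by monotonicity $\max_i v_i(M)\ge\frac1n\sum_i v_i(X_i)$. So the gap you claim in Step 3 (welfare $O(\text{overlap}+\epsilon m)$ against an optimum of $\Omega(k)$) cannot occur.

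The paper's instance (\Cref{def:KSSHD}, \Cref{prop:XOSKSSHD}) instead has $\kappa=m/(2\alpha)$ groups of $\eta=(m/\gamma)^{1/5}$ bidders. In each group a single hidden bidder values the group's set $A_k$, which is the image of a uniformly random one of $\lambda=2^\gamma$ known sets $C_\ell$. The hardness is identifying that bidder, an INDEX-type problem (\Cref{thm:KSSHDIsHard}), not keeping the right bundle off a menu. The exponent $1/5$ comes from two requirements on overlaps of random $\alpha$-subsets of $[\beta]$ with $\beta=m/\eta$. These overlaps are about $\alpha^2\eta/m$ and must concentrate at scale $\gamma$ to survive a union bound over $\lambda^2$ pairs. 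They must also be $O(\alpha/\eta^2)$, so that the $\eta-1$ decoys per group contribute only $O(\alpha/\eta)$. Setting $\alpha/\eta^2=\gamma=\alpha^2\eta/m$ gives $\eta^5=m/\gamma$.

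\textbf{Step 1 is unsupported, and it is exactly where DSIC has to enter.} With $2^\gamma\ge m$ bits the protocol can have $2^{2^\gamma}\ge 2^m$ leaves. So a single menu may list essentially every bundle, and nothing limits its dependence on the opponent to $2^\gamma$ possibilities; the counting and union bound of Step 2 therefore collapse.

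The missing ingredient is the contingent-strategy obstruction, which the paper uses twice.
\begin{itemize}
\item First, add a special item $\dagger$ and locate the first node where some bidder $i^\dagger$ separates two nearby valuations (\Cref{prop:InformativeNode}). Another bidder $i$ could behave as a low-$\dagger$-value type after one of $i^\dagger$'s messages and as a high-$\dagger$-value type after the other. That would give $i^\dagger$ a profitable deviation, so DSIC forces every $i\ne i^\dagger$ to commit, already in the first round, to a signal about its value for $\dagger$ (\Cref{lemma:MustSignalSpecialOutcome}).
\item Second, amplification: independently double a sparse random set of clauses, so that a short first message is typically consistent with a valuation that doubles the hidden special set. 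Combined with IR, DSIC and a second contingent strategy, this yields a utility guarantee comparable to the value bidder $i_k$ obtains from $A_k$. The referee can compute this guarantee from first-round messages alone (\Cref{lemma:HallucinatedGuarantee}).
\end{itemize}
The second step is indispensable for XOS, because positive utility no longer certifies full value. Together these convert the mechanism into a simultaneous algorithm (\Cref{thm:DSICToSimultaneousAlgorithm}), contradicting \Cref{thm:KSSHDIsHard}. Your outline has no counterpart to either step.
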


\begin{theorem} \label{thm:GeneralDSICHardness}
    $\DSICgen(m, \gamma) = \Theta(m/\gamma)$ for $\gamma \in [\log m, m]$.
\end{theorem}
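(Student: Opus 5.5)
The upper bound $\DSICgen(m,\gamma) = O(m/\gamma)$ comes directly from the deterministic DSIC $O(m/\gamma)$-approximation of~\cite{QiuW24}. That mechanism uses $2^{O(\gamma)}$ communication, is IR and NNT, and therefore supplies the matching upper bound after rescaling $\gamma$ by a constant. So the real content is the lower bound $\Omega(m/\gamma)$. I would prove it by a reduction from a DSIC mechanism with small communication to a structured object, following the template of~\cite{DobzinskiRV22}, and then run a counting argument that charges $\gamma$ bits of communication against the number of ``menus'' the mechanism can present.

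\textbf{Step 1: taxation principle.} Take a deterministic, IR, NNT, DSIC mechanism using $2^\gamma$ bits. Fix bidder 1 and any behavior of the other bidders. Because the mechanism is DSIC for every behavior of the others, bidder 1 faces a menu of (bundle, price) pairs and receives a utility-maximizing entry. The key consequence of dominant strategies, as opposed to ex-post equilibria, is that the communication transcript restricted to bidder 2's side determines a menu for bidder 1. Since bidder 2's messages range over at most $2^{2^\gamma}$ transcripts, at most that many distinct menus arise. More usefully, a rectangle or protocol-tree argument shows that bidder 1's menu is determined by $\le 2^\gamma$ bits. Each menu is a set of (bundle, price) pairs, and NNT together with IR forces the price of the empty set to be $0$.

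\textbf{Step 2: hard instance.} I would use two bidders and split the items into $k \approx m/\gamma$ blocks $B_1,\dots,B_k$ of $\gamma$ items each. Bidder 2 is single-minded with value $1$ on a single block-level target. Bidder 1 has value $1$ on a hidden family of bundles that is large, roughly $2^{\Theta(\gamma)}$ bundles per block, which is where the block size $\gamma$ is used. A good approximation must hand bidder 1 some bundle it values, while avoiding giving everything to bidder 1 when bidder 2's target is disjoint. I would argue that, for an $o(k)$-approximation, bidder 1's menu must price low, at below $1$, some bundle in nearly every block. At the same time, monotonicity/IR forces consistency across the $\sim 2^{\Theta(\gamma)}$ candidate families. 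Counting then shows that $2^\gamma$ bits of menu description cannot ``cover'' enough of the candidate families unless the approximation ratio is $\Omega(k)=\Omega(m/\gamma)$.

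\textbf{Main obstacle.} I expect the hardest step to be the combinatorial covering argument. It must show that any menu describable with few bits either gives bidder 1 a cheap bundle that intersects bidder 2's target in most blocks, which destroys bidder 2's welfare, or gives bidder 1 nothing valuable, which destroys bidder 1's welfare. The first thing I would try is a probabilistic argument: choose bidder 1's family uniformly at random among $\binom{2^\gamma}{\cdot}$ options, and union-bound over the at most $2^{2^\gamma}$ possible menus. The aim is to show that, with positive probability, every menu fails on some block choice for bidder 2, losing a factor of $k$.
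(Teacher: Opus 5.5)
\textbf{Step~2 fails because you use only two bidders.} Your upper bound is fine: it is exactly the paper's appeal to the deterministic DSIC $O(m/\gamma)$-approximation of~\cite{QiuW24}. But no two-bidder instance can force a ratio above $2$. For any $n$, run the Vickrey auction on the grand bundle: each bidder reports $v_i(M)$, and the highest report wins all of $M$ and pays the second-highest report. This mechanism is deterministic, one-round, DSIC, IR and NNT, and uses $O(nm)$ bits. By monotonicity it achieves $\max_i v_i(M) \geq \max_i v_i(S_i) \geq \frac{1}{n}\sum_i v_i(S_i)$ for every allocation $(S_1,\dots,S_n)$, so it is an $n$-approximation. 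Hence your claim that an $o(k)$-approximation forces bidder~1's menu to price some bundle below $1$ in nearly every block is false, and no covering or union-bound argument over menus can rescue it. In your instance all values are at most $1$, so any allocation giving either bidder something it values is already within a factor $2$ of optimal.

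Any $\Omega(m/\gamma)$ lower bound must use $\Omega(m/\gamma)$ bidders. The paper (\Cref{prop:GeneralKSSHD}) uses $\kappa = m/(200\gamma)$ groups of $\eta = m/(200\gamma)$ bidders each. In each group one hidden bidder values $A_k$ at $\alpha = 100\gamma$, and all non-special sets pairwise intersect. Welfare therefore hinges on identifying the hidden bidder in every group.

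\textbf{Step~1 does not use what DSIC buys over truthfulness.} The taxation principle, and the bound of $2^{2^\gamma}$ on the number of leaves or menus, hold equally for ex-post truthful mechanisms. For those, no nontrivial lower bound is known for general valuations, so a menu-counting argument is not expected to reach the bound. Moreover, your claim that bidder~1's menu is determined by $2^\gamma$ bits is unjustified under DSIC: the menu depends on bidder~2's entire history-dependent strategy.

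The missing idea is the contingent-strategy argument, which the paper uses to turn a DSIC mechanism into a \emph{simultaneous} algorithm. It proceeds in three stages.
\begin{enumerate}
\item Find the first informative node and a bidder $i^\dagger$ whose first message separates two nearby valuations. A contingent strategy against $i^\dagger$ then forces every other bidder to reveal, already in the first round, a threshold for its value of the special outcome, namely receiving all of $M$ (\Cref{lemma:MustSignalSpecialOutcome}).
\item Amplification randomly doubles a sparse set of clause coefficients, so a short first message is typically consistent with a doubled valuation. IR, DSIC and a second contingent-strategy argument then certify that the hidden bidder's first message guarantees value about $\Delta_k$ against every continuation (\Cref{lemma:HallucinatedGuarantee}).
\item The referee can now allocate from first messages alone (\Cref{thm:DSICToSimultaneousAlgorithm}). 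This contradicts the INDEX-style information-theoretic bound of \Cref{thm:KSSHDIsHard}, with $\lambda = 2^{\gamma}m^{9}$.
\end{enumerate}
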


Hence, we obtain the first super-constant hardness-of-approximation results for deterministic DSIC mechanisms on canonical complement-free valuation classes. In particular, the GS result answers the open question of~\cite{DobzinskiRV22}. Additionally, the hardness-of-approximation for general valuations is tight up to constant factors, due to the deterministic DSIC $O(m/\gamma)$-approximation by~\cite{QiuW24}.

We also point out that the $O(m/\gamma)$-approximation is DSIC, maximal-in-range (MIR), value query implementable, and computationally efficient~\cite{QiuW24}. Despite the fact that we have no non-trivial lower bounds for the full class of deterministic truthful mechanisms for general valuations, we now know that even requiring any one of these four conditions (DSIC, MIR, value query implementable, computationally efficient) immediately runs into a tight impossibility result~\cite{QiuW24,BlumrosenN05a,DanielySS15}.

We remark that a single unified proof yields all three of the above theorems.

The hard instances used in our main results require many bidders and do not generalize to few-bidder settings. We therefore provide hardness-of-approximation results in the two-bidder setting.

\begin{theorem} \label{thm:TwoBidderSimultaneousHardness}
    For every constant $\varepsilon \in (0, 1/16]$, any simultaneous algorithm attaining a $(15/16+\varepsilon)$-approximation for two matroid rank valuations uses exponential communication.
\end{theorem}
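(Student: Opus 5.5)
We prove this with a direct reduction from a hard communication problem for simultaneous protocols: each bidder sends one message depending only on their own valuation, and a referee outputs an allocation. Take a family $\calF \subseteq 2^{[m]}$ of $N = 2^{\Omega(m)}$ sets, each of size $m/2$, such that any two distinct $T \neq T'$ have $|T \cap T'| \le (1/4+\delta)m$. A random family works by Chernoff and a union bound. For each $T \in \calF$ we define two matroid rank functions: a ``Alice-type'' $a_T$ and a ``Bob-type'' $b_T$.

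The planned instance is a partition-like matroid in which each item of $T$ is worth $1$ and each item of $\overline{T}$ is worth $1/2$. Concretely, Alice's matroid has rank $m/2$ and makes $T$ a basis while treating $\overline{T}$ as less useful. A candidate is the truncation $a_T(S) = \min(|S \cap T| + \lfloor |S \setminus T|/2 \rfloor, m/2)$; if that fails to be a matroid rank function, I would use a paving-matroid construction. Bob gets the complementary structure favoring $\overline{T'}$. On matched instances $(a_T, b_T)$, the optimal welfare is $m$: Alice takes $T$ and Bob takes $\overline{T}$.

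The counting argument then proceeds in three steps.
\begin{enumerate}
\item Suppose Alice's message has fewer than $2^{cm}$ possible values. Then some message $\mu$ is sent by a large subfamily $\calF_\mu$ of Alice's inputs, of size $N/2^{cm}$.
\item Fix any Bob input $b_{T'}$ with $T' \in \calF_\mu$. The referee's output $(S_A, S_B)$ depends only on $\mu$ and $b_{T'}$, so it must be simultaneously good for every $a_T$ with $T \in \calF_\mu$, including the true one $T = T'$.
\item Averaging over $T \in \calF_\mu$ and using the near-orthogonality of $\calF$, we show $\E_T[a_T(S_A)] \le (3/8 + O(\delta))m$ whenever $|S_A| \le m - |S_B|$. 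The intuition is that $S_A$ can align with at most a few members of a large near-orthogonal family, since each such set can be close to only few family members by a packing argument. Bob's value is at most $m/2$. A careful accounting, choosing the valuations so that wrongly placed items still earn half credit, gives a total of about $(15/16)m$ against an optimum of $m$.
\end{enumerate}

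The main obstacle is calibrating the valuations so that the constant comes out to exactly $15/16$ while staying within matroid rank functions. I would fall back to an LP-style optimization over the ``confusion'' allocation: items known to be in $T$, items in $\overline{T}$, and unknown items. In the ambiguous region the referee can guarantee only $3/4$ of the value of a $1/4$-fraction of items, and $1 - (1/4)(1/4) = 15/16$. So the construction should make exactly a quarter of the welfare depend on a hidden set about which the referee knows nothing. Concretely, I would embed a hard ``matching'' sub-instance on $m/4$ items: there each bidder's value for an item depends on a hidden bit, and the referee, knowing only Bob's message, can capture at most $3/4$ of it in expectation. This uses the standard one-way hardness of index or disjointness, which costs $\Omega(m)$ bits. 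Then $\varepsilon$-slack absorbs the $\delta$ errors, and exponential communication follows from the size of $\calF_\mu$.
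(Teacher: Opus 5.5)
\textbf{There is a genuine gap: your plan cannot produce an exponential lower bound.} In your main construction each bidder's valuation is determined by a single set $T \subseteq [m]$. So each bidder has at most $2^m$ possible inputs, and can send $T$ in $m$ bits; the referee then knows both valuations and outputs the optimum. Consistently with this, your pigeonhole step (fewer than $2^{cm}$ messages) only rules out protocols using fewer than $cm$ bits. The exponential size of $\calF_\mu$ yields a bound of $\log\lvert\calF_\mu\rvert = \Theta(m)$ bits, not an exponential one. Your fallback has the same defect, as you note yourself: one hidden bit per item on $m/4$ items is INDEX on $\Theta(m)$ bits.

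The missing idea is that a \emph{single} matroid rank function of one bidder must encode an exponentially long string $\vb{x} \in \{0,1\}^\lambda$ with $\lambda = 2^{\varepsilon^2 m}$. The other bidder's matroid then selects one coordinate $k$, so that INDEX on $\lambda$ bits applies. The paper does this as follows.
\begin{itemize}
\item It takes $\lambda$ random partitions $(A_\ell, B_\ell, C_{\ell,0}, C_{\ell,1})$ of sizes $(3/8+\varepsilon)m$, $(3/8+\varepsilon)m$, $(1/8-\varepsilon)m$, $(1/8-\varepsilon)m$, with $\lvert (A_\ell \cup C_{\ell,b}) \cap (A_{\ell'} \cup C_{\ell',b'})\rvert \le (1/4+2\varepsilon)m$.
\item Alice's independent sets are the sets of size at most $m/2$ meeting every $A_\ell \cup C_{\ell,x_\ell}$ in at most $(3/8+\varepsilon)m$ items. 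Two tight constraints would force size at least $m/2$. Hence a set smaller than $m/2$ has at most one tight constraint, which is exactly what gives augmentation.
\item Bob's matroid is empty on $A_k$, free on $B_k$, and uniform of rank $(1/8-\varepsilon)m$ on $C_{k,0}\cup C_{k,1}$.
\item Then $(S,\overline{S})$ has welfare at most $m - (\lvert S \cap C_{k,x_k}\rvert + \lvert \overline{S}\cap C_{k,1-x_k}\rvert)/2$, against an optimum of $m$.
\item So a $(15/16+\varepsilon)$-approximation lets Bob guess $x_k$ from a uniformly random item of $C_{k,0}\cup C_{k,1}$ with error at most $1/2-2\varepsilon$. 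The $\Omega(\lambda)$ randomized INDEX bound then finishes the proof; it also takes care of randomized algorithms, which your deterministic pigeonhole in Steps 1--2 does not directly address.
\end{itemize}

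\textbf{The local steps also do not hold as written.}
\begin{itemize}
\item Your candidate $a_T(S) = \min(\lvert S\cap T\rvert + \lfloor \lvert S\setminus T\rvert/2\rfloor, m/2)$ is not a matroid rank function. For $j_1, j_2 \notin T$ it gives $a_T(\{j_1\}) + a_T(\{j_2\}) = 0 < 1 = a_T(\{j_1,j_2\})$, violating submodularity. Matroid ranks have $0/1$ marginals and cannot award ``half credit'' per item. In the paper the factor $1/2$ comes instead from the loss $\min\{\lvert\overline{S}\cap C_{k,1-x_k}\rvert, \lvert S\cap C_{k,x_k}\rvert\}$, produced by Alice's cap interacting with Bob's uniform cap.
\item Your Step 3 claim $\E_T[a_T(S_A)] \le (3/8+O(\delta))m$ is false as stated. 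Take $S_A=[m]$ and $S_B=\emptyset$: this gives $m/2$ for every $T$.
\item The accounting in Step 3 ($3/8$ plus $1/2$) does not produce $15/16$ in any case.
\end{itemize}
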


\begin{theorem} \label{thm:TwoBidderDSICHardness}
    Any IR, NNT, universally DSIC mechanism attaining a $0.9999$-approximation for two weighted matroid rank valuations uses exponential communication.
\end{theorem}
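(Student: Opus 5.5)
The plan is to reduce to \Cref{thm:TwoBidderSimultaneousHardness}: a universally DSIC, IR, NNT mechanism for two bidders should yield, at essentially no extra communication, a simultaneous protocol for a closely related allocation problem. The template is the DSIC-to-simultaneous reduction of~\cite{DobzinskiRV22}, specialized to two bidders and made robust to approximation.

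\textbf{Step 1 (taxation principle).} Fix one deterministic mechanism $\calM$ in the support of the universally DSIC mechanism. DSIC holds no matter how the other bidder behaves, so for every transcript prefix attributable to bidder $2$, bidder $1$ faces a fixed menu of (bundle, price) pairs and receives a utility-maximizing entry. That menu is determined by bidder $2$'s behavior alone. Symmetrically, bidder $2$ faces a menu determined by bidder $1$ alone. IR and NNT pin prices down to $[0,v_i(S)]$ with price $0$ for the empty bundle.

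\textbf{Step 2 (simultaneous protocol).} Each bidder $i$ simultaneously sends a compact description of the menu it induces for the other bidder. The referee then computes each bidder's demanded bundle under the other's menu. The key claim is that these two bundles must be exactly the ones $\calM$ outputs, so they are disjoint. The obstacle is the length of the message: writing out a menu may take exponential space. I would follow~\cite{DobzinskiRV22} and show that if $\calM$ uses $C$ bits, then the relevant part of each menu, restricted to the hard family of weighted matroid rank valuations, can be encoded in $\poly(C)$ bits, or else simulated via the protocol tree. This compression is the step I expect to be the main obstacle. What I would try first is that a $C$-bit protocol has at most $2^C$ leaves, so each menu has at most $2^C$ distinct outcomes, and a message of $\poly(C)$ bits suffices when $C$ is sub-exponential.

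\textbf{Step 3 (hard instance and approximation bookkeeping).} Embed the $15/16$ hard distribution of \Cref{thm:TwoBidderSimultaneousHardness} into weighted matroid rank valuations, adding small weight perturbations so that demand choices are strict. Then argue that a $0.9999$-approximate universally DSIC mechanism gives, on average over its randomness, a simultaneous randomized protocol with the same approximation ratio. Since $0.9999 > 15/16+\varepsilon$ for some constant $\varepsilon\in(0,1/16]$, this contradicts that theorem unless communication is exponential. A simultaneous lower bound that holds against randomized protocols over the input distribution (Yao) handles the mixture.

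If Step 2's compression only works for unweighted menus, I would instead redo the two-bidder construction directly with weighted matroid rank valuations, tuned so that the menu structure forces a gap of roughly $1-10^{-4}$.
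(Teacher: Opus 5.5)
Your Step~2 is where the argument breaks, and it is the entire difficulty. The compression claim is false. Having at most $2^C$ leaves bounds the number of menu entries by $2^C$, but listing $2^C$ (bundle, price) pairs costs $2^C\cdot\Theta(m)$ bits, which is exponential in $C$. Against the $\Omega(\lambda)=2^{\Theta(m)}$ bound behind \Cref{thm:TwoBidderSimultaneousHardness}, this only yields $C=\Omega(m)$.

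Bundle--price menus also do not suffice for the referee. It cannot tell which entry of bidder $1$'s menu she takes without knowing $v_1$. Bidder $1$ cannot report her choice either, since in a simultaneous protocol she never sees the menu bidder $2$ induces. The description that does work is the set of leaves consistent with each bidder's strategy, whose intersection is the realized leaf. But that is the generic $2^{O(C)}$ simultaneous simulation of an arbitrary protocol, and it uses no incentive property.

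A sanity check shows Steps~1--2 prove too much. The taxation principle, with menus ranging over the other bidder's prescribed types, holds equally for ex-post truthful mechanisms. VCG is a truthful, exactly optimal, polynomial-communication mechanism for weighted matroid rank (GS) valuations. So your reduction would yield a polynomial-communication exact simultaneous algorithm, contradicting \Cref{thm:TwoBidderSimultaneousHardness}. Any correct proof must keep the simultaneous message to at most $C$ bits. It must also use DSIC against non-prescribed, contingent strategies, which is exactly what separates DSIC from truthfulness.

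The paper does this, and it needs a new hard distribution rather than the one from \Cref{thm:TwoBidderSimultaneousHardness}. It first fixes the best deterministic mechanism in the support on the hard distribution, as you suggest, and passes to a minimal mechanism whose root is non-trivial. The INDEX message is then just bidder $i$'s first-round message $\xi_i(u_{i,\vb{x}})$, of length at most $C$.

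The construction has four ingredients:
\begin{itemize}
\item special items $\dagger_1,\dagger_2$;
\item Alice-types $u_{i,\vb{x}}$, a matroid rank function plus $10m$ on $\dagger_i$;
\item additive Bob-types $v_{-i,k,t}$, whose values on $C_{k,0}$ and on $\dagger_i$ both increase with $t$;
\item auxiliary types $w_{i,t}$, whose only role is to force \emph{safety}: switching bidder $-i$ from $t=0$ to $t=1$ moves bidder $i$'s price for $\dagger_i$ from at most $2m$ to at least $4m$.
\end{itemize}

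A contingent strategy of bidder $-i$, choosing $t$ based on bidder $i$'s first message, then forces bidder $-i$ to reveal $t$ in round one. Now suppose $u_{i,\vb{x}}$ and $u_{i,\vb{y}}$ with $x_k=0$ and $y_k=1$ share a first message. Bidder $i$ can play one or the other depending on bidder $-i$'s first message. Summing bidder $-i$'s two DSIC inequalities then gives $|C_{k,0}\cap S_0|\le|C_{k,0}\cap S_1|$ (or $\dagger_i$ is misallocated), which forces a combined welfare loss of at least $m/30$ on the two instances.

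Hence a $0.9999$-approximation forces the first message to predict $x_k$ with probability $3/5$. Its length, and thus $C$, must therefore be $\Omega(\lambda)=2^{\Omega(m)}$.
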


Note that matroid rank and weighted matroid rank valuations are GS, and hence admit deterministic welfare-maximizing algorithms/truthful mechanisms.

All of our DSIC hardness-of-approximation results are reductions from hard instances for simultaneous algorithms, so~\Cref{thm:TwoBidderSimultaneousHardness} is a standalone result which also provides intuition for~\Cref{thm:TwoBidderDSICHardness}. We also remark that unlike our many-bidder lower bounds,~\Cref{thm:TwoBidderDSICHardness} holds for universally DSIC mechanisms. Therefore, even though good universally DSIC mechanisms exist for many bidders, such as the polynomial-communication $(\log\log m)^2$-approximation of~\cite{AssadiKS21} for XOS valuations, it is still impossible for a polynomial-communication universally DSIC mechanism to perform arbitrarily well on two bidders.

Finally, we prove two positive results in the two-bidder setting.

\begin{theorem} \label{thm:TwoBidderSimultaneousAlgorithm}
    There exists a $(41-\sqrt{6})/50 \approx 0.771$-approximation for two matroid rank valuations with a simultaneous algorithm using polynomial communication.
\end{theorem}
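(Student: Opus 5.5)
The plan is to build a one-round protocol around the matroid-union view of the optimum, and to compare it against an explicit optimal allocation. Write $r_1, r_2$ for the two rank functions. By the matroid union theorem,
\[
\mathrm{OPT} = \max_{S \subseteq M} \bigl(r_1(S) + r_2(M \setminus S)\bigr) = \max\{|I_1| + |I_2| : I_1 \cap I_2 = \emptyset,\ I_j \text{ independent in } \mathcal{M}_j\}.
\]
Fix an optimal pair $(O_1, O_2)$ with $O_1 \cap O_2 = \emptyset$. The analysis will relate the protocol's output to $o_1 = |O_1|$, $o_2 = |O_2|$ and to the ranks $r_1(M)$, $r_2(M)$.

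The key tool is that any subset of a sent independent set is independent, so the referee can certify value exactly. If bidder $j$ sends an independent set $I$, then $r_j(T) = |T|$ for every $T \subseteq I$. Hence for any allocation $(X, M \setminus X)$ the referee has the lower bound $|X \cap I_1| + |(M \setminus X) \cap I_2|$ on its welfare.

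The protocol, using public randomness, is as follows.
\begin{itemize}
\item Shared randomness draws a uniformly random order $\pi$ of $M$ and a random partition of $M$ into parts $A$ and $\bar A$, with each item in $A$ independently with probability $p$.
\item Each bidder $j$ independently sends three independent sets, each computed greedily in the order $\pi$: a basis $B_j$ of $M$, a basis $B_j^A$ of $A$, and a basis $B_j^{\bar A}$ of $\bar A$.
\item The referee outputs the best allocation among a constant number of candidates, each scored by the certified lower bound above.
\end{itemize}
The candidates are:
\begin{enumerate}
\item all items to one bidder, which is worth $\max(r_1(M), r_2(M))$;
\item $B_1$ to bidder $1$ and $B_2 \setminus B_1$ to bidder $2$, together with its symmetric version, each worth $|B_1 \cup B_2|$;
\item $A$ to bidder $1$ and $\bar A$ to bidder $2$, or the reverse, worth $|B_1^A| + |B_2^{\bar A}|$;
\item a hybrid that takes $B_1^A \cup (B_1 \setminus B_2)$ for bidder $1$ and fills bidder $2$ from $B_2^{\bar A} \cup B_2$ on the remaining items.
\end{enumerate}
All messages are polynomial in $m$.

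The analysis bounds the expected value of each candidate in terms of $o_1, o_2$ and a few overlap quantities. The main ones are $|B_1 \cap B_2|$ and, for the random split, $\mathbb{E}\, r_1(A \cap O_1)$. Since $O_1 \cap A$ is independent, $\mathbb{E}\, r_1(A) \ge p\, o_1$. More usefully, by submodularity $\mathbb{E}\, r_1(A) \ge p\, r_1(M)$, and symmetrically for bidder $2$ on $\bar A$.

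Candidate 2 does well exactly when the bases overlap little. The greedy order matters here. Because both bases are built greedily in the same random order, an exchange-argument bound applies: $B_j$ is a max-weight basis under the weights induced by $\pi$, so it can be compared elementwise with $O_j$ through the standard bijection between bases. This should give a bound of the form
\[
|B_1 \cup B_2| \ge \mathrm{OPT} - \bigl|\text{common part forced by both}\bigr|.
\]
In the opposite regime, where the overlap is large, the random split recovers a constant fraction of each side. Concretely, candidate 3 gives $\mathbb{E} \ge p\, o_1 + (1-p)\, o_2$ up to overlap losses, and the hybrid candidate 4 interpolates between the two regimes.

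I then take the minimum over regimes of the maximum over candidates and optimize $p$. This becomes a small linear or quadratic program in the normalized parameters $o_1/\mathrm{OPT}$ and overlap$/\mathrm{OPT}$, whose value I expect to be the irrational constant $(41-\sqrt{6})/50$.

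The main obstacle is the overlap case analysis. I need a clean inequality lower-bounding what the hybrid candidate recovers when $B_1$ and $B_2$ share many elements that are also split across $O_1$ and $O_2$, and this must hold in expectation over $\pi$ and $A$. I would first try to prove a ``charging'' lemma: each element of $O_1 \cup O_2$ that is lost in candidate 2 is recovered with probability at least some function of $p$ in candidate 3 or 4. I would then solve the resulting one-parameter optimization to confirm the stated constant.
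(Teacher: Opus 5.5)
\paragraph{There is a genuine gap: the analysis is missing, and your selection rule caps it at $3/4$.}
The inequalities your argument rests on are placeholders. The bound $|B_1\cup B_2|\ge \mathrm{OPT}-|\text{common part forced by both}|$ is not a defined statement. An exchange argument matching $B_j$ to $O_j$ says nothing about the overlap $B_1\cap B_2$, which is exactly what candidate 2 loses. The charging lemma for the hybrid is never formulated, and $(41-\sqrt6)/50$ is only ``expected.'' In the paper this constant is exactly $\min_{\delta\ge 0}\,[\theta(3/4+\delta^2/4)+(1-\theta)(5/6-\delta/6-\delta^2/6)]$ with $\theta=(4+\sqrt6)/10$. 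It comes from mixing two specific procedures, and nothing in a max-over-candidates scheme points to it.

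More seriously, the referee outputs the certified-best candidate, not the truly best one. So bounding each candidate's true value does not bound the output. The handle you have is the certified score, and for any allocation $(X,M\setminus X)$ and sent independent sets $I$ (bidder 1), $J$ (bidder 2), we have $|X\cap I|+|(M\setminus X)\cap J|\le |I\cup J|$.

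Consider this instance:
\begin{itemize}
\item Items $a_1,b_1,\dots,a_n,b_n$, so $m=2n$.
\item Bidder 1 has the partition matroid with rank-one blocks $\{a_i,b_i\}$.
\item Bidder 2 has the partition matroid with blocks $\{b_i,a_{i+1}\}$ (indices mod $n$).
\end{itemize}
Then $\mathrm{OPT}=m$. An item misses $B_1\cup B_2$ iff it comes after both of its neighbours on this alternating cycle, so $\mathbb{E}|B_1\cup B_2|=2m/3$. For your split probability $p$:
\begin{itemize}
\item $\mathbb{E}(|B_1^A|+|B_2^{\bar A}|)=(1+2p-2p^2)n\le 3m/4$.
\item $\mathbb{E}|B_1\cup B_2^A|=(1+p-2p^2/3)n\le 11m/16$.
\item $\mathbb{E}|B_1^A\cup B_2^A|=(2p-2p^3/3)n\le 2m/3$.
\item The remaining pairs are symmetric to these.
\end{itemize}
All of these quantities concentrate. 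Hence for every $p$, no allocation at all (the hybrid included) can be certified above $(3/4+o(1))\,\mathrm{OPT}$ on this instance, whatever its true welfare is.

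\paragraph{The missing idea.}
This instance is the uniformly dense case: both principal vectors are identically $1/2$. That is exactly where the paper's key idea is needed. The value beyond $3/4$ cannot be certified from the sizes of sent independent sets. Instead, the referee must commit to a randomized rule whose guarantee is proved from the true rank functions.

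In the paper, each bidder sends two maximally disjoint bases drawn from a balanced distribution built from the principal partition. Their marginals equal the principal vector $\mathbf{p}$, and each item lies in $\lfloor 2p_j\rfloor$ or $\lceil 2p_j\rceil$ of the two bases. The referee then mixes two procedures:
\begin{itemize}
\item[(a)] Give a uniformly random one of the four bases to its sender and the complement to the other bidder. This is analysed against an optimal pair $(O_1,O_2)$ via $\sum_{j\in O_1}(1-p_j)\le\sum_j p_j(1-p_j)$, giving $(3/4+\delta^2/4)|W|$.
\item[(b)] Apply the Feige--Vondr\'ak half-integral rounding (\Cref{cor:HalfIntegralRounding}), which gives $(5/6-\delta/6-\delta^2/6)|W|$.
\end{itemize}
Here $W=O_1\cup O_2$, and $\delta^2=2\Delta/|W|$ with $\Delta=\sum_{j\in W}[(1/2-p_j)^2+(1/2-q_j)^2]$ measuring how far $\mathbf{p},\mathbf{q}$ are from $1/2$ on $W$. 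On the cycle instance, every balanced pair partitions $M$, so (b) alone gives $m - m/6 = 5m/6$.

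Your random split and greedy bases act only as contention-resolution devices. With selection by certified score, your argument cannot get past $3/4$ on such instances, let alone reach the stated constant. You need a committed allocation rule whose true welfare you bound directly, together with a rounding step like (b).
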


\begin{theorem} \label{thm:TwoBidderDSICMechanism}
    There exists a $(\sqrt{5}-1)/2 \approx 0.618$-approximation for two submodular valuations with an IR, NNT, universally DSIC mechanism using polynomial communication.
\end{theorem}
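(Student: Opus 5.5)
We design a mechanism for two submodular bidders that randomizes between two deterministic DSIC components. Universal DSIC is preserved under mixing, and so are IR and NNT, so it suffices to make each component deterministic, IR, NNT and DSIC. The welfare guarantee only needs to hold in expectation.

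\textbf{Component A: bundle auction.} Run a second-price auction for the grand bundle $M$. Each bidder reports $v_i(M)$, the higher bidder receives $M$, and she pays the other report. This is DSIC, IR and NNT, uses $O(\log)$ bits of communication, and earns welfare $\max_i v_i(M)$. By subadditivity, $\max_i v_i(M) \ge \mathrm{OPT}/2$.

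\textbf{Component B: posted prices.} Bidder~1 is offered all of $M$ at the fixed price $0$ (or a plain split) and takes her demanded set, after which bidder~2 receives the complement. A single-pass version of this, where each bidder faces a menu that does not depend on her own report, is DSIC. The plan is to pick Component~B so that it is strong exactly where Component~A is weak, namely when $\max_i v_i(M)$ is close to $\mathrm{OPT}/2$. In that regime each bidder's value is spread across the items, and submodularity gives a $1-1/e$ or $3/4$ style guarantee for a suitably \emph{randomized} split.

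Concretely, I would use a random partition that puts each item independently with bidder~1 with probability $1/2$. This is trivially DSIC, IR and NNT, since it is input-independent and charges zero prices. Let $(O_1,O_2)$ be the optimal allocation. By submodularity, $\E[v_i(O_i \cap R_i)] \ge v_i(O_i)/2$, so this component also gets only $\mathrm{OPT}/2$. That is too weak on its own, so the combination step is the one that matters.

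\textbf{Improved Component B.} The stronger choice is to let one bidder choose her favorite set, from either $M$ or a random half $R$, at prices fixed by the other bidder's report. Bidder~2 reports $v_2(M)$ and $v_2(M\setminus T)$ for polynomially many candidate sets $T$. Bidder~1 then faces a menu $\{(T, v_2(M)-v_2(M\setminus T))\}$ and picks her utility-maximizing option, VCG-style. The menu is independent of bidder~1's report. Bidder~2's price for the complement is likewise set from bidder~1's report in a way that keeps bidder~2's own report from affecting her menu, or alternatively bidder~2 is paid nothing and takes no decisions.

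\textbf{Combination.} Run A with probability $p$ and B with probability $1-p$. Write $x = \max_i v_i(M)/\mathrm{OPT} \in [1/2, 1]$. Component~B must guarantee $\mathrm{OPT}\cdot g(x)$ with $g$ decreasing, for instance welfare $\ge \mathrm{OPT} - \max_i v_i(M)\cdot c$ obtained from the split. Optimizing $p$ in the min-max of $p x + (1-p) g(x)$ then gives the golden ratio. With $g(x) = 1-x$ plus bonus terms, the balancing equation $p^2 + p - 1 = 0$ yields $p = (\sqrt5-1)/2$.

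\textbf{Main obstacle.} The hard part is designing Component~B so that it is simultaneously universally DSIC for \emph{both} bidders and achieves welfare at least $\mathrm{OPT} - \phi\,\max_i v_i(M)$ (or a similar complementary bound) using only polynomial communication. I would try a menu built from the other bidder's value queries on a random family of sets. Submodularity would then bound the welfare loss of restricting to that family.
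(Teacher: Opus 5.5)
Your proposal has a genuine gap. The component that would carry the whole argument, a DSIC Component~B whose guarantee complements the bundle auction, is never constructed. The golden-ratio computation is also not derived from anything.

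With $g(x)=1-x$, the mixture is $px+(1-p)(1-x)$, which equals exactly $1/2$ at $x=1/2$ for every $p$. All the improvement would therefore have to come from the unspecified ``bonus terms'', and the equation $p^2+p-1=0$ is reverse-engineered from the target.

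Moreover, no mixture of the components you do specify can beat $1/2$. Take two additive bidders $v_1(S)=\abs{S\cap O_1}$ and $v_2(S)=\abs{S\cap O_2}$ with $M=O_1\sqcup O_2$ and $\abs{O_1}=\abs{O_2}$. The bundle auction, the uniform random split, and offering $M$ to bidder~1 at price $0$ each obtain exactly $\mathrm{OPT}/2$. This instance has $x=1/2$. Beating $1/2$ on it requires eliciting per-item information about which bidder values what, which no bundle sale or report-independent split does.

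Your ``improved'' VCG-style menu also fails as stated. If bidder~2's reports set bidder~1's prices and bidder~2 pays nothing, she can report prohibitive prices and keep all of $M$ for free, so the menu is not DSIC for her. If you instead charge her VCG payments over a report-independent polynomial family, it becomes a (randomized) maximal-in-range mechanism. Showing that such a range contains allocations beating $1/2$ is exactly the unresolved difficulty; the paper notes it was previously unknown whether any polynomial-communication universally truthful mechanism beats $1/2$ here.

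The missing idea in the paper is to replace the bundle auction by a random \emph{endowment*} and to handle near-additive profiles with the extended parallel second-price auction. Let $\varphi=(\sqrt5-1)/2$. Each item independently goes to $A_1$, $A_2$, $B$ with probabilities $\varphi^2,\varphi^2,\varphi^3$. Bidder $i$ receives $A_i$ for free, and the EP2A is run on $B$ with the contracted valuations $v_i(\cdot\given A_i)$. These are still submodular, so the mechanism is DSIC for every fixed partition.

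For the analysis, let $(X_1,X_2)$ be an optimal allocation. Compare truthful play with the deviation that reports additive supporting weights of $v_i(\cdot\cap X_i\given A_i)$. This lower-bounds the welfare by
$\sum_i \E[v_i(A_i)]+\frac{\varphi}{1+\varphi}\sum_i \E[v_i(X_i\given A_i)]$
minus the expected second prices $\sum_i\E\bigl[\sum_{j\in B\cap X_i}v_{-i}(j\given A_{-i})\bigr]$.

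The key submodularity step swaps the roles of $B$ and $A_{-i}$ for each $j\in X_i$. It bounds these prices by $\varphi\Delta$, where $\Delta_i=\E[v_i(A_i)-\vb{w}_i(A_i)]$ and $\vb{w}_i$ is an additive lower bound on $v_i(\cdot\cap X_i)$ that is tight at $M$. So $\Delta_i$ is precisely the surplus the endowment earns over the additive benchmark, and $\E[v_i(A_i)]\ge\varphi^2v_i(X_i)+\Delta_i$. The $\Delta$ terms then cancel because $1-\varphi-\varphi^2=0$, which is where the golden ratio actually comes from.
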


\Cref{thm:TwoBidderSimultaneousAlgorithm} shows that it is possible to beat the $3/4$-approximation of~\cite{BravermanMW18} for binary XOS valuations on the class of binary submodular valuations (which is exactly the class of matroid rank valuations).~\Cref{thm:TwoBidderDSICMechanism} gives the first polynomial-communication DSIC mechanism to beat a $1/2$-approximation for two weighted matroid rank valuations, and also the first polynomial-communication truthful mechanism to beat a $1/2$-approximation for two submodular valuations.

Our many-bidder results are presented in the body of the paper, while the proofs of the two-bidder results can be found in the appendix.

\subsection{Technical Highlights}

\paragraph{Simultaneous Algorithm for Two Matroid Rank Valuations.}
The simultaneous algorithm to beat $3/4$ is a combination of two existing ideas.~\cite{FeigeV10} gives an interactive $5/6$-approximation for two submodular valuations (under certain conditions) via a clever fractional solution rounding scheme. However, while the rounding scheme itself requires no further interaction after computing the fractional solution, it is not clear how to find a good fractional solution in the first place when restricted to simultaneous communication. On the other hand,~\cite{BravermanMW18} gives a simultaneous $3/4$-approximation for two XOS valuations via standard contention resolution, but this approach is capped at $3/4$ because we must lose a factor of $1/4$ to contention in the worst case (each bidder requests an item w.p.\ $1/2$).

The key observation is that the worst case $3/4$-approximation for standard contention resolution only occurs when the two matroids are close to uniformly dense. But when the matroids are uniformly dense, the fractional solution induced by each bidder sending a pair of as-disjoint-as-possible bases is not far from a true fractional solution, and so we can obtain nearly a $5/6$-approximation from the submodular rounding scheme. Mixing these two procedures with the appropriate probability allows us to beat a $3/4$-approximation on all profiles.

\vspace*{-1em}
\paragraph{DSIC Mechanism for Two Submodular Valuations.}
The idea here is again to mix two existing ideas that perform well on different kinds of profiles. One mechanism is just to randomly allocate the items, which obtains a $1/2$-approximation in the worst case. Another mechanism is the \emph{extended parallel second-price auction} (EP2A) from~\cite{FikiorisF20}, which asks for the marginal value of each item from both bidders. The higher bidder gains the ``right'' to purchase the item at the lower bidder's value, and both bidders purchase their favorite available set at the given prices.~\cite{FikiorisF20} prove that the EP2A is DSIC for subadditive valuations.

Observe that when the instance is close to modular, the EP2A sets nearly ``correct'' prices on the items, which allows the items to go to the correct bidder. The bad case is when the instance is far from modular; the reported marginal values may be much higher than how much actual value the bidders can extract, leaving many items unallocated. However, this is precisely the case where we expect to beat a $1/2$-approximation from a random allocation. Our contribution is showing that randomly allocating a portion of the items and running the EP2A on the remaining items balances between these two extremes and allows us to beat $1/2$.

Importantly, submodularity is necessary because by running the EP2A after an initial random allocation, we need the EP2A to be DSIC after contracting the original valuations. Submodularity is preserved under contraction, while subadditivity is not.

\vspace*{-1em}
\paragraph{DSIC Hardness-of-Approximation: Existing Framework.}
To put our techniques in context, we first describe the framework of~\cite{DobzinskiRV22} for proving DSIC hardness-of-approximation.

Suppose that Alice moves first in a DSIC mechanism and can send two different messages $X$ and $Y$. After observing Alice's message, Bob may use a \emph{contingent} strategy: for example, he may choose a continuation that is favorable to Alice after $X$ and unfavorable after $Y$. If this makes the strategy producing $X$ more attractive to a valuation prescribed to send $Y$, then DSIC is violated.

This obstruction appears in~\cite{RubinsteinSTWZ21} and is formalized for combinatorial auctions by~\cite{DobzinskiRV22}. At a high level, the mechanism can defend against such a contingent strategy in two ways. Alice's prescribed strategy may \emph{guarantee} her a good outcome even under Bob's adverse continuation, or Bob may be forced to \emph{commit} to his behavior before learning whether Alice sent $X$ or $Y$. Either way, the mechanism is limited in its ability to adapt later communication to earlier information.

Using this principle,~\cite{DobzinskiRV22} convert a good deterministic DSIC mechanism into a good simultaneous algorithm on a suitable hard distribution. A communication lower bound for the simultaneous problem then yields a lower bound for the DSIC mechanism.

\vspace*{-1em}
\paragraph{DSIC Hardness-of-Approximation: Existing Proof Sketch.}
The simultaneous hard instance divides the bidders into groups and gives every bidder an exponentially long list of sets. A bidder has value $1$ if they receive a set on their list and value $0$ otherwise. Each group contains one hidden special bidder who values a private special set. The special sets are mutually compatible, while every pair of non-special sets intersects. Thus, most of the optimal welfare comes from identifying the special bidder in each group.

\cite{DobzinskiRV22} duplicate these valuations across exponentially many multiplicative scales. After locating one bidder whose message distinguishes two nearby scales, they use a contingent strategy argument to show that every other bidder must send different first messages for some pair of scales in a range of length $O(\log m)$. Pigeonholing yields one common scale $s$ at which a $1/O(\log m)$ fraction of the relevant bidders send different first messages when they have $s v_i$ versus $2s v_i$.

Restrict attention to these bidders and the low scale. Since the DSIC mechanism gets a good approximation, it must correctly allocate a large fraction of the special sets. We apply a small random perturbation on the valuation of each bidder, so that when a bidder has positive value, they also have positive utility w.h.p.

Now, consider a bidder with positive utility. The DSIC inequalities and rigid structure of the 0-1 valuations show that when the valuation $s v_i$ plays the prescribed strategy of the valuation $2s v_i$, their utility is also positive. Therefore, consider the contingent strategy where the other bidders play normally after the message of $2s v_i$, but adversarially after the message of $s v_i$. If this adversarial continuation could give $s v_i$ utility $0$, then $s v_i$ would have a profitable deviation to the prescribed strategy of $2s v_i$. Therefore, every bidder with positive utility can \emph{guarantee} themselves positive utility against any continuation. \textbf{The rigid structure of the 0-1 valuations then implies that any bidder with positive utility receives their full value.}

Thus, the simultaneous algorithm asks each bidder to send their first message in the DSIC mechanism, and checks whether each bidder has a message-consistent strategy that guarantees them positive utility. Since non-special sets intersect, almost every bidder passing this test must obtain value from a special set, so the protocol can allocate each group's special set to its certified bidder.

\vspace*{-1em}
\paragraph{DSIC Hardness-of-Approximation: Our Contributions.}
Our proof retains the outer framework of~\cite{DobzinskiRV22}: we convert a good DSIC mechanism into a good simultaneous algorithm on a hard instance where most of the welfare comes from allocating special sets to hidden special bidders. Our GS and general valuation instances are parameterized refinements of their construction, while our XOS instance adapts ideas from~\cite{Assadi17}.

Our first change is to separate the scale of the underlying valuation from the value of a \emph{special outcome}. For GS and XOS valuations, the special outcome is receiving a high-value item $\dagger$; for general valuations, it is receiving the entire set of items $M$. The value of the special outcome is large enough to affect incentives for each individual bidder, but small compared with the total welfare of the hard instance. After locating one bidder whose first message distinguishes two nearby types, a contingent strategy argument forces every other bidder at a fixed valuation scale to change their first message at some threshold (possibly depending on the bidder and their valuation) of the special-outcome value. This allows us to work with all bidders at the same underlying valuation scale, avoiding the common-scale averaging loss in~\cite{DobzinskiRV22}.

The remaining barrier is quantitative. In the 0-1 construction of~\cite{DobzinskiRV22}, guaranteeing \emph{any} positive utility already certifies that a bidder receives their full value. This is false for GS and XOS valuations: a bidder may receive positive utility while recovering only a small fraction of their full value. We instead need to force a utility guarantee comparable to that full value.

We do this through \emph{amplification}. Each bidder has exponentially many candidate clauses, and amplification doubles the coefficients of a sparse collection of clauses. A short first message is unlikely to rule out the possibility that the hidden special coordinate was doubled. Thus, the observed message is typically also consistent with a counterfactual valuation that assigns double value to the special set. IR bounds the payment under the undoubled outcome, so DSIC forces the doubled valuation’s truthful strategy to obtain high utility. A second contingent strategy argument then upgrades this normal continuation utility into a high utility guarantee. The protocol then uses these quantitative guarantees to perform a randomized allocation that allocates the special sets correctly with good probability.

For XOS and general valuations, amplification is implemented directly by changing clause coefficients. For GS valuations, we implement the analogous operation by convolving two weighted matroid rank valuations, as GS is closed under convolution.

\subsection{Related Work}

\paragraph{The Communication Cost of Incentives.}
The study of how incentive constraints affect communication requirements was initiated by~\cite{FadelS09,BabaioffBS13}. Subsequent work established exponential separations between computing a social choice function and implementing it with incentives~\cite{DobzinskiR21,RubinsteinSTWZ21}. In particular,~\cite{RubinsteinSTWZ21} give the first exponential separation between truthfulness and DSIC, already in an artificial two-player domain. The contingent-strategy obstruction underlying their construction is formalized for combinatorial auctions by~\cite{DobzinskiRV22}. They prove that exact welfare maximization for GS valuations by a DSIC mechanism requires exponential communication, despite truthful mechanisms only needing polynomial communication~\cite{NisanS06}. They also prove an $m^{1-o(1)}$-approximation lower bound for deterministic DSIC mechanisms with general valuations. Our impossibility results build directly on this framework, extending the hardness-of-approximation to GS and XOS valuations, and obtaining a tight bound for general valuations witnessed by the DSIC mechanism of~\cite{QiuW24}.

\vspace*{-1em}
\paragraph{Simultaneous Combinatorial Auctions.}
Simultaneous combinatorial auctions were previously studied in~\cite{DobzinskiNO14,Assadi17}. Our simultaneous hard distributions are parameterized refinements of the constructions in~\cite{DobzinskiRV22,Assadi17}. For two bidders,~\cite{BravermanMW18} give a polynomial-communication simultaneous $3/4$-approximation for binary XOS valuations. Our positive two-bidder simultaneous result combines the contention resolution viewpoint of~\cite{BravermanMW18} with the submodular rounding framework of~\cite{FeigeV10}.

\vspace*{-1em}
\paragraph{Deterministic Truthful Mechanisms.}
For the standard notion of truthfulness,~\cite{Dobzinski16b} develops the taxation-complexity framework relating the communication of a truthful mechanism to the complexity of the menus it presents. Using this framework,~\cite{AssadiKSW20} prove the first separation between deterministic truthful mechanisms and algorithms, for two XOS bidders. More recently,~\cite{RonTWZ24} obtain the first such separation beyond two bidders. These works establish constant-factor separations for deterministic truthful mechanisms on either two-bidder or augmented valuation domains. In contrast, our many-bidder results give super-constant lower bounds for the stronger DSIC requirement on canonical domains like GS and XOS valuations.

The best known deterministic truthful mechanisms are maximal-in-range (MIR).~\cite{QiuW24} characterize the tradeoff between communication and approximation of MIR mechanisms and, for general valuations, obtain a tight $\Theta(m/\gamma)$-approximation with $2^\gamma$ communication. As their mechanism also happens to be DSIC, our general valuation lower bound shows that the same tradeoff is also optimal among all deterministic DSIC mechanisms.

\cite{FikiorisF20} describe the EP2A, which they show to be DSIC for subadditive valuations, and welfare-maximizing under additional assumptions. We use the EP2A as a building block for a universally DSIC mechanism for two submodular valuations.

\vspace*{-1em}
\paragraph{Randomized Truthful Mechanisms.}
There is a long line of work designing universally truthful mechanisms for complement-free valuations~\cite{DobzinskiNS06,Dobzinski07,KrystaV12,Dobzinski16a,AssadiS19,AssadiKS21}. Our two-bidder universally DSIC mechanism uses a modified version of the EP2A and a novel analysis to obtain an approximation guarantee for submodular valuations, without additional assumptions.

There are also computational and query-complexity limitations for randomized truthful mechanisms~\cite{DughmiV11,DobzinskiV16}, while~\cite{AssadiKSW20} gives a communication lower bound applying to universally truthful mechanisms for two XOS bidders. Our two-bidder result instead proves a communication lower bound for universally DSIC mechanisms on the substantially narrower class of weighted matroid rank valuations (which are a subclass of GS valuations).

\section{Preliminaries} \label{sec:Preliminaries}

We assume that all parameters are chosen such that all relevant quantities are integers (e.g., if we state that a set $S \subseteq M$ has size $m/10$, we implicitly assume $m$ is divisible by $10$).

In a combinatorial auction, there is a set of items $M \coloneqq [m]$ that we wish to allocate among $n$ bidders. Each bidder $i \in [n]$ holds a private valuation function $v_i : 2^M \to \mathbb{R}_+$. An \emph{allocation} is an $n$-tuple of pairwise disjoint subsets of $M$. The objective is to find an allocation $(S_1, \dots, S_n)$ that approximately maximizes the welfare $\sum_{i \in [n]} v_i(S_i)$. We often restrict the valuations $v_1, \dots, v_n$ to a valuation class $\calV$.

\begin{definition}[Allocation Rule]
    An \emph{allocation rule} is a (possibly random) mapping from profiles $(v_1, \dots, v_n)$ to allocations $(S_1, \dots, S_n)$.
\end{definition}

\begin{definition}[Approximation Ratio]
    An allocation rule attains an \emph{$\eta$-approximation} for $\calV$ if for all $v_1, \dots, v_n \in \calV$, the (possibly random) allocation $S_1, \dots, S_n$ output by the allocation rule satisfies
    \[
        \E\bigg[\sum_{i \in [n]} v_i(S_i)\bigg] \quad \geq \quad \frac{1}{\eta} \max_{(T_1, \dots, T_n)} \sum_{i \in [n]} v_i(T_i) \enspace .
    \]
    When $\eta \geq 1$ is a constant, we may state the approximation ratio instead as $1/\eta \leq 1$.
\end{definition}

\subsection{Communication Model}

Since each bidder holds a private valuation function, we can think of each bidder $i \in [n]$ as a participant in a communication protocol with input $v_i : 2^M \to \mathbb{R}_+$. As a technical detail, we assume that the valuations are restricted to a range $\calR \subseteq \mathbb{R}_+$ of size $2^{\poly(m)}$ so that every value $v_i(S)$ can always be precisely communicated. Without this assumption, we have trivial impossibilities stemming from issues of scale. We will use
\[
    \calR \quad = \quad \{2^{s/2} t/m : s \in [-2^m, 2^m] \cap \mathbb{Z},\, t \in [0, m^5] \cap \mathbb{Z}\} \enspace ,
\]
which supports values on a doubly-exponential scale while still allowing for $O(m)$-bit representations via the pair $(s, t)$.

We use the blackboard model of communication, i.e., all communication is visible to all parties. In particular, we can think of any communication protocol as a rooted tree. At each node of the tree, each bidder $i$ simultaneously sends a (possibly random) message from a list of possible messages, which may depend on the bidder and the node. We then move to a child node depending on the messages sent by each bidder. We assume that every distinct $n$-tuple of messages induces a distinct child node. Each leaf node of the protocol designates a fixed allocation. Every communication protocol has a \emph{strategy prescription}.

\begin{definition}[Strategy]
    A \emph{strategy} for bidder $i$ is a function from the nodes of a communication protocol to the message sent by bidder $i$ at the node. Note that since a strategy defines the message sent by bidder $i$ at every node, and because every distinct message history induces a distinct node in the communication protocol, a strategy is implicitly allowed to depend on the message history. A \emph{strategy prescription} is a function from valuations to strategies.
\end{definition}

\begin{definition}[Algorithm]
    An \emph{algorithm} describes a communication protocol whose output is the output of some allocation rule, when every bidder follows their prescribed strategy. An algorithm attains an \emph{$\eta$-approximation} if its allocation rule attains an $\eta$-approximation. The communication used by the algorithm is the worst-case length of the transcript of the communication protocol.
\end{definition}

We will be interested in the following class of algorithms.

\begin{definition}[Simultaneous Algorithm]
    A \emph{simultaneous algorithm} is an algorithm which describes a communication protocol where every child of the root node is a leaf. In other words, a simultaneous algorithm consists of only a single round of communication: each bidder sends a message, and then a referee decides the allocation purely as a function of the messages.
\end{definition}

\subsection{Truthful Mechanisms}

Until now, we have discussed combinatorial auctions as a purely algorithmic problem. But it is also possible to imagine that each bidder, instead of dutifully following the strategy prescribed by the algorithm, seeks to selfishly maximize their own utility. Then bidders may behave unexpectedly and prevent an algorithm from realizing the intended allocation rule.

To remedy this, we allow the leaf nodes (i.e., outcomes) of the communication protocol to include not only an allocation, but also \emph{payments} that each bidder will make.

\begin{definition}[Payment Rule]
    A \emph{payment rule} is a (possibly random) mapping from profiles $(v_1, \dots, v_n)$ to payments $(p_1, \dots, p_n) \in \mathbb{R}^n$.
\end{definition}

\begin{definition}[Mechanism]
    A \emph{mechanism} describes a communication protocol whose output is the output of some allocation and payment rule. A mechanism attains an \emph{$\eta$-approximation} if its allocation rule attains an $\eta$-approximation. The communication used by the mechanism is the worst-case length of the transcript of the communication protocol.
\end{definition}

\begin{definition}[Individually Rational]
    A mechanism is \emph{individually rational} (IR) if following the prescribed strategy guarantees non-negative utility on all profiles consisting of prescribed strategies.
\end{definition}

\begin{definition}[No-Negative-Transfers]
    A mechanism is \emph{no-negative-transfers} (NNT) if all payments are non-negative.
\end{definition}

We then assume that bidders have \emph{quasilinear} utility, that is, they seek to maximize the quantity $v_i(S_i) - p_i$. This then gives rise to the following notions.

\begin{definition}[Incentive-Compatible]
    A mechanism is \emph{incentive-compatible} (truthful) if it describes a communication protocol such that the profile of prescribed strategies forms an ex-post Nash equilibrium.
\end{definition}

\begin{definition}[Dominant-Strategy Incentive-Compatible]
    A mechanism is \emph{dominant-strategy incentive-compatible} (DSIC) if it describes a communication protocol such that it is a dominant strategy for each bidder to follow their prescribed strategy. In particular, the prescribed strategy is utility-maximizing on all strategy profiles, not merely all profiles of prescribed strategies.
\end{definition}

Observe that we have been a little sloppy in what we mean by Nash equilibrium and dominant strategy. For a randomized mechanism, does a bidder have a profitable deviation only when they can secure higher expected utility, or does it suffice for there to exist an outcome with higher utility? In this paper, we use the latter definition.

\begin{definition}[Universally DSIC]
    A mechanism is \emph{universally DSIC} if it can be written as a distribution over deterministic DSIC mechanisms.
\end{definition}

\subsection{Valuation Classes}

A valuation function is any function from sets of items to non-negative real numbers. All valuations are always assumed to be monotone non-decreasing. Aside from the class of general monotone valuations, our results discuss valuation classes from the \emph{complement-free hierarchy}:
\[
    \text{Gross Substitutes} \quad \subsetneq \quad \text{Submodular} \quad \subsetneq \quad \text{XOS} \quad \subsetneq \quad \text{Subadditive} \enspace .
\]
As pointed out previously, there exist deterministic truthful mechanisms that maximize the welfare over GS valuations. We will show that GS valuations do not even admit good approximations by polynomial-communication DSIC mechanisms. For these results, we will use the following subclasses of GS:
\[
    \text{Matroid Rank} \quad \subsetneq \quad \text{Weighted Matroid Rank} \quad \subsetneq \quad \text{Gross Substitutes} \enspace .
\]
We define the relevant classes below. Given $\vb{x} \in \mathbb{R}_+^m$ and $S \subseteq M$, denote $\vb{x}(S) = \sum_{j \in S} x_j$.

\begin{definition}[Matroid]
    A set family $\calM \subseteq 2^M$\footnote{Convention typically dictates that a matroid is notated $\calM = (M, \calI)$ where $\calI \subseteq 2^M$, but we directly notate a matroid $\calM \subseteq 2^M$ for brevity.} is a \emph{matroid} if it satisfies
    \begin{itemize}[topsep=1pt,itemsep=1pt]
        \item $\calM \ne \emptyset$.
        \item Downward-Closure: if $T \in \calM$, then all $S \subseteq T$ satisfy $S \in \calM$ (downward-closure).
        \item Augmentation: for all $S, T \in \calM$ where $\abs{S} < \abs{T}$, there exists $j \in T \setminus S$ such that $S \cup \{j\} \in \calM$.
    \end{itemize}
\end{definition}

\begin{definition}[Matroid Rank Function]
    Given a matroid $\calM \subseteq 2^M$, its \emph{rank function} $r : 2^M \to \mathbb{Z}_+$ is defined $r(S) = \max_{T \in \calM} \abs{S \cap T}$. The \emph{rank} of $\calM$ is the value $r(M)$.
\end{definition}

\begin{definition}[Weighted Matroid Rank]
    A valuation $v : 2^M \to \mathbb{R}_+$ is \emph{weighted matroid rank} if there exists a matroid $\calM \subseteq 2^M$ and item weights $\vb{w} \in \mathbb{R}_+^m$ such that $v(S) = \max_{T \in \calM} \vb{w}(S \cap T)$.
\end{definition}

\begin{definition}[Convolution]
    For two functions $f, g : 2^M \to \mathbb{R}_+$, their \emph{convolution} $f \star g$ is defined $(f \star g)(S) = \max_{T \subseteq S} \{f(S \setminus T) + g(T)\}$.
\end{definition}

\begin{fact}
    Gross substitutes is preserved under convolution.
\end{fact}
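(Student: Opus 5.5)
The plan is to avoid working with the demand-set definition of gross substitutes directly. Instead I would use the Fujishige--Yang equivalence: a set function $v : 2^M \to \mathbb{R}_+$ is GS if and only if it is $M^\natural$-concave. Viewing $v$ as a function on $\{0,1\}^M$, this means that for all $S, T$ and every $j \in S \setminus T$, either
\[
    v(S) + v(T) \;\leq\; v(S \setminus \{j\}) + v(T \cup \{j\}) ,
\]
or there is some $k \in T \setminus S$ with
\[
    v(S) + v(T) \;\leq\; v\bigl((S \setminus \{j\}) \cup \{k\}\bigr) + v\bigl((T \setminus \{k\}) \cup \{j\}\bigr).
\]
With this equivalence, the claim becomes the statement that $M^\natural$-concave set functions are closed under the max-convolution $(f \star g)(S) = \max_{T \subseteq S}\{f(S \setminus T) + g(T)\}$. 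Monotonicity and nonnegativity of $f \star g$ are immediate from those of $f$ and $g$, since enlarging $S$ only enlarges the feasible set of splits.

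The first step is a price-side observation. For prices $\vb{p}$, write $u_h(\vb{p}) = \max_S \{h(S) - \vb{p}(S)\}$ for the indirect utility. Then $u_{f \star g}(\vb{p}) = u_f(\vb{p}) + u_g(\vb{p})$, because an optimal bundle for $f \star g$ is an optimal pair of disjoint bundles for $f$ and $g$. Dropping disjointness in that pair could only help, but the ambient set $S$ may absorb any overlap, so no value is lost by requiring it. By Ausubel--Milgrom / Murota conjugacy, GS is characterized by $u_h$ being the conjugate of an $M^\natural$-concave function; equivalently, $u_h$ is $L^\natural$-convex in $\vb{p}$. The sum of two $L^\natural$-convex functions is $L^\natural$-convex, so $u_f + u_g$ is the conjugate of some $M^\natural$-concave function $h$ on the integer box.

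The main obstacle is the second step: showing that this $h$ actually coincides with $f \star g$ on $\{0,1\}^M$. A priori the conjugate of $u_f + u_g$ is only the concave closure of $f \star g$, and one must rule out a duality gap at integer points. This is exactly where integrality enters. I would first try to derive it from Murota's $M$-convex intersection / Fenchel-type min--max theorem: for each fixed $S$, evaluating $(f \star g)(S)$ is a weighted intersection of two valuated matroids, namely the $M^\natural$-concave $f$ and the $M^\natural$-concave $T \mapsto g(S \setminus T)$. Murota's theorem says that the integer optimum equals the LP dual value, which gives $f \star g = h$.

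If I would rather avoid that machinery, the fallback is a direct exchange argument. Fix $S, T$ and $j \in S \setminus T$, together with optimal splits $S = A \sqcup B$ and $T = C \sqcup D$, where $A, C$ are used by $f$ and $B, D$ by $g$. If $j \in A$, apply the $f$-exchange to the pair $(A, C)$. When this returns an element $k$ that is already in the $g$-part $D$, chain a $g$-exchange on $(B, D)$, and repeat this alternating augmenting-path procedure. Termination and correctness of the chain is the delicate part, and it is essentially the matroid-intersection augmenting-path argument. So I expect either route to rest on the same core step, with the valuated matroid intersection theorem being the cleanest way to close it.
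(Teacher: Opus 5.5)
The failing step is your price-side identity $u_{f\star g}(\vb{p}) = u_f(\vb{p}) + u_g(\vb{p})$. The left side maximizes $f(A) - \vb{p}(A) + g(B) - \vb{p}(B)$ over \emph{disjoint} $A, B$, while the right side maximizes over all pairs. An overlap cannot be ``absorbed'' by $S$, because $S$ is a set and each item is used once. For example, take $M = \{1\}$, $f(\{1\}) = g(\{1\}) = 1$ and $\vb{p} = \mathbf{0}$: then $u_{f\star g}(\mathbf{0}) = 1 \ne 2 = u_f(\mathbf{0}) + u_g(\mathbf{0})$. Notice also that if the identity held, the characterization you cite (GS iff the indirect utility is $L^\natural$-convex) would finish the proof at once, and the duality-gap obstacle you identify would never arise.

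What is true is $u_{f\square g} = u_f + u_g$ for the multi-unit convolution $(f\square g)(\vb{x}) = \max\{f(\vb{y}) + g(\vb{x}-\vb{y}) : \vb{y}, \vb{x}-\vb{y} \in \{0,1\}^M\}$ on $\{0,1,2\}^M$. Hence the conjugate $h$ of $u_f+u_g$ is the concave closure of $f\square g$ on $[0,2]^M$, not of $f\star g$. To conclude you need two facts you never state:
\begin{enumerate}
\item $f\star g$ is the restriction of $f\square g$ to $\{0,1\}^M$;
\item restricting an $M^\natural$-concave function to a box preserves $M^\natural$-concavity. This holds because every exchange point, e.g.\ $\vb{x}-\vb{e}_i+\vb{e}_j$ and $\vb{y}+\vb{e}_i-\vb{e}_j$, lies coordinatewise between $\vb{x}$ and $\vb{y}$.
\end{enumerate}

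With that repair, your second-step computation is correct. The $M$-convex intersection min--max applied to $f|_{2^S}$ and $T\mapsto g(S\setminus T)$ gives $(f\star g)(S) = \inf_{\vb{p}}\{u_f(\vb{p})+u_g(\vb{p})+\vb{p}(S)\} = h(\mathbf{1}_S)$. However, this step carries the entire load. Doing it at every lattice point of $\{0,1,2\}^M$ is essentially the content of Murota's theorem that $M^\natural$-concave functions on $\mathbb{Z}^M$ are closed under sup-convolution. For real-valued $f,g$ you also need that the polyhedral $M^\natural$-concave $h$ restricts to a discrete $M^\natural$-concave function at lattice points. That requires integrality of its linearity domains (true here, since $h$ is a concave closure of lattice data), because the integer conjugacy theorem only covers integer-valued functions. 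So the conjugacy detour buys nothing. Your fallback augmenting-path argument is left open at exactly its one nontrivial point (termination and correctness of the chain), so it does not constitute a proof either.

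The paper does not prove this Fact; it quotes it as standard. The standard argument is a short chain:
\begin{enumerate}
\item GS coincides with $M^\natural$-concavity on $\{0,1\}^M$ (Fujishige--Yang);
\item $f\square g$ is $M^\natural$-concave (Murota);
\item $f\star g = (f\square g)|_{\{0,1\}^M}$ inherits $M^\natural$-concavity via the box restriction.
\end{enumerate}
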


\begin{definition}[Submodular]
    A valuation $v : 2^M \to \mathbb{R}_+$ is \emph{submodular} if for all $j \in M$, the marginal value $v(S \cup \{j\}) - v(S)$ is monotone non-increasing in $S$.
\end{definition}

\begin{definition}[XOS]
    A valuation $v : 2^M \to \mathbb{R}_+$ is \emph{XOS} if there exist item weights $\vb{w}_1, \dots, \vb{w}_\lambda \in \mathbb{R}_+^m$ such that $v(S) = \max_{\ell \in [\lambda]} \vb{w}_\ell(S)$.
\end{definition}

\subsection{Matroids}

We will use the following result of~\cite{BalcanH11}. Note that this is not directly stated in their Theorem 7, but is directly implied from the proof.

\begin{theorem}[\cite{BalcanH11}] \label{thm:BalcanHarvey}
    Let $b, d$ be parameters such that $b \in [20, d]$, and let $\beta \geq 96d^3/b^2$ and $\lambda \in [4d/b, 2^{b/10}]$. Then there exist sets $C_1, \dots, C_\lambda \subseteq [\beta]$ of size $d$ such that for all $\vb{x} \in \{0, 1\}^\lambda$, there exists a matroid rank function $r_{\vb{x}} : 2^{[\beta]} \to \mathbb{R}_+$ such that
    \begin{itemize}[topsep=1pt,itemsep=1pt]
        \item $r_{\vb{x}}([\beta]) \leq d$.
        \item $r_{\vb{x}}(C_\ell) = d$ if $x_\ell = 1$.
        \item $r_{\vb{x}}(C_\ell) = b$ if $x_\ell = 0$.
    \end{itemize}
\end{theorem}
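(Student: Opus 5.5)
The plan is to reconstruct the Balcan--Harvey construction directly. We choose $C_1, \dots, C_\lambda$ as independent uniformly random $d$-subsets of $[\beta]$. We then show that with positive probability they form a \emph{lossless expander}, and use this to verify that an explicit family of independent sets is a matroid with the stated ranks.

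\textbf{Step 1 (expansion).} The target property is: for every $J \subseteq [\lambda]$ with $|J| \le 4d/b$,
\[
\Bigl|\bigcup_{j\in J} C_j\Bigr| \;\ge\; d|J| - \tfrac{b}{4}\,\bigl(|J|-1\bigr) \quad\text{(roughly)},
\]
and moreover every pairwise intersection satisfies $|C_i \cap C_j| \le b/4$.
\begin{itemize}[topsep=1pt,itemsep=1pt]
    \item For a fixed $J$ of size $k$, the union falls short of $dk$ by more than $t$ only if some $t$ of the $dk$ sampled elements collide with earlier ones.
    \item This probability is at most $\binom{dk}{t}(dk/\beta)^t$.
    \item With $\beta \ge 96d^3/b^2$, $k \le 4d/b$ and $t \approx bk/4$, this is at most $2^{-\Omega(bk)}$.
    \item A union bound over the $\binom{\lambda}{k} \le \lambda^k \le 2^{bk/10}$ choices of $J$ then succeeds.
\end{itemize}
This is where the hypotheses $\lambda \le 2^{b/10}$ and $\beta \ge 96d^3/b^2$ are used. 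The range $\lambda \ge 4d/b$ only matters for the size threshold on $J$. For larger $J$ the union already has size $\ge d$, which makes the constraints vacuous given the global cap $d$.

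\textbf{Step 2 (the matroid).} Fix $\vb{x}$ and let $Z = \{\ell : x_\ell = 0\}$. Following Balcan--Harvey, define
\[
\calI_{\vb x} \;=\; \Bigl\{ I \subseteq [\beta] : |I| \le d,\ \Bigl|I \cap \bigcup_{j \in J} C_j\Bigr| \le f(J)\ \ \forall J \subseteq Z \Bigr\},
\]
where
\[
f(J) \;=\; b|J| + \Bigl|\bigcup_{j\in J} C_j\Bigr| - d|J| .
\]
\begin{itemize}[topsep=1pt,itemsep=1pt]
    \item The function $f$ is intersecting-submodular, since $J \mapsto |\bigcup_J C_j|$ is submodular and the remaining terms are modular.
    \item By Step 1, $f$ is nonnegative (indeed $\ge b/2$-ish) on nonempty $J$, and monotone on the relevant range.
    \item The standard theorem that $\{I : |I\cap A(J)| \le f(J)\}$ is a matroid for an intersecting-submodular, nonnegative, increasing $f$ (Edmonds) then shows $\calI_{\vb x}$ is a matroid.
    \item Truncating at $d$ preserves this, so $r_{\vb x}([\beta]) \le d$.
\end{itemize}

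\textbf{Step 3 (ranks of the $C_\ell$).}
\begin{itemize}[topsep=1pt,itemsep=1pt]
    \item \emph{Case $x_\ell = 0$.} Here $f(\{\ell\}) = b$ forces $r(C_\ell) \le b$. For the reverse inequality, take any $b$-subset $B \subseteq C_\ell$ and check every constraint. For $J \ni \ell$ this uses $f(J) \ge b$ by expansion. For $J \not\ni \ell$ it uses $|B \cap \bigcup_J C_j| \le \sum_{j\in J}|C_\ell \cap C_j|$ together with the pairwise-intersection and expansion bounds.
    \item \emph{Case $x_\ell = 1$.} Show $C_\ell$ itself is independent, giving rank $d$. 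This requires $|C_\ell \cap \bigcup_J C_j| \le f(J)$ for all $J \subseteq Z$. It follows by applying expansion to $J \cup \{\ell\}$:
    \[
    \Bigl|C_\ell \cap \bigcup_J C_j\Bigr| \;=\; d + \Bigl|\bigcup_J C_j\Bigr| - \Bigl|\bigcup_{J\cup\{\ell\}} C_j\Bigr|,
    \]
    and this is at most $f(J)$ exactly when the deficiency $d|J'| - |\bigcup_{J'} C_j|$ grows by at most $b|J|$-slack when $\ell$ is added.
\end{itemize}

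The main obstacle is Step 1 together with the bookkeeping in Step 3. The expansion deficiency must be quantified tightly enough that it stays below $b$ per added set uniformly over all $J$ up to the threshold, using only $\beta \ge 96d^3/b^2$. My first attempt would be the collision-counting union bound with $t = \lceil bk/4 \rceil$ sketched above, adjusting constants if needed. The final statement is then read off from the proof of Balcan--Harvey's Theorem 7, as the paper indicates.
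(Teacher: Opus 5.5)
Steps 1 and 3 follow the Balcan--Harvey argument that the paper cites, but Step 2 has a genuine gap.

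\textbf{The family in Step 2 is empty.} You impose $|I\cap C(J)|\le f(J)$ for \emph{every} $J\subseteq Z$, where $C(J)=\bigcup_{j\in J}C_j$ and $f(J)=|C(J)|-(d-b)|J|$. When $d>b$, $f(J)$ is negative as soon as $(d-b)|J|>\beta$. The set $Z$ can be all of $[\lambda]$ (take $\vb{x}=\vb{0}$). In the regime the paper needs, $\lambda=2^{\gamma}\ge m$ while $\beta/(d-b)=O((m/\gamma)^{4/7})$, so such $J$ exist. Then not even $\emptyset$ is independent, and $\calI_{\vb x}$ is empty. Your remark that large-$J$ constraints are vacuous ``given the global cap $d$'' confuses $|C(J)|\ge d$ with $f(J)\ge d$; in fact $f$ decreases without bound.

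\textbf{The cited matroid theorem is false here.} The ``standard theorem'' you invoke fails for union-type maps $J\mapsto C(J)$. Take $A_1=\{a,b\}$, $A_2=\{b,c\}$, and $f(\{1\})=f(\{2\})=1$, $f(\{1,2\})=2$, which is submodular, monotone and nonnegative. The resulting family is $\{\emptyset,\{a\},\{b\},\{c\},\{a,c\}\}$, which is not a matroid. What actually makes the construction work is the exact identity
\[
f(J_1)+f(J_2)-f(J_1\cup J_2)-f(J_1\cap J_2)=|(C(J_1)\cap C(J_2))\setminus C(J_1\cap J_2)|,
\]
valid for your specific $f$. It yields $|I\cap C(J_1\cup J_2)|\ge f(J_1\cup J_2)$ whenever $J_1,J_2$ are tight for $I$. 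Monotonicity of $f$ is neither available (its marginal is $|C_j\setminus C(J)|-(d-b)$) nor needed.

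\textbf{The repair.} Use Balcan--Harvey's truncated family: impose the constraints only for $J\subseteq Z$ with $|J|<\tau$, where $\tau=\lceil 4d/(3b)\rceil$, and keep the cap $|I|\le d$.
\begin{itemize}
\item Your Step 1 gives deficiency at most $b|J|/4$ for $|J|\le 4d/b$; your collision count with $t=\lceil bk/4\rceil$ does deliver this. Hence $f(J)\ge 3b|J|/4$.
\item So $f\ge 0$ for $|J|<\tau$, and $f\ge d$ for $\tau\le|J|\le 2\tau-2$, since $2\tau-2\le 8d/(3b)\le 4d/b$.
\item In the augmentation step, $|I|<|I'|\le d$. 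If two tight $J_1,J_2$ had $|J_1\cup J_2|\ge\tau$, the identity above would force $|I|\ge|I\cap C(J_1\cup J_2)|\ge f(J_1\cup J_2)\ge d$, a contradiction.
\item Hence unions of tight sets stay below $\tau$ and are tight. The usual argument finishes: take the union $J^*$ of tight sets covering $I'\setminus I$, and compare $|I'|$ with $|I|$ inside and outside $C(J^*)$.
\end{itemize}
Your Step 3 then goes through verbatim, checking only the constraints with $|J|<\tau$. Note $\tau\ge 2$, so $\{\ell\}$ is still a constraint when $x_\ell=0$.
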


\subsection{Information Theory}

Below are some information theoretic results we use to prove communication lower bounds for simultaneous algorithms. We use $\log$ to denote the binary logarithm.

\begin{definition}[Total Variation Distance]
    The \emph{total variation distance} of two distributions $\calD, \calD'$ supported on $\calX$ is defined
    \[
        \TV(\calD, \calD') \quad = \quad \max_{\calE \subseteq \calX} \bigg\vert\sum_{x \in \calE} \Big(\Pr_{X \sim \calD}[X = x] - \Pr_{X \sim \calD'}[X = x]\Big)\bigg\vert \enspace .
    \]
\end{definition}

\begin{definition}[Entropy]
    For discrete random variables $X, Y$ supported on $\calX, \calY$, the \emph{entropy} of $X$ conditioned on $Y$ is defined
    \[
        \Ent(X \given Y) \quad = \quad -\sum_{y \in \calY} \sum_{x \in \calX} \Pr[X = x, Y = y] \log(\Pr[X = x \given Y = y]) \quad \leq \quad \log\abs{\calX} \enspace .
    \]
\end{definition}

\begin{definition}[Mutual Information]
    For discrete random variables $X, Y, Z$, the \emph{mutual information} between $X$ and $Y$ conditioned on $Z$ is defined
    \[
        \I(X \mutual Y \given Z) \quad = \quad \Ent(X \given Z) - \Ent(X \given Y, Z) \quad = \quad \Ent(Y \given Z) - \Ent(Y \given X, Z) \enspace .
    \]
\end{definition}

\begin{fact}[Superadditivity of Mutual Information]
    Let $X_1, \dots, X_n, Y, Z$ be discrete random variables, where $X_1, \dots, X_n$ are independent conditioned on $Z$. Then
    \[
        \I(X_1, \dots, X_n \mutual Y \given Z) \quad \geq \quad \sum_{i \in [n]} \I(X_i \mutual Y \given Z) \enspace .
    \]
\end{fact}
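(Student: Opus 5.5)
The plan is to expand the left-hand side with the chain rule for mutual information and then compare each resulting term with the matching unconditional term, using the conditional independence of the $X_i$. Everything follows from the definition $\I(X \mutual Y \given Z) = \Ent(X \given Z) - \Ent(X \given Y, Z)$ and the chain rule for entropy. The only extra facts needed are two standard ones: conditional mutual information is non-negative, and it vanishes for variables that are independent given the conditioning.

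\textbf{Step 1 (chain rule).} Write $X_{<i} = (X_1, \dots, X_{i-1})$. Applying the chain rule for entropy to $\Ent(X_1, \dots, X_n \given Z)$ and to $\Ent(X_1, \dots, X_n \given Y, Z)$, and subtracting term by term, gives
\[
    \I(X_1, \dots, X_n \mutual Y \given Z) \quad = \quad \sum_{i \in [n]} \I(X_i \mutual Y \given X_{<i}, Z) \enspace .
\]
It therefore suffices to show $\I(X_i \mutual Y \given X_{<i}, Z) \geq \I(X_i \mutual Y \given Z)$ for each $i$.

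\textbf{Step 2 (two expansions of one quantity).} Expand $\I(X_i \mutual Y, X_{<i} \given Z)$ by the chain rule in two different orders:
\[
    \I(X_i \mutual X_{<i} \given Z) + \I(X_i \mutual Y \given X_{<i}, Z) \quad = \quad \I(X_i \mutual Y \given Z) + \I(X_i \mutual X_{<i} \given Y, Z) \enspace .
\]
Since $X_1, \dots, X_n$ are independent conditioned on $Z$, the first term on the left is $0$. The last term on the right is non-negative. Rearranging gives $\I(X_i \mutual Y \given X_{<i}, Z) \geq \I(X_i \mutual Y \given Z)$, and summing over $i$ proves the claim.

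\textbf{Main obstacle.} The only real subtlety is the direction of the inequality. Conditioning on $X_{<i}$ can only increase information about $Y$ here, because $X_i$ carries no information about $X_{<i}$ given $Z$. It is exactly the term $\I(X_i \mutual X_{<i} \given Z) = 0$ that makes this work; without conditional independence the inequality can fail. The remaining facts, non-negativity of conditional mutual information (via Gibbs' inequality) and its vanishing under conditional independence, are standard and I would cite them rather than reprove them.
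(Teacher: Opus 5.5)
Your proof is correct. The chain-rule decomposition, followed by the two expansions of $\I(X_i \mutual Y, X_{<i} \given Z)$ using $\I(X_i \mutual X_{<i} \given Z) = 0$ and $\I(X_i \mutual X_{<i} \given Y, Z) \geq 0$, is the standard argument. The paper does not prove this statement at all; it quotes it as a standard fact, so there is no alternative route to compare against.

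One point is worth making explicit. Step 2 needs $X_i$ to be independent of the whole tuple $X_{<i}$ given $Z$, so the hypothesis must be read as mutual, not merely pairwise, conditional independence. With only pairwise independence the inequality fails. For example, take $X_1, X_2$ independent uniform bits, $X_3 = X_1 \oplus X_2$, $Y = (X_1, X_2)$, and $Z$ constant: the left side is $2$ and the right side is $3$.

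Mutual independence is exactly what the application in \Cref{lemma:LowDistributionShift} supplies. There the $j_{k,\ell}$ are sampled independently across $\ell$ and independently of $\varphi_k$.
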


\begin{definition}
    For discrete random variables $X, Y$, let the distribution of $X$ conditioned on $Y = y$ be denoted $\Law(X \given Y = y)$.
\end{definition}

\begin{fact}[Pinsker's Inequality]
    For discrete random variables $X, Y, Z$,
    \[
        \E[\TV(\Law(X \given Y, Z), \Law(X \given Z))^2] \quad \leq \quad \frac{\ln 2}{2} \I(X \mutual Y \given Z) \enspace .
    \]
\end{fact}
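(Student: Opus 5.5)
We would prove this by applying the classical (unconditional) Pinsker inequality pointwise and then averaging, using the identity that expresses conditional mutual information as an expected Kullback--Leibler divergence. For distributions $P, Q$ on a finite set, write $D(P \,\|\, Q) = \sum_x P(x) \log \frac{P(x)}{Q(x)}$ in bits. The key identity, obtained by expanding $\Ent(X \given Z) - \Ent(X \given Y, Z)$ and regrouping terms by the pair $(y,z)$, is
\[
    \I(X \mutual Y \given Z) \quad = \quad \E_{(y,z)}\Big[ D\big(\Law(X \given Y = y, Z = z) \,\big\|\, \Law(X \given Z = z)\big) \Big] \enspace ,
\]
where the expectation is over $(Y,Z)$. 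Note that $\Law(X \given Y=y, Z=z)$ is absolutely continuous with respect to $\Law(X \given Z=z)$ whenever $\Pr[Y=y,Z=z]>0$, so every divergence appearing here is finite.

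Given this, it suffices to show the pointwise bound $\TV(P, Q)^2 \leq \frac{\ln 2}{2} D(P \,\|\, Q)$ for every pair of distributions $P, Q$. This is Pinsker's inequality $\TV \leq \sqrt{D_{\mathrm{nat}}/2}$, rewritten using $D_{\mathrm{nat}} = (\ln 2)\, D$. Substituting $P = \Law(X \given Y=y,Z=z)$ and $Q = \Law(X \given Z=z)$ and taking expectations over $(y,z)$ then gives exactly the stated inequality.

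To prove the pointwise bound, we would first reduce to the binary case. Let $\calE$ be the event attaining the maximum in the definition of $\TV$, and set $p = P(\calE)$ and $q = Q(\calE)$. Then $\TV(P,Q) = \abs{p - q}$. By the data processing inequality for KL divergence (equivalently, the log-sum inequality applied separately to $\calE$ and its complement), we have $D(P \,\|\, Q) \geq d(p \,\|\, q)$, where $d$ denotes the binary divergence. It therefore remains to prove
\[
    g(q) \coloneqq p \ln \frac{p}{q} + (1-p) \ln \frac{1-p}{1-q} - 2(p-q)^2 \;\geq\; 0 \quad \text{for all } p, q \in [0,1] \enspace .
\]
Fix $p$ and view $g$ as a function of $q$. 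We have $g(p) = 0$, and
\[
    g'(q) = (q - p)\Big(\frac{1}{q(1-q)} - 4\Big) \enspace .
\]
Since $q(1-q) \leq 1/4$, the second factor is nonnegative. Hence $g$ is nonincreasing for $q \leq p$ and nondecreasing for $q \geq p$, so its minimum is $g(p) = 0$.

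The only mildly delicate point is the boundary cases: when $q \in \{0,1\}$, or when some conditional probabilities vanish. These are handled by the convention $0 \log 0 = 0$, together with the observation above that absolute continuity holds on the support, so no divergence in the identity is infinite. We do not expect any genuine obstacle beyond carefully verifying the expectation identity for $\I(X \mutual Y \given Z)$ and keeping track of the conversion factor $\ln 2$ between bits and nats.
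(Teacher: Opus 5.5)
Your proof is correct. The expected-KL identity for $\I(X \mutual Y \given Z)$ is right, and absolute continuity holds on the support. The bound $\TV(P,Q)^2 \leq \frac{\ln 2}{2} D(P \,\|\, Q)$ with $D$ in bits also checks out: the reduction to a two-point space is valid, the derivative $g'(q) = (q-p)\big(\frac{1}{q(1-q)} - 4\big)$ is computed correctly, and the $\ln 2$ conversion is right. Averaging over $(Y,Z)$ then gives the statement.

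The paper gives no proof of its own; it simply invokes Pinsker's inequality as a standard fact in the proof of \Cref{lemma:LowDistributionShift}. Your argument is the standard textbook derivation. Your finite-support assumption also covers how the paper uses the fact, since the messages have bounded length and $j_{k,\ell} \in N_k$.
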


\section{Known-Set Hard Instances for Simultaneous Algorithms}

Following the approach of~\cite{DobzinskiRV22}, our DSIC lower bounds show that a good DSIC mechanism allows for the construction of a good simultaneous algorithm on a particular hard distribution of instances. In this section, we formally describe the hard distribution and prove a simultaneous communication lower bound. We now define the hard distribution abstractly and state the necessary properties. This is so that we can give a unified communication lower bound and DSIC reduction for the parameterized instance, then separately optimize the parameters for each valuation class.

\begin{definition}[Known-Set Simultaneous Hard Distribution] \label{def:KSSHD}
    A \emph{known-set simultaneous hard distribution} with parameters $\alpha, \eta, \lambda, \vb{C}, \calU$, abbreviated $(\alpha,\eta,\lambda,\vb{C},\calU)$-KSSHD, is a distribution over combinatorial auctions instances with the following properties:
    \begin{enumerate}[topsep=1pt,itemsep=1pt]
        \item Let $\beta = m/\eta$ and $\kappa = m/(2\alpha)$.
        \item Let $\vb{C} = (C_1, \dots, C_\lambda)$, where $C_\ell \subseteq [\beta]$, $\abs{C_\ell} = \alpha$ for all $\ell \in [\lambda]$.
        \item Let $\calU = \{u_{\vb{x}} : \vb{x} \in \{0,1,2\}^\lambda\}$,\footnote{The class $\calU$ is stated for $\vb{x} \in \{0,1,2\}^\lambda$ because of what we need in the DSIC hardness, but we only use $\vb{x} \in \{0,1\}^\lambda$ for the simultaneous hardness.} where for all $\vb{x} \in \{0,1,2\}^\lambda$, $u_{\vb{x}} : 2^{[\beta]} \to \mathbb{Z}_+$ is either
        \begin{enumerate}[topsep=0pt,itemsep=1pt]
            \item A GS or XOS valuation satisfying:
            \begin{enumerate}[topsep=0pt,itemsep=1pt]
                \item $u_{\vb{x}}(S) \leq \min\{3\alpha, 2\abs{S}\}$ for all $S \subseteq [\beta]$.
                \item $u_{\vb{x}}(C_\ell) \geq x_\ell\alpha$.
                \item $u_{\vb{x}}(C_\ell) \leq 30\alpha/\eta^2$ if $x_\ell = 0$.
            \end{enumerate}
            \item A general valuation satisfying:
            \begin{enumerate}[topsep=0pt,itemsep=1pt]
                \item $C_\ell \cap C_{\ell'} \cap \closure{C}_{\ell''} \ne \emptyset$ for all $\ell, \ell', \ell'' \in [\lambda]$ such that $\ell, \ell' \ne \ell''$.
                \item $u_{\vb{x}}(S) = \max_{\ell \in [\lambda]} \{\alpha x_\ell \cdot \One(S \supseteq C_\ell)\}$.
            \end{enumerate}
        \end{enumerate}
        \item Let $S \subseteq [\beta]$ and $\vb{x}, \vb{y} \in \{0,1,2\}^\lambda$ such that $x_\ell = 1$ and $y_\ell = 2$. Then either
        \begin{enumerate}[topsep=0pt,itemsep=1pt]
            \item $u_{\vb{x}}, u_{\vb{y}}$ are GS or XOS valuations, and $u_{\vb{y}}(S) - u_{\vb{x}}(S) \geq \abs{S \cap C_\ell} - 2\abs{S \cap \closure{C}_\ell} - 10\alpha/\eta^2$.
            \item $u_{\vb{x}}, u_{\vb{y}}$ are general valuations, and $u_{\vb{y}}(S) - u_{\vb{x}}(S) \geq \alpha \cdot \One(S \supseteq C_\ell) - 8\abs{S \cap \closure{C}_\ell}$.
        \end{enumerate}
        \item Let there be $n = \kappa\eta$ bidders, divided into $\kappa$ groups $N_1, \dots, N_\kappa$ of size $\eta$.
        \item Sample $\vb{x}_1, \dots, \vb{x}_n \in \{0,1\}^\lambda$ such that independently for each $k \in [\kappa]$, $\ell \in [\lambda]$, there is a uniformly random $i \in N_k$ such that $x_{i,\ell} = 1$, and $x_{i',\ell} = 0$ for all other $i' \in N_k$.
        \item Sample disjoint $A_1, \dots, A_\kappa, B \subseteq M$ uniformly randomly such that $\abs{A_k} = \alpha$ for all $k \in [\kappa]$ and $\abs{B} = \beta-\alpha$.
        \item Sample uniformly random $\vartheta \in [\lambda]$.
        \item For each $k \in [\kappa]$, sample a uniformly random bijection $\varphi'_k : C_\vartheta \to A_k$. Also sample a uniformly random bijection $\closure{\varphi} : \closure{C}_\vartheta \to B$, and define $\varphi_k = \closure{\varphi} \cup \varphi'_k$ as a bijection from $[\beta]$ to $A_k \cup B$. Extend $\varphi_k$ to sets in the natural way.
        \item Denote $A_{k,\ell} = \varphi_k(C_\ell)$ for all $k \in [\kappa]$, $\ell \in [\lambda]$.
        \item Let $v_1, \dots, v_n : 2^M \to \mathbb{Z}_+$ be the bidders' valuations, where for $k \in [\kappa]$, $i \in N_k$, we define $v_i(S) = u_{\vb{x}_i}(\varphi^{-1}_k(S \cap (A_k \cup B)))$.
        \item Each bidder $i \in N_k$ is given $\varphi_k$ (which reveals $A_k \cup B$) and $\vb{x}_i$ as part of their input, and the referee is given $\varphi_1, \dots, \varphi_\kappa$ (which reveals $A_1, \dots, A_\kappa, B$) for free.
    \end{enumerate}
\end{definition}

At a high level, there is a shared pool of items $B$, and separate sets $A_1, \dots, A_\kappa$ associated with each group of bidders $N_1, \dots, N_\kappa$. Nearly all the welfare comes from the items $A_1, \dots, A_\kappa$, but there is only one bidder $i \in N_k$ who has high value $\alpha$ for $A_k$. But the bidders need to tell the referee who has high value for $A_k$, without knowing which of the sets $C_1, \dots, C_\lambda$ has been mapped onto $A_k$. Thus, each group of bidders $N_k$ is essentially playing the $\abs{N_k}$-ary INDEX problem with the referee, which is hard in the one-way communication model. Properties (3) and (4) have two branches since the non-subadditivity of general valuations allows for a stronger construction, but the key ideas remain the same.

\subsection{Parameters for Each Valuation Class}

We now establish what KSSHDs are possible for different valuation classes. For simplicity, in the three following propositions, let $\gamma$ be chosen such that $\alpha \leq \beta$ and $\eta \geq 2$, which ensures we can take $\alpha$-sized subsets of $[\beta]$, and that the total number of items in the KSSHD does not exceed $m$. This is achieved by letting $m$ be sufficiently large and $\gamma \in [\log m, m/10^{10}]$.

\begin{proposition} \label{prop:GSKSSHD}
    For $\gamma \in [\log m, m/10^{10}]$, there exists an $(\alpha, \eta, \lambda, \vb{C}, \calU)$-KSSHD using GS valuations for parameters $\alpha = (m/\gamma)^{2/7} \gamma$, $\eta = (m/\gamma)^{1/7}$, $\lambda = 2^\gamma$.
\end{proposition}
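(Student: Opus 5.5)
We instantiate \Cref{thm:BalcanHarvey} on ground set $[\beta]$ with $d = \alpha$ and $b = 10\gamma$, and build each $u_{\vb{x}}$ as the convolution of two weighted matroid rank functions, one of which is scaled by $2$. Note that $\alpha/\eta^2 = \gamma$ and $\beta = m/\eta = (m/\gamma)^{6/7}\gamma$.

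\textbf{Parameter check.} We need the hypotheses of \Cref{thm:BalcanHarvey}:
\begin{itemize}[topsep=1pt,itemsep=1pt]
    \item $b = 10\gamma \geq 10\log m \geq 20$, and $b \leq d$ since $(m/\gamma)^{2/7} \geq 10$ for $\gamma \leq m/10^{10}$.
    \item $\lambda = 2^\gamma = 2^{b/10}$, and $\lambda \geq 4d/b$ trivially.
    \item $96d^3/b^2 = \tfrac{96}{100}(m/\gamma)^{6/7}\gamma \leq \beta$.
\end{itemize}
So we obtain sets $C_1,\dots,C_\lambda \subseteq [\beta]$ of size $\alpha$ and matroid rank functions $r_{\vb{z}}$ for $\vb{z} \in \{0,1\}^\lambda$. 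For $\vb{x} \in \{0,1,2\}^\lambda$, let $\vb{x}' = (\One(x_\ell \geq 1))_\ell$ and $\vb{x}'' = (\One(x_\ell = 2))_\ell$. Define
\[
    u_{\vb{x}} \;=\; r_{\vb{x}'} \star (2r_{\vb{x}''}) \enspace .
\]
Here $2r_{\vb{x}''}$ is a weighted matroid rank function (all weights $2$), so $u_{\vb{x}}$ is GS by the convolution fact, and it is integer valued.

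\textbf{Property (3a).} Splitting $S$ in the convolution, each item contributes at most $2$, so $u_{\vb{x}}(S) \leq 2|S|$. Also $u_{\vb{x}}(S) \leq r_{\vb{x}'}([\beta]) + 2r_{\vb{x}''}([\beta]) \leq 3\alpha$.
\begin{itemize}[topsep=1pt,itemsep=1pt]
    \item If $x_\ell = 1$: taking $T = \emptyset$ gives $u_{\vb{x}}(C_\ell) \geq r_{\vb{x}'}(C_\ell) = \alpha$.
    \item If $x_\ell = 2$: taking $T = C_\ell$ gives $u_{\vb{x}}(C_\ell) \geq 2r_{\vb{x}''}(C_\ell) = 2\alpha$.
    \item If $x_\ell = 0$: both $r_{\vb{x}'}(C_\ell)$ and $r_{\vb{x}''}(C_\ell)$ equal $b$, so $u_{\vb{x}}(C_\ell) \leq 3b = 30\gamma = 30\alpha/\eta^2$.
\end{itemize}

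\textbf{Property (4a).} Let $x_\ell = 1$ and $y_\ell = 2$, with no assumption on the other coordinates.

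For the lower bound on $u_{\vb{y}}$: since $r_{\vb{y}''}(C_\ell) = \alpha = |C_\ell|$, the set $C_\ell$ is independent in the matroid of $r_{\vb{y}''}$. Taking $T = S \cap C_\ell$ therefore gives $u_{\vb{y}}(S) \geq 2|S \cap C_\ell|$.

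For the upper bound on $u_{\vb{x}}$, fix the maximizing $T$ and let $t = |T \cap C_\ell|$. Bound the two terms using subadditivity and $r(A) \leq |A|$, together with $r_{\vb{x}''}(T \cap C_\ell) \leq \min\{b, t\}$ (this holds because $x''_\ell = 0$). This yields
\[
    u_{\vb{x}}(S) \;\leq\; \bigl(|S \cap C_\ell| - t\bigr) + 2\min\{b, t\} + 2|S \cap \closure{C}_\ell| \;\leq\; |S \cap C_\ell| + b + 2|S \cap \closure{C}_\ell| \enspace .
\]
Subtracting gives exactly $u_{\vb{y}}(S) - u_{\vb{x}}(S) \geq |S \cap C_\ell| - 2|S \cap \closure{C}_\ell| - 10\alpha/\eta^2$.

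\textbf{Remaining properties.} All other properties of the KSSHD (groups, sampling, embeddings $\varphi_k$) are generic. They only require $\alpha \leq \beta$ and $\eta \geq 2$, which hold in the stated range of $\gamma$.

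\textbf{Main obstacle.} The delicate step is the $-10\alpha/\eta^2$ slack in (4a). The naive bound $u_{\vb{x}}(S \cap C_\ell) \leq |S \cap C_\ell| + 2b$ loses $20\gamma$, which is too much. The fix is the $\min\{b, t\}$ trade-off inside the convolution shown above, which costs only $b$. Choosing $b$ larger would break (3a)(c), and choosing it smaller would violate $\lambda \leq 2^{b/10}$, so $b = 10\gamma$ is forced.
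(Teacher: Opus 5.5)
Your proposal is correct and follows essentially the same route as the paper. Both instantiate \Cref{thm:BalcanHarvey} with $d=\alpha$ and $b=10\gamma$, take the convolution of the ``$\geq 1$'' rank function with twice the ``$=2$'' rank function, and verify (4.a) by the same marginal-value accounting; your $\min\{b,t\}$ bound is just an explicit version of the paper's ``except for at most $10\alpha/\eta^2$ items'' step. The only difference is cosmetic: the paper additionally redefines $r_{\vb{0}}$ to be the zero function, which, as your verification shows, is not needed for any of the listed properties.
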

\begin{proof}
The construction of~\cite{BalcanH11} (\Cref{thm:BalcanHarvey}) with $d = \alpha$, $b = 10\gamma = 10\alpha/\eta^2$, $\beta = m/\eta = 100d^3/b^2$, and $\lambda = 2^{b/10}$ yields matroid rank functions $r_{\vb{x}} : 2^{[\beta]} \to \mathbb{Z}_+$ such that
\begin{itemize}[topsep=1pt,itemsep=1pt]
    \item $r_{\vb{x}}([\beta]) \leq d$.
    \item $r_{\vb{x}}(C_\ell) = \alpha$ if $x_\ell = 1$.
    \item $r_{\vb{x}}(C_\ell) = 10\alpha/\eta^2$ if $x_\ell = 0$.
\end{itemize}

Redefine $r_{\vb{0}}$ to be the zero function. For $\vb{x} \in \{0,1,2\}^\lambda$, let $\vb{x}^-, \vb{x}^+ \in \{0, 1\}^\lambda$ be defined by $x^-_\ell = \One(x_\ell \geq 1)$ and $x^+_\ell = \One(x_\ell \geq 2)$. Define $u_{\vb{x}} = r_{\vb{x}^-} \star (2 \cdot r_{\vb{x}^+})$. As the convolution of two weighted matroid rank functions, $u_{\vb{x}}$ is GS.

Property (3.a.i) holds because the convolved functions are weighted matroid rank functions with item weights at most $2$, and the maximum values of $r_{\vb{x}^-}$ and $2 \cdot r_{\vb{x}^+}$ are $\alpha$ and $2\alpha$, respectively. Property (3.a.ii) holds by construction. Property (3.a.iii) holds because the contribution of $r_{\vb{x}^-}$ is at most $10\alpha/\eta^2$ and the contribution of $2 \cdot r_{\vb{x}^+}$ is at most $20\alpha/\eta^2$.

Property (4.a) holds because except for at most $10\alpha/\eta^2$ items, $u_{\vb{x}}$ obtains at most marginal value $1$ per item of $C_\ell$, and at most marginal value $2$ everywhere else. On the other hand, $u_{\vb{y}}$ obtains marginal value $2$ per item of $C_\ell$. Taking the difference yields the desired result.
\end{proof}

\begin{proposition} \label{prop:XOSKSSHD}
    For $\gamma \in [\log m, m/10^{10}]$, there exists an $(\alpha, \eta, \lambda, \vb{C}, \calU)$-KSSHD using XOS valuations for parameters $\alpha = (m/\gamma)^{2/5} \gamma$, $\eta = (m/\gamma)^{1/5}$, $\lambda = 2^\gamma$.
\end{proposition}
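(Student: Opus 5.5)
The plan is to follow~\cite{Assadi17}: take $\vb{C}$ to be a family of random $\alpha$-subsets of $[\beta]$ with small pairwise intersections, and let $u_{\vb{x}}$ be the XOS function whose clauses are the scaled indicators of the $C_\ell$. With the stated parameters and $\beta = m/\eta$, we have
\[
    \alpha/\eta^2 \;=\; (m/\gamma)^{2/5}\gamma/(m/\gamma)^{2/5} \;=\; \gamma ,
    \qquad
    \alpha^2/\beta \;=\; \alpha^2\eta/m \;=\; (m/\gamma)^{4/5}\gamma^2\,(m/\gamma)^{1/5}/m \;=\; \gamma .
\]
So the expected intersection of two independent uniformly random $\alpha$-subsets of $[\beta]$ is exactly $\gamma = \alpha/\eta^2$, and this is what makes the error terms in Properties (3.a.iii) and (4.a) come out right.

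\textbf{Step 1 (the set system).} Choose $C_1,\dots,C_\lambda$ independently and uniformly among $\alpha$-subsets of $[\beta]$, which is possible since $\alpha \le \beta$ in the stated range of $\gamma$. The goal is to show that, with positive probability, $\abs{C_\ell \cap C_{\ell'}} \le 5\gamma$ for all $\ell \ne \ell'$. For a fixed pair, the intersection size is hypergeometric with mean $\gamma$, so the multiplicative Chernoff bound (which holds for hypergeometric variables) gives
\[
    \Pr\big[\abs{C_\ell \cap C_{\ell'}} \ge 5\gamma\big] \;\le\; \big(e^4/5^5\big)^{\gamma} \;\le\; 2^{-5.8\gamma} .
\]
A union bound over at most $\lambda^2 = 2^{2\gamma}$ pairs leaves failure probability at most $2^{-3.8\gamma} < 1$. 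I expect this to be the only step needing care: the constant $5$ is forced by the $10\alpha/\eta^2$ slack in (4.a), and one has to check that Chernoff at that threshold still beats the $2^{2\gamma}$ union bound. It does, with room to spare.

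\textbf{Step 2 (the valuations).} For $\vb{x} \in \{0,1,2\}^\lambda$, define
\[
    u_{\vb{x}}(S) \;=\; \max_{\ell : x_\ell \ge 1} x_\ell \abs{S \cap C_\ell} ,
\]
with the empty maximum equal to $0$. This is XOS, integer-valued, and monotone. Property (3.a.i) holds because every clause has coefficients in $\{1,2\}$ supported on a set of size $\alpha$, so $u_{\vb{x}}(S) \le \min\{2\alpha, 2\abs{S}\}$. Property (3.a.ii) is immediate from the clause for $\ell$ itself. For (3.a.iii), if $x_\ell = 0$ then every clause that appears has index $\ell' \ne \ell$, so $u_{\vb{x}}(C_\ell) \le 2\max_{\ell' \ne \ell}\abs{C_\ell \cap C_{\ell'}} \le 10\gamma \le 30\alpha/\eta^2$.

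\textbf{Step 3 (Property (4.a)).} Fix $\ell$ with $x_\ell = 1$ and $y_\ell = 2$, and let $S \subseteq [\beta]$. On one side, $u_{\vb{y}}(S) \ge 2\abs{S \cap C_\ell}$ from clause $\ell$. On the other side, clause $\ell$ of $u_{\vb{x}}$ contributes exactly $\abs{S \cap C_\ell}$, while any clause $\ell' \ne \ell$ contributes at most
\[
    2\abs{S \cap C_{\ell'}} \;\le\; 2\abs{S \cap \closure{C}_\ell} + 2\abs{C_\ell \cap C_{\ell'}} \;\le\; 2\abs{S \cap \closure{C}_\ell} + 10\gamma .
\]
Hence $u_{\vb{x}}(S) \le \abs{S \cap C_\ell} + 2\abs{S \cap \closure{C}_\ell} + 10\gamma$, and subtracting gives
\[
    u_{\vb{y}}(S) - u_{\vb{x}}(S) \;\ge\; \abs{S \cap C_\ell} - 2\abs{S \cap \closure{C}_\ell} - 10\alpha/\eta^2 .
\]
Note that this bound holds no matter how $\vb{x}$ and $\vb{y}$ differ on the other coordinates.

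The remaining items of \Cref{def:KSSHD} are just the sampling procedure and the group structure, and need no verification beyond $\eta \ge 2$ and $\lambda = 2^\gamma$, both of which the parameter choice guarantees.
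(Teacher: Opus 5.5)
Your proof is correct and follows essentially the same route as the paper: random $\alpha$-subsets of $[\beta]$ with pairwise intersections controlled by a Chernoff bound and a union bound over $\lambda^2 = 2^{2\gamma}$ pairs, and $u_{\vb{x}}(S) = \max_\ell x_\ell\abs{S \cap C_\ell}$. The only difference is cosmetic. The paper uses intersection threshold $10\gamma$ with a finer accounting in (4.a): each item of $S \cap C_\ell \cap C_{\ell'}$ costs only one extra unit over the baseline $\abs{S \cap C_\ell}$, so the constant $5$ is not actually forced. You instead use threshold $5\gamma$ with the cruder bound $2\abs{C_\ell \cap C_{\ell'}}$, which works equally well.
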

\begin{proof}
Let $\beta = m/\eta = \alpha^2/\gamma$. Sample a uniformly random list of sets $C_1, \dots, C_\lambda \subseteq [\beta]$ of size $\alpha$. For any distinct $\ell, \ell' \in [\lambda]$, we have by a Chernoff bound that
\begin{align*}
    \Pr[\abs{C_\ell \cap C_{\ell'}} \geq 10\alpha/\eta^2] \quad &= \quad \Pr[\abs{C_\ell \cap C_{\ell'}} \geq 10\gamma] \\
    &= \quad \Pr[\abs{C_\ell \cap C_{\ell'}} \geq 10\E[\abs{C_\ell \cap C_{\ell'}}]] \quad \leq \quad e^{-2\gamma} \enspace .
\end{align*}
By a union bound, the probability that the event above holds for any distinct $\ell, \ell' \in [\lambda]$ is at most $\lambda^2 e^{-2\gamma} < 1$. Therefore, fixing any choice of $C_1, \dots, C_\lambda$ such that the low intersection property holds, the valuation $u_{\vb{x}} : 2^{[\beta]} \to \mathbb{Z}_+$ defined by $u_{\vb{x}}(S) = \max_{\ell \in [\lambda]} x_\ell\abs{S \cap C_\ell}$ satisfies Property (3.a).

Property (4.a) holds because except for at most $10\alpha/\eta^2$ items, $u_{\vb{x}}$ obtains at most marginal value $1$ per item of $C_\ell$, and at most marginal value $2$ everywhere else. On the other hand, $u_{\vb{y}}$ obtains marginal value $2$ per item of $C_\ell$. Taking the difference yields the desired result.
\end{proof}

\begin{proposition} \label{prop:GeneralKSSHD}
    For $\gamma \in [\log m, m/10^{10}]$, there exists an $(\alpha, \eta, \lambda, \vb{C}, \calU)$-KSSHD using general valuations for parameters $\alpha = 100\gamma$, $\eta = m/(200\gamma)$, $\lambda = 2^\gamma$.
\end{proposition}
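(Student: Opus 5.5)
We would take $\alpha = 100\gamma$ and $\eta = m/(200\gamma)$, so that $\beta = m/\eta = 200\gamma = 2\alpha$ and $\lambda = 2^\gamma$. For each $\vb{x} \in \{0,1,2\}^\lambda$ we define $u_{\vb{x}}(S) = \max_{\ell} \alpha x_\ell \One(S \supseteq C_\ell)$, exactly as Property (3.b.ii) requires. These functions are monotone and integer-valued, so the only work is to choose $C_1, \dots, C_\lambda \subseteq [\beta]$ of size $\alpha$ satisfying Properties (3.b.i) and (4.b). As in \Cref{prop:XOSKSSHD}, we plan a probabilistic construction: each $C_\ell$ is an independent uniformly random $\alpha$-subset of $[\beta]$, followed by a union bound over pairs and triples of indices. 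The union bound has room because $\lambda^3 = 2^{3\gamma}$ while the failure probabilities will be $2^{-\Omega(\gamma)}$ with large constants.

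\emph{Conditions to enforce.} We require two properties.
\begin{enumerate}[topsep=1pt,itemsep=1pt]
    \item[(A)] For all distinct $\ell, \ell'$, we have $\abs{C_{\ell'} \setminus C_\ell} \geq \alpha/4$.
    \item[(B)] For all $\ell, \ell'$ (possibly equal) and all $\ell'' \notin \{\ell, \ell'\}$, we have $C_\ell \cap C_{\ell'} \cap \closure{C}_{\ell''} \neq \emptyset$.
\end{enumerate}
For (A), $\abs{C_{\ell'} \setminus C_\ell}$ is hypergeometric with mean $\alpha/2 = 50\gamma$. A hypergeometric tail bound (Hoeffding's bound applies to sampling without replacement) gives failure probability $e^{-c\alpha}$ with $c\alpha \gg 3\gamma$.

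For (B) with $\ell = \ell'$, the requirement reduces to (A). For distinct $\ell, \ell', \ell''$, we first condition on $C_{\ell''}$ and $C_\ell$. By the same tail bound, $T := C_\ell \cap \closure{C}_{\ell''}$ has size $t \geq \alpha/4$ except with probability $e^{-\Omega(\alpha)}$. Given this, the probability that $C_{\ell'}$ avoids $T$ is
\[
    \binom{2\alpha - t}{\alpha} \Big/ \binom{2\alpha}{\alpha} \quad \leq \quad 2^{-t} \quad \leq \quad 2^{-25\gamma} \enspace ,
\]
where the middle inequality holds because each successive ratio is at most $1/2$. Hence every bad event has probability at most $2^{-20\gamma}$. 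Summing over at most $\lambda^3 = 2^{3\gamma}$ triples gives a total failure probability below $1$, so some fixed choice of $\vb{C}$ satisfies both (A) and (B).

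\emph{Verifying Property (4.b).} Fix $S$ and $\vb{x}, \vb{y}$ with $x_\ell = 1$ and $y_\ell = 2$. Since $y_\ell = 2$, we have $u_{\vb{y}}(S) \geq 2\alpha \One(S \supseteq C_\ell)$. We split into two cases.
\begin{itemize}[topsep=1pt,itemsep=1pt]
    \item If $S$ contains no $C_{\ell'}$ with $\ell' \neq \ell$, then $u_{\vb{x}}(S) = \alpha \One(S \supseteq C_\ell)$. The difference is therefore at least $\alpha \One(S \supseteq C_\ell)$, which suffices.
    \item Otherwise $S \supseteq C_{\ell'}$ for some $\ell' \neq \ell$. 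By (A), $\abs{S \cap \closure{C}_\ell} \geq \alpha/4$, so $8\abs{S \cap \closure{C}_\ell} \geq 2\alpha \geq u_{\vb{x}}(S)$. Then
    \[
        u_{\vb{y}}(S) - u_{\vb{x}}(S) \quad \geq \quad 2\alpha \One(S \supseteq C_\ell) - 2\alpha \quad \geq \quad \alpha \One(S \supseteq C_\ell) - 8\abs{S \cap \closure{C}_\ell} \enspace .
    \]
\end{itemize}
Finally, $\alpha \leq \beta$, and $\eta \geq 2$ holds for $\gamma \leq m/10^{10}$.

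The main obstacle is recognizing that Property (4.b) cannot be derived from (3.b.i) alone: $\vb{x}$ and $\vb{y}$ may differ arbitrarily off coordinate $\ell$. The fix is to add the quantitative separation (A), after which the probabilistic step reduces to routine hypergeometric tail bounds.
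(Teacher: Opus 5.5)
Your proposal is correct and follows essentially the paper's proof: independent random $\alpha$-subsets of $[2\alpha]$, a union bound over $O(\lambda^3)$ pair and triple events, and then the bound $2\alpha\,\One(S \supseteq C_{\ell'}) \leq 8|S \cap \overline{C_\ell}|$ for $\ell' \neq \ell$ to obtain (4.b). Your condition (A) is the same separation the paper uses; the paper gets it from the Chernoff event $|C_\ell \cap C_{\ell'}| \in [25\gamma, 75\gamma]$ included in its union bound, and your $2^{-t}$ bound for $C_{\ell'}$ avoiding $C_\ell \cap \overline{C_{\ell''}}$ simply replaces its second Chernoff step.
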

\begin{proof}
Let $\beta = m/\eta = 200\gamma = 2\alpha$. Sample a uniformly random list of sets $C_1, \dots, C_\lambda \subseteq [200\gamma]$ of size $100\gamma$. For $\ell, \ell', \ell'' \in [\lambda]$ such that $\ell, \ell' \ne \ell''$, the set $C_\ell \cap C_{\ell'} \cap \closure{C}_{\ell''}$ is the intersection of either two or three independent random sets of size $100\gamma$.

If $\ell = \ell'$, we have $\abs{C_{\ell} \cap C_{\ell'}} = 100\gamma$. Otherwise, by a Chernoff bound, $\abs{C_\ell \cap C_{\ell'}} \in [25\gamma, 75\gamma]$ w.p.\ at least $1 - 2e^{-3\gamma}$ when $\ell \ne \ell'$. Conditioned on this event, another application yields that $\abs{C_\ell \cap C_{\ell'} \cap \closure{C}_{\ell''}} = 0$ w.p.\ at most $e^{-3\gamma}$. Thus, by a union bound, the probability that $C_\ell \cap C_{\ell'} \cap \closure{C}_{\ell''} = \emptyset$ for any $\ell, \ell', \ell'' \in [\lambda]$ such that $\ell, \ell' \ne \ell''$ is at most $3\lambda^3 e^{-3\gamma} < 1$. Therefore, fix any choice of $C_1, \dots, C_\lambda$ such that Property (3.b.i) holds, and define $u_{\vb{x}}$ according to Property (3.b.ii).

Now, consider $\vb{x}, \vb{y} \in \{0,1,2\}^\lambda$ such that $x_\ell = 1$ and $y_\ell = 2$. Then
\[
    u_{\vb{y}}(S) - u_{\vb{x}}(S) \quad \geq \quad \alpha \cdot \One(S \supseteq C_\ell) - 2\alpha \cdot \max_{\ell' \in [\lambda] \setminus \{\ell\}} \One(S \supseteq C_{\ell'}) \enspace .
\]
By our choice of $C_1, \dots, C_\lambda$, we have $\abs{C_{\ell'} \cap \closure{C}_\ell} \geq 25\gamma \geq \alpha/4$ for all $\ell' \ne \ell$, and so 
\[
    u_{\vb{y}}(S) - u_{\vb{x}}(S) \quad \geq \quad \alpha \cdot \One(S \supseteq C_\ell) - 8\abs{S \cap \closure{C}_\ell} \enspace ,
\]
so Property (4.b) is satisfied.
\end{proof}

\subsection{Simultaneous Communication Lower Bound}

Now, we prove communication lower bounds on simultaneous algorithms with good approximations on KSSHDs. For $k \in [\kappa]$, let $i_k \in N_k$ denote the bidder with high value for $A_k$. First, we prove that good approximations require allocating a significant fraction of $A_k$ to bidder $i_k$.

\begin{lemma} \label{lemma:MustAllocateToCorrectBiddersXOS}
    For any allocation $(S_1, \dots, S_n)$ on an instance drawn from an $(\alpha, \eta, \lambda, \vb{C}, \calU)$-KSSHD for GS or XOS valuations,
    \[
        \sum_{i \in [n]} v_i(S_i) \quad \leq \quad 17m/\eta + 2\sum_{k \in [\kappa]} \abs{S_{i_k} \cap A_k} \enspace .
    \]
\end{lemma}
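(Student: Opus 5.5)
The plan is to bound each bidder's value by splitting their bundle into the part inside the shared pool $B$ and the part inside their own group's special set $A_k$. The $B$-parts are then charged to $\abs{B}$, and the $A_k$-parts are charged either to the small value that non-special bidders have for $C_\vartheta$, or to $2\abs{S_{i_k} \cap A_k}$ for the special bidder. Items outside $A_k \cup B$ contribute nothing to $v_i$ for $i \in N_k$, by the definition $v_i(S) = u_{\vb{x}_i}(\varphi_k^{-1}(S \cap (A_k \cup B)))$.

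Fix $k \in [\kappa]$ and $i \in N_k$. Since $u_{\vb{x}_i}$ is GS or XOS, it is subadditive. Because $\varphi_k$ maps $C_\vartheta$ onto $A_k$ and $\closure{C}_\vartheta$ onto $B$, we get
\[
    v_i(S_i) \quad \leq \quad u_{\vb{x}_i}(\varphi_k^{-1}(S_i \cap B)) + u_{\vb{x}_i}(\varphi_k^{-1}(S_i \cap A_k)) \enspace .
\]
For the first term, Property (3.a.i) gives the bound $2\abs{S_i \cap B}$. Summing over all bidders and using disjointness of the allocation, the total $B$-contribution is at most $2\abs{B} \leq 2\beta = 2m/\eta$.

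For the second term, note that $\varphi_k^{-1}(S_i \cap A_k) \subseteq C_\vartheta$. There are two cases.
\begin{itemize}[topsep=1pt,itemsep=1pt]
    \item If $i \neq i_k$, then $x_{i,\vartheta} = 0$. By monotonicity and Property (3.a.iii), the term is at most $u_{\vb{x}_i}(C_\vartheta) \leq 30\alpha/\eta^2$. Summing over the $\eta$ bidders of each group and the $\kappa = m/(2\alpha)$ groups gives at most $\kappa \eta \cdot 30\alpha/\eta^2 = 15m/\eta$.
    \item If $i = i_k$, Property (3.a.i) bounds the term by $2\abs{S_{i_k} \cap A_k}$.
\end{itemize}
Adding the three contributions gives $2m/\eta + 15m/\eta + 2\sum_k \abs{S_{i_k} \cap A_k}$, which is the claimed bound.

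The only step needing care is the subadditive split combined with monotonicity. We must check that the preimage of $S_i \cap A_k$ lies inside $C_\vartheta$, so that Property (3.a.iii), which is stated for the full set $C_\vartheta$, applies to subsets of it. We must also check that only $\vb{x}_i \in \{0,1\}^\lambda$ arises in the distribution, so that $x_{i,\vartheta} = 0$ exactly for $i \neq i_k$. Both follow directly from steps (6) and (8) of \Cref{def:KSSHD}, so I expect no real obstacle.
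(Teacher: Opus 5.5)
Your proposal is correct and matches the paper's proof essentially step for step. You split each $v_i(S_i)$ by subadditivity into its $A_k$ and $B$ parts, charge the $B$ parts to $2\abs{B} \leq 2\beta = 2m/\eta$, bound each non-special bidder by $u_{\vb{x}_i}(C_\vartheta) \leq 30\alpha/\eta^2$ for a total of $30n\alpha/\eta^2 = 15m/\eta$, and bound the special bidder by $2\abs{S_{i_k} \cap A_k}$ (one small citation slip: that $\varphi_k$ maps $C_\vartheta$ onto $A_k$ and $\closure{C}_\vartheta$ onto $B$ is step (9) of the definition, not step (8)).
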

\begin{proof}
By subadditivity and Property (3.a),
\begin{align*}
    \sum_{i \in [n]} v_i(S_i) \quad &\leq \quad \sum_{k \in [\kappa]} \sum_{i \in N_k} \Big(v_i(S_i \cap A_k) + v_i(S_i \cap B)\Big) \\
    &\leq \quad 2\abs{B} + \sum_{k \in [\kappa]} \sum_{i \in N_k} v_i(S_i \cap A_k) \\
    &\leq \quad 2\beta + \sum_{k \in [\kappa]} \bigg(v_{i_k}(S_{i_k} \cap A_k) + \sum_{i \in N_k : i \ne i_k} v_i(A_k)\bigg) \\
    &\leq \quad 2\beta + 30n\alpha/\eta^2 + 2\sum_{k \in [\kappa]} \abs{S_{i_k} \cap A_k} \\
    &\leq \quad 17m/\eta + 2\sum_{k \in [\kappa]} \abs{S_{i_k} \cap A_k} \enspace . \qedhere
\end{align*}
\end{proof}

\begin{lemma} \label{lemma:MustAllocateToCorrectBiddersGEN}
    For any allocation $(S_1, \dots, S_n)$ on an instance drawn from an $(\alpha, \eta, \lambda, \vb{C}, \calU)$-KSSHD for general valuations,
    \[
        \sum_{i \in [n]} v_i(S_i) \quad \leq \quad 2m/\eta + 2\alpha\sum_{k \in [\kappa]} \One(S_{i_k} \supseteq A_k) \quad \leq \quad 2m/\eta + 2\sum_{k \in [\kappa]} \abs{S_{i_k} \cap A_k} \enspace .
    \]
\end{lemma}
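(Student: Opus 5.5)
The plan is to bound each bidder's value directly from the 0-1 structure of Property (3.b.ii). In the KSSHD we have $\vb{x}_i \in \{0,1\}^\lambda$. Bidder $i \in N_k$ has $v_i(S_i) = \alpha$ if $\varphi_k^{-1}(S_i \cap (A_k \cup B)) \supseteq C_\ell$ for some $\ell$ with $x_{i,\ell} = 1$, and $v_i(S_i) = 0$ otherwise. Recall that $\varphi_k(C_\ell) = A_{k,\ell}$. For $\ell \ne \vartheta$, Property (3.b.i) with $\ell' = \ell$ and $\ell'' = \vartheta$ gives $C_\ell \cap \closure{C}_\vartheta \ne \emptyset$. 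Since $\varphi_k$ maps $\closure{C}_\vartheta$ onto $B$ via $\closure{\varphi}$, independently of $k$, the set $A_{k,\ell}$ contains an item of $B$.

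The argument then splits the winning bidders into two classes.

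\textbf{Bidders satisfied by $C_\vartheta$.} Here $S_i \supseteq \varphi_k(C_\vartheta) = A_k$. This requires $x_{i,\vartheta} = 1$, so $i = i_k$. Each such $k$ contributes $\alpha \cdot \One(S_{i_k} \supseteq A_k)$.

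\textbf{Bidders satisfied by some $C_\ell$ with $\ell \ne \vartheta$.} Here $S_i$ contains an item of $B$. Since the $S_i$ are disjoint, there are at most $\abs{B} = \beta - \alpha$ such bidders, each contributing $\alpha$. This gives a total of at most $\alpha\beta$.

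It remains to check that $\alpha\beta \le 2m/\eta$. In the general-valuation instance $\beta = 2\alpha$ and $\beta = m/\eta$, so $\alpha\beta = m\beta/(2\eta)$, which is not $\le 2m/\eta$. The count must therefore be sharpened. Consider two winning bidders using $\ell_1, \ell_2 \ne \vartheta$, possibly in different groups. Property (3.b.i) with $\ell = \ell_1$, $\ell' = \ell_2$, $\ell'' = \vartheta$ gives $C_{\ell_1} \cap C_{\ell_2} \cap \closure{C}_\vartheta \ne \emptyset$. Applying the shared bijection $\closure{\varphi}$, the sets $A_{k_1,\ell_1}$ and $A_{k_2,\ell_2}$ share an item of $B$. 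So at most one bidder overall can win via a non-$\vartheta$ clause, and the second class contributes at most $\alpha \le \beta = m/\eta$. This is also the main step I expect to need care: it uses the triple-intersection property, not just pairwise intersection, together with the fact that all groups share the same $\closure{\varphi}$.

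Summing the two classes gives $\sum_i v_i(S_i) \le m/\eta + \alpha\sum_k \One(S_{i_k} \supseteq A_k)$, which is at most $2m/\eta + 2\alpha\sum_k \One(S_{i_k} \supseteq A_k)$. For the second inequality, $S_{i_k} \supseteq A_k$ implies $\abs{S_{i_k} \cap A_k} = \abs{A_k} = \alpha$. Hence $\alpha\One(S_{i_k} \supseteq A_k) \le \abs{S_{i_k} \cap A_k}$ termwise.
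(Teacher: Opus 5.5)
Your proof is correct and follows the paper's argument. Property (3.b.i) with $\ell'' = \vartheta$, together with the fact that every $\varphi_k$ agrees with $\closure{\varphi}$ on $\closure{C}_\vartheta$, makes any two sets $A_{k,\ell}, A_{k',\ell'}$ with $\ell, \ell' \ne \vartheta$ intersect in $B$. Hence at most one bidder can gain value through a non-$\vartheta$ clause, and every other bidder with positive value is some $i_k$ holding $A_k$. Your intermediate bound $m/\eta + \alpha\sum_k \One(S_{i_k} \supseteq A_k)$ is slightly sharper than the statement. The factor of $2$ in the paper is slack, used later when the lemma is applied to amplified valuations with coefficients up to $2\alpha$, and your argument covers that case as well.
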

\begin{proof}
By Property (9), the items $\closure{C}_\vartheta$ are mapped to $B$ in the same way for all bidders. Therefore, Property (3.b.i) says that for $k, k' \in [\kappa]$, $\ell, \ell' \in [\lambda]$ such that $\ell, \ell' \ne \vartheta$,
\[
    A_{k,\ell} \cap A_{k',\ell'} \quad = \quad \varphi_k(C_\ell) \cap \varphi_{k'}(C_{\ell'}) \quad \supseteq \quad \varphi_k(C_\ell \cap C_{\ell'} \cap \closure{C}_\vartheta) \quad \ne \quad \emptyset \enspace .
\]
In plain English, this means that the only pairs of disjoint sets are pairs where at least one member is in $A_1, \dots, A_\kappa$. Since Property (3.b.ii) gives bidders ``all or nothing'' valuations, this implies that at most one bidder can get positive value (which is at most $\alpha \leq \beta = m/\eta$) from a set which is not one of $A_1, \dots, A_\kappa$. Thus, since $\alpha \cdot \One(S_{i_k} \supseteq A_k) \leq \abs{S_{i_k} \cap A_k}$,
\[
    \sum_{i \in [n]} v_i(S_i) \quad \leq \quad 2m/\eta + 2\alpha\sum_{k \in [\kappa]} \One(S_{i_k} \supseteq A_k) \quad \leq \quad 2m/\eta + 2\sum_{k \in [\kappa]} \abs{S_{i_k} \cap A_k} \enspace . \qedhere
\]
\end{proof}

\begin{theorem} \label{thm:KSSHDIsHard}
    Any simultaneous algorithm attaining an $\eta/40$-approximation on an $(\alpha, \eta, \lambda, \vb{C}, \calU)$-KSSHD uses at least $\lambda/\eta^2$ communication.
\end{theorem}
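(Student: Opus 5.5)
We argue by contradiction: a simultaneous algorithm with $\eta/40$-approximation and communication $c < \lambda/\eta^2$ would let the referee recover too much information about the hidden indices. First we lower bound the optimal welfare. Allocating each $A_k$ to $i_k$ gives $v_{i_k}(A_k) = u_{\vb{x}_{i_k}}(C_\vartheta) \geq \alpha$ by Property (3.a.ii) or (3.b.ii), so $\mathrm{OPT} \geq \kappa\alpha = m/2$. An $\eta/40$-approximation then has expected welfare at least $20m/\eta$. Combining with \Cref{lemma:MustAllocateToCorrectBiddersXOS} (resp.\ \Cref{lemma:MustAllocateToCorrectBiddersGEN}) gives $\E[\sum_k \abs{S_{i_k} \cap A_k}] \geq 3m/(2\eta)$, which is $\Omega(\kappa\alpha/\eta)$. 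Since the referee's allocation gives the items of $A_k$ to bidders in $N_k$ as a function of the messages and $\varphi$'s, this yields a guessing procedure: pick a uniformly random item $j \in A_k$ and guess the bidder in $N_k$ receiving it. For a random group $k$ this guess equals $i_k$ with probability at least $\Omega(1/\eta)$, namely $3/\eta$ averaged over groups.

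Next we show that with little communication the referee can only guess $i_k$ with probability about $1/\eta$. The identity $i_k$ is the unique $i \in N_k$ with $x_{i,\vartheta}=1$. Conditioned on $Z = (\varphi_1,\dots,\varphi_\kappa)$ (equivalently on $\vartheta$ and the bijections), the bits $x_{i,\ell}$ for different $(k,\ell)$ are independent, and bidder $i$'s message $\Pi_i$ depends only on $\vb{x}_i$ and $\varphi_k$. The key subtlety is that the message is sent without knowing $\vartheta$. The bidder does know $\varphi_k$, but $\varphi_k$ hides $\vartheta$ because $C_\vartheta$ is mapped to the random set $A_k$ and all $C_\ell$ have size $\alpha$. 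We therefore condition on the unlabeled data only, so that the message is a function of the input independent of the uniformly random $\vartheta$.

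The standard INDEX argument then applies. Let $Y_k = (x_{i,\ell})_{i \in N_k, \ell \in [\lambda]}$, which is the vector of $\lambda$ independent uniform indices $i_{k,\ell} \in N_k$. By superadditivity of mutual information over $\ell$ (the coordinates are independent given the conditioning),
\[
\sum_{\ell} \I(i_{k,\ell} \mutual \Pi_{N_k} \given Z') \leq \Ent(\Pi_{N_k}),
\]
and summing over $k$ gives at most $c$. Hence, for uniform $\vartheta$ independent of messages and $Z'$, we get $\E_{k,\vartheta}[\I(i_{k,\vartheta} \mutual \Pi \given Z')] \leq c/(\kappa\lambda)$. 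By Pinsker's inequality, the expected TV distance between $\Law(i_k \given \Pi, Z', \vartheta)$ and uniform on $N_k$ is at most $\sqrt{c/(\kappa\lambda)}$. Any guess therefore succeeds with probability at most $1/\eta + \sqrt{c/(\kappa\lambda)}$.

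Finally we compare the two bounds. Success $3/\eta$ requires $\sqrt{c/(\kappa\lambda)} \geq 2/\eta$, i.e.\ $c \geq 4\kappa\lambda/\eta^2 \geq \lambda/\eta^2$. The main obstacle is making the conditioning rigorous: the referee uses $\varphi$, which reveals $\vartheta$ jointly with the $A$-sets, while the bidders also see $\varphi_k$. The step to verify carefully is that, after conditioning on the label-free part of $\varphi$, $\vartheta$ is uniform and independent of all messages. This follows from the symmetry of steps (7)--(9), which sample $\vartheta$ independently of the $\vb{x}$'s and randomize the bijections. Once this independence holds, the averaging over $\vartheta$ in the superadditivity step is legitimate.
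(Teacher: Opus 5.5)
There is a genuine gap, and it sits exactly at the step you flag as the main obstacle. Your outline matches the paper's: contradiction via \Cref{lemma:MustAllocateToCorrectBiddersXOS,lemma:MustAllocateToCorrectBiddersGEN}, then an INDEX-style superadditivity-plus-Pinsker bound. Your welfare bound and random-item guessing step are also fine.

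The problem is the claim you use to resolve the conditioning. No conditioning on a ``label-free part'' of the bijections can make $\vartheta$ uniform and independent of \emph{all} messages. The bijections of different groups jointly encode $\vartheta$: $\varphi_k$ and $\varphi_{k'}$ agree exactly on $[\beta]\setminus C_\vartheta$, where both equal the shared bijection onto $B$, and disagree everywhere on $C_\vartheta$.

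Take a coordinate $x$ whose membership in $C_\ell$ varies with $\ell$. If one bidder in $N_1$ sends $\varphi_1(x)$ and one bidder in $N_2$ sends $\varphi_2(x)$, which costs only $O(\log m)$ bits, the transcript $\Pi$ alone determines $\One(x \in C_\vartheta)$. This is a non-constant function of $\vartheta$ that is also a function of $\Pi$. So for any $Z'$, $\vartheta$ cannot be both uniform and independent of $\Pi$ given $Z'$.

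Consequently, the identity $\I(i_{k,\vartheta} \mutual \Pi \given Z', \vartheta) = \frac{1}{\lambda}\sum_{\ell} \I(i_{k,\ell} \mutual \Pi \given Z')$, which your Pinsker step needs, is unavailable. For the same reason, your silent replacement of $\Pi_{N_k}$ by the full transcript $\Pi$ inside the mutual information is unjustified. Other groups' messages are correlated with group $k$'s through $\vartheta$ and the shared bijection.

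The missing idea is to localize to one group and handle the referee's extra knowledge by conditional independence rather than by symmetry. This is how the paper proves \Cref{lemma:LowDistributionShift} and the subsequent bound $\E[\abs{S_{i_k}\cap A_k}] < 2\alpha/\eta$.

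\begin{itemize}
\item Condition only on $(\Pi_{N_k}, \varphi_k, \vartheta)$. By itself, $\varphi_k$ is a uniformly random bijection onto a uniformly random $\beta$-set. So $\vartheta$ is independent of $(\vb{j}_k, \varphi_k)$, and hence of $(\vb{j}_k, \Pi_{N_k}, \varphi_k)$.
\item Superadditivity plus Pinsker then give $\E[\TV(\Law(j_{k,\vartheta} \given \Pi_{N_k}, \varphi_k, \vartheta), \Unif(N_k))] \leq \sqrt{\Ent(\Pi_{N_k})/\lambda}$.
\item Since $\vb{j}_k$ is independent of $(\vartheta, \varphi_1, \dots, \varphi_\kappa, \vb{j}_{-k})$, the index $j_{k,\vartheta}$ is, conditioned on $(\Pi_{N_k}, \varphi_k, \vartheta)$, independent of the rest of the referee's view (the other groups' messages and bijections).
\item Hence the referee's posterior on $i_k$ equals this per-group posterior, and its success probability on group $k$ is at most $1/\eta + \sqrt{\Ent(\Pi_{N_k})/\lambda}$.
\end{itemize}

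With this repair, your averaging over $k$ survives by concavity: $\frac{1}{\kappa}\sum_k \sqrt{\Ent(\Pi_{N_k})/\lambda} \leq \sqrt{c/(\kappa\lambda)}$. So you would even keep your slightly stronger bound.
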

\begin{proof}
Suppose for contradiction that an algorithm $\ALG'$ attaining an $\eta/40$-approximation with less than $\lambda/\eta^2$ communication exists. Then fix the randomness of $\ALG'$ to the outcome that attains the best approximation in expectation over the KSSHD, and denote this deterministic algorithm (which also attains an $\eta/40$-approximation in expectation over the KSSHD) by $\ALG$. Let $(S_1, \dots, S_n)$ be the allocation output by $\ALG$.

The randomized inputs are the index $\vartheta$, the mappings $\varphi_1, \dots, \varphi_\kappa$, the sets $A_1, \dots, A_\kappa, B$, and the vectors $\vb{x}_1, \dots, \vb{x}_n$. For any $k \in [\kappa]$ and bidder $i \in N_k$, observe the following:
\begin{itemize}[topsep=1pt,itemsep=1pt]
    \item Bidder $i$ sees $\varphi_k$ and $\vb{x}_i$, which fully determines their input.
    \item With no other information, $\varphi_k$ looks like a uniformly random bijection between $[\beta]$ and $A_k \cup B$, which looks like a uniformly random set of size $\beta$.
    \item The referee already knows $\varphi_1, \dots, \varphi_\kappa$.
\end{itemize}
Therefore, for all $k \in [\kappa]$, let $\vb{j}_k$ be the random vector where $j_{k,\ell} \in N_k$ is the bidder for which $x_{j_{k,\ell},\ell} = 1$. Since $\vb{j}_k$ is the only input unknown to the referee, the messages of the bidders in group $N_k$, conditional on $\varphi_k$, are purely a function of $\vb{j}_k$.

We now show that if the messages are only of length $\lambda/\eta^2$, the expected conditional distribution of $j_{k,\vartheta}$ (which is the bidder $i_k \in N_k$ with high value for $A_k$) is not far from uniform, because $\vartheta$ is a uniformly random index in $[\lambda]$.

Let $k \in [\kappa]$. Let $\mu_k(\vb{j}_k)$ be the concatenation of the messages of every bidder in $N_k$.

\begin{lemma} \label{lemma:LowDistributionShift}
    For all $k \in [\kappa]$,
    \[
        \E[\TV(\Law(j_{k,\vartheta} \given \mu_k(\vb{j}_k), \varphi_k, \vartheta), \Unif(N_k))] \quad \leq \quad \eta^{-1} \enspace .
    \]
\end{lemma}
\begin{proof}
By Jensen's Inequality and the fact that $\varphi_k, \vartheta$ are independent from $\vb{j}_k$,
\begin{align*}
    \text{LHS}^2 \quad &\leq \quad \E[\TV(\Law(j_{k,\vartheta} \given \mu_k(\vb{j}_k), \varphi_k, \vartheta), \Unif(N_k))^2] \\
    &= \quad \E[\TV(\Law(j_{k,\vartheta} \given \mu_k(\vb{j}_k), \varphi_k, \vartheta), \Law(j_{k,\vartheta} \given \varphi_k, \vartheta))^2] \\
    &= \quad \frac{1}{\lambda} \sum_{\ell \in [\lambda]} \E[\TV(\Law(j_{k,\ell} \given \mu_k(\vb{j}_k), \varphi_k), \Law(j_{k,\ell} \given \varphi_k))^2] \\
    &\leq \quad \frac{1}{\lambda} \sum_{\ell \in [\lambda]} \I(j_{k,\ell} \mutual \mu_k(\vb{j}_k) \given \varphi_k) \\
    &\leq \quad \frac{1}{\lambda} \I(\vb{j}_k \mutual \mu_k(\vb{j}_k) \given \varphi_k) \\
    &\leq \quad \frac{1}{\lambda} \Ent(\mu_k(\vb{j}_k) \given \varphi_k) \\
    &< \quad \frac{1}{\eta^2} \enspace ,
\end{align*}
where the fourth line is by Pinsker's Inequality, the fifth line is by the superadditivity of mutual information for independent random variables, and the last line is because there are less than $2^{\lambda/\eta^2}$ messages of length less than $\lambda/\eta^2$, and the entropy of a random variable is at most the logarithm of its support. Taking square roots completes the proof.
\end{proof}

Fix some $k \in [\kappa]$, and let $\zeta_k$ denote the information other than $\mu_k(\vb{j}_k), \varphi_k$ available to the referee: the messages of the bidders not in $N_k$ and the other mappings. For $i \in N_k$, let $S_i(\vb{j}_k, \zeta_k)$ be the set allocated to bidder $i$ when the referee has the information $\mu_k(\vb{j}_k), \zeta_k$. Then
\begin{align*}
    \E[\abs{S_{i_k} \cap A_k}] \quad &= \quad \E\Big[\E[\abs{S_{j_{k,\vartheta}}(\vb{j}_k, \zeta_k) \cap A_k} \given \mu_k(\vb{j}_k), \zeta_k]\Big] \\
    &= \quad \E\bigg[\sum_{i \in N_k} \Big(\Pr[j_{k,\vartheta} = i \given \mu_k(\vb{j}_k), \varphi_k, \vartheta, \zeta_k] \cdot \abs{S_i(\vb{j}_k, \zeta_k) \cap A_k}\Big)\bigg] \\
    &\leq \quad \alpha \cdot \E\Big[\max_{i \in N_k} \Pr[j_{k,\vartheta} = i \given \mu_k(\vb{j}_k), \varphi_k, \vartheta, \zeta_k]\Big] \\
    &= \quad \alpha \cdot \E\Big[\max_{i \in N_k} \Pr[j_{k,\vartheta} = i \given \mu_k(\vb{j}_k), \varphi_k, \vartheta]\Big] \\
    &\leq \quad \alpha \cdot \E\Big[1/\eta + \TV(\Law(j_{k,\vartheta} \given \mu_k(\vb{j}_k), \varphi_k, \vartheta), \Unif(N_k))\Big] \\
    &< \quad \frac{2\alpha}{\eta} \enspace ,
\end{align*}
where the fourth line is because $j_{k,\vartheta}$ conditioned on $\mu_k(\vb{j}_k), \varphi_k, \vartheta$ is independent from $\zeta_k$ (the other messages and mappings), the fifth line is by definition of total variation distance, and the sixth line is by~\Cref{lemma:LowDistributionShift}. To conclude, we sum over $k \in [\kappa]$ to get
\[
    \E\bigg[\sum_{k \in [\kappa]} \abs{S_{i_k} \cap A_k}\bigg] \quad < \quad 2\alpha\kappa/\eta \quad = \quad m/\eta \enspace .
\]
By~\Cref{lemma:MustAllocateToCorrectBiddersXOS,lemma:MustAllocateToCorrectBiddersGEN}, the expected welfare obtained by $\ALG$ is less than $19m/\eta$. But since the optimal welfare is at least $\alpha\kappa = m/2$ (allocate $A_k$ to bidder $i_k$ for all $k \in [\kappa]$), this contradicts the assumption that $\ALG$ obtains an $\eta/40$-approximation in expectation over the KSSHD.
\end{proof}

\section{DSIC Hardness of Approximation}

In this section, we show how to convert a deterministic DSIC mechanism into a good simultaneous algorithm for KSSHDs. We first define a suitable domain for our DSIC mechanisms to operate on.

\subsection{The Valuation Class}

Add a special item $\dagger$, so that the total item set consists of the regular items $[m]$ (on which the KSSHD is defined) together with $\dagger$. We will denote $M = [m] \cup \{\dagger\}$. This special item will primarily be used by GS and XOS valuations.

Let $\calH$ denote an $(\alpha, \eta, \lambda, \vb{C}, \calU)$-KSSHD using either GS or XOS valuations, and let $\calV(\calH)$ be the class of valuations consisting of every item-relabeled $u \in \calU$, along with the zero valuation. For any $v \in \calV(\calH)$, $s \in \mathbb{R}$, and $t \in \mathbb{R}_+$, define $v_{s,t}$ by
\[
    v_{s,t}(S) \quad = \quad 2^{s/2} \Big(v(S \cap [m]) + (\alpha\eta + t/m) \cdot \One(\dagger \in S)\Big) \enspace .
\]
Observe that if $v$ is GS or XOS, $v_{s,t}$ remains so. We may think of $s$ as a scaling factor, and $t$ as a special item weight. Define
\[
    \calV[s^-, s^+](\calH) \quad = \quad \{v_{s,t} : s \in \mathbb{Z} \cap [s^-, s^+],\, t \in \mathbb{Z} \cap [0, m^4],\, v \in \calV(\calH)\} \enspace .
\]

Now, let $\calH$ denote an $(\alpha, \eta, \lambda, \vb{C}, \calU)$-KSSHD using general valuations, and let $\calV(\calH)$ be the class of valuations consisting of every item-relabeled $u \in \calU$, along with the zero valuation. For any $v \in \calV(\calH)$, $s \in \mathbb{R}$, $t \in \mathbb{R}_+$, and $X, Y \in 2^M \setminus \{\emptyset\}$, define $v_{s,t,X,Y}$ by
\[
    v_{s,t,X,Y}(S) \quad = \quad 2^{s/2} \Big(v(S \cap [m]) \cdot \One(S \not\supseteq X \land S \not\supseteq Y) + (2\alpha + t/m) \cdot \One(S \supseteq X \lor S \supseteq Y)\Big) \enspace .
\]
Observe that $v(S \cap [m]) \leq 2\alpha$, so $v_{s,t,X,Y}$ is monotone. Abbreviate $v_{s,t} = v_{s,t,M,M}$. Define
\[
    \calV[s^-, s^+](\calH) \quad = \quad \{v_{s,t,X,Y} : s \in \mathbb{Z} \cap [s^-, s^+],\, t \in \mathbb{Z} \cap [0, m^4],\, X, Y \in 2^M \setminus \{\emptyset\},\, v \in \calV(\calH)\} \enspace .
\]

We remark that for $s \in [-2^m, 2^m]$, the values for these valuations can be represented in $O(m)$ bits.

\subsection{Finding an Informative Node}

This section proves the following proposition, which states that a good DSIC mechanism requires (almost) every bidder to signal something about their value for the special outcome in the first round of non-trivial communication.

\begin{proposition} \label{prop:InformativeNode}
    Let $\eta \geq 800$ and $\alpha\eta \leq m$, and let $\calH$ denote an $(\alpha, \eta, \lambda, \vb{C}, \calU)$-KSSHD using GS, XOS, or general valuations. Suppose $\MECH'$ is a deterministic, IR, NNT, DSIC mechanism which attains an $\eta/800$-approximation using $\Gamma \leq 2^{m/2}$ communication on $\calV[-2^m, 2^m](\calH)$. Then there exists a mechanism $\MECH$ such that
    \begin{enumerate}[topsep=1pt,itemsep=1pt]
        \item $\MECH$ is a deterministic, IR, NNT, DSIC mechanism which attains an $\eta/800$-approximation using $\Gamma$ communication on $\calV[-m, m](\calH)$.
        \item Let $\xi_i(v)$ denote the message $\MECH$ prescribes bidder $i$ to send when their valuation is $v$. Then there exists a bidder $i^\dagger \in [n]$, such that for all $i \ne i^\dagger$ and valuations $v \in \calV(\calH)$, there exists $t \in [m^4]$ such that $\xi_i(v_{0,0}) = \xi_i(v_{0,t-1}) \ne \xi_i(v_{0,t})$.
    \end{enumerate}
\end{proposition}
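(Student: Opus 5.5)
We plan to follow the~\cite{DobzinskiRV22} template: first use the huge range of scales $s\in[-2^m,2^m]$ to find a node and a scale window where some bidder's message is informative, then restrict and rescale to get $\MECH$ on $\calV[-m,m](\calH)$. Finally, run a contingent-strategy argument to show every other bidder must switch messages as $t$ grows.

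\textbf{Step 1 (trivial prefixes).} Starting at the root of $\MECH'$, if at the current node every bidder's prescribed message is constant over all valuations in $\calV[-2^m,2^m](\calH)$, we move to the unique child; this costs nothing. We stop at the first node $\rho$ where some bidder's message is non-constant. Such a node exists: otherwise the allocation would be fixed, contradicting the $\eta/800$-approximation, since valuable profiles can demand disjoint allocations.

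\textbf{Step 2 (a scale window, and the bidder $i^\dagger$).} At $\rho$ there are at most $\Gamma\le 2^{m/2}$ possible messages. We cover $[-2^m,2^m]$ by about $2^{m}/m$ disjoint windows of length $2m$. We want one window $[s_0-m,s_0+m]$ and one bidder $i^\dagger$ whose message at $\rho$ distinguishes $w_{s_0,0}$ from $w_{s_0,t}$ (or from a scaled copy) for some $w$. To find it, we follow the path from low scale to high scale of some bidder's type. Because the message set is small ($\Gamma$) compared with the number of windows ($2^m/m$), we pigeonhole: some window contains a change of $\xi_{i^\dagger}$ between two adjacent types. We then shift $s$ by $-s_0$, which defines $\MECH$ by rescaling reported values and payments by $2^{-s_0/2}$. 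IR, NNT, DSIC, and the approximation ratio are all scale-invariant, and restricting the domain preserves them. The exact bookkeeping of which two adjacent types differ, $(s,t)$ versus $(s,t+1)$ or $(s,m^4)$ versus $(s+1,0)$, is routine.

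\textbf{Step 3 (contingent strategies force every other bidder to change).} Fix $i\ne i^\dagger$ and $v\in\calV(\calH)$, and suppose for contradiction that $\xi_i(v_{0,t})$ is constant for all $t\in[0,m^4]$. Bidder $i^\dagger$ has two types $a,b$ sending different messages at $\rho$. We build a contingent strategy profile for the others, with bidder $i$ playing either $v_{0,0}$ or $v_{0,m^4}$. After $i^\dagger$ sends $a$'s message, the others continue so that $i$ receives the special outcome: item $\dagger$, or all of $M$ in the general case. After $b$'s message, they continue so that $i$ gets nothing special. This is possible because the special-outcome value $2^{s/2}(\alpha\eta+t/m)$ dominates the remaining welfare scale, so a good-approximation continuation must allocate it to $i$ when others report zero valuations. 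Conversely, when another bidder reports a high special value, $i$ must not get it. Since $i$'s message at $\rho$ is the same for $v_{0,0}$ and $v_{0,m^4}$, both types face the same continuations. The DSIC inequalities for $i$ then bound the payment for the special outcome both above and below, and the gap $t/m=m^3$ yields a contradiction. The contradiction comes from a type switching to mimic the other type's later strategy, with payments pinned via IR/NNT. This gives the desired threshold $t$, with $\xi_i(v_{0,0})=\xi_i(v_{0,t-1})\neq\xi_i(v_{0,t})$ at the first change.

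\textbf{Main obstacle.} We expect Step 3 to be the hardest part. The difficulty is making the adversarial continuations legitimate: each continuation must be the prescribed strategy of some valuation in the domain, so that approximation and IR apply. The payment sandwich must also be quantitative enough, using $\eta\ge800$ and $\alpha\eta\le m$, to beat the $m^3$ gap. We would try first to realize both continuations by having one other bidder report $0$ versus a valuation with a larger special value, at the scale chosen in Step 2.
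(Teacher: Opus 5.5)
Step 3 has a genuine gap. You say the DSIC inequalities \emph{for bidder $i$} pin the payment for the special outcome from both sides, but they do not. Sending the same first message for $v_{0,0}$ and $v_{0,m^4}$ costs $i$ nothing in its own incentive constraints. Those constraints only force monotonicity: if the low type gets $\dagger$, so does the high type. So no contradiction can come from $i$. The violated constraint must be $i^\dagger$'s, and your two adjacent types $a,b$ of $i^\dagger$ are not enough for that. Their special values can coincide, since the pair may differ only in the underlying valuation or in $t$ by one. Then IR bounds the rewarded type's payment only by roughly the deviating type's own value for $\dagger$, and the deviation need not be profitable.

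The missing idea is a third, much higher-valued type for $i^\dagger$. The paper uses $o_{0,10m\alpha\eta^2}$, the zero valuation with a large special weight (for general valuations, $o_{0,10m\alpha\eta^2,S,S'}$ with $S,S'$ the sets won by $u,u'$). Since $\xi_{i^\dagger}(u)\ne\xi_{i^\dagger}(u')$, it may assume without loss of generality that $\xi_{i^\dagger}(u)\ne\xi_{i^\dagger}(o_{0,10m\alpha\eta^2})$. Then:
\begin{itemize}
\item Against $v_{0,0}$ and zero valuations, $u$ must win the special outcome, and by IR it pays at most $3\alpha\eta^2$.
\item Against $v_{0,m\alpha\eta^3}$, the deviator must lose the special outcome, and by NNT it gets utility $0$.
\item If $\xi_i(v_{0,0})=\xi_i(v_{0,m\alpha\eta^3})$, bidder $i$ may play $\sigma_i(v_{0,0})$ after $\xi_{i^\dagger}(u)$ and $\sigma_i(v_{0,m\alpha\eta^3})$ otherwise. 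The deviator then gains $\Omega(\alpha\eta^2)$ by playing $\sigma_{i^\dagger}(u)$, which violates DSIC for $i^\dagger$.
\end{itemize}

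This argument dictates the rescaling, and your Step 2 gets it wrong. Each ``must win / must lose'' step has to beat the $\eta/800$ approximation. So the paper normalizes so that $u(\{\dagger\}),u'(\{\dagger\})\in[\alpha\eta^2,2\alpha\eta^2]$, a factor $\eta$ above $v_{0,0}(\{\dagger\})=\alpha\eta$. The deviator then sits at about $10\alpha\eta^2$ and $i$'s high type at about $\alpha\eta^3$. Your shift by $-s_0$ makes $i^\dagger$'s pair $w_{0,0},w_{0,t}$. That puts $i^\dagger$ at the same special value as $v_{0,0}$, and then nothing forces $\dagger$ to $i^\dagger$.

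Step 1 also needs repair. Stopping at the first node whose messages vary on all of $\calV[-2^m,2^m](\calH)$ may produce only a change at scales within $m$ of $\pm 2^m$. There no window $[s_0-m,s_0+m]$ fits inside the domain. Your pigeonhole does not fix this: $\Gamma$ bits allow $2^\Gamma$ messages, not $\Gamma$, and counting messages against windows does not locate an interior change anyway. The paper instead works as follows at each communicating node:
\begin{itemize}
\item It looks for a message change only among valuations in $\calV[s^-+2m,s^+-2m](\calH)$.
\item If there is none, all those profiles go to a common child, and it shrinks $[s^-,s^+]$ by $2m$ on each side.
\end{itemize}
This happens at most $\Gamma$ times and $4m\Gamma<2^{m+1}$, so the window survives. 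The pair found is then interior, and every profile in $\calV[s^\dagger-m,s^\dagger+m](\calH)$ reaches the node.
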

\begin{proof}
We first find the special node in $\MECH'$ where the first non-trivial communication occurs. For the proof of~\Cref{lemma:FindingTheNode}, we abuse notation for general valuations and let $v_{s,t}$ denote arbitrary $v_{s,t,X,Y}$ when the exact values of $X, Y$ do not matter.

\begin{lemma} \label{lemma:FindingTheNode}
    There exists a node $\calN^\dagger$ of the communication protocol of $\MECH'$, a bidder $i^\dagger \in [n]$, $u, u' \in \calV(\calH)$, $a, a', s^\dagger \in [-2^m, 2^m]$, and $b, b' \in [m^4] \cup \{0\}$, such that
    \begin{enumerate}[topsep=1pt,itemsep=1pt]
        \item At $\calN^\dagger$, the messages sent by the prescribed strategy of bidder $i^\dagger$ for the valuations $u_{a,b}$ and $u'_{a',b'}$ are different.
        \item Either:
        \begin{itemize}[topsep=0pt,itemsep=0pt]
            \item $u_{a,b}(\{\dagger\}), u'_{a',b'}(\{\dagger\}) \in [2^{s^\dagger/2} \alpha\eta^2, 2^{s^\dagger/2+1} \alpha\eta^2]$ and the valuations are GS or XOS.
            \item $u_{a,b}(M), u'_{a',b'}(M) \in [2^{s^\dagger/2} \alpha\eta^2, 2^{s^\dagger/2+1} \alpha\eta^2]$ and the valuations are general.
        \end{itemize}
        \item $a, a' \in [s^\dagger - m, s^\dagger + m]$.
        \item Every profile of valuations from $\calV[s^\dagger-m, s^\dagger+m](\calH)$ reaches $\calN^\dagger$.
    \end{enumerate}
\end{lemma}
\begin{proof}
Initialize $s^- = -2^m$ and $s^+ = 2^m$, and initialize $\calN$ to be the root node of the communication protocol. Let $\xi_{\calN,i}(v)$ denote the message $\MECH'$ prescribes bidder $i$ to send at node $\calN$ when their valuation is $v$. We run the following procedure:
\begin{enumerate}[topsep=1pt,itemsep=1pt]
    \item If $\calN$ is a leaf, stop.
    \item If $\calN$ has no communication, set $\calN$ to the fixed next node of the protocol and repeat.
    \item Otherwise, if there is a bidder $i$ and $v_{s,t}, v'_{s',t'} \in \calV[s^- + 2m, s^+ - 2m]$ such that $\xi_{\calN,i}(v_{s,t}) \ne \xi_{\calN,i}(v'_{s',t'})$, stop.
    \item Otherwise, for every profile of valuations from $\calV[s^- + 2m, s^+ - 2m](\calH)$, the next node of the protocol is some fixed node $\calN'$. Set $s^- \leftarrow s^- + 2m$, $s^+ \leftarrow s^+ - 2m$, and $\calN \leftarrow \calN'$ and repeat.
\end{enumerate}

Let $\calN, s^-, s^+$ be the state when the procedure stops. Since $s^+ - s^-$ only decreases by $4m$ at nodes with nonzero communication, we have $s^+ - s^- \geq 2^{m+1} - 4m\Gamma \geq 0$, so $[s^-, s^+]$ is non-empty.

Now, let $o \in \calV(\calH)$ be the zero valuation, and observe that $o_{s^-,0}, o_{s^-,m^4} \in \calV[s^-, s^+](\calH)$ (in the case of general valuations, consider $o_{s^-,0,\{\dagger\},\{\dagger\}}, o_{s^-,m^4,\{\dagger\},\{\dagger\}}$). Since every profile of valuations from $\calV[s^-, s^+](\calH)$ reaches $\calN$, the profiles where one bidder has $o_{s^-,m^4}$ and everyone else has $o_{s^-,0}$ reach $\calN$, and the approximation guarantee forces $\MECH'$ to allocate $\dagger$ to the bidder with valuation $o_{s^-,m^4}$. Since any bidder could have $o_{s^-,m^4}$, $\calN$ cannot be a leaf.

Thus, the procedure must stop in Step (3) with a bidder $i$ and valuations $v_{s,t}, v'_{s',t'} \in \calV[s^- + 2m, s^+ - 2m]$. Let $\Delta = \sign(s' - s)$ and let $\delta = \sign(t' - t)$. Consider the chain of valuations
\[
    v_{s,t} \quad v_{s+\Delta,t} \quad \dots \quad v_{s'-\Delta,t} \quad v_{s',t} \quad v_{s',t+\delta} \quad \dots \quad v_{s',t'-\delta} \quad v_{s',t'} \quad v'_{s',t'} \enspace ,
\]
where we change from one valuation to the other by incrementing or decrementing $s$ and $t$. Since $\xi_{\calN,i}(v_{s,t}) \ne \xi_{\calN,i}(v'_{s',t'})$, there is some adjacent pair $u_{a,b}, u'_{a',b'}$ such that $\xi_{\calN,i}(u_{a,b}) \ne \xi_{\calN,i}(u'_{a',b'})$. These will be the valuations satisfying Property (1).

By construction, $\abs{a - a'} + \abs{b - b'} \leq 1$, so for GS or XOS valuations, we have
\[
    \abs{\log(u_{a,b}(\{\dagger\})) - \log(u'_{a',b'}(\{\dagger\}))} \quad \leq \quad \abs{\log(2^{a/2}(\alpha\eta + b/m)) - \log(2^{a'/2}(\alpha\eta + b'/m))} \quad \leq \quad 1/2 \enspace ,
\]
and for general valuations, we have
\[
    \abs{\log(u_{a,b}(M)) - \log(u'_{a',b'}(M))} \quad \leq \quad \abs{\log(2^{a/2}(2\alpha + b/m)) - \log(2^{a'/2}(2\alpha + b'/m))} \quad \leq \quad 1/2 \enspace .
\]
On a logarithmic scale, the intervals $[2^{s/2} \alpha\eta^2, 2^{s/2+1} \alpha\eta^2]$ have width $1$ and appear every increment of $1/2$, and thus there exists some $s^\dagger$ satisfying Property (2) for $u_{a,b}, u'_{a',b'}$.

Since $b/m, b'/m \leq m^3$, we have $a, a' \in [s^\dagger - 2\log(m^6), s^\dagger + 2\log(m^6)] \subseteq [s^\dagger - m, s^\dagger + m]$, which yields Property (3). Finally, since $a, a' \in [s^- + 2m, s^+ - 2m]$, we have $[s^\dagger - m, s^\dagger + m] \subseteq [s^-, s^+]$, and by construction, every profile of valuations from $\calV[s^-, s^+]$ reaches $\calN$, yielding Property (4).
\end{proof}

We now repackage the previous lemma into a cleaner form: instead of dealing with ``the first non-trivial communication'' and ``some scale $s^\dagger$,'' we construct a functionally identical mechanism where everything interesting happens in the actual first round and at the default scale.

\begin{lemma} \label{lemma:MinimalMechanism}
    There exists a deterministic, IR, NNT, DSIC mechanism $\MECH$ which attains an $\eta/800$-approximation using $\Gamma$ communication on $\calV[-m, m](\calH)$. This mechanism has a bidder $i^\dagger \in [n]$, and $u, u' \in \calV[-m, m](\calH)$ such that
    \begin{enumerate}[topsep=1pt,itemsep=1pt]
        \item In the first round of communication, the messages sent by the prescribed strategy of bidder $i^\dagger$ for the valuations $u, u'$ are different.
        \item Either:
        \begin{itemize}[topsep=0pt,itemsep=0pt]
            \item $u(\{\dagger\}), u'(\{\dagger\}) \in [\alpha\eta^2, 2\alpha\eta^2]$ and the valuations are GS or XOS.
            \item $u(M), u'(M) \in [\alpha\eta^2, 2\alpha\eta^2]$ and the valuations are general.
        \end{itemize}
    \end{enumerate}
\end{lemma}
\begin{proof}
Let $\sigma'_1, \dots, \sigma'_n$ be the strategy prescriptions of $\MECH'$, and let $i^\dagger, \calN^\dagger$ be the bidder and node promised by~\Cref{lemma:FindingTheNode}.

Let $\MECH$ be the mechanism whose communication protocol is the tree rooted at $\calN^\dagger$, and which prescribes the strategies $\sigma_1, \dots, \sigma_n$, where for all $v_{s,t} \in \calV[-m, m](\calH)$, we define $\sigma_i(v_{s,t}) = \sigma'_i(v_{s+s^\dagger,t})$. If running $\MECH'$ starting from $\calN^\dagger$ yields the allocation $(S_1, \dots, S_n)$ and payments $(p_1, \dots, p_n)$, then $\MECH$ outputs $(S_1, \dots, S_n)$ and $(2^{-s^\dagger/2} p_1, \dots, 2^{-s^\dagger/2} p_n)$. In other words, $\MECH$ asks each bidder to scale their valuation by $2^{s^\dagger/2}$, then runs $\MECH'$ starting from $\calN^\dagger$.

Observe that every bidder $i$ plays the strategy $\sigma'_i(v)$ for a valuation $v \in \calV[s^\dagger - m, s^\dagger + m](\calH)$. Therefore, by Properties (3) and (4) of~\Cref{lemma:FindingTheNode}, every profile of valuations reaches $\calN^\dagger$, and hence $\MECH$ realizes the same outcomes as $\MECH'$ on the relevant domain, using at most the same amount of communication.

Determinism, IR, NNT, DSIC, and welfare approximation are preserved under scaling, so $\MECH$ satisfies all the same properties on the restricted domain as $\MECH'$. Properties (1) and (2) of~\Cref{lemma:FindingTheNode} then directly complete the proof.
\end{proof}

Fix the mechanism $\MECH$, and let bidder $i^\dagger$ and valuations $u, u' \in \calV[-m, m](\calH)$ be the parameters promised by~\Cref{lemma:MinimalMechanism}. Let $\xi_1, \dots, \xi_n$ be the messages sent in the first round by the strategies prescribed by $\MECH$. We show that because $u, u'$ are similar valuations that send different messages, the other bidders have adversarial strategies to deny bidder $i^\dagger$ utility (violating DSIC), unless every other bidder is forced to \emph{commit} some signal for how much they value the special outcome.

We remind the reader that for general valuations, we once again denote $v_{s,t} = v_{s,t,M,M}$.

\begin{lemma} \label{lemma:MustSignalSpecialOutcome}
    For all bidders $i \ne i^\dagger$ and all valuations $v \in \calV(\calH)$, $\xi_i(v_{0,0}) \ne \xi_i(v_{0,m\alpha\eta^3})$.
\end{lemma}
\begin{proof}
Suppose for contradiction that the claim is false. Let $o \in \calV(\calH)$ be the zero valuation.

First, suppose that we are dealing with GS or XOS valuations.
\begin{itemize}[topsep=1pt,itemsep=1pt]
    \item Consider the two profiles where bidder $i$ has $v_{0,0}$, bidder $i^\dagger$ has $u$ or $u'$, and everyone else has $o_{0,0}$. Then since $u(\{\dagger\}), u'(\{\dagger\}) \geq \alpha\eta^2$, $\MECH$ must get welfare at least $800\alpha\eta$ when all bidders play truthfully. But since $u([m]) \leq u(\{\dagger\})/(\eta/3) \leq 6\alpha\eta$ (and similarly for $u'$), $v_{0,0}([m]) \leq 3\alpha$, $o_{0,0}([m]) = 0$, and every bidder other than $i^\dagger$ has value $\alpha\eta$ for $\dagger$, $\MECH$ must allocate $\dagger$ to $i^\dagger$ to satisfy the approximation guarantee. By IR, the payment of $i^\dagger$ on this profile under truthful play is at most $u(M) \leq u([m]) + u(\{\dagger\}) \leq 3\alpha\eta^2$.

    \item Since $\xi_{i^\dagger}(u) \ne \xi_{i^\dagger}(u')$, let $\xi_{i^\dagger}(u) \ne \xi_{i^\dagger}(o_{0,10m\alpha\eta^2})$ WLOG.
    
    \item Now, consider the profile where bidder $i$ has $v_{0,m\alpha\eta^3}$, bidder $i^\dagger$ has $o_{0,10m\alpha\eta^2}$, and everyone else has $o_{0,0}$. Then since $v_{0,m\alpha\eta^3}(\{\dagger\}) \geq \alpha\eta^3$, $\MECH$ must get welfare at least $800\alpha\eta^2$ when all bidders play truthfully. But because every bidder other than $i$ has value $0$ for $[m]$ and value at most $11\alpha\eta^2$ for $\dagger$, $\MECH$ must allocate $\dagger$ to $i$ to satisfy the approximation guarantee. Then by NNT, the utility of $i^\dagger$ on this profile under truthful play must be $0$.

    \item Therefore, consider the case where bidders $j \ne i, i^\dagger$ play $\sigma_j(o_{0,0})$, and bidder $i$ plays $\sigma_i(v_{0,0})$ if bidder $i^\dagger$ sends the message $\xi_{i^\dagger}(u)$ in the first round, and $\sigma_i(v_{0,m\alpha\eta^3})$ otherwise. This is a valid strategy because we assumed for contradiction that $\xi_i(v_{0,0}) = \xi_i(v_{0,m\alpha\eta^3})$. Then if bidder $i^\dagger$ has the valuation $o_{0,10m\alpha\eta^2}$, they send the message $\xi_{i^\dagger}(o_{0,10m\alpha\eta^2}) \ne \xi_{i^\dagger}(u)$ in the first round, so truthful play yields utility $0$. On the other hand, playing $\sigma_{i^\dagger}(u)$ obtains utility at least $10\alpha\eta^2 - 3\alpha\eta^2 = 7\alpha\eta^2$, contradicting DSIC.
\end{itemize}

Now, suppose that we are dealing with general valuations.
\begin{itemize}[topsep=1pt,itemsep=1pt]
    \item Consider the two profiles where bidder $i$ has $v_{0,0}$, bidder $i^\dagger$ has $u$ or $u'$, and everyone else has $o_{0,0}$. Then since $u(M), u'(M) \geq \alpha\eta^2$, $u(\emptyset), u'(\emptyset) = 0$, and the combined value of every other bidder is at most $2\alpha$, $\MECH$ must give non-empty sets $S, S' \subseteq M$ to $i^\dagger$ to satisfy the approximation guarantee when $i^\dagger$ has $u, u'$. By IR, the payment of $i^\dagger$ on these profiles under truthful play is at most $\max\{u(M), u'(M)\} \leq 2\alpha\eta^2$.

    \item Since $\xi_{i^\dagger}(u) \ne \xi_{i^\dagger}(u')$, let $\xi_{i^\dagger}(u) \ne \xi_{i^\dagger}(o_{0,10m\alpha\eta^2,S,S'})$ WLOG.
    
    \item Now, consider the profile where bidder $i$ has $v_{0,m\alpha\eta^3}$, bidder $i^\dagger$ has $o_{0,10m\alpha\eta^2,S,S'}$, and everyone else has $o_{0,0}$. Then since $v_{0,m\alpha\eta^3}(M) \geq \alpha\eta^3$, $\MECH$ must get welfare at least $800\alpha\eta^2$ when all bidders play truthfully. But $o_{0,10m\alpha\eta^2,S,S'}(M) \leq 11\alpha\eta^2$, and the combined value of every other bidder is at most $2\alpha$, so $\MECH$ must allocate all of $M$ to bidder $i$ to satisfy the approximation guarantee, which causes bidder $i^\dagger$ to get no value. Then by NNT, the utility of $i^\dagger$ on this profile under truthful play must be $0$.

    \item Therefore, consider the case where bidders $j \ne i, i^\dagger$ play $\sigma_j(o_{0,0})$, and bidder $i$ plays $\sigma_i(v_{0,0})$ if bidder $i^\dagger$ sends the message $\xi_{i^\dagger}(u)$ in the first round, and $\sigma_i(v_{0,m\alpha\eta^3})$ otherwise. This is a valid strategy because we assumed for contradiction that $\xi_i(v_{0,0}) = \xi_i(v_{0,m\alpha\eta^3})$. Then if bidder $i^\dagger$ has the valuation $o_{0,10m\alpha\eta^2,S,S'}$, they send the message $\xi_{i^\dagger}(o_{0,10m\alpha\eta^2,S,S'}) \ne \xi_{i^\dagger}(u)$ in the first round, so truthful play yields utility $0$. On the other hand, playing $\sigma_{i^\dagger}(u)$ obtains utility at least $o_{0,10m\alpha\eta^2,S,S'}(S) - 2\alpha\eta^2 \geq 8\alpha\eta^2$, contradicting DSIC. \qedhere
\end{itemize}
\end{proof}

We can now finish the proof.~\Cref{lemma:MustSignalSpecialOutcome} states that for all bidders $i \ne i^\dagger$ and valuations $v \in \calV(\calH)$, we have $\xi_i(v_{0,0}) \ne \xi_i(v_{0,m\alpha\eta^3})$. Therefore, construct the chain of valuations
\[
    v_{0,0} \quad v_{0,1} \quad \dots \quad v_{0,m\alpha\eta^3-1} \quad v_{0,m\alpha\eta^3} \enspace ,
\]
where we change from one valuation to the other by incrementing the weight on the special outcome. Let $t \in [m\alpha\eta^3]$ be the first time the message sent changes; then we have the desired result
\[
    \xi_i(v_{0,0}) \quad = \quad \xi_i(v_{0,1}) \quad = \quad \dots \quad = \quad \xi_i(v_{0,t-1}) \quad \ne \quad \xi_i(v_{0,t}) \enspace . \qedhere
\]
\end{proof}

\subsection{Constructing the Simultaneous Algorithm}

\begin{theorem} \label{thm:DSICToSimultaneousAlgorithm}
    Let $\eta \geq 800$ and let $\lambda \in [m^{10}, 2^{m/2}]$. Let $\calH$ be an $(\alpha, \eta, \lambda, \vb{C}, \calU)$-KSSHD, and let $\alpha\eta^2 \leq m$ if $\calH$ uses GS or XOS valuations, and let $\alpha\eta \leq m$ if $\calH$ uses general valuations. Suppose there is a deterministic, IR, NNT, DSIC mechanism $\MECH'$ which attains an $\eta/800$-approximation using $\lambda/m^9$ communication on $\calV[-2^m, 2^m](\calH)$. Then there exists a simultaneous algorithm which attains an $\eta/40$-approximation on $\calH$ using $\lambda/m^9$ communication.
\end{theorem}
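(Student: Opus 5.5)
The plan is to follow the outer reduction of~\cite{DobzinskiRV22}: each bidder's message in the simultaneous algorithm will be their first-round message in the normalized mechanism, and the referee will use these messages to certify, for each group $N_k$, which bidder can guarantee himself a large utility from $A_k$.

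First I apply~\Cref{prop:InformativeNode} to $\MECH'$; its hypotheses hold since $\alpha\eta\le \alpha\eta^2\le m$ and $\lambda/m^9\le 2^{m/2}$. This gives the mechanism $\MECH$ on $\calV[-m,m](\calH)$ and the special bidder $i^\dagger$. For every $i\ne i^\dagger$ and $v\in\calV(\calH)$ it also gives a threshold $t_i(v)\in[m^4]$ with $\xi_i(v_{0,0})=\xi_i(v_{0,t-1})\neq\xi_i(v_{0,t})$. I embed the KSSHD instance into $\calV[-m,m](\calH)$ by assigning the $n$ KSSHD bidders to the bidders other than $i^\dagger$. Bidder $i^\dagger$ is given the zero valuation, or the trivial type in the general case; this loses at most one bidder's welfare, which is $O(m/\eta)$.

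\textbf{Protocol.} Each bidder $i$ sends $\xi_i(v_{0,t_i(v_i)-1})$ together with $t_i(v_i)$. This costs at most $\lambda/m^9+O(\log m)$ bits, which I absorb by slightly shrinking the budget. The referee knows $\varphi_1,\dots,\varphi_\kappa$, hence $A_k$ and $B$. For each bidder $i\in N_k$ it computes a \emph{guarantee} $g_i$. This is the largest utility that some strategy consistent with the observed first message secures against every continuation of the other bidders, computed for the valuation $v_{0,t-1}$ with $v$ replaced by its counterfactual amplification, i.e.\ with coordinate $\ell$ doubled where $A_k=\varphi_k(C_\ell)$. The referee then gives $A_k$ to a bidder in $N_k$ chosen with probability proportional to $g_i$, or gives $A_k$ to the unique bidder with $g_i\ge c\alpha$. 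It gives $B$ arbitrarily.

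\textbf{Correctness, in three steps.}

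(a) \emph{Amplification.} Doubling a sparse random set of clauses does not change a message of length $\lambda/m^9$ for most coordinates. This is the same entropy/Pinsker argument as~\Cref{lemma:LowDistributionShift}, applied to the random set of doubled coordinates. Hence with good probability the observed message is also the prescribed first message of some valuation $v^+$ with $x_\ell=2$ and the same threshold.

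(b) \emph{Normal-continuation utility.} IR bounds the payment of the undoubled valuation when it wins the special set. Property (4) of the KSSHD lower-bounds $u_{\vb y}(S)-u_{\vb x}(S)$. Combined with DSIC between the undoubled valuation and $v^+$, this shows that truthful play by the amplified special bidder $i_k$ yields utility $\Omega(\alpha)$, up to the $2|S\cap\closure C_\ell|$ and $10\alpha/\eta^2$ error terms.

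(c) \emph{Contingent-strategy upgrade.} Because the threshold pair $v_{0,t-1}, v_{0,t}$ differ by only $1/m$ in special-item weight yet send different first messages, I rerun the argument of~\Cref{lemma:MustSignalSpecialOutcome}. The other bidders could otherwise behave normally after one message and adversarially after the other, giving a profitable deviation. This upgrades the utility from (b) into a utility \emph{guarantee} of $\Omega(\alpha)$ against every continuation. So the special bidder is certified with good probability.

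\textbf{Non-special bidders.} A bidder $i\ne i_k$ in $N_k$ has $u(C_\ell)\le 30\alpha/\eta^2$ by Property (3.a.iii), or zero value in the general case. Any guarantee must come from items of $B$ that are contested by everyone, so large guarantees for non-special bidders are rare. I would bound $\sum_{i\in N_k} g_i$ by charging guarantees to disjoint items, using subadditivity and Properties (3)--(4) as in~\Cref{lemma:MustAllocateToCorrectBiddersXOS,lemma:MustAllocateToCorrectBiddersGEN}. This gives $\E|S_{i_k}\cap A_k|=\Omega(\alpha)$ per group. Summing over groups and applying those lemmas gives welfare $\Omega(m)$, which is an $\eta/40$-approximation once the constants are tracked.

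\textbf{Main obstacle.} I expect the hardest part to be the quantitative upgrade in (b)--(c): turning a single normal-continuation utility into an $\Omega(\alpha)$ \emph{guarantee} for GS/XOS valuations, where positive utility no longer certifies full value. A second difficulty is ensuring that the amplification step in (a) costs only a factor absorbed by $m^9$. I would attack this first, by writing the random-doubling distribution explicitly and bounding the total-variation loss via superadditivity of mutual information.
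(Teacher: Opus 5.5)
Your overall architecture matches the paper's: normalize via \Cref{prop:InformativeNode}, combine IR, Property~(4) and DSIC against a doubled counterfactual, upgrade via a threshold-based contingent strategy, then allocate proportionally. But step~(a) has a genuine gap. In your protocol each bidder sends the first message of its \emph{true} type, whose clause vector lies in $\{0,1\}^\lambda$. So no doubling randomness ever enters a message, and an entropy argument has nothing to act on. The claim that $\xi_{i_k}(v_{i_k})$ is typically also the first message of some type with $y_\vartheta = 2$ is then a property of one fixed deterministic function. It can fail at the cost of a single bit: if the first message includes $\One(\exists \ell : y_\ell = 2)$, no undoubled type is consistent with any doubled type. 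Nothing in your argument uses DSIC to exclude this.

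The missing idea is that bidders must \emph{pretend to be amplified}. Bidder $i$ privately samples $\vb{x}_i^{\amplify}$, doubling each coordinate with $x_{i,\ell} = 1$ independently w.p.\ $1/m^3$, and sends $\xi_i(v_{\varphi_k,\vb{x}^{\amplify}_i,0})$. Now the doubling is an input to the message. \Cref{lemma:ObscuredAmplification} then shows, by a direct union bound over the at most $2^{\lambda/m^9+1}$ messages rather than Pinsker, that a message excluding $\geq \lambda/m^5$ indices from being doubled is sent only with negligible probability. Hence $\vartheta$ is rarely excluded. Accordingly, the IR/Property~(4) step and the welfare bound are applied to the amplified profile (allocation $S^{\amplify}$). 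This profile lies in the domain, so $\MECH$'s approximation guarantee applies there. The threshold used in~(c) is that of the counterfactual doubled type $v_{\varphi_k,\vb{y}}$, not of $v_i$. So sending $t_i(v_i)$, which also exceeds the $\lambda/m^9$ budget the statement allows, and requiring ``the same threshold'' are unnecessary. The referee just maximizes the guarantee over all types in $\calV_{\varphi_k}$ consistent with the observed message.

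Your quantitative bookkeeping is also off in a way that affects the allocation rule. Since $\MECH$ is only an $\eta/800$-approximation, what you can extract is $\sum_k \Delta^{\VAL}_k \geq 180m/\eta$ (\Cref{lemma:HallucinatedWelfare}). So a group's guarantee is only known to be $\Omega(\alpha/\eta)$ on average, not $\Omega(\alpha)$, and the target welfare is $\Omega(m/\eta)$, not $\Omega(m)$. The rule ``give $A_k$ to the unique bidder with $g_i \geq c\alpha$'' therefore fails, and you need the proportional rule. Its denominator cannot be controlled through the non-special bidders' true values, because guarantees are computed for counterfactual types, which may themselves have $y_\vartheta = 2$. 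What works is to run $\MECH$ once on all bidders' best-guarantee strategies. The guaranteed values are then realized on disjoint sets $S^*_j$, every item other than $\dagger$ is worth at most $2$ to any type in $\calV_{\varphi_k}$, and the shared pool $B$ is charged at most $\beta$ in total. For this you must discard the group containing $i^\dagger$, since you cannot seat all $n$ KSSHD bidders on the other $n-1$ mechanism bidders. You must also discard the groups in which some bidder receives $\dagger$.
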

\begin{proof}
Let $\MECH$ be the mechanism promised by~\Cref{prop:InformativeNode}. Let $\sigma_1, \dots, \sigma_n$ be the strategy prescriptions of $\MECH$, and let $\xi_1, \dots, \xi_n$ be the first round message prescriptions of $\MECH$.

The randomly sampled inputs are the sets $A_1, \dots, A_\kappa, B \subseteq M$, the random index $\vartheta \in [\lambda]$, the mappings $\varphi_1, \dots, \varphi_\kappa$, the collection $\{A_{k,\ell} : k \in [\kappa], \ell \in [\lambda]\}$, and the bidder input vectors $\vb{x}_1, \dots, \vb{x}_n$.

Condition everything on the public information in $\calH$. If $\calH$ uses GS or XOS valuations, define
\[
    v_{\varphi_k,\vb{x},t}(S) \quad = \quad u_{\vb{x}}(\varphi^{-1}_k(S \cap (A_k \cup B))) + (\alpha\eta + t/m) \cdot \One(\dagger \in S) \enspace ,
\]
and if $\calH$ uses general valuations, define
\[
    v_{\varphi_k,\vb{x},t}(S) \quad = \quad u_{\vb{x}}(\varphi^{-1}_k(S \cap (A_k \cup B))) \cdot \One(S \not\supseteq M) + (2\alpha + t/m) \cdot \One(S \supseteq M) \enspace .
\]

In both cases, define
\[
    \calV_{\varphi_k} \quad = \quad \{v_{\varphi_k,\vb{x},t} : \vb{x} \in \{0,1,2\}^\lambda,\, t \in [0, m^4] \cap \mathbb{Z}\} \quad \subseteq \quad \calV[0](\calH) \enspace .
\]

\begin{definition}[Amplification]
    For $\vb{x} \in \{0,1\}^\lambda$, define the \emph{amplified} random vector $\vb{x}^{\amplify} \in \{0,1,2\}^\lambda$ where independently for all $\ell \in [\lambda]$, we have $x^{\amplify}_\ell \in \{x_\ell,2x_\ell\}$ and $\Pr[x^{\amplify}_\ell = 2 \given x_\ell = 1] = 1/m^3$.
\end{definition}

\begin{definition}[Best-Guarantee Valuation]
    For a bidder $i \in N_k$ and message $\mu_i$, define $\Psi_i(\mu_i)$ to be the set of strategies for bidder $i$ that send $\mu_i$ in the first round of communication. Let $S_i(\vb{\psi})$ be the set allocated to bidder $i$ by $\MECH$ on the strategy profile $\vb{\psi} = (\psi_1, \dots, \psi_n)$. Define the \emph{best-guarantee valuation} $v^*_{i,\vb{\mu}}$ consistent with messages $\vb{\mu} = (\mu_1, \dots, \mu_n)$ by
    \[
        v^*_{i,\vb{\mu}} \quad = \quad \argmax_{v_i \in \calV_{\varphi_k} : \sigma_i(v_i) \in \Psi_i(\mu_i)} \min_{\vb{\psi}_{-i} \in \vb{\Psi_{-i}}(\vb{\mu}_{-i})} v_i(S_i(\sigma_i(v_i), \vb{\psi}_{-i})) \enspace ,
    \]
    and let $\nu^*_{i,\vb{\mu}}$ be the value attained.
\end{definition}

\begin{observation}
    $v^*_{i,\vb{\mu}}$ can be computed in finite time.
\end{observation}

We now describe a simultaneous algorithm for $\calH$ derived from $\MECH$. Recall that we assume the referee knows $A_1, \dots, A_\kappa$.
\begin{enumerate}[topsep=1pt,itemsep=1pt]
    \item Each bidder $i \in N_k$ holds $v_{\varphi_k,\vb{x}_i}$, and independently samples $\vb{x}_i^{\amplify}$. Define $v^{\amplify}_i = v_{\varphi_k,\vb{x}^{\amplify}_i,0}$.
    \item Each bidder $i$ sends $\mu_i = \xi_i(v^{\amplify}_i)$.
    \item The referee computes valuations $v^*_{1,\vb{\mu}}, \dots, v^*_{n,\vb{\mu}}$.
    \item The referee runs $\MECH$ on $\sigma_1(v^*_{1,\vb{\mu}}), \dots, \sigma_n(v^*_{n,\vb{\mu}})$ to obtain the allocation $S^*_1, \dots, S^*_n$.
    \item For all $k \in [\kappa]$, the referee allocates $A_k$ to bidder $i \in N_k$ w.p.\ $v^*_{i,\vb{\mu}}(S^*_i)/\sum_{j \in N_k} v^*_{j,\vb{\mu}}(S^*_j)$. Here, let $0/0 = 0$, in case the denominator is $0$.
\end{enumerate}

It is not clear a priori why this algorithm should work, because $v^*_{i_k,\vb{\mu}}$ need not explicitly reveal that bidder $i_k$ has high value for $A_k$. We will show that typically, the message sent by bidder $i_k$ is consistent with a valuation which has doubled value for $A_k$. Then we can leverage DSIC to convert this consistency into a guarantee that $v^*_{i_k,\vb{\mu}}(S^*_{i_k})$ is large.

\subsection{From Welfare Guarantees to Utility Guarantees: Amplification}

We first show that because there are so many clauses, the messages are unlikely to reveal whether $x^{\amplify}_{i,\vartheta} \ne x_{i,\vartheta}$, conditioned on $x_{i,\vartheta} = 1$. For $k \in [\kappa]$, let $i_k \in N_k$ be the bidder such that $x_{i_k,\vartheta} = 1$, i.e., bidder $i_k$ has high value for $A_k$.

\begin{definition}[Obscured Amplification]
    For $k \in [\kappa]$, let $\calE_k$ be the event that $x^{\amplify}_{i_k,\vartheta} = 1$ and that there exists $\vb{y} \in \{0,1,2\}^\lambda$ such that $y_\vartheta = 2$ and $\xi_{i_k}(v_{\varphi_k,\vb{y},0}) = \xi_{i_k}(v_{\varphi_k,\vb{x}^{\amplify}_{i_k},0})$.
\end{definition}

\begin{lemma} \label{lemma:ObscuredAmplification}
    $\Pr[\calE_k] \geq 1 - 1/m^2$.
\end{lemma}
\begin{proof}
First, note that by definition of $i_k$, we have $x_{i_k,\vartheta} = 1$, so $\Pr[x^{\amplify}_{i_k,\vartheta} = 2] = m^{-3}$.

Denote $\mu_k = \xi_{i_k}(v_{\varphi_k,\vb{x}^{\amplify}_{i_k},0})$. For any first round message $\mu$ that can be sent by bidder $i_k$, let $L_\mu \subseteq [\lambda]$ be the set of indices which $\mu$ precludes from being $2$. In other words, for any $\vb{y} \in \{0,1,2\}^\lambda$ such that $\xi_{i_k}(v_{\varphi_k,\vb{y},0}) = \mu$, we have $y_\ell \ne 2$ for all $\ell \in L_\mu$.

Let $L_k = \{\ell \in [\lambda] : x^{\amplify}_{i_k,\ell} = 1\}$. Conditioned on $x^{\amplify}_{i_k,\vartheta} = 1$, $\vartheta \in L_k$ is uniformly random, so
\begin{align*}
    1 - \Pr[\calE_k] \quad &\leq \quad \Pr[x^{\amplify}_{i_k,\vartheta} = 2] + \Pr[\vartheta \in L_{\mu_k} \given x^{\amplify}_{i_k,\vartheta} = 1] \\
    &\leq \quad m^{-3} + \E[\abs{L_{\mu_k} \cap L_k}/\abs{L_k}] \\
    &\leq \quad m^{-3} + (\lambda/m^5)/(\lambda/m^2) + \Pr[\abs{L_{\mu_k}} \geq \lambda/m^5] + \Pr[\abs{L_k} \leq \lambda/m^2] \enspace .
\end{align*}
Let $\mu$ be a message that rules out at least $\lambda/m^5$ indices. Since $\Pr[y_\ell = 2] \geq 1/(m^3\eta)$ independently for all $\ell \in [\lambda]$,
\[
    \Pr[\mu_k = \mu] \quad \leq \quad (1 - 1/(m^3\eta))^{\lambda/m^5} \quad \leq \quad e^{-\lambda/m^9} \enspace .
\]
In particular, since $\MECH$ runs in $\lambda/m^9$ communication, there are only $2^{\lambda/m^9+1}$ possible messages that bidder $i_k$ can send. Hence, the probability that $\mu_k$ rules out at least $\lambda/m^5$ indices is at most
\[
    2^{\lambda/m^9+1} \cdot e^{-\lambda/m^9} \quad \leq \quad m^{-3} \enspace .
\]
Since $\E[\abs{L_k}] \geq \lambda/(2m)$, a standard Chernoff bound also bounds $\Pr[\abs{L_k} \leq \lambda/m^2] \leq m^{-3}$, so
\[
    1 - \Pr[\calE_k] \quad \leq \quad m^{-3} + m^{-3} + m^{-3} + m^{-3} \quad \leq \quad m^{-2} \enspace . \qedhere
\]
\end{proof}

Let $(S^{\amplify}_1, \dots, S^{\amplify}_n)$ and $(p^{\amplify}_1, \dots, p^{\amplify}_n)$ be the allocation and payment when $\MECH$ is run on the profile $\sigma_1(v^{\amplify}_1), \dots, \sigma_n(v^{\amplify}_n)$. Let $k^{\amplify} \in [\kappa]$ be the choice of $k$ where some $i \in N_k$ has $\dagger \in S^{\amplify}_i$, if such $k$ exists. Also let $k^{\dagger} \in [\kappa]$ be the choice of $k$ where $i^\dagger \in N_k$.

For $k \in [\kappa]$, let $i_k$ be the bidder satisfying $x_{i_k,\vartheta} = 1$. Define
\[
    \Delta^{\GS}_k \quad = \quad \Delta^{\XOS}_k \quad = \quad \abs{S^{\amplify}_{i_k} \cap A_k} - 2\abs{S^{\amplify}_{i_k} \cap B} - 10\alpha/\eta^2 \enspace ,
\]
and
\[
    \Delta^{\GEN}_k \quad = \quad \alpha \cdot \One(S^{\amplify}_{i_k} \supseteq A_k) - 8\abs{S^{\amplify}_{i_k} \cap B} \enspace .
\]

The following lemma shows that when bidder $i_k$ doesn't commit to whether they have doubled value for $A_k$, then DSIC requires $\MECH$ to guarantee them value $\Delta^{\VAL}_k$. The lemma after shows that $\Delta^{\VAL}_k$ is large.

\begin{lemma} \label{lemma:HallucinatedGuarantee}
    Let $k \in [\kappa] \setminus \{k^\dagger, k^{\amplify}\}$, and suppose $\calE_k$ holds. Then when $\calH$ uses valuations of type $\VAL \in \{\GS, \XOS, \GEN\}$, we have $\nu^*_{i_k,\vb{\mu}} \geq \max\{0, \Delta^{\VAL}_k - 1/m\}$.
\end{lemma}
\begin{proof}
By~\Cref{lemma:ObscuredAmplification}, we have $x_{i_k,\vartheta} = 1$, and there exists $\vb{y} \in \{0,1,2\}^\lambda$ such that $y_\vartheta = 2$ and $\xi_{i_k}(v_{\varphi_k,\vb{y},0}) = \xi_{i_k}(v^{\amplify}_{i_k}) \equiv \mu_{i_k}$. In other words, there are two valuations with the same message, and they differ on their coefficient for $A_k$ (and possibly on other sets).

Chaining this with~\Cref{prop:InformativeNode}, there exists $t \in [m^4]$ such that
\[
    \mu_{i_k} \quad = \quad \xi_{i_k}(v_{\varphi_k,\vb{y},0}) \quad = \quad \xi_{i_k}(v_{\varphi_k,\vb{y},t-1}) \quad \ne \quad \xi_{i_k}(v_{\varphi_k,\vb{y},t}) \enspace .
\]
Since $k \ne k^{\amplify}$, we have by definition that $\dagger \not\in S^{\amplify}_{i_k}$, and hence $S^{\amplify}_{i_k}$ does not realize the special outcome (recall for GS and XOS valuations the special outcome is getting $\dagger$, and for general valuations the special outcome is getting $M \ni \dagger$). Hence, the utility $v_{\varphi_k,\vb{y},t}$ gets for playing $\sigma_{i_k}(v^{\amplify}_{i_k})$ is at least
\[
    v_{\varphi_k,\vb{y},t}(S^{\amplify}_{i_k}) - p^{\amplify}_{i_k} \quad \geq \quad v_{\varphi_k,\vb{y},t}(S^{\amplify}_{i_k}) - v^{\amplify}_{i_k}(S^{\amplify}_{i_k}) \quad \geq \quad \Delta^{\VAL}_k \enspace ,
\]
where the first inequality comes from IR, and the second inequality comes from applying Property (4) of~\Cref{def:KSSHD} to the valuations $v_{\varphi_k,\vb{y},t}$ and $v^{\amplify}_{i_k}$, which have doubled and undoubled value for $A_k$, respectively.

Therefore, by DSIC, the utility $v_{\varphi_k,\vb{y},t}$ gets for playing $\sigma_{i_k}(v_{\varphi_k,\vb{y},t})$ is at least $\Delta^{\VAL}_k$. However, this also implies that the utility $v_{\varphi_k,\vb{y},t-1}$ gets for playing $\sigma_{i_k}(v_{\varphi_k,\vb{y},t})$ is at least $\Delta^{\VAL}_k - 1/m$, since $v_{\varphi_k,\vb{y},t}(S) - v_{\varphi_k,\vb{y},t-1}(S) \leq 1/m$ for all $S \subseteq M$.

Finally, suppose there exist strategies $\vb{\psi}_{-i_k} \in \vb{\Psi}_{-i_k}(\vb{\mu}_{-i_k})$ such that $\MECH$ gives less than $\Delta^{\VAL}_k - 1/m$ value to bidder $i_k$ with valuation $v_{\varphi_k,\vb{y},t-1}$ on the profile $\sigma_{i_k}(v_{\varphi_k,\vb{y},t-1}), \vb{\psi}_{-i_k}$. Then define the strategy $\psi'_i$ by ``play $\psi_i$ if bidder $i_k$ sends the message $\mu_{i_k}$ in the first round, and play $\sigma_i(v^{\amplify}_i)$ otherwise.'' This is well-defined because $\psi_i$ and $\sigma_i(v^{\amplify}_i)$ send the same message in the first round.

Then because $\mu_{i_k} = \xi_{i_k}(v_{\varphi_k,\vb{y},t-1}) \ne \xi_{i_k}(v_{\varphi_k,\vb{y},t})$, bidder $i_k$ playing $\sigma_{i_k}(v_{\varphi_k,\vb{y},t})$ when they actually have the valuation $v_{\varphi_k,\vb{y},t-1}$ causes them to get strictly more utility against the strategies $\vb{\psi}'_{-i_k}$, which contradicts DSIC.

Thus, the valuation $v_{\varphi_k,\vb{y},t-1}$ satisfies $\sigma_{i_k}(v_{\varphi_k,\vb{y},t-1}) \in \Psi_{i_k}(\mu_{i_k})$, and the strategy $\sigma_{i_k}(v_{\varphi_k,\vb{y},t-1})$ guarantees value at least $\max\{0, \Delta^{\VAL}_k - 1/m\}$, since values are non-negative.
\end{proof}

\begin{lemma} \label{lemma:HallucinatedWelfare}
    \[
        \sum_{k \in [\kappa]} \Delta^{\VAL}_k \quad \geq \quad 180m/\eta \enspace .
    \]
\end{lemma}
\begin{proof}
Reference will be made to the properties of~\Cref{def:KSSHD}.

When $\VAL \in \{\GS, \XOS\}$, by the fact that $\dagger$ contributes at most $\alpha\eta$ value and by~\Cref{lemma:MustAllocateToCorrectBiddersXOS},
\begin{align*}
    \sum_{i \in [n]} v^{\amplify}_i(S^{\amplify}_i) \quad &\leq \quad \alpha\eta + 17m/\eta + 2\sum_{k \in [\kappa]} \abs{S^{\amplify}_{i_k} \cap A_k} \\
    &\leq \quad \alpha\eta + 17m/\eta + 4\beta + 20\kappa\alpha/\eta^2 + 2\sum_{k \in [\kappa]} \Delta^{\VAL}_k \\
    &\leq \quad 40m/\eta + 2\sum_{k \in [\kappa]} \Delta^{\VAL}_k \enspace .
\end{align*}

When $\VAL = \GEN$, the special outcome is exclusive and only obtains $2\alpha \leq 2m/\eta$ value. Therefore, by~\Cref{lemma:MustAllocateToCorrectBiddersGEN},
\begin{align*}
    \sum_{i \in [n]} v^{\amplify}_i(S^{\amplify}_i) \quad &\leq \quad 2m/\eta + 2\alpha\sum_{k \in [\kappa]} \One(S^{\amplify}_{i_k} \supseteq A_k) \\
    &\leq \quad 2m/\eta + 16\beta + 2\sum_{k \in [\kappa]} \Delta^{\VAL}_k \\
    &\leq \quad 40m/\eta + 2\sum_{k \in [\kappa]} \Delta^{\VAL}_k \enspace .
\end{align*}

Since the optimal welfare is at least $\kappa\alpha = m/2$, the welfare obtained by $\MECH$ must be at least $400m/\eta$. Substituting and simplifying yields the desired result.
\end{proof}

We can now finally finish the proof. By~\Cref{lemma:ObscuredAmplification}, the events $\calE_1, \dots, \calE_\kappa$ jointly hold w.p.\ at least $1-\kappa/m^2 \geq 1-1/m$. Condition on this occurring.

Let $k^*$ be the choice of $k$ such that some $i \in N_k$ satisfies $\dagger \in S^*_i$, if such $k$ exists, and let $K = [\kappa] \setminus\{k^\dagger, k^{\amplify}, k^*\}$. Let $T^*_k = \bigcup_{i \in N_k} S^*_i$. Then the simultaneous algorithm attains welfare at least
\begin{align*}
    \sum_{k \in K} \frac{v^*_{i_k,\vb{\mu}}(S^*_{i_k})}{\sum_{j \in N_k} v^*_{j,\vb{\mu}}(S^*_j)} \alpha \quad &\geq \quad \sum_{k \in K} \frac{\alpha \cdot v^*_{i_k,\vb{\mu}}(S^*_{i_k})}{2 \max\{1, \abs{T^*_k \cap (A_k \cup B)}\}} \\
    &\geq \quad \sum_{k \in K} \frac{\max\{0, \Delta^{\VAL}_k\}}{2} \frac{\alpha}{\max\{1, \abs{T^*_k \cap (A_k \cup B)}\}} - \frac{\kappa\alpha}{m} \\
    &\geq \quad \sum_{k \in K} \frac{\max\{0, \Delta^{\VAL}_k\}}{2} \frac{\alpha}{\alpha + \abs{T^*_k \cap B}} - \frac{1}{2} \\
    &\geq \quad \sum_{k \in K} \frac{\max\{0, \Delta^{\VAL}_k\}}{2} \bigg(1 - \frac{\abs{T^*_k \cap B}}{\alpha}\bigg) - \frac{1}{2} \\
    &\geq \quad \frac{1}{2} \sum_{k \in K} \Big(\Delta^{\VAL}_k - \abs{T^*_k \cap B}\Big) - \frac{1}{2} \\
    &\geq \quad \frac{1}{2} \bigg(\sum_{k \in K} \Delta^{\VAL}_k - \abs{B} - 1\bigg) \\
    &\geq \quad \frac{1}{2} \bigg(\sum_{k \in [\kappa]} \Delta^{\VAL}_k - \beta - 3\alpha - 1\bigg) \\
    &\geq \quad 85m/\eta \enspace ,
\end{align*}
where the first line comes from the fact that the average value of an item is at most $2$, the second line comes from~\Cref{lemma:HallucinatedGuarantee}, the fourth line comes from the inequality $1/(1+x) \geq 1-x$, the fifth line comes from $\Delta^{\VAL}_k \leq \alpha$, and the last line comes from~\Cref{lemma:HallucinatedWelfare}. After de-conditioning, the welfare guarantee is still at least $80m/\eta$.

We conclude by observing that the optimal welfare of the KSSHD is at most $2m$, and hence the simultaneous algorithm yields the desired $\eta/40$-approximation.
\end{proof}

\subsection{Wrapping Up}

Applying each of Propositions~\ref{prop:GSKSSHD},~\ref{prop:XOSKSSHD},~\ref{prop:GeneralKSSHD} with~\Cref{thm:KSSHDIsHard,thm:DSICToSimultaneousAlgorithm} gives us that for $\gamma \in [\log m, m/10^{100}]$, we have
\begin{itemize}[topsep=1pt,itemsep=1pt]
    \item $\DSICgs(m, \gamma) = \Omega(((m-1)/(\gamma + 9\log m))^{1/7}) = \Omega((m/\gamma)^{1/7})$,
    \item $\DSICxos(m, \gamma) = \Omega(((m-1)/(\gamma + 9\log m))^{1/5}) = \Omega((m/\gamma)^{1/5})$,
    \item $\DSICgen(m, \gamma) = \Omega((m-1)/(\gamma + 9\log m)) = \Omega(m/\gamma)$.
\end{itemize}
For $\gamma \in [m/10^{100}, m]$, we have $\DSICgs(m, \gamma), \DSICxos(m, \gamma), \DSICgen(m, \gamma) \geq 1 = \Omega(m/\gamma)$ trivially. This completes the proofs of Theorems~\ref{thm:GSDSICHardness},~\ref{thm:XOSDSICHardness},~\ref{thm:GeneralDSICHardness}.

\section{Conclusion}

We prove polynomial hardness-of-approximation results for deterministic DSIC mechanisms on GS, XOS, and general valuations. In the appendix, we prove for the two-bidder setting that every universally DSIC mechanism attaining a $0.9999$-approximation on weighted matroid rank valuations must use exponential communication, while also giving a polynomial-communication universally DSIC mechanism that beats a $1/2$-approximation for submodular valuations. Additionally, we show how to beat a $3/4$-approximation with a polynomial-communication simultaneous algorithm on matroid rank valuations. There are a few open questions.

\vspace*{-1em}
\paragraph{Two-Bidder Settings.}
Although we beat a $1/2$-approximation for submodular valuations with a universally DSIC mechanism, it is still open whether a polynomial-communication truthful mechanism can beat a $1/2$-approximation for XOS valuations. Additionally, it remains open to design a polynomial-communication deterministic DSIC mechanism beating a $1/2$-approximation for weighted matroid rank valuations, or to design a polynomial-communication deterministic truthful mechanism beating a $1/2$-approximation for submodular valuations.

\vspace*{-1em}
\paragraph{Improved Simultaneous Hard Instances.}
Our DSIC mechanism to simultaneous algorithm reduction allows us to recover an $\Omega((m/\gamma)^r)$ DSIC hardness-of-approximation whenever we have an $\Omega((m/\gamma)^r)$ KSSHD (\Cref{def:KSSHD}) hardness-of-approximation for a valuation class richer than GS. Therefore, a natural direction is designing better KSSHDs, or KSSHDs for different valuation classes. At the very least, there is a simultaneous $O(m^{1/3})$-approximation algorithm and a simultaneous $\Omega(m^{1/4-\varepsilon})$ hardness-of-approximation for subadditive valuations~\cite{DobzinskiNO14}, so it is plausible that the DSIC hardness-of-approximation bounds can be improved. Note that the hard instance of~\cite{DobzinskiNO14} does not conform to the KSSHD structure and so cannot be used directly.~\cite{Assadi17} gives a simultaneous $\Omega(m^{1/3-\varepsilon})$ hardness-of-\emph{estimation} for XOS valuations, meaning that it is hard to determine the welfare within a factor of $\Omega(m^{1/3-\varepsilon})$. The natural modification for hardness-of-approximation is stuck at $\Omega(m^{1/5-\varepsilon})$.

\vspace*{-1em}
\paragraph{Beyond Reductions to Simultaneous Lower Bounds.}
As previously stated, we prove that for general valuations, the optimal approximation guarantee for deterministic DSIC mechanisms given $2^\gamma$ communication is $\Theta(m/\gamma)$. This happens to coincide with the optimal approximation guarantee for deterministic truthful mechanisms which are either MIR, implementable with value queries, or computationally efficient~\cite{QiuW24,BlumrosenN05a,DanielySS15}.

For submodular, XOS, and subadditive valuations, the optimal approximation guarantee for deterministic truthful mechanisms which are either MIR, implementable with value queries, or computationally efficient is $\tilde{\Theta}((m/\gamma)^{1/2})$~\cite{QiuW24,DobzinskiV16}. However, simultaneous algorithms for subadditive valuations can attain an $O(m^{1/3})$-approximation. Thus, if the correct answer is also $\tilde{\Theta}((m/\gamma)^{1/2})$ for deterministic DSIC mechanisms, new techniques would be required for a tight lower bound.

\section{Acknowledgements}

We are very grateful to Jan Vondr\'{a}k for a number of helpful discussions over the course of the project and for comments on an earlier draft of the paper. We also thank Shahar Dobzinski for suggesting that the two-bidder simultaneous lower bound could be used for a DSIC lower bound.

\section{AI Usage Statement}

The results in this paper were proved with assistance from GPT 5.6 Pro. The two-bidder DSIC lower bound was largely generated by GPT after being given the two-bidder simultaneous lower bound, with the author supplying ideas to generalize to the many-bidder DSIC lower bound. The idea to use the EP2A for weighted matroid rank valuations was proposed by GPT, with further interaction generalizing the mechanism to submodular valuations. The paper and proof presentation were drafted by hand, with GPT being used to check for bugs and typos. The author takes full responsibility for the content of the paper.

\bibliographystyle{alpha}
\bibliography{MasterBib}

\appendix

\section{Simultaneous Hardness for Two Matroid Rank Valuations} \label{sec:TwoBidderSimultaneousHardness}

In this section, we show that even for two matroid rank valuations, no arbitrarily good approximation is possible for simultaneous algorithms. The construction in this section also serves as a warm-up for the DSIC hard instance for two weighted matroid rank functions.

We will reduce from the INDEX problem, where Alice holds a bit string $x \in \{0, 1\}^\lambda$ and Bob holds an index $k \in [\lambda]$. The goal is for Alice to send a message to Bob so that Bob can recover $x_k$. Any protocol that succeeds on INDEX w.p.\ $1/2 + \Omega(1)$ requires $\Omega(\lambda)$ communication~\cite{Ablayev96}.

The idea is to have a list of pairs of sets $(S_{1,0}, S_{1,1}), \dots, (S_{\lambda,0}, S_{\lambda,1})$ such that for every $\vb{x} \in \{0, 1\}^\lambda$, it is possible to construct a matroid for Alice such that the rank of the sets $S_{\ell,x_\ell}$ is low for all $\ell \in [\lambda]$. Then Bob will hold a matroid which, for some $k \in [\lambda]$, has high value for both $S_{k,0}, S_{k,1}$, and hence a good approximation requires correctly giving Alice the set in $S_{k,0}, S_{k,1}$ which she has high value for. This requires determining Alice's bit $x_k$ w.p.\ bounded away from $1/2$.

\subsection{A Hard Matroid}

Fix any constant $\varepsilon \in (0, 1/16]$, and let $\lambda = 2^{\varepsilon^2 m}$.

\begin{lemma} \label{lemma:IndexSetsExist}
    There exist partitions $\{(A_\ell, B_\ell, C_{\ell,0}, C_{\ell,1}) : \ell \in [\lambda]\}$ such that
    \begin{enumerate}[topsep=1pt,itemsep=1pt]
        \item $\abs{A_\ell} = \abs{B_\ell} = (3/8 + \varepsilon)m$ and $\abs{C_{\ell,0}} = \abs{C_{\ell,1}} = (1/8 - \varepsilon)m$ for all $\ell \in [\lambda]$.
        \item $\abs{(A_\ell \cup C_{\ell,b}) \cap (A_{\ell'} \cup C_{\ell',b'})} \leq (1/4 + 2\varepsilon)m$ for all $b, b' \in \{0, 1\}$ and distinct $\ell, \ell' \in [\lambda]$.
    \end{enumerate}
\end{lemma}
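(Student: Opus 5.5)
We will use the probabilistic method, in the same way as the proofs of \Cref{prop:XOSKSSHD,prop:GeneralKSSHD}. Independently for each $\ell \in [\lambda]$, sample a uniformly random ordered partition $(A_\ell, B_\ell, C_{\ell,0}, C_{\ell,1})$ of $M = [m]$ with the part sizes prescribed by Property (1). These sizes sum to $2(3/8+\varepsilon)m + 2(1/8-\varepsilon)m = m$, so Property (1) holds by construction. It remains to show that Property (2) holds simultaneously for all pairs with positive probability.

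Fix distinct $\ell, \ell'$ and bits $b, b'$. Set $X = A_\ell \cup C_{\ell,b}$ and $Y = A_{\ell'} \cup C_{\ell',b'}$. Each is a uniformly random subset of size exactly $(3/8+\varepsilon)m + (1/8-\varepsilon)m = m/2$, and $X, Y$ are independent because the partitions for $\ell$ and $\ell'$ are sampled independently. Hence $\abs{X \cap Y}$ is hypergeometric with mean $m/4$.

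Hypergeometric variables satisfy the same Hoeffding tail bounds as the corresponding binomial. This gives
\[
    \Pr\big[\abs{X \cap Y} > (1/4 + 2\varepsilon)m\big] \quad \leq \quad e^{-2(2\varepsilon)^2 m^2/(m/2)} \quad = \quad e^{-16\varepsilon^2 m} \enspace .
\]
Here we view $\abs{X \cap Y}$ as a sum over the $m/2$ draws forming $Y$, each draw landing in $X$ with marginal probability $1/2$, so the required deviation is $2\varepsilon m$ over $m/2$ samples.

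Now union bound over at most $\lambda^2$ ordered pairs $(\ell, \ell')$ and the $4$ choices of $(b, b')$. The failure probability is at most
\[
    4\lambda^2 e^{-16\varepsilon^2 m} \quad = \quad 4 \cdot 2^{2\varepsilon^2 m} e^{-16\varepsilon^2 m} \quad < \quad 1
\]
for $m$ large, since $2\ln 2 < 16$. So some realization satisfies Property (2), and fixing it completes the proof.

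The only step needing care is the concentration bound. We need a Hoeffding-type tail for sampling without replacement (Hoeffding's 1963 reduction, or the negative-association version of Chernoff), and we need the exponent $16\varepsilon^2 m$ to beat $\ln(\lambda^2) = 2\ln 2\,\varepsilon^2 m$. That margin is comfortable. It also requires no upper bound on $\varepsilon$ beyond $\varepsilon \le 1/16$, which is only needed so that Property (1)'s sizes are non-negative.
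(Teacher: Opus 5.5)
Your proposal is correct and matches the paper's argument: sample independent uniformly random partitions with the prescribed sizes, note that $A_\ell \cup C_{\ell,b}$ and $A_{\ell'} \cup C_{\ell',b'}$ are independent uniform $m/2$-subsets with expected intersection $m/4$, and finish with a Chernoff/Hoeffding bound plus a union bound. The only difference is that you make the constants explicit, a tail of $e^{-16\varepsilon^2 m}$ against $4\lambda^2 = 4 \cdot 2^{2\varepsilon^2 m}$ events, which the paper's one-line proof leaves implicit.
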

\begin{proof}
Take independently random partitions satisfying (1). Since $A_\ell \cup C_{\ell,b}$ and $A_{\ell'} \cup C_{\ell',b'}$ are independent for $\ell \ne \ell'$, their expected intersection is $m/4$, and a Chernoff bound and union bound yield the desired list of partitions.
\end{proof}

Fix $\{(A_\ell, B_\ell, C_{\ell,0}, C_{\ell,1}) : \ell \in [\lambda]\}$ satisfying the properties of~\Cref{lemma:IndexSetsExist}.

\begin{definition}[Alice-Type]
    For $\vb{x} \in \{0, 1\}^\lambda$, define
    \[
        \calA_{\vb{x}} \quad = \quad \{S \subseteq M : \abs{S} \leq m/2,\, \abs{S \cap (A_\ell \cup C_{\ell,x_\ell})} \leq (3/8 + \varepsilon)m \; \forall \ell \in [\lambda]\} \enspace .
    \]
\end{definition}

\begin{definition}[Bob-Type]
    For $k \in [\lambda]$, define
    \[
        \calB_k \quad = \quad \{S \subseteq M : S \cap A_k = \emptyset,\, \abs{S \cap (C_{k,0} \cup C_{k,1})} \leq (1/8 - \varepsilon) m\} \enspace .
    \]
\end{definition}

Since Bob-types are just the direct sum of an empty matroid on $A_k$, the free matroid on $B_k$, and a uniform matroid on $C_{k,0} \cup C_{k,1}$, they are matroids. We now prove that Alice-types are matroids.

\begin{proposition}
    $\calA_{\vb{x}}$ is a matroid for all $\vb{x} \in \{0, 1\}^\lambda$.
\end{proposition}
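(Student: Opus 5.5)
The plan is to verify the three matroid axioms directly. The proof reduces to one structural fact: because the sets $D_\ell \coloneqq A_\ell \cup C_{\ell,x_\ell}$ have small pairwise intersections (Property (2) of \Cref{lemma:IndexSetsExist}), a set strictly smaller than $m/2$ can be tight for at most one of the constraints. Write $r = (3/8+\varepsilon)m$, so that
\[
\calA_{\vb{x}} = \{S : \abs{S} \leq m/2,\ \abs{S \cap D_\ell} \leq r \ \forall \ell\}.
\]
By Property (1), each $D_\ell$ has size exactly $(3/8+\varepsilon)m + (1/8-\varepsilon)m = m/2$.

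\textbf{Easy axioms.} Non-emptiness holds since $\emptyset \in \calA_{\vb{x}}$. Downward-closure holds because every defining constraint is an upper bound on $\abs{S \cap X}$ for a fixed set $X$, and such bounds are inherited by subsets.

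\textbf{Augmentation: at most one tight constraint.} Take $S, T \in \calA_{\vb{x}}$ with $\abs{S} < \abs{T} \leq m/2$. Since $\abs{S} < m/2$, adding any single element keeps the cardinality constraint satisfied. Hence $S \cup \{j\}$ fails to be in $\calA_{\vb{x}}$ only if $j \in D_\ell$ for some $\ell$ that is tight for $S$, meaning $\abs{S \cap D_\ell} = r$.

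Suppose two distinct indices $\ell, \ell'$ were both tight for $S$. By inclusion–exclusion and Property (2),
\[
\abs{S} \;\geq\; \abs{S\cap D_\ell} + \abs{S \cap D_{\ell'}} - \abs{D_\ell \cap D_{\ell'}} \;\geq\; 2(3/8+\varepsilon)m - (1/4+2\varepsilon)m \;=\; m/2.
\]
This contradicts $\abs{S} < m/2$, so at most one index is tight for $S$.

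\textbf{Augmentation: finishing.} If no index is tight for $S$, any $j \in T \setminus S$ works; such a $j$ exists because $\abs{T} > \abs{S}$.

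If exactly one index $\ell$ is tight, it suffices to find some $j \in T \setminus (S \cup D_\ell)$. Suppose instead that $T \setminus S \subseteq D_\ell$. Then every element of $T$ outside $D_\ell$ lies in $S$, which gives
\[
\abs{T} = \abs{T \setminus D_\ell} + \abs{T \cap D_\ell} \leq \abs{(S\cap T)\setminus D_\ell} + r \leq \abs{S \setminus D_\ell} + \abs{S \cap D_\ell} = \abs{S}.
\]
The middle step uses $\abs{T \cap D_\ell} \leq r$, which holds because $T \in \calA_{\vb{x}}$. The last step uses $\abs{S \cap D_\ell} = r$, which is tightness. This contradicts $\abs{T} > \abs{S}$, so the desired $j$ exists and $S \cup \{j\} \in \calA_{\vb{x}}$.

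The only real content of the argument is the "at most one tight constraint" step. That step is exactly where the parameters in \Cref{lemma:IndexSetsExist} are calibrated so that the overlap bound $(1/4+2\varepsilon)m$ meets the cardinality cap $m/2$. Everything else is routine counting.
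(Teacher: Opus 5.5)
Your proof is correct and follows the paper's argument step for step. Like the paper, you first use inclusion--exclusion and Property (2) of \Cref{lemma:IndexSetsExist} to show that at most one constraint can be tight for $S$. You then rule out $T \setminus S \subseteq A_\ell \cup C_{\ell,x_\ell}$ for the unique tight $\ell$ by counting; the paper phrases this as $T$ having more elements than $S$ in $A_\ell \cup C_{\ell,x_\ell}$, while you derive $\abs{T} \leq \abs{S}$, and the two are equivalent.
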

\begin{proof}
Downward-closure is immediate, so it only remains to check augmentation.

Let $S, T \in \calA_{\vb{x}}$ and $\abs{S} < \abs{T}$. We say that constraint $\ell \in [\lambda]$ is \emph{tight} for $S$ if $\abs{S \cap (A_\ell \cup C_{\ell,x_\ell})} = (3/8 + \varepsilon)m$. If two distinct constraints $\ell, \ell' \in [\lambda]$ were tight for $S$, then by condition (2) of~\Cref{lemma:IndexSetsExist},
\begin{align*}
    \abs{S} \quad &\geq \quad \abs{S \cap (A_\ell \cup C_{\ell,x_\ell})} + \abs{S \cap (A_{\ell'} \cup C_{\ell',x_{\ell'}})} - \abs{(A_\ell \cup C_{\ell,x_\ell}) \cap (A_{\ell'} \cup C_{\ell',x_{\ell'}})} \\
    &= \quad (3/8 + \varepsilon)m + (3/8 + \varepsilon)m - (1/4 + 2\varepsilon)m \\
    &\geq \quad \abs{T} \enspace ,
\end{align*}
so at most one constraint can be tight for $S$.

Suppose for contradiction that no element of $T \setminus S$ augments $S$. Then since $\abs{S} < m/2$, it must be the case that some constraint $\ell \in [\lambda]$ is tight for $S$, and $T \setminus S \subseteq A_\ell \cup C_{\ell,x_\ell}$. But this means $T$ must have more items in $A_\ell \cup C_{\ell,x_\ell}$ than $S$, which is impossible because constraint $\ell$ is tight for $S$.
\end{proof}

\subsection{Reduction from INDEX}

For $\vb{x} \in \{0, 1\}^\lambda$, let $u_{\vb{x}} : 2^M \to \mathbb{R}_+$ be the rank function of $\calA_{\vb{x}}$, and for $k \in [\lambda]$, let $v_k : 2^M \to \mathbb{R}_+$ be the rank function of $\calB_k$.

\begin{lemma} \label{lemma:TwoBidderWelfareSimultaneous}
    The allocation which gives $S$ to $u_{\vb{x}}$ and $\closure{S}$ to $v_k$ has welfare at most
    \[
        m - (\abs{S \cap C_{k,x_k}} + \abs{\closure{S} \cap C_{k,1-x_k}})/2 \enspace .
    \]
\end{lemma}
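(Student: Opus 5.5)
The plan is a direct counting argument. I would write out explicit upper bounds on $u_{\vb{x}}(S)$ and $v_k(\closure{S})$ in terms of how $S$ meets the four parts $A_k, B_k, C_{k,0}, C_{k,1}$, and then combine them.

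\textbf{Setup.} Write $L = (3/8+\varepsilon)m$ and $c_0 = (1/8-\varepsilon)m$, so that $2L + 2c_0 = m$. Let
\[
a = \abs{S \cap A_k}, \quad b = \abs{S \cap B_k}, \quad c = \abs{S \cap C_{k,x_k}}, \quad d = \abs{S \cap C_{k,1-x_k}} .
\]
Then the corresponding intersections of $\closure{S}$ are $L-a$, $L-b$, $c_0 - c$ and $c_0 - d$. The target bound is $m - (c + c_0 - d)/2$.

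\textbf{Bob.} Since $\calB_k$ is a direct sum of an empty, a free and a uniform matroid, its rank function is exactly
\[
v_k(\closure{S}) = (L - b) + \min\{2c_0 - c - d,\, c_0\} .
\]

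\textbf{Alice.} Dropping all but a few constraints of $\calA_{\vb{x}}$ can only increase the rank, so I would use three upper bounds on $u_{\vb{x}}(S)$:
\begin{itemize}
\item the size bound $\abs{S} = a+b+c+d$;
\item the cardinality bound $m/2$;
\item the bound from the single constraint indexed by $k$, namely $b + d + \min\{a+c,\, L\}$.
\end{itemize}

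\textbf{Combining.} I would add a convex combination of Alice's bounds to a convex combination of the two terms in Bob's minimum. For example, averaging Alice's bounds $m/2$ and $b+d+L$, and averaging Bob's terms $2c_0-c-d$ and $c_0$, gives
\[
u_{\vb{x}}(S) + v_k(\closure{S}) \le \frac{m}{4} + \frac{3}{2}(L + c_0) - \frac{b+c}{2} = m - \frac{b+c}{2} .
\]
This already matches the target whenever $b \ge c_0 - d$.

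\textbf{Main obstacle.} The remaining regime is $b < c_0 - d$: $S$ takes few items of $B_k$ and also leaves many items of $C_{k,1-x_k}$ to Bob. Here I would switch weights. Bob's term $2c_0 - c - d$ is now much smaller than $c_0$ plus slack, and Alice's bound $a+b+c+d$, or the single-constraint bound with the term $a+c$, should be used instead of $m/2$. So I would split into cases according to which argument attains each minimum and pick the weights per case.

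The step I expect to be hardest is finding a single linear combination, or a small case split, that produces exactly the coefficients $-1/2$ on $c$ and on $c_0 - d$ in every regime. The obstacle is that the bound must hold for all $a \in [0,L]$ simultaneously. If no combination works uniformly, I would fall back on maximizing the left-hand side as a piecewise-linear function of $(a,b,c,d)$ over the box. The maximum is attained at a vertex where the active minima switch, so it suffices to verify the inequality at those few breakpoints.
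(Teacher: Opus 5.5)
\textbf{Gap: the proof stops before the lemma is proved.} Your setup is correct. Bob's rank formula $v_k(\overline{S}) = (L-b)+\min\{2c_0-c-d,\,c_0\}$ is right, all three upper bounds on $u_{\vb{x}}(S)$ are valid relaxations, and your averaged combination does give $m-(b+c)/2$. However, the regime $b<c_0-d$ is left as a plan (``pick the weights per case'', or enumerate breakpoints), so as written nothing is proved there.

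The obstacle you anticipate is an artifact of your own averaging. When you average Alice's bound $b+d+L$ with $m/2$, her $+b$ coefficient is halved while Bob's $-b$ keeps full weight. That mismatch is exactly what leaves the stray $-b/2$ and forces the case split. Your worry about $a$ is also unfounded, since $\min\{a+c,L\}\le L$ removes it entirely.

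\textbf{The missing step.} Use the single-constraint bound with weight $1$ and keep Bob's minimum intact:
\[
u_{\vb{x}}(S)+v_k(\overline{S}) \;\le\; (b+d+L)+(L-b)+\min\{2c_0-c-d,\,c_0\} \;=\; 2L+\min\{2c_0-c,\,c_0+d\} \;=\; m-\max\{c,\,c_0-d\}.
\]
Since $\max\{c,c_0-d\}\ge (c+c_0-d)/2$, this proves the lemma for all $(a,b,c,d)$ at once, with no cases and no vertex enumeration. The $b$-terms cancel and $a$ never appears.

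\textbf{Comparison with the paper.} This is essentially the paper's argument in different bookkeeping. The paper first uses an exchange argument to assume $S\supseteq A_k$ and $S\cap B_k=\emptyset$: items of $A_k$ are worthless to Bob, and items of $B_k$ are worth exactly $1$ to Bob and at most $1$ to Alice. It then applies constraint $k$ and Bob's rank to obtain
\[
m+\min\{-\abs{\overline{S}\cap C_{k,1-x_k}},\,-\abs{S\cap C_{k,x_k}}\}.
\]
The cancellation of $b$ in your formula is that exchange argument done algebraically. Once completed this way, your route is marginally more direct, since it needs no WLOG normalization.
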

\begin{proof}
The items in $B_k$ have marginal value $1$ for $v_k$, and the items in $A_k$ have marginal value $0$ for $v_k$. Hence, the welfare does not decrease by assuming $S \supseteq A_k$ and $\closure{S} \supseteq B_k = \emptyset$. Then observe that
\[
    u_{\vb{x}}(S) \quad = \quad (3/8 + \varepsilon) m + \abs{S \cap C_{k,1-x_k}} \quad = \quad m/2 - \abs{\closure{S} \cap C_{k,1-x_k}}
\]
and
\begin{align*}
    v_k(\closure{S}) \quad &= \quad (3/8 + \varepsilon) m + \min\{(1/8 - \varepsilon) m, \abs{\closure{S} \cap C_{k,1-x_k}} + \abs{\closure{S} \cap C_{k,x_k}}\} \\
    &= \quad m/2 + \min\{0, \abs{\closure{S} \cap C_{k,1-x_k}} - \abs{S \cap C_{k,x_k}}\} \enspace .
\end{align*}
Therefore,
\begin{align*}
    u_{\vb{x}}(S) + v_k(\closure{S}) \quad &= \quad m + \min\{-\abs{\closure{S} \cap C_{k,1-x_k}}, -\abs{S \cap C_{k,x_k}}\} \\
    &\leq \quad m - (\abs{S \cap C_{k,x_k}} + \abs{\closure{S} \cap C_{k,1-x_k}})/2 \enspace . \qedhere
\end{align*}
\end{proof}

We now reduce from INDEX. Let Alice hold $\vb{x} \in \{0, 1\}^\lambda$, and let Bob hold $k \in [\lambda]$. Let $\ALG$ be a simultaneous algorithm attaining a $(15/16+\varepsilon)$-approximation for two matroid rank valuations. Alice sends Bob the message she would send in $\ALG$ with $u_{\vb{x}}$.

With Alice's message, Bob can compute the allocation $S, \closure{S}$ made by $\ALG$. Bob then samples a uniformly random element $j \in C_{k,0} \cup C_{k,1}$, and outputs $\One(j \in (S \cap C_{k,0}) \cup (\closure{S} \cap C_{k,1}))$.

Observe that Bob errs precisely when $j \in (S \cap C_{k,x_k}) \cup (\closure{S} \cap C_{k,1-x_k})$. Therefore, by~\Cref{lemma:TwoBidderWelfareSimultaneous} and the approximation guarantee of $\ALG$, Bob errs w.p.\ at most
\[
    \frac{\E[\abs{(S \cap C_{k,x_k}) \cup (\closure{S} \cap C_{k,1-x_k})}]}{\abs{C_{k,0} \cup C_{k,1}}} \quad \leq \quad \frac{(1/8 - 2\varepsilon)m}{2(1/8 - \varepsilon)m} \quad \leq \quad \frac{1}{2} - 2\varepsilon \enspace .
\]
As $\varepsilon > 0$ is a constant, this protocol solves INDEX w.p.\ $1/2 + \Omega(1)$. Thus, $\ALG$ uses $\Omega(\lambda) = 2^{\Theta(m)}$ communication, which proves~\Cref{thm:TwoBidderSimultaneousHardness}.

\section{DSIC Hardness for Two Weighted Matroid Rank Valuations}

We will now leverage the construction of~\Cref{sec:TwoBidderSimultaneousHardness} to prove an impossibility result for DSIC mechanisms on two weighted matroid rank valuations. The high-level ideas are very similar to those in the many-bidder proof. To handle randomization, we will define a hard distribution, then reason about the deterministic DSIC mechanism in the support of the universally DSIC mechanism which performs best on the hard distribution. We will make use of the following result.

\begin{fact}[\cite{DobzinskiRV22}] \label{fact:MinimalMechanism}
    For every deterministic DSIC mechanism $\MECH'$ and restricted domain $\calV_1 \times \dots \times \calV_n$, there exists a deterministic DSIC mechanism $\MECH$ using the same communication and realizing the same allocation and payment rule on $\calV_1 \times \dots \times \calV_n$, such that for every non-leaf node $\calN$ in the communication protocol of $\MECH$, there exists a bidder $i$ and $v, v' \in \calV_i$ where $\MECH$ prescribes different messages for $v, v'$ at $\calN$.
\end{fact}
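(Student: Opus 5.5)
We construct $\MECH$ from $\MECH'$ by a pruning procedure on the protocol tree, in the spirit of the proof of \Cref{lemma:FindingTheNode} but applied at every node rather than only the first one. Call a node $\calN$ of $\MECH'$ \emph{reachable} if some profile in $\calV_1 \times \dots \times \calV_n$, with every bidder following their prescribed strategy, passes through $\calN$. Unreachable subtrees can be left essentially unchanged, since they only matter for off-path deviations. At a reachable non-leaf node $\calN$ where, for every bidder $i$, all $v \in \calV_i$ prescribe the same message $\xi_{\calN,i}$, every domain profile moves to the same child $\calN'$. The plan is to \emph{contract} $\calN$: delete the communication step at $\calN$ and splice $\calN'$ into its place. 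Any off-path message tuple at $\calN$ is rerouted to $\calN'$ as well.

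Formally, define $\MECH$ recursively from the root. Map each node of $\MECH$ to a node of $\MECH'$. When the current $\MECH'$ node is uninformative with respect to the domain, skip forward to its unique domain child, repeating until we reach a node that is a leaf or is informative for some bidder. New strategies are defined by translation. A strategy $\psi$ of $\MECH$ is mapped to a strategy $\tau(\psi)$ of $\MECH'$ that, at each skipped node, sends the common prescribed message $\xi_{\calN,i}$, and at every other node sends what $\psi$ sends at the corresponding node of $\MECH$. The prescription is $\sigma_i(v) = $ the $\MECH$-strategy whose translation agrees with $\sigma'_i(v)$ on the relevant nodes. Communication does not increase, since every root-to-leaf path of $\MECH$ is a subpath of one in $\MECH'$ with steps removed. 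On domain profiles the outcome is identical, because domain bidders send exactly $\xi_{\calN,i}$ at skipped nodes anyway. Hence the allocation and payment rule are unchanged on the domain. By construction, every remaining non-leaf node has some bidder with two domain valuations prescribing different messages.

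The main obstacle is preserving DSIC, since off-path behavior is now restricted. Fix bidder $i$ with $v \in \calV_i$, any deviation $\psi_i$, and any strategies $\psi_{-i}$ of the others in $\MECH$. The outcome of $\MECH$ on $(\psi_i,\psi_{-i})$ equals the outcome of $\MECH'$ on the translated profile $(\tau(\psi_i),\tau(\psi_{-i}))$. This holds because skipped nodes are traversed with the forced messages, which lead to exactly the spliced child. Likewise, $\MECH$ on $(\sigma_i(v),\psi_{-i})$ equals $\MECH'$ on $(\sigma'_i(v),\tau(\psi_{-i}))$, since $\tau(\sigma_i(v))$ and $\sigma'_i(v)$ agree along every path that can arise. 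DSIC of $\MECH'$ against the arbitrary opponent profile $\tau(\psi_{-i})$ then transfers directly. The one subtlety to check is that $\sigma'_i(v)$ really sends $\xi_{\calN,i}$ at skipped nodes reached via off-path histories, not only on-path ones. I would handle this by declaring a node uninformative only if the prescribed messages agree for all domain valuations of all bidders \emph{at that node}, regardless of how it is reached. Nodes reachable only off-path are then either contracted by the same rule or kept intact; keeping them is harmless, since the statement only constrains non-leaf nodes of $\MECH$, and we may apply the contraction rule to every node of the tree.
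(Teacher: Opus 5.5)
Your proof is correct. The paper gives no proof of this fact (it cites \cite{DobzinskiRV22}), but your argument is the standard one: contract nodes where no bidder's prescribed message depends on the valuation in the domain, translate strategies by inserting the forced messages, and transfer DSIC from $\MECH'$ against the translated opponent profile. This is the same node-skipping and re-rooting the paper itself uses in \Cref{lemma:FindingTheNode,lemma:MinimalMechanism}, applied at every node rather than only the first. One correction: drop your opening suggestion to leave unreachable subtrees unchanged, since the statement also constrains off-path non-leaf nodes. Keep your final choice instead, which defines informativeness at each node from the prescribed strategies (these are defined everywhere) and contracts throughout the tree.
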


\subsection{The Hard Distribution}

Add two special items $\dagger_1, \dagger_2$, so that the total item set is $M = [m] \cup \{\dagger_1, \dagger_2\}$.

Let $\{(A_\ell, B_\ell, C_{\ell,0}, C_{\ell,1}) : \ell \in [\lambda]\}$ be partitions of $[m]$ satisfying the properties of~\Cref{lemma:IndexSetsExist} for $\varepsilon = 1/40$, so $3/8 + \varepsilon = 4/10$, $1/8 - \varepsilon = 1/10$, and $1/4 + 2\varepsilon = 3/10$. For all $\vb{x} \in \{0, 1\}^\lambda$, let $r_{\vb{x}} : 2^{[m]} \to \mathbb{R}_+$ be the Alice-type matroid rank valuation.

For $i \in \{1, 2\}$, we define the following weighted matroid rank valuations:
\begin{itemize}[topsep=1pt,itemsep=1pt]
    \item For $\vb{x} \in \{0, 1\}^\lambda$, define $u_{i,\vb{x}}$ by
    \[
        u_{i,\vb{x}}(S) \quad = \quad r_{\vb{x}}(S \cap [m]) + 10m \cdot \One(\dagger_i \in S) \enspace .
    \]
    \item For $k \in [\lambda]$ and $t \in \{0, 1\}$, define $v_{i,k,t}$ by
    \[
        v_{i,k,t}(S) \quad = \quad 2\abs{S \cap B_k} + (1 + t)\abs{S \cap C_{k,0}}/3 + 10m \cdot \One(\dagger_i \in S) + 5mt \cdot \One(\dagger_{-i} \in S) \enspace .
    \]
    \item For $t \in \{0, 1\}$, define $w_{i,t}$ by
    \[
        w_{i,t}(S) \quad = \quad m(1 + 3t) \cdot \One(\dagger_i \in S) + 9m(1 - t) \cdot \One(\dagger_{-i} \in S) \enspace .
    \]
\end{itemize}

Define
\[
    \calF_i \quad = \quad \{u_{i,\vb{x}} : \vb{x} \in \{0, 1\}^\lambda\} \cup \{v_{i,k,t} : k \in [\lambda],\, t \in \{0, 1\}\} \enspace .
\]

For $i \in \{1, 2\}$, let $\calD_{i,u}$ be the following distribution over instances: sample random $\vb{x} \in \{0, 1\}^\lambda$ and $k \in [\lambda]$, and let bidder $i$ have $u_{i,\vb{x}}$ and bidder $-i$ have $v_{-i,k,x_k}$.

For $i \in \{1, 2\}$, let $\calD_{i,w}$ be the following distribution over instances: sample random $k \in [\lambda]$ and $t \in \{0, 1\}$, and let bidder $i$ have $w_{i,t}$ and bidder $-i$ have $v_{-i,k,t}$.

The hard distribution $\calD$ will mix the distributions $\calD_{1,u}, \calD_{2,u}, \calD_{1,w}, \calD_{2,w}$ uniformly.

The proof sketch will be as follows.
\begin{enumerate}[topsep=1pt,itemsep=1pt]
    \item Doing well on $\calD_{i,w}$ implies that bidder $-i$ greatly influences the price of $\dagger_i$. This is all we need from $\calD_{i,w}$, and the rest of the argument works on $\calD_{i,u}$ (so bidder $i$ has the Alice-type valuation and bidder $-i$ has the additive valuation).
    \item Suppose that bidder $i$ sends two different messages in the first round. Then bidder $-i$ can play a contingent strategy that raises the price of $\dagger_i$ only when bidder $i$ sends a specific message. Since the value bidder $i$ has for $\dagger_i$ is too high, the only way for the mechanism to be DSIC is if bidder $-i$ is forced to commit to their value for $\dagger_i$.
    \item Now that bidder $-i$ must send information in the first round, bidder $i$ can play a contingent strategy against bidder $-i$. In particular, bidder $i$ can play a contingent strategy if two Alice-type valuations send the same message while having opposite values of $x_k$. DSIC then enforces a utility guarantee for bidder $-i$, regardless of the value of $x_k$. But this results in a large loss of welfare because the optimal allocation depends on $x_k$.
    \item Thus, the first message of bidder $i$ usually reveals the value of $x_k$, which gives a simultaneous algorithm to solve INDEX w.p.\ $1/2 + \Omega(1)$. This proves the communication lower bound.
\end{enumerate}

\subsection{Properties of the Hard Distribution}

We first prove a lower bound on the welfare loss on the instances from $\calD_{i,u}$. We mainly use this result in Step (3) of the proof sketch, but we also need it to use~\Cref{fact:MinimalMechanism}, so we prove it now.

For $i \in \{1, 2\}$, $\vb{x} \in \{0, 1\}^\lambda$, and $k \in [\lambda]$, let $\Delta_{i,\vb{x},k}(S_i, S_{-i})$ denote the difference between the optimal welfare and the welfare of the allocation $S_i, S_{-i}$ on the instance $u_{i,\vb{x}}, v_{-i,k,x_k}$.

\begin{lemma} \label{lemma:TwoBidderWelfareGuarantee}
    For all $i \in \{1, 2\}$, $\vb{x} \in \{0, 1\}^\lambda$, and $k \in [\lambda]$,
    \[
        \Delta_{i,\vb{x},k}(S_i, S_{-i}) \quad \geq \quad \begin{cases}
            \abs{C_{k,0} \setminus S_{-i}}/3 & x_k = 0 \\
            \abs{C_{k,0} \cap S_{-i}}/3 & x_k = 1
        \end{cases} \enspace .
    \]
\end{lemma}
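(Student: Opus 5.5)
The plan is to exploit that both valuations are additively separable between the special items $\dagger_1,\dagger_2$ and the regular items $[m]$. We bound the two contributions separately and compare each with the corresponding part of an explicit optimal allocation.

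\emph{Special items.} For $\dagger_i$, bidder $i$ has value $10m$ and bidder $-i$ has value $5m x_k \le 5m$. For $\dagger_{-i}$, bidder $i$ has value $0$ and bidder $-i$ has value $10m$. So any allocation gets at most $20m$ from the special items, and $20m$ is attained by giving $\dagger_i$ to $i$ and $\dagger_{-i}$ to $-i$. It therefore suffices to show the claimed loss on the regular items, comparing against the best regular-item welfare.

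\emph{Upper bound on Alice's rank.} Recall that $r_{\vb{x}}(T)=\max_{I\in\calA_{\vb{x}}}\abs{T\cap I}$. Any independent $I\subseteq T$ has $\abs{I}\le m/2$ and $\abs{I\cap(A_k\cup C_{k,x_k})}\le 4m/10$, using only the constraint $\ell=k$. Since $[m]$ is partitioned into $A_k,B_k,C_{k,0},C_{k,1}$, this gives
\[
r_{\vb{x}}(T)\ \le\ \min\bigl\{m/2,\ 4m/10+\abs{T\cap B_k}+\abs{T\cap C_{k,1-x_k}}\bigr\}.
\]
\emph{Lower bound on the optimum.} The set $A_k\cup C_{k,1-x_k}$ is independent. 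It has size $m/2$, meets $A_k\cup C_{k,x_k}$ in $4m/10$ items, and by \Cref{lemma:IndexSetsExist}(2) meets every other constraint set $A_\ell\cup C_{\ell,x_\ell}$ in at most $3m/10$ items. So Alice can obtain $m/2$ from it, while Bob receives the rest.

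\emph{Case $x_k=0$.} Write $S=S_i\cap[m]$. Alice gets at most $m/2$, and Bob gets at most $2\cdot 4m/10+\abs{S_{-i}\cap C_{k,0}}/3$. The allocation giving Alice $A_k\cup C_{k,1}$ and Bob $B_k\cup C_{k,0}$ achieves $m/2+8m/10+m/30$. Subtracting yields a loss of at least $(m/10-\abs{S_{-i}\cap C_{k,0}})/3\ge\abs{C_{k,0}\setminus S_{-i}}/3$.

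\emph{Case $x_k=1$.} Here $C_{k,1-x_k}=C_{k,0}$. Let $b=\abs{S_i\cap B_k}$ and $c=\abs{S_{-i}\cap C_{k,0}}$, so that $\abs{S_i\cap C_{k,0}}\le m/10-c$ and $\abs{S_{-i}\cap B_k}\le 4m/10-b$. The rank bound gives Alice at most $4m/10+b+m/10-c$, and Bob at most $2(4m/10-b)+2c/3$. The total is at most $13m/10-b-c/3\le 13m/10-c/3$. The allocation giving Alice $A_k\cup C_{k,0}$ and Bob $B_k$ achieves exactly $m/2+8m/10=13m/10$, so the loss is at least $c/3=\abs{C_{k,0}\cap S_{-i}}/3$.

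The only step needing care is the rank upper bound. It must use the single constraint $\ell=k$ together with the cardinality cap, rather than an exact formula for $r_{\vb{x}}$. It must also be checked that in case $x_k=1$ Alice cannot compensate for losing $C_{k,0}$ with items of $C_{k,1}$: those lie in the same tight constraint as $A_k$, which is exactly why only $B_k$ and $C_{k,0}$ appear in the bound. Everything else is bookkeeping.
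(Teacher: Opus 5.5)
Your proof is correct and follows essentially the paper's argument. You separate off the special items, whose optimal assignment is forced, exhibit an optimal allocation of the regular items, and compare bidder $i$'s marginal value on $C_{k,0}$ ($0$ when $x_k = 0$, $1$ when $x_k = 1$) with bidder $-i$'s ($1/3$ or $2/3$). Your explicit welfare upper bound, using the single constraint $\ell = k$ plus the cardinality cap, is a more careful version of the paper's marginal-value sketch: it also covers allocations in which bidder $i$ takes items of $B_k$, which the paper leaves implicit.
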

\begin{proof}
There is only one way to optimally allocate $\dagger_1, \dagger_2$. Aside from the special items, bidder $-i$ only values the items in $B_k \cup C_{k,0}$. Therefore, an optimal allocation can give $A_k \cup C_{k,1}$ to bidder $i$.

When $x_k = 0$, bidder $i$ receives maximum non-special value $m/2$ from $A_k \cup C_{k,1}$, and hence receives marginal value $0$ from every item in $C_{k,0}$. Therefore, since the marginal value of bidder $-i$ for items in $C_{k,0}$ is $1/3$, we have $\Delta_{i,\vb{x},k}(S_i, S_{-i}) \geq \abs{C_{k,0} \setminus S_{-i}}/3$.

When $x_k = 1$, bidder $i$ receives only value $4m/10$ from $A_k \cup C_{k,1}$, and hence receives marginal value $1$ from every item in $C_{k,0}$. Therefore, since the marginal value of bidder $-i$ for items in $C_{k,0}$ is only $2/3$, we have $\Delta_{i,\vb{x},k}(S_i, S_{-i}) \geq \abs{C_{k,0} \cap S_{-i}}/3$.
\end{proof}

\begin{corollary} \label{cor:TwoBidderWelfareGuarantee}
    Let $i \in \{1, 2\}$, $\vb{x}, \vb{y} \in \{0, 1\}^\lambda$, and $k \in [\lambda]$. If $x_k = 0$ and $y_k = 1$, then for allocations $S_i, S_{-i}$ and $T_i, T_{-i}$ such that $\abs{C_{k,0} \cap S_{-i}} \leq \abs{C_{k,0} \cap T_{-i}}$,
    \[
        \Delta_{i,\vb{x},k}(S_i, S_{-i}) + \Delta_{i,\vb{y},k}(T_i, T_{-i}) \quad \geq \quad m/30 \enspace .
    \]
\end{corollary}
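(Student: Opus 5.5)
The plan is to apply \Cref{lemma:TwoBidderWelfareGuarantee} to each of the two instances separately and add the bounds. For the instance $u_{i,\vb{x}}, v_{-i,k,x_k}$ with $x_k = 0$, the lemma gives
\[
    \Delta_{i,\vb{x},k}(S_i, S_{-i}) \quad \geq \quad \abs{C_{k,0} \setminus S_{-i}}/3 \quad = \quad \big(\abs{C_{k,0}} - \abs{C_{k,0} \cap S_{-i}}\big)/3 \enspace .
\]
For the instance $u_{i,\vb{y}}, v_{-i,k,y_k}$ with $y_k = 1$, it gives
\[
    \Delta_{i,\vb{y},k}(T_i, T_{-i}) \quad \geq \quad \abs{C_{k,0} \cap T_{-i}}/3 \enspace .
\]

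Next we sum the two bounds and use the hypothesis $\abs{C_{k,0} \cap S_{-i}} \leq \abs{C_{k,0} \cap T_{-i}}$. The term $-\abs{C_{k,0} \cap S_{-i}}/3$ is then dominated by $+\abs{C_{k,0} \cap T_{-i}}/3$, so
\[
    \Delta_{i,\vb{x},k}(S_i, S_{-i}) + \Delta_{i,\vb{y},k}(T_i, T_{-i}) \quad \geq \quad \abs{C_{k,0}}/3 \enspace .
\]

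Finally we substitute the size of $C_{k,0}$. With $\varepsilon = 1/40$, \Cref{lemma:IndexSetsExist} gives $\abs{C_{k,0}} = (1/8 - \varepsilon)m = m/10$, so the sum is at least $m/30$, as claimed.

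No step here is a real obstacle. The only point needing care is that the two instances pair bidder $-i$'s valuation $v_{-i,k,\cdot}$ with the respective bit ($x_k$ or $y_k$), which is exactly the case split in \Cref{lemma:TwoBidderWelfareGuarantee}.
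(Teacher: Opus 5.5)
Your proposal is correct and is essentially the paper's proof: apply both cases of \Cref{lemma:TwoBidderWelfareGuarantee}, add the bounds, use $\abs{C_{k,0} \cap S_{-i}} \leq \abs{C_{k,0} \cap T_{-i}}$ to reduce the sum to $\abs{C_{k,0}}/3$, and substitute $\abs{C_{k,0}} = m/10$ (from $\varepsilon = 1/40$). You also correctly note that the $\vb{y}$-instance pairs $u_{i,\vb{y}}$ with $v_{-i,k,y_k} = v_{-i,k,1}$, as in the definition of $\Delta_{i,\vb{y},k}$.
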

\begin{proof}
By~\Cref{lemma:TwoBidderWelfareGuarantee},
\begin{align*}
    \Delta_{i,\vb{x},k}(S_i, S_{-i}) + \Delta_{i,\vb{y},k}(T_i, T_{-i}) \quad &\geq \quad \abs{C_{k,0} \setminus S_{-i}}/3 + \abs{C_{k,0} \cap T_{-i}}/3 \\
    &= \quad \abs{C_{k,0}}/3 - \abs{C_{k,0} \cap S_{-i}}/3 + \abs{C_{k,0} \cap T_{-i}}/3 \\
    &\geq \quad m/30 \enspace . \qedhere
\end{align*}
\end{proof}

We now prove that when a mechanism obtains good welfare on the distribution $\calD_{i,w}$, the price paid by bidder $i$ for the item $\dagger_i$ is heavily influenced by the value that bidder $-i$ has for $\dagger_i$. This is the bulk of the argument in Step (1) of the proof sketch.

\begin{definition}
    Let $i \in \{1, 2\}$ and $k \in [\lambda]$, and let $\MECH'$ be a deterministic DSIC mechanism. We say that $\MECH'$ is \emph{safe} for $(i, k)$ if
    \begin{itemize}[topsep=1pt,itemsep=1pt]
        \item $\MECH'$ allocates $\dagger_i$ to bidder $i$ and $\dagger_{-i}$ to bidder $-i$ on the instance $w_{i,0}, v_{-i,k,0}$.
        \item $\MECH'$ allocates $\dagger_i, \dagger_{-i}$ to bidder $-i$ on the instance $w_{i,1}, v_{-i,k,1}$.
    \end{itemize}
\end{definition}

\begin{lemma} \label{lemma:SafePaymentLow}
    Let $\MECH'$ be a deterministic, IR, NNT, DSIC mechanism which is safe for $(i, k) \in \{1, 2\} \times [\lambda]$. Then on all instances $f_i, v_{-i,k,0}$ where $f_i \in \calF_i$, bidder $i$ pays at most $2m$ and is allocated $\dagger_i$ but not $\dagger_{-i}$.
\end{lemma}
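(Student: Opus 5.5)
The plan is to fix bidder $-i$ at the truthful strategy $\sigma_{-i}(v_{-i,k,0})$ and use DSIC for bidder $i$ alone. Against this fixed opponent, every strategy of bidder $i$ produces some (bundle, price) outcome. DSIC then says the truthful outcome of each $f_i \in \calF_i$ gives $f_i$ at least as much utility as any outcome that bidder $i$ could obtain by playing $\sigma_i(w)$ for another valuation $w$. Only the first bullet of safety is needed, and I will use the reference type $w_{i,0}$ as the deviation.

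\textbf{Step 1: a cheap $\dagger_i$ outcome exists.} By safety, on the instance $w_{i,0}, v_{-i,k,0}$ bidder $i$ receives a bundle $T \ni \dagger_i$ with $\dagger_{-i} \notin T$, at some price $p_w$. Since $w_{i,0}(T) = m$, IR gives $p_w \le m$.

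Now let $f_i \in \calF_i$ deviate to $\sigma_i(w_{i,0})$. Every $f_i \in \calF_i$ values $\dagger_i$ at $10m$, so this deviation yields utility at least $f_i(T) - p_w \ge 10m - m = 9m$. By DSIC, the truthful outcome $(S, p)$ of $f_i$ satisfies
\[
    f_i(S) - p \;\geq\; 9m .
\]

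\textbf{Step 2: bounding the non-special value.} Next I check that $f_i([m]) \le m$ for every $f_i \in \calF_i$.
\begin{itemize}[topsep=1pt,itemsep=1pt]
    \item For $u_{i,\vb{x}}$, the rank is at most $m/2$.
    \item For $v_{i,k',t}$, the value is at most $2\abs{B_{k'}} + 2\abs{C_{k',0}}/3 = 0.8m + m/15 < m$.
\end{itemize}
Also, $f_i$ values $\dagger_{-i}$ at most $5m$.

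\textbf{Step 3: case analysis on $S$.}
\begin{itemize}[topsep=1pt,itemsep=1pt]
    \item If $\dagger_i \notin S$, then $f_i(S) \le m + 5m = 6m < 9m$. This contradicts Step 1, since $p \ge 0$ by NNT.
    \item If $\dagger_{-i} \in S$, I compare with the type $w_{i,0}$, which values $\dagger_{-i}$ at $9m$. Its truthful utility is $m - p_w \le m$, using NNT. If $w_{i,0}$ deviated to $\sigma_i(f_i)$ it would get at least $9m - p$. DSIC for $w_{i,0}$ therefore forces $p \ge 8m$. Then $f_i(S) - p \le m + 10m + 5m - 8m = 8m < 9m$, again contradicting Step 1.
\end{itemize}
Hence $S$ contains $\dagger_i$ but not $\dagger_{-i}$.

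\textbf{Step 4: the payment bound.} With $\dagger_{-i} \notin S$, we have $f_i(S) \le 10m + m$. Step 1 then gives $p \le f_i(S) - 9m \le 2m$.

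The main obstacle is excluding $\dagger_{-i}$ from $S$, because one cannot simply compare $f_i$ with the $w_{i,0}$ outcome there. The resolution is to use $w_{i,0}$'s high value for $\dagger_{-i}$ to force any menu entry containing $\dagger_{-i}$ to be expensive. The rest is constant bookkeeping, which I would double-check against the exact coefficients ($10m$, $5m$, $9m$, and the $4/10$, $1/10$ set sizes).
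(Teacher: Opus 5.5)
Your proof is correct and follows essentially the same route as the paper. You deviate to $\sigma_i(w_{i,0})$ to secure $9m$ utility, rule out $\dagger_i \notin S$ via the $6m$ cap, use $w_{i,0}$'s $9m$ value for $\dagger_{-i}$ to force any outcome containing $\dagger_{-i}$ to be expensive, and then read off $p \le 2m$. The only cosmetic difference is that you get $p \ge 8m$ from $\dagger_{-i} \in S$ alone, where the paper also uses $\dagger_i \in S$ to get $p \ge 9m$; either bound suffices.
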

\begin{proof}
Let $f_i \in \calF_i$, and observe that $f_i(\{\dagger_i\}) = 10m$, $f_i(\{\dagger_{-i}\}) \leq 5m$, and $f_i([m]) \leq m$. Let $S_f, p_f$ be the allocation and payment of bidder $i$ on instance $f_i, v_{-i,k,0}$, and let $S_w, p_w$ be the allocation and payment of bidder $i$ on instance $w_{i,0}, v_{-i,k,0}$.

By safety of $(i, k)$ and IR, we have $p_w \leq w_{i,0}(S_w) \leq m$. Therefore, if $\dagger_i \not\in S_f$, we have $f_i(S_f) \leq f_i([m] \cup \{\dagger_{-i}\}) \leq 6m$. By NNT, the utility of bidder $i$ with $f_i$ playing truthfully is at most $6m$, while the utility of playing the strategy of $w_{i,0}$ is at least $f_i(\{\dagger_i\}) - p_w \geq 9m$. This contradicts $\MECH'$ being DSIC, and hence $\dagger_i \in S_f$.

Now, suppose for contradiction that $\dagger_{-i} \in S_f$, along with $\dagger_i$. Then by DSIC,
\[
    w_{i,0}(S_w) - p_w \quad \geq \quad w_{i,0}(S_f) - p_f \qquad\quad \implies \quad\qquad p_f \quad \geq \quad w_{i,0}(\{\dagger_{-i}\}) \quad = \quad 9m \enspace .
\]
Therefore, the utility of bidder $i$ with $f_i$ under truthful play is at most $f_i(M) - p_f \leq 7m$, while the utility of playing the strategy of $w_{i,0}$ is at least $9m$, which again contradicts $\MECH'$ being DSIC. Hence, $\dagger_i \in S_f$ and $\dagger_{-i} \not\in S_f$.

Finally, DSIC and safety of $(i, k)$ give
\[
    f_i(S_f) - p_f \quad \geq \quad f_i(S_w) - p_w \qquad\quad \implies \quad\qquad p_f \quad \leq \quad f_i(S_f) - f_i(S_w) + p_w \quad \leq \quad 2m \enspace . \qedhere
\]
\end{proof}

\begin{lemma} \label{lemma:SafePaymentHigh}
    Let $\MECH'$ be a deterministic, IR, NNT, DSIC mechanism which is safe for $(i, k) \in \{1, 2\} \times [\lambda]$. Then on all instances $f_i, v_{-i,k,1}$ where $f_i \in \calF_i$, bidder $i$ pays at least $4m$ when $\MECH'$ allocates $\dagger_i$ to bidder $i$.
\end{lemma}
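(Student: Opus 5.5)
The plan is a single DSIC comparison that uses the second clause of safety. On the instance $w_{i,1}, v_{-i,k,1}$, safety says that $\MECH'$ allocates both $\dagger_i$ and $\dagger_{-i}$ to bidder $-i$. Bidder $i$ with valuation $w_{i,1}$ values only $\dagger_i$: $w_{i,1}(S) = 4m \cdot \One(\dagger_i \in S)$, since the coefficient on $\dagger_{-i}$ is $9m(1-1)=0$. So under truthful play bidder $i$ gets value $0$. By NNT its payment is non-negative, hence its truthful utility on this instance is at most $0$ (IR forces it to be exactly $0$, but we only need the upper bound).

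Next, fix any $f_i \in \calF_i$ and consider the instance $f_i, v_{-i,k,1}$ with both bidders truthful. Let $S_f, p_f$ be bidder $i$'s bundle and payment, and suppose $\dagger_i \in S_f$. The key observation is that bidder $-i$ plays the same prescribed strategy $\sigma_{-i}(v_{-i,k,1})$ in both instances. Hence a bidder $i$ whose true valuation is $w_{i,1}$ can deviate by playing $\sigma_i(f_i)$ and thereby reach exactly the outcome $(S_f, p_f)$. Its utility from this deviation is $w_{i,1}(S_f) - p_f = 4m - p_f$, because $\dagger_i \in S_f$.

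DSIC (already ex-post Nash against the fixed truthful opponent suffices) requires that truthful play be weakly better than this deviation: $0 \geq 4m - p_f$. This gives $p_f \geq 4m$, as claimed.

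I do not expect a real obstacle here. The only care needed is to confirm that $w_{i,1}$ assigns value exactly $4m$ to any bundle containing $\dagger_i$ and $0$ otherwise, and that the deviating strategy faces the identical strategy of bidder $-i$, so the outcome transfers. Unlike \Cref{lemma:SafePaymentLow}, neither the structure of $\calF_i$ nor a contingent strategy is needed.
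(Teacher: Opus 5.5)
Your proposal is correct and is the paper's argument: the paper also compares the truthful outcome of $w_{i,1}$ against $v_{-i,k,1}$ (value $0$ by safety, so utility at most $0$ by NNT) with the deviation to $\sigma_i(f_i)$, which yields $4m - p_f$. The only difference is that the paper phrases this as a proof by contradiction (assuming a payment below $4m$ gives the deviating $w_{i,1}$ positive utility), while you derive $p_f \geq 4m$ directly.
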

\begin{proof}
Suppose for contradiction that there exists $f_i \in \calF_i$ such that bidder $i$ receives $\dagger_i$ and pays less than $4m$ on the instance $f_i, v_{-i,k,1}$. Then since $w_{i,1}(\{\dagger_i\}) = 4m$, if bidder $i$ with $w_{i,1}$ plays the strategy of $f_i$, they obtain positive utility, whereas truthful play yields no utility by safety of $(i, k)$ and NNT. This contradicts $\MECH'$ being DSIC.
\end{proof}

\subsection{How DSIC Mechanisms Behave on the Hard Distribution}

Let $\MECH''$ be an IR, NNT, universally DSIC mechanism which attains a $0.9999$-approximation for two weighted matroid rank valuations. Since $\calD$ is supported on instances of two weighted matroid rank valuations, $\MECH''$ attains a $0.9999$-approximation in expectation over $\calD$.

Let $\MECH'$ be the deterministic, IR, NNT, DSIC mechanism in the support of $\MECH''$ which attains the best approximation in expectation over $\calD$. In particular, $\MECH'$ attains a $0.9999$-approximation in expectation on $\calD$, and hence at least a $0.9996$-approximation in expectation over each of $\calD_{1,u}, \calD_{2,u}, \calD_{1,w}, \calD_{2,w}$.

Applying~\Cref{fact:MinimalMechanism}, let $\MECH$ be the deterministic, IR, NNT, DSIC mechanism which behaves identically to $\MECH'$ on the restricted domain of $\calF_1 \times \calF_2$, and where every non-leaf node in the communication protocol has some bidder $i$ being prescribed at least two different messages. $\MECH$ attains a $0.9996$-approximation in expectation over both $\calD_{1,u}, \calD_{2,u}$, because these distributions are supported on $\calF_1 \times \calF_2$.

\begin{lemma} \label{lemma:RootIsNotLeaf}
    The root node of the communication protocol of $\MECH$ is not a leaf.
\end{lemma}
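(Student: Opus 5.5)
If the root were a leaf, $\MECH$ would output one fixed allocation on every profile in $\calF_1 \times \calF_2$. I will show that no single allocation can achieve a $0.9996$-approximation in expectation over both $\calD_{1,u}$ and $\calD_{2,u}$, which $\MECH$ is required to do. This contradicts the guarantee established just before the lemma. The argument is purely about welfare and needs no incentive reasoning.

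Fix the allocation $(S_1, S_2)$ that the leaf outputs. First consider the special items. On $\calD_{1,u}$, bidder $1$ has $u_{1,\vb{x}}$ with value $10m$ for $\dagger_1$. Bidder $2$ has $v_{2,k,x_k}$ with value $10m$ for $\dagger_2$, and with value $5m x_k$ for $\dagger_1$, which is at most $5m$. The optimal welfare is therefore at least $20m$ and at most about $22m$. Any allocation that does not give $\dagger_1$ to bidder $1$ and $\dagger_2$ to bidder $2$ loses at least $5m$ on every instance. That is a constant-fraction loss, far above $0.0004$ of the optimum. So the fixed allocation must have $\dagger_1 \in S_1$ and $\dagger_2 \in S_2$. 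This choice is consistent with $\calD_{2,u}$ as well, so the special items alone do not produce a contradiction.

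The contradiction comes from the regular items, via \Cref{cor:TwoBidderWelfareGuarantee}. Fix $i = 1$ and a coordinate $k$. Since the allocation is fixed, $\abs{C_{k,0} \cap S_2}$ is the same on every instance. Pair each $\vb{x}$ having $x_k = 0$ with the vector $\vb{y}$ obtained by flipping coordinate $k$ to $1$. Both instances in the pair receive the same allocation, so \Cref{cor:TwoBidderWelfareGuarantee} applies, and the two welfare losses sum to at least $m/30$. Averaging over $\vb{x}$ and $k$, the expected loss on $\calD_{1,u}$ is at least $m/60$. The optimal welfare there is at most $22m$, so this is a $1/1320 > 0.0004$ fraction. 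This contradicts the $0.9996$-approximation on $\calD_{1,u}$.

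The only step needing care is the bookkeeping of the optimal welfare, which must be bounded above by a constant times $m$ for the loss of $m/60$ to register as a constant fraction. The bounds $u_{1,\vb{x}}([m]) \le m/2$ and $v([m]) \le 2m$ give this directly.
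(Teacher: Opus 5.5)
Your proof is correct and matches the paper's: with a fixed allocation, pairing each $\vb{x}$ with the vector that differs only in coordinate $k$ lets \Cref{cor:TwoBidderWelfareGuarantee} force an average welfare loss of $m/60$ on $\calD_{1,u}$, against an optimum of at most $22m$, which contradicts the $0.9996$-approximation. Your preliminary paragraph on where the special items must go is correct but unnecessary; the coupling argument alone gives the contradiction.
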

\begin{proof}
Assume for contradiction that the root node is a leaf. Then $\MECH$ performs a constant allocation $S_i, S_{-i}$ under truthful play.

Fix $i \in \{1, 2\}$, and consider the outcomes $(\vb{x}, k) \in \{0, 1\}^\lambda \times [\lambda]$ in $\calD_{i,u}$. Consider the coupling which pairs $(\vb{x}, k)$ and $(\vb{y}, k)$ when $x_\ell = y_\ell$ for all $\ell \in [\lambda] \setminus \{k\}$. Then by~\Cref{cor:TwoBidderWelfareGuarantee}, the total welfare loss across each coupled pair of instances is at least $m/30$, and hence the average welfare loss on $\calD_{i,u}$ is $m/60$. Since the total welfare in any instance is at most $u_{i,\vb{x}}([m]) + v_{-i,k,1}([m]) + u_{i,\vb{x}}(\{\dagger_i\}) + v_{-i,k,1}(\{\dagger_{-i}\}) \leq 22m$, losing $m/60$ welfare would mean $\MECH$ attains at most a $(1 - (m/60)/(22m)) < 0.9996$-approximation in expectation over $\calD_{i,u}$, a contradiction.
\end{proof}

Therefore, fix the choice of $i \in \{1, 2\}$ so that bidder $i$ is prescribed two different messages at the root. We will restrict our attention to the performance of $\MECH$ over $\calD_{i,u}$. Let $\sigma_i, \sigma_{-i}$ be the strategy prescriptions of $\MECH$, and let $\xi_i, \xi_{-i}$ be the first round message prescriptions of $\MECH$.

Since $\MECH$ and $\MECH'$ realize the same allocations and payments over the restricted domain $\calF_1 \times \calF_2$, the conclusions of~\Cref{lemma:SafePaymentLow,lemma:SafePaymentHigh} hold for $\MECH$ when $\MECH'$ is safe for $(i, k)$.

We now prove Step (2) of the proof sketch.

\begin{proposition} \label{prop:BobSignalsHisSpecialValue}
    For all $k \in [\lambda]$ such that $\MECH'$ is safe for $(i, k)$, $\xi_{-i}(v_{-i,k,0}) \ne \xi_{-i}(v_{-i,k,1})$.
\end{proposition}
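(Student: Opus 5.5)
We argue by contradiction, mirroring \Cref{lemma:MustSignalSpecialOutcome}. Fix $k$ such that $\MECH'$ is safe for $(i,k)$, and suppose $\xi_{-i}(v_{-i,k,0}) = \xi_{-i}(v_{-i,k,1})$. We use the choice of $i$ made after \Cref{lemma:RootIsNotLeaf}: bidder $i$ is prescribed at least two distinct first-round messages at the root. So there exist $f, f' \in \calF_i$ with $\xi_i(f) \ne \xi_i(f')$.

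\textbf{Step 1: pick the victim valuation.} Let $g \in \calF_i$ be a valuation whose first message differs from that of $f$. Such a $g$ exists, since $f'$ works. We will make $g$ the true type of bidder $i$ and $f$ the deviation.

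\textbf{Step 2: the contingent strategy for bidder $-i$.} Bidder $-i$ plays $\sigma_{-i}(v_{-i,k,0})$ if bidder $i$'s first message equals $\xi_i(f)$, and plays $\sigma_{-i}(v_{-i,k,1})$ otherwise. This is a legal strategy precisely because of the assumption that both types of bidder $-i$ send the same first message.

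\textbf{Step 3: compare bidder $i$'s two options.}
\begin{itemize}
\item If bidder $i$ with type $g$ plays truthfully, then $\xi_i(g) \ne \xi_i(f)$, so the outcome is that of the instance $g, v_{-i,k,1}$. By \Cref{lemma:SafePaymentHigh}, if $g$ receives $\dagger_i$ it pays at least $4m$. Since $g(M) \le 15m$, its utility is then at most roughly $g(M)-4m$. If $g$ does not receive $\dagger_i$, its value is at most $g([m]\cup\{\dagger_{-i}\}) \le 6m$, and NNT caps utility there.
\item If bidder $i$ instead plays $\sigma_i(f)$, the outcome is that of the instance $f, v_{-i,k,0}$. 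By \Cref{lemma:SafePaymentLow}, bidder $i$ receives $\dagger_i$ and pays at most $2m$, so $g$ obtains utility at least $g(\{\dagger_i\}) - 2m = 8m$.
\end{itemize}

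The main obstacle is making the truthful bound strictly below $8m$ in every case. When $g$ wins $\dagger_i$ it might also win $\dagger_{-i}$ and much of $[m]$. For $g=u_{i,\vb{x}}$ the total is at most $11m-4m=7m<8m$, which suffices. For $g=v_{i,k',1}$, however, $g(M)$ can reach about $15m+m$, so the naive bound fails.

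To handle this case I would first exploit flexibility in the choice of $g$ and $f$. The natural choice is $g=u_{i,\vb{x}}$ for some $\vb{x}$. This requires the $u$-types not all to share $f$'s message, which can be arranged by relabeling $f$ and $f'$ unless every $u$-type sends a common message. Failing that, I would sharpen the bound on $\dagger_{-i}$ using DSIC against $w_{i,1}$: on the instance $w_{i,1}, v_{-i,k,1}$ bidder $i$ gets neither special item and pays $0$. If $g$'s outcome includes $\{\dagger_i,\dagger_{-i}\}$, then $w_{i,1}$ would gain $4m$ minus the price by mimicking $g$, which forces that price to be at least $4m$. This is the same bound as before, and it does not clearly beat $16m$ minus the price. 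So the case analysis on which $g$ to take is where I expect the real work to be.

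In every case the deviation to $\sigma_i(f)$ strictly beats truthful play for type $g$ against the contingent strategy. This contradicts DSIC of $\MECH$, so $\xi_{-i}(v_{-i,k,0}) \ne \xi_{-i}(v_{-i,k,1})$.
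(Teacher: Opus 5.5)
Your setup matches the paper's proof: the contradiction hypothesis, the contingent strategy for bidder $-i$, and the use of \Cref{lemma:SafePaymentLow,lemma:SafePaymentHigh}. But the proposal does not close. You leave open the case where the victim type $g$ is some $v_{i,k',1}$, so your closing claim that the deviation wins ``in every case'' is unsupported. For $g = v_{i,k',1}$ the truthful-utility bound $g(M) - 4m \approx 11.9m$ exceeds $8m$. Your proposed sharpening via $w_{i,1}$ only reproduces the same $4m$ price lower bound, so that route yields no contradiction.

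The missing idea is to reverse the order in which you choose $g$ and $f$. First fix the victim as an Alice-type $g = u_{i,\vb{x}}$ for an arbitrary $\vb{x}$. Only then pick the deviation. Since $\xi_i(f) \ne \xi_i(f')$, the message $\xi_i(u_{i,\vb{x}})$ differs from at least one of them, and that one is your deviation. This is legitimate because \Cref{lemma:SafePaymentLow} holds for every member of $\calF_i$, so the deviation need not be an Alice-type.

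In particular, the exceptional case you worry about, where all $u$-types send a common message $\mu$, is vacuous: at most one of $f, f'$ can send $\mu$. With $g = u_{i,\vb{x}}$, which places no value on $\dagger_{-i}$, truthful utility is at most $\max\{u_{i,\vb{x}}(M \setminus \{\dagger_i\}), u_{i,\vb{x}}(M) - 4m\} \le 7m < 8m$. The deviation still guarantees $8m$, so no case analysis and no $w_{i,1}$ argument is needed. This is exactly the paper's step ``WLOG $\xi_i(u_{i,\vb{x}}) \ne \xi_i(f_i)$.''
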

\begin{proof}
Fix a choice of safe $k$, and suppose for contradiction that $\xi_{-i}(v_{-i,k,0}) = \xi_{-i}(v_{-i,k,1})$.

Let $\vb{x} \in \{0, 1\}^\lambda$, and let $f_i, g_i \in \calF_i$ be valuations such that $\xi_i(f_i) \ne \xi_i(g_i)$. WLOG, let $\xi_i(u_{i,\vb{x}}) \ne \xi_i(f_i)$. Then consider the contingent strategy $\psi_{-i}$ for bidder $-i$ which plays $\sigma_{-i}(v_{-i,k,1})$ when bidder $i$ sends $\xi_i(u_{i,\vb{x}})$ in the first round and plays $\sigma_{-i}(v_{-i,k,0})$ otherwise.

By~\Cref{lemma:SafePaymentHigh}, on the profile $\sigma_i(u_{i,\vb{x}}), \psi_{-i}$, either bidder $i$ does not receive $\dagger_i$ or their payment is at least $4m$. By~\Cref{lemma:SafePaymentLow}, on the profile $\sigma_i(f_i), \psi_{-i}$, bidder $i$ receives $\dagger_i$ and their payment is at most $2m$. Therefore, the utility of bidder $i$ holding valuation $u_{i,\vb{x}}$ playing their prescribed strategy yields at most $\max\{u_{i,\vb{x}}(M \setminus \{\dagger_i\}), u_{i,\vb{x}}(M) - 4m\} \leq 7m$ utility, while deviating to $\sigma_i(f_i)$ yields at least $u_{i,\vb{x}}(\{\dagger_i\}) - 2m = 8m$ utility. This contradicts $\MECH$ being DSIC.
\end{proof}

For all $\vb{x} \in \{0, 1\}^\lambda$ and $k \in [\lambda]$, let $S_i(\vb{x}, k), S_{-i}(\vb{x}, k)$ denote the allocation of $\MECH$ on the instance $u_{i,\vb{x}}, v_{-i,k,x_k}$. Let $\Delta_{i,\vb{x},k} = \Delta_{i,\vb{x},k}(S_i(\vb{x}, k), S_{-i}(\vb{x}, k))$. We now prove Step (3) of the proof sketch.

\begin{proposition} \label{prop:TwoBidderWelfareGuarantee}
    Suppose $\MECH'$ is safe for $(i, k)$. Then if there exist $\vb{x}, \vb{y} \in \{0, 1\}^\lambda$ such that $x_k = 0$, $y_k = 1$, and $\xi_i(u_{i,\vb{x}}) = \xi_i(u_{i,\vb{y}})$,
    \[
        \Delta_{i,\vb{x},k} + \Delta_{i,\vb{y},k} \quad \geq \quad m/30 \enspace .
    \]
\end{proposition}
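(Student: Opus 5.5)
The plan is to derive the needed comparison of how much of $C_{k,0}$ bidder $-i$ receives from a monotonicity argument for bidder $-i$, using a contingent strategy of bidder $i$, and then apply \Cref{cor:TwoBidderWelfareGuarantee}. Write $S = S_{-i}(\vb{x},k)$, $p$ for bidder $-i$'s allocation and payment on the instance $u_{i,\vb{x}}, v_{-i,k,0}$, and $T = S_{-i}(\vb{y},k)$, $q$ for those on $u_{i,\vb{y}}, v_{-i,k,1}$. (Recall that bidder $-i$ holds $v_{-i,k,x_k}$, so these are exactly the two instances of $\calD_{i,u}$.)

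\emph{Step 1 (the contingent strategy).} By \Cref{prop:BobSignalsHisSpecialValue}, safety gives $\xi_{-i}(v_{-i,k,0}) \ne \xi_{-i}(v_{-i,k,1})$. Since $\xi_i(u_{i,\vb{x}}) = \xi_i(u_{i,\vb{y}})$, the following strategy $\psi_i$ for bidder $i$ is well defined: play $\sigma_i(u_{i,\vb{y}})$ if bidder $-i$ sends $\xi_{-i}(v_{-i,k,1})$ in the first round, and play $\sigma_i(u_{i,\vb{x}})$ otherwise. Against $\psi_i$, the two options for bidder $-i$ give the following outcomes.
\begin{itemize}[topsep=1pt,itemsep=1pt]
    \item Playing $\sigma_{-i}(v_{-i,k,0})$ realizes the outcome $(S,p)$.
    \item Playing $\sigma_{-i}(v_{-i,k,1})$ realizes the outcome $(T,q)$.
\end{itemize}
DSIC for both types $v_{-i,k,0}$ and $v_{-i,k,1}$ therefore gives $v_{-i,k,0}(S)-p \ge v_{-i,k,0}(T)-q$ and $v_{-i,k,1}(T)-q \ge v_{-i,k,1}(S)-p$. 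Summing these yields the standard monotonicity inequality $(v_{-i,k,1}-v_{-i,k,0})(T) \ge (v_{-i,k,1}-v_{-i,k,0})(S)$. Here $(v_{-i,k,1}-v_{-i,k,0})(R) = \abs{R\cap C_{k,0}}/3 + 5m\cdot\One(\dagger_i \in R)$.

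\emph{Step 2 (removing the $\dagger_i$ terms).} On the instance $u_{i,\vb{x}}, v_{-i,k,0}$, \Cref{lemma:SafePaymentLow} (with $f_i = u_{i,\vb{x}} \in \calF_i$) shows that bidder $i$ receives $\dagger_i$. Hence $\dagger_i \notin S$, and the right-hand side of the inequality is exactly $\abs{S\cap C_{k,0}}/3$. For $T$, I split into two cases.
\begin{itemize}[topsep=1pt,itemsep=1pt]
    \item If $\dagger_i \in T$, then on the $\vb{y}$-instance bidder $-i$ holds $\dagger_i$, which bidder $i$ values at $10m$ while bidder $-i$ values it at only $5m$. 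Every optimal allocation gives $\dagger_i$ to bidder $i$. Moving $\dagger_i$ back to $i$ while leaving everything else unchanged raises welfare by $5m$, because valuations are additive across the special items. Hence $\Delta_{i,\vb{y},k} \ge 5m \ge m/30$, and we are done.
    \item Otherwise $\dagger_i \notin T$, and the inequality becomes $\abs{S\cap C_{k,0}} \le \abs{T\cap C_{k,0}}$. \Cref{cor:TwoBidderWelfareGuarantee} then applies directly to give $\Delta_{i,\vb{x},k}+\Delta_{i,\vb{y},k} \ge m/30$.
\end{itemize}

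The main point requiring care is Step 1. We must check that $\psi_i$ is a legitimate strategy: it branches only on information revealed after bidder $i$'s own first message, which is common to both types. We must also check that bidder $-i$'s prescribed strategy is dominant against \emph{any} $\psi_i$, so the DSIC inequalities hold on these mixed profiles. Both follow from the definition of DSIC. The $\dagger_i$ bookkeeping in Step 2 is the only other subtlety, and the case split handles it.
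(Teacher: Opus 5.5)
Your proposal is correct and follows the paper's proof essentially step for step. Both use the same contingent strategy for bidder $i$, made possible by \Cref{prop:BobSignalsHisSpecialValue}, then sum the two DSIC inequalities for bidder $-i$, split on whether $\dagger_i$ goes to bidder $-i$ (a loss of $5m$), and finish with \Cref{cor:TwoBidderWelfareGuarantee}. The one cosmetic difference is that you rule out $\dagger_i \in S$ via \Cref{lemma:SafePaymentLow}. The paper leaves this implicit, since it already follows from the summed inequality: $\abs{C_{k,0} \cap T}/3 \le m/30 < 5m$.
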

\begin{proof}
By~\Cref{prop:BobSignalsHisSpecialValue}, $\xi_{-i}(v_{-i,k,0}) \ne \xi_{-i}(v_{-i,k,1})$. Therefore, consider the contingent strategy $\psi_i$ for bidder $i$ which plays $\sigma_i(u_{i,\vb{x}})$ if bidder $-i$ sends $\xi_{-i}(v_{-i,k,0})$ and $\sigma_i(u_{i,\vb{y}})$ otherwise. Let $S_0, p_0$ be the outcome for bidder $-i$ on the profile $\psi_i, \sigma_{-i}(v_{-i,k,0})$ and $S_1, p_1$ be the outcome for bidder $-i$ on the profile $\psi_i, \sigma_{-i}(v_{-i,k,1})$. By applying DSIC on each valuation, we have
\begin{align*}
    v_{-i,k,0}(S_0) - p_0 \quad &\geq \quad v_{-i,k,0}(S_1) - p_1 \\
    v_{-i,k,1}(S_1) - p_1 \quad &\geq \quad v_{-i,k,1}(S_0) - p_0 \enspace .
\end{align*}
Adding and canceling the payment terms,
\begin{align*}
    v_{-i,k,1}(S_1) - v_{-i,k,0}(S_1) \quad &\geq \quad v_{-i,k,1}(S_0) - v_{-i,k,0}(S_0) \\
    \abs{C_{k,0} \cap S_1}/3 + 5m \cdot \One(\dagger_i \in S_1) \quad &\geq \quad \abs{C_{k,0} \cap S_0}/3 + 5m \cdot \One(\dagger_i \in S_0) .
\end{align*}
There are two ways for the inequality to be satisfied: either $\dagger_i \in S_1$, or $\abs{C_{k,0} \cap S_0} \leq \abs{C_{k,0} \cap S_1}$. In the first case, the loss in welfare is already at least $5m$. In the second case, the loss in welfare is at least $m/30$ by~\Cref{cor:TwoBidderWelfareGuarantee}.
\end{proof}

\subsection{An Algorithm for INDEX}

Let $\vb{z} \in \{0, 1\}^\lambda$ be uniformly random. For all $\mu \in \{\xi_i(u_{i,\vb{x}}) : \vb{x} \in \{0, 1\}^\lambda\}$ and $k \in [\lambda]$, define
\[
    h(\mu, k) = \argmax_{b \in \{0, 1\}} \Pr[z_k = b \given \xi_i(u_{i,\vb{z}}) = \mu] \enspace ,
\]
which is the most likely value of $z_k$ if we observe message $\mu$ from bidder $i$ with valuation $u_{i,\vb{z}}$. We now show that $h(\mu, k)$ is a good estimator when $\MECH$ attains good welfare.

\begin{proposition} \label{prop:AliceLowersWelfareByHiding}
    If $\MECH'$ is safe for $(i, k)$, then
    \[
        \Pr[h(\xi_i(u_{i,\vb{z}}), k) \ne z_k] \quad \leq \quad 30/m \cdot \E[\Delta_{i,\vb{z},k}] \enspace .
    \]
\end{proposition}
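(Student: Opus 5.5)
Fix $k$ such that $\MECH'$ is safe for $(i,k)$. The plan is to condition on bidder $i$'s first message and pay for every estimation error with welfare loss, using \Cref{prop:TwoBidderWelfareGuarantee}.

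For each message $\mu$ in the support of $\xi_i(u_{i,\vb{z}})$, let $X_\mu^b = \{\vb{x} \in \{0,1\}^\lambda : \xi_i(u_{i,\vb{x}}) = \mu,\, x_k = b\}$ for $b \in \{0,1\}$. Since $\vb{z}$ is uniform, $h(\mu,k)$ is the $b$ with the larger $\abs{X_\mu^b}$. Hence the error probability is
\[
    \Pr[h(\xi_i(u_{i,\vb{z}}), k) \ne z_k] \quad = \quad 2^{-\lambda} \sum_\mu \min\{\abs{X_\mu^0}, \abs{X_\mu^1}\} \enspace ,
\]
and it suffices to prove, for each $\mu$,
\[
    \sum_{\vb{x} \in X_\mu^0 \cup X_\mu^1} \Delta_{i,\vb{x},k} \quad \geq \quad \frac{m}{30} \min\{\abs{X_\mu^0}, \abs{X_\mu^1}\} \enspace .
\]
Summing over $\mu$ and dividing by $2^\lambda$ then gives $\E[\Delta_{i,\vb{z},k}] \geq (m/30)\Pr[\text{error}]$, which is the claim.

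To prove the per-message inequality, build a matching between $X_\mu^0$ and $X_\mu^1$ of size $\min\{\abs{X_\mu^0}, \abs{X_\mu^1}\}$, using any injection from the smaller set into the larger one. Each matched pair $(\vb{x}, \vb{y})$ has $x_k = 0$, $y_k = 1$ and $\xi_i(u_{i,\vb{x}}) = \xi_i(u_{i,\vb{y}}) = \mu$. \Cref{prop:TwoBidderWelfareGuarantee} then gives $\Delta_{i,\vb{x},k} + \Delta_{i,\vb{y},k} \geq m/30$. The pairs are disjoint and every $\Delta$ is nonnegative, so summing over pairs yields the inequality.

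The only subtle point is confirming that \Cref{prop:TwoBidderWelfareGuarantee} applies to arbitrary pairs sharing a message, rather than to specific ones. It does: its hypothesis is only that $\xi_i(u_{i,\vb{x}}) = \xi_i(u_{i,\vb{y}})$ with $x_k = 0$ and $y_k = 1$, and $\Delta_{i,\vb{x},k}$ is defined on the truthful instance $u_{i,\vb{x}}, v_{-i,k,x_k}$. Everything else is routine bookkeeping. The constant $30/m$ matches exactly, with no loss from the matching.
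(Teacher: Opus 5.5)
Your proof is correct and is essentially the paper's argument: you condition on bidder $i$'s first message $\mu$ and charge the error mass $\min_b \Pr[z_k = b \given \mu]$ against welfare loss via \Cref{prop:TwoBidderWelfareGuarantee}, with nonnegativity of $\Delta_{i,\vb{z},k}$ absorbing unpaired profiles. The only difference is bookkeeping: you pair $X^0_\mu$ with $X^1_\mu$ by an explicit injection, whereas the paper bounds $\sum_b \Pr[z_k = b \given \mu] \cdot \E[\Delta_{i,\vb{z},k} \given \mu, z_k = b]$ below by $\min_b \Pr[z_k = b \given \mu]$ times the sum of the two conditional expectations, which is at least $m/30$ by averaging the proposition over all cross pairs.
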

\begin{proof}
We have
\begin{align*}
    \E[\Delta_{i,\vb{z},k}] \quad &= \quad \E\bigg[\E\bigg[\sum_{b=0}^1 \Big(\Pr[z_k = b \given \xi_i(u_{i,\vb{z}})] \cdot \E[\Delta_{i,\vb{z},k} \given \xi_i(u_{i,\vb{z}}),\, z_k = b]\Big) \bbgiven \xi_i(u_{i,\vb{z}})\bigg]\bigg] \\
    &\geq \quad \E\bigg[\E\bigg[\min_{b \in \{0,1\}} \Pr[z_k = b \given \xi_i(u_{i,\vb{z}})] \cdot \sum_{b=0}^1 \E[\Delta_{i,\vb{z},k} \given \xi_i(u_{i,\vb{z}}),\, z_k = b] \bbgiven \xi_i(u_{i,\vb{z}})\bigg]\bigg] \\
    &\geq \quad \E\bigg[\E\bigg[\frac{m}{30} \min_{b \in \{0,1\}} \Pr[z_k = b \given \xi_i(u_{i,\vb{z}})] \bbgiven \xi_i(u_{i,\vb{z}})\bigg]\bigg] \\
    &\geq \quad \E\bigg[\E\bigg[\frac{m}{30} \Pr[z_k \ne h(\xi_i(u_{i,\vb{z}}), k) \given \xi_i(u_{i,\vb{z}})] \bbgiven \xi_i(u_{i,\vb{z}})\bigg]\bigg] \\
    &= \quad \frac{m}{30} \Pr[z_k \ne h(\xi_i(u_{i,\vb{z}}), k)] \enspace ,
\end{align*}
where the third line comes from~\Cref{prop:TwoBidderWelfareGuarantee}.
\end{proof}

We can finally wrap up. Define the uniform INDEX problem to be INDEX on a uniformly random instance $(\vb{z}, \ell) \in \{0, 1\}^\lambda \times [\lambda]$. Solving uniform INDEX w.p. $1/2 + \Omega(1)$ requires $\Omega(\lambda) = 2^{\Theta(m)}$ communication~\cite{Ablayev96}. Consider the following algorithm for uniform INDEX:
\begin{enumerate}[topsep=1pt,itemsep=1pt]
    \item Ask Alice to send $\xi_i(u_{i,\vb{z}})$ to Bob.
    \item Bob outputs $h(\xi_i(u_{i,\vb{z}}), \ell)$.
\end{enumerate}

\begin{lemma} \label{lemma:Safety}
    Let $L = \{k \in [\lambda] : \text{$\MECH'$ is not safe for $(i, k)$}\}$. Then $\abs{L} < \lambda/10$.
\end{lemma}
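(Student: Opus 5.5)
We argue by contradiction from the approximation guarantee of $\MECH'$ on $\calD_{i,w}$: if $\abs{L} \geq \lambda/10$, then $\MECH'$ loses too much welfare on $\calD_{i,w}$. Recall that $\MECH'$ attains a $0.9996$-approximation in expectation over $\calD_{i,w}$. That distribution draws $k \in [\lambda]$ and $t \in \{0,1\}$ uniformly, and gives bidder $i$ the valuation $w_{i,t}$ and bidder $-i$ the valuation $v_{-i,k,t}$.

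The first step is to pin down the optimal allocations of the two special items on these instances.
\begin{itemize}
\item When $t = 0$, bidder $i$ values $\dagger_i$ at $m$ and $\dagger_{-i}$ at $9m$. Bidder $-i$ values $\dagger_{-i}$ at $10m$ and $\dagger_i$ at $0$. The optimum therefore gives $\dagger_i$ to $i$ and $\dagger_{-i}$ to $-i$, for special-item welfare $11m$.
\item When $t = 1$, bidder $i$ values $\dagger_i$ at $4m$ and $\dagger_{-i}$ at $0$. Bidder $-i$ values $\dagger_{-i}$ at $10m$ and $\dagger_i$ at $5m$. The optimum gives both items to $-i$, for special-item welfare $15m$.
\end{itemize}
The regular items contribute identically under any allocation of the special items, since only bidder $-i$ values them. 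So each instance's optimum splits into a special part plus a regular part, and the total is at most about $17m$.

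The second step is a per-instance loss bound. If $\MECH'$ is not safe for $(i,k)$, then on at least one of $t \in \{0,1\}$ it misallocates some special item. Every misallocation costs at least $m$ of welfare:
\begin{itemize}
\item For $t = 0$, giving $\dagger_i$ to $-i$ or leaving it unallocated loses $m$. Giving $\dagger_{-i}$ to $i$ loses $m$, and leaving it unallocated loses $10m$.
\item For $t = 1$, giving $\dagger_i$ to $i$ loses $m$, and leaving it unallocated loses $5m$. Any misallocation of $\dagger_{-i}$ loses at least $10m$.
\end{itemize}
So for each unsafe $k$, the two instances $(k,0)$ and $(k,1)$ together lose at least $m$.

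The third step averages over $k$. Since $(k,t)$ is uniform, the expected loss on $\calD_{i,w}$ is at least $\Pr[k \in L] \cdot m/2 \geq m/20$. The optimal welfare is at most $m/10 + 17m \leq 18m$, counting the $C_{k,0}$ part of $v_{-i,k,1}$ and $B_k$ at weight $2$. The resulting ratio is therefore at most $1 - 1/360 < 0.9996$, which is a contradiction.

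The main obstacle is the bookkeeping in the second step: we must check that every way of failing safety costs $\Omega(m)$ welfare. This includes unallocated items and the case where bidder $i$ takes $\dagger_{-i}$, and no compensating gain on the regular items is possible, because bidder $i$ does not value regular items under $w_{i,t}$. The constants fit comfortably.
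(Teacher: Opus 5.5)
Your proof is correct and matches the paper's argument: if $(i,k)$ is unsafe, then $\MECH'$ misallocates a special item on one of the two instances $(k,0)$ or $(k,1)$, and each such misallocation costs at least $m$. This gives an expected loss of at least $m/20$ on $\calD_{i,w}$ against an $O(m)$ optimum, which contradicts the $0.9996$-approximation. Compared with the paper, you spell out the misallocation cases more explicitly using additivity of $w_{i,t}$ and $v_{-i,k,t}$, and your bound of $18m$ on the optimum is looser than the paper's $16m$ (the true value is at most $15m + 0.8m + m/15$), but it still suffices.
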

\begin{proof}
Suppose otherwise for contradiction. Then on $\calD_{i,w}$, $\MECH'$ performs an incorrect allocation of a special item w.p.\ at least $1/20$. Mis-allocating either special item loses at least $m$ welfare, and the optimal welfare is at most $v_{-i,k,1}(M) \leq 16m$. Thus, $\MECH'$ attains at best a $(1 - (m/16)/20) < 0.9996$-approximation in expectation over $\calD_{i,w}$, a contradiction.
\end{proof}

Observe that $\E[\Delta_{i,\vb{z},\ell}]$ is the expected welfare loss of $\MECH$ on $\calD_{i,u}$. As the optimal welfare from $\calD_{i,u}$ is bounded by $22m$, we have $\E[\Delta_{i,\vb{z},\ell}] \leq 0.0004 \cdot 22m \leq m/100$. Therefore, by~\Cref{lemma:Safety} and~\Cref{prop:AliceLowersWelfareByHiding},
\[
    \Pr[h(\xi_i(u_{i,\vb{z}}), \ell) \ne z_\ell] \quad \leq \quad \Pr[\text{$\MECH'$ is not safe for $(i, \ell)$}] + 30/m \cdot \E[\Delta_{i,\vb{z},\ell}] \quad \leq \quad 2/5 \enspace .
\]

Thus, we have constructed a simultaneous algorithm which solves uniform INDEX w.p.\ $3/5$, so the simultaneous algorithm requires exponential communication. This implies that $\MECH$ (and therefore $\MECH''$) requires exponential communication, which proves~\Cref{thm:TwoBidderDSICHardness}.

\section{Simultaneous Algorithm for Two Matroid Rank Valuations}

In this section, we give a simultaneous algorithm for two matroid rank valuations which leverages submodularity to beat the tight simultaneous $3/4$-approximation for binary XOS valuations. Note that interactive algorithms beating a $3/4$-approximation for submodular valuations have been known for some time~\cite{FeigeV10}, but these algorithms need access to a fractional solution, which we do not know how to compute simultaneously. The main idea is to construct two algorithms: one which does well when the matroids are ``close to uniform,'' and one which does well when the matroids are far from uniform. Mixing these two algorithms allows us to do well on all instances.

\subsection{Matroid Preliminaries}

Below, we discuss properties of matroids we will need for our simultaneous algorithm.

\begin{definition}[Basis]
    Let $\calM \subseteq 2^M$ be a matroid. A set $S \in \calM$ is called an \emph{independent set}. A maximal independent set is called a \emph{basis}. Note that by the augmentation property, all bases of a matroid have size equal to the rank of the matroid.
\end{definition}

\begin{definition}[Polytopes]
    Let $\calM \subseteq 2^M$ be a matroid and $r$ be its rank function. Its \emph{independence polytope} is defined $\calP(\calM) = \{\vb{x} \in [0, 1]^m : \vb{x}(S) \leq r(S) \; \forall S \subseteq M\}$. In other words, $\calP(\calM)$ is the convex hull of the indicator vectors of the independent sets of $\calM$. The \emph{base polytope} is defined $\calP_B(\calM) = \{\vb{x} \in \calP(\calM) : \vb{x}(M) = r(M)\}$. Equivalently, $\calP_B(\calM)$ is the convex hull of the indicator vectors of the bases of $\calM$.
\end{definition}

\begin{fact}[\cite{Cunningham84}] \label{fact:PolytopeSampling}
    Let $\calM \subseteq 2^M$ be a matroid and $\vb{x} \in \calP_B(\calM)$. Then there exists an efficiently computable distribution $\calD$ over bases of $\calM$ where for all $j \in M$, $\Pr_{S \sim \calD}[j \in S] = x_j$.
\end{fact}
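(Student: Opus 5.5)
The statement to prove is the fact of Cunningham: every point of the base polytope is the marginal vector of an efficiently samplable distribution over bases. I would argue constructively via a Carathéodory-type decomposition, exploiting the fact stated in the definition that $\calP_B(\calM)$ is the convex hull of indicator vectors of bases.

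\textbf{Step 1: existence.} Since $\vb{x} \in \calP_B(\calM)$, there are bases $T_1, \dots, T_q$ and convex weights $\theta_1, \dots, \theta_q$ with $\vb{x} = \sum_a \theta_a \One_{T_a}$. Sampling $T_a$ with probability $\theta_a$ gives $\Pr[j \in S] = \sum_{a : j \in T_a} \theta_a = x_j$. By Carathéodory applied in the affine hull of the base polytope, we may take $q \le m+1$. It remains to compute such a decomposition in polynomial time, given only an independence oracle.

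\textbf{Step 2: efficient decomposition.} I would peel off one vertex at a time. Take the tight-set structure of $\vb{x}$: the family of sets $S$ with $\vb{x}(S) = r(S)$ is closed under union and intersection by submodularity. Order the items so that tight sets form a chain refinement, and let $j_1, \dots, j_m$ sort items by decreasing $x_j$ consistent with this. Run the greedy algorithm to produce a basis $T$, which lies on the minimal face $F$ of $\calP_B(\calM)$ containing $\vb{x}$.

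Then move along the line from $\One_T$ through $\vb{x}$: set $\vb{x}' = (\vb{x} - \theta \One_T)/(1-\theta)$ with $\theta$ maximal such that $\vb{x}' \in \calP_B(\calM)$. The maximal $\theta$ is computable by a line search using submodular function minimization of $r(S) - \vb{x}'(S)$, or Cunningham's algorithm. At the maximal $\theta$, either some coordinate hits $0$ or $1$, or a new set becomes tight. Either way, the dimension of the minimal face containing the new point drops by at least one. Recurse on $\vb{x}'$; after at most $m+1$ rounds we obtain the decomposition.

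\textbf{Main obstacle.} The delicate step is guaranteeing that the greedy basis $T$ lies in the minimal face, so that $\theta > 0$, and computing the maximal step exactly. For the first, I would choose $T$ greedily to be a basis of each tight set in the chain, which is possible since matroid restrictions and contractions are matroids. For the second, I would invoke polynomial-time submodular minimization, since membership in the independence polytope reduces to minimizing $r(S) - \vb{x}(S)$. Since the paper only uses this as a cited fact, I would be content to reference Cunningham's separation/decomposition algorithm for these two subroutines rather than reprove them.
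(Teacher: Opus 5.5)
The paper gives no proof of this statement; it is imported from Cunningham's membership-testing paper, whose algorithm is genuinely different from yours. Cunningham maintains a convex combination $\vb{y}=\sum_a\theta_a\One_{I_a}\le\vb{x}$ of independent sets. He pushes $\vb{y}$ toward $\vb{x}$ along augmenting paths in an auxiliary exchange digraph built from the $I_a$ (in the spirit of matroid partition), using a Carath\'eodory reduction to keep the support small. When no augmenting path exists, a set $A$ read off from the reachable elements certifies $\vb{x}(A)>r(A)$. The algorithm is strongly polynomial using only an independence oracle. For $\vb{x}\in\calP_B(\calM)$, every set in the output is automatically a basis, since $\sum_a\theta_a\abs{I_a}=\vb{x}(M)=r(M)$.

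Your vertex-peeling argument is the generic one for submodular base polytopes. It is conceptually simpler and works for any polymatroid given a submodular-minimization (SFM) oracle. But the subroutine it needs, minimizing $r(S)-\vb{x}(S)$, is exactly Cunningham's membership problem. So if you defer to Cunningham for the subroutines, your proof rests on the same citation as the paper. To make the route independent, invoke general polynomial-time SFM (ellipsoid-based or combinatorial).

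Two steps also need sharpening for the peeling to make progress.

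\emph{The chain must be maximal.} You need a maximal chain $\emptyset=C_0\subsetneq\cdots\subsetneq C_k=M$ of tight sets. Your justification (restrictions and contractions are matroids) only shows that a basis tight on a given chain exists, not that it lies on the minimal face. The missing argument is as follows. For any tight $L$, the set $(L\cap C_i)\cup C_{i-1}$ is tight and sandwiched between $C_{i-1}$ and $C_i$, so by maximality $L$ is a union of blocks $C_i\setminus C_{i-1}$. Summing $\abs{T\cap(C_i\setminus C_{i-1})}=r(C_i)-r(C_{i-1})=\vb{x}(C_i\setminus C_{i-1})$ over the blocks in $L$ gives $\abs{T\cap L}=\vb{x}(L)=r(L)$. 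Processing the items with $x_j=0$ last within each block keeps them out of $T$, since $r(C_i\setminus\{j:x_j=0\})\ge\vb{x}(C_i)=r(C_i)$.

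This is not cosmetic. Take the rank-$2$ matroid on $\{a,d,b_1,\dots,b_4\}$ in which the $b_\ell$ are mutually parallel and all other pairs are independent, with $x_a=x_d=1/2$ and $x_{b_\ell}=1/4$. Plain decreasing-$x_j$ greedy returns $T=\{a,d\}$, which misses the tight set $\{b_1,\dots,b_4\}$ and forces $\theta=0$.

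\emph{The step length must be exact.} A bisection line search does not make the face dimension drop. The correct value $\theta^*$ is the smaller of $\min_{j\in T}x_j$ and $\min_S (r(S)-\vb{x}(S))/(r(S)-\abs{T\cap S})$ over $S$ with positive denominator. Newton--Dinkelbach iteration with SFM on $(1-\theta)r(S)-\vb{x}(S)+\theta\abs{T\cap S}$ finds it in at most $m$ rounds, because the denominators are integers in $[1,m]$ that strictly decrease. Computing the maximal chain costs $O(m^2)$ further SFM calls: repeatedly extend $C_{i-1}$ to the smallest of the minimal tight supersets of $C_{i-1}\cup\{j\}$.
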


\begin{definition}[Restriction]
    Let $\calM \subseteq 2^M$ be a matroid. Its \emph{restriction} to $S$ is the set family $\calM \given S = \{T \subseteq S : T \in \calM\}$.
\end{definition}

\begin{definition}[Contraction]
    Let $\calM \subseteq 2^M$ be a matroid and $r$ be its rank function. Its \emph{contraction} by $S$ is the matroid $\calM \contract S$ given by the rank function $r_S(T) = r(S \cup T) - r(S)$.
\end{definition}

\begin{definition}[Dual]
    Given a matroid $\calM \subseteq 2^M$ with rank function $r : 2^M \to \mathbb{Z}_+$, its \emph{dual} $\calM^*$ is defined as the downward-closure of the set family $\{M \setminus S : S \in \calM, \abs{S} = r(M)\}$. In other words, the complements of the bases of $\calM$ are the maximal sets of $\calM^*$.
\end{definition}

\begin{fact} \label{fact:DualRank}
    Given a matroid $\calM \subseteq 2^M$ with rank function $r$ and dual rank function $r^*$, we have $r^*(S) = \abs{S} - r(M) + r(M \setminus S)$.
\end{fact}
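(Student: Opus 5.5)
The plan is to compute $r^*(S)$ directly from the definition of $\calM^*$, by reducing it to a minimization over bases of $\calM$. The maximal sets of $\calM^*$ are the complements $M \setminus B$ of bases $B$ of $\calM$, and $\calM^*$ is downward-closed. So every $T \in \calM^*$ lies inside some $M \setminus B$. Since $r^*(S) = \max_{T \in \calM^*} \abs{S \cap T}$, we get
\[
    r^*(S) \quad = \quad \max_{B \text{ basis of } \calM} \abs{S \setminus B} \quad = \quad \abs{S} - \min_{B \text{ basis of } \calM} \abs{S \cap B} \enspace .
\]
This only uses the definition of the rank function as a maximum over the set family, so we do not need to separately verify here that $\calM^*$ is a matroid. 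It then suffices to show that $\min_B \abs{S \cap B} = r(M) - r(M \setminus S)$.

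For the lower bound, fix any basis $B$. Its part $B \cap (M \setminus S)$ is independent (by downward-closure) and contained in $M \setminus S$, so it has size at most $r(M \setminus S)$. Every basis has size $r(M)$ (noted after the definition of a basis), hence $\abs{B \cap S} \geq r(M) - r(M \setminus S)$.

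For the matching upper bound, I would build a basis that meets $S$ as little as possible. Take an independent set $I \subseteq M \setminus S$ with $\abs{I} = r(M \setminus S)$, which exists by definition of $r$. Repeatedly apply augmentation against any basis to extend $I$ to a basis $B \supseteq I$. The key observation is that every element of $B \setminus I$ must lie in $S$. Otherwise $I$ plus such an element would be an independent subset of $M \setminus S$ (by downward-closure of $B$) of size exceeding $r(M \setminus S)$, which is a contradiction. Hence $\abs{B \cap S} = \abs{B} - \abs{I} = r(M) - r(M \setminus S)$.

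Combining the two bounds gives $r^*(S) = \abs{S} - r(M) + r(M \setminus S)$. The only step requiring any care is the extension argument in the upper bound: we must check that augmentation really lets us grow $I$ to full size $r(M)$, and that maximality of $I$ within $M \setminus S$ forces all the added elements into $S$. Both are short consequences of the matroid axioms, so I expect no real obstacle.
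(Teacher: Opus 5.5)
Your proof is correct. The paper does not prove this statement: it is quoted as a standard matroid fact, next to the equally unproved claim that duals are matroids. So there is no competing argument to compare against, and what you give is the usual textbook derivation from the axioms.

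Each of your three steps is sound:
\begin{itemize}
\item Downward-closure reduces $r^*(S)$ to $\abs{S} - \min_B \abs{S \cap B}$, with the minimum over bases $B$ of $\calM$.
\item Since $B \setminus S$ is an independent subset of $M \setminus S$, every basis satisfies $\abs{B \cap S} \geq r(M) - r(M \setminus S)$.
\item For the matching bound, extend a maximum independent subset $I$ of $M \setminus S$ to a basis $B$. The maximality of $I$ forces $B \setminus I \subseteq S$, so $\abs{B \cap S} = r(M) - r(M \setminus S)$.
\end{itemize}

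You also rely on two points, both correctly. First, the paper defines $\calM^*$ through independent sets of size exactly $r(M)$. These coincide with maximal independent sets, because $r(M)$ is the largest independent-set size and augmentation makes every maximal independent set that large. Second, as you note, the argument uses only $r^*(S) = \max_{T \in \calM^*} \abs{S \cap T}$, so it does not depend on the separate fact that $\calM^*$ is itself a matroid.
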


\begin{fact}
    The restrictions, contractions, and dual of a matroid are matroids.
\end{fact}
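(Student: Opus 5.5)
We treat the three operations separately. For each one we either check the three matroid axioms directly, or check that the associated rank function satisfies the standard rank axioms: $r(\emptyset)=0$, $r(T)\le r(T\cup\{j\})\le r(T)+1$, and submodularity. We then invoke the classical equivalence between these rank axioms and the independence axioms. Restriction is immediate; contraction and the dual are cleanest through rank functions.

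\emph{Restriction.} We have $\emptyset\in\calM\given S$, so the family is non-empty. Downward-closure is inherited from $\calM$, since subsets of subsets of $S$ are again subsets of $S$. For augmentation, take $T,T'\in\calM\given S$ with $\abs{T}<\abs{T'}$. Augmentation in $\calM$ gives some $j\in T'\setminus T$ with $T\cup\{j\}\in\calM$. Since $j\in T'\subseteq S$, we also have $T\cup\{j\}\subseteq S$, so $T\cup\{j\}\in\calM\given S$.

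\emph{Contraction.} We verify that $r_S(T)=r(S\cup T)-r(S)$ satisfies the rank axioms.
\begin{itemize}
\item Normalization: $r_S(\emptyset)=0$ holds trivially.
\item Unit increase: monotonicity and $r_S(T\cup\{j\})-r_S(T)\le 1$ follow directly from the same properties of $r$, applied to the sets $S\cup T$.
\item Submodularity: the marginal value of $j$ for $r_S$ at $T$ equals the marginal value of $j$ for $r$ at $S\cup T$. Since $S\cup T$ grows with $T$, these marginals are non-increasing in $T$ by submodularity of $r$.
\end{itemize}
Before using these properties of $r$, we first establish that $r$ itself is normalized, unit-increasing, and submodular. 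This is the standard argument from the augmentation axiom: all maximal independent subsets of a set have the same size, so extend a maximal independent subset of $A\cap B$ to one of $A\cup B$ and count.

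\emph{Dual.} We use the formula of \Cref{fact:DualRank}, $r^*(S)=\abs{S}-r(M)+r(M\setminus S)$, which we take as the definition of a candidate rank function. We then check the rank axioms for $r^*$.
\begin{itemize}
\item Normalization: $r^*(\emptyset)=0$.
\item Unit increase: when $j$ is added to $S$, the term $\abs{S}$ increases by exactly $1$ and the term $r(M\setminus S)$ decreases by $0$ or $1$. Hence $r^*$ increases by $0$ or $1$.
\item Submodularity: $S\mapsto\abs{S}$ is modular, and $S\mapsto r(M\setminus S)$ is submodular because complementation preserves submodularity. The sum is therefore submodular.
\end{itemize}
Finally, we confirm that the independent sets of $r^*$, namely those $S$ with $r^*(S)=\abs{S}$, are exactly the downward-closure in the definition. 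The condition $r^*(S)=\abs{S}$ is equivalent to $r(M\setminus S)=r(M)$, that is, $M\setminus S$ contains a basis $B$. This in turn is equivalent to $S\subseteq M\setminus B$.

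The main obstacle is the dual, specifically verifying augmentation for the downward-closure of the basis complements. The rank-function route sidesteps this, at the cost of relying on the rank-axiom characterization. If a self-contained argument is preferred, we would instead prove the basis exchange property for the family $\{M\setminus B\}$ from the symmetric exchange property of the bases of $\calM$.
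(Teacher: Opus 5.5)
The paper gives no proof of this statement. It is quoted as a standard fact of matroid theory; indeed, the paper's definition of $\calM \contract S$ as ``the matroid given by the rank function $r_S$'' already presupposes it. So there is no argument of the paper's to compare against, and your proof is correct.

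You check restriction directly from the independence axioms. For contraction and the dual, you first derive normalization, unit increase, and submodularity of $r$ from augmentation, then verify the same properties for $r_S$ and for $r^*(S) = \abs{S} - r(M) + r(M \setminus S)$. For the dual you close the loop by showing that $\{S : r^*(S) = \abs{S}\}$ is exactly the downward-closure of the basis complements, which is the paper's definition of $\calM^*$. You therefore use the formula of \Cref{fact:DualRank} only as a candidate, with no circularity.

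Compared with the bare citation, your route makes explicit that all three closure properties reduce to submodularity of $r$ plus the rank-axiom characterization of matroids. The price is that this characterization is itself a textbook theorem of the same depth as the fact. Its nontrivial direction is that the rank axioms force augmentation for $\{X : \rho(X) = \abs{X}\}$. So your argument is a clean reduction rather than a fully self-contained proof.

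One point should be stated explicitly: the characterization needs $\rho$ to be integer-valued, since $X \mapsto \abs{X}/2$ satisfies your three axioms as written. Integrality holds here, because $r_S$ and $r^*$ are integer combinations of values of $r$, and your unit-increase checks already show increments in $\{0,1\}$.

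For the self-contained alternative you mention, symmetric exchange is more than you need. A one-sided exchange suffices, together with the basis-axiom characterization: for bases $B_1, B_2$ and $x \in B_2 \setminus B_1$, there is $y \in B_1 \setminus B_2$ with $(B_1 \setminus \{y\}) \cup \{x\}$ a basis. To get it, take $y$ to be an element of the fundamental circuit of $x$ in $B_1 \cup \{x\}$ that lies outside $B_2$.
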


We now define the principal partition of a matroid. While principal partitions of matroids and their extensions have been studied extensively over the years~\cite{NarayananV81,Fujishige09}, our definitions follow the formulation of~\cite{Soto13}, which is tailored to a theoretical computer science audience.

\begin{definition}[Principal Partition]
    Let $\calM \subseteq 2^M$ be a matroid with rank function $r$, and let $U_0 \subseteq M$ be the maximal set satisfying $r(U_0) = 0$. For sets $U_1, \dots, U_s$, define $U_{< k} = \bigcup_{\ell < k} U_\ell$ and $U_{\leq k} = \bigcup_{\ell \leq k} U_\ell$ for $k \in [s]$. Then there exists a partition $U_0, U_1, \dots, U_s$ and values $0 = \mu_0 < \mu_1 < \dots < \mu_s \leq 1$ satisfying the following:
    \begin{itemize}[topsep=1pt,itemsep=1pt]
        \item For $\nu > 0$, define $f(S, \nu) = r(S)/\nu - \abs{S}$. Then $f(\cdot, \nu) : 2^M \to \mathbb{R}$ admits more than one minimizer if and only if $\nu \in \{\mu_1, \dots, \mu_s\}$.
        \item For every $k \in [s]$, the unique minimal and unique maximal minimizers of $f(\cdot, \mu_k)$ are $U_{< k}$ and $U_{\leq k}$, respectively.
    \end{itemize}
    The partition $U_0, U_1, \dots, U_s$ is called the \emph{principal partition} of $\calM$ with densities $\mu_0, \mu_1, \dots, \mu_s$. The \emph{principal vector} is $\vb{p} \in [0, 1]^m$, defined $p_j = \mu_k$ for all $k \in [s] \cup \{0\}$ and $j \in U_k$.
\end{definition}

\begin{fact}[Uniform Density] \label{fact:PrincipalPartition}
    Let $\calM \subseteq 2^M$ be a matroid with rank function $r$. Let its principal partition be given by $U_0, U_1, \dots, U_s$ and $\mu_0, \mu_1, \dots, \mu_s$, and let $\vb{p}$ be its principal vector. Let $r_k$ be the rank function of $(\calM \contract U_{< k}) \given U_k$. Then for all $k \in [s]$ and $S \subseteq U_k$, $r_k(S) \geq \mu_k\abs{S}$, with equality holding at $S = U_k$. As a direct consequence, $r(U_{\leq k}) = \vb{p}(U_{\leq k})$ for all $k \in [s]$.
\end{fact}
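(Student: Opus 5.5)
The plan is to derive everything directly from the defining property of the principal partition: for each $k \in [s]$, both $U_{<k}$ and $U_{\leq k}$ minimize $f(\cdot, \mu_k)$, where $f(S,\nu) = r(S)/\nu - \abs{S}$. First, I unpack the rank function of the minor. By the definitions of contraction and restriction, for $S \subseteq U_k$,
\[
    r_k(S) \quad = \quad r(U_{<k} \cup S) - r(U_{<k}) \enspace .
\]
So both claims about $r_k$ become statements comparing $r(U_{<k} \cup S)$ with $r(U_{<k})$.

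For the inequality, fix $S \subseteq U_k$. Since $U_{<k}$ minimizes $f(\cdot, \mu_k)$, we have $f(U_{<k} \cup S, \mu_k) \geq f(U_{<k}, \mu_k)$. The set $S$ is disjoint from $U_{<k}$, so $\abs{U_{<k} \cup S} = \abs{U_{<k}} + \abs{S}$. Substituting this gives
\[
    r(U_{<k} \cup S)/\mu_k - \abs{S} \quad \geq \quad r(U_{<k})/\mu_k \enspace .
\]
Multiplying by $\mu_k > 0$ yields $r_k(S) \geq \mu_k \abs{S}$.

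For the equality case $S = U_k$, note that $U_{<k} \cup U_k = U_{\leq k}$ is also a minimizer. Hence $f(U_{\leq k}, \mu_k) = f(U_{<k}, \mu_k)$, and the same rearrangement gives $r(U_{\leq k}) - r(U_{<k}) = \mu_k \abs{U_k}$.

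For the consequence, I telescope:
\[
    r(U_{\leq k}) \quad = \quad r(U_0) + \sum_{\ell=1}^{k} \big(r(U_{\leq \ell}) - r(U_{<\ell})\big) \quad = \quad 0 + \sum_{\ell=1}^{k} \mu_\ell \abs{U_\ell} \enspace .
\]
Here I use $U_{<1} = U_0$, $r(U_0) = 0$, and the equality just proved for each $\ell$. Since $p_j = \mu_\ell$ on $U_\ell$ and $p_j = \mu_0 = 0$ on $U_0$, the right-hand side equals $\vb{p}(U_{\leq k})$.

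I expect no real obstacle in this argument. The only points needing care are the disjointness used in the cardinality count, the positivity of $\mu_k$ when multiplying through, and the base case $U_{<1} = U_0$ with $r(U_0) = 0$.
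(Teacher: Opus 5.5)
Your proof is correct: comparing $f(\cdot,\mu_k)$ at $U_{<k}\cup S$ against the minimizer $U_{<k}$ (using disjointness and $\mu_k>0$) gives $r_k(S)\geq \mu_k\abs{S}$; the fact that $U_{\leq k}$ is also a minimizer gives equality at $S=U_k$; and telescoping from $r(U_{<1})=r(U_0)=0$ gives $r(U_{\leq k})=\vb{p}(U_{\leq k})$. The paper gives no proof of its own, stating this as a known fact in Soto's minimizer formulation of the principal partition, so your derivation is exactly the short verification (including the ``direct consequence'' telescoping step) that the paper leaves implicit.
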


\begin{definition}[Self-Union]
    Let $\calM \subseteq 2^M$ be a matroid. Then its \emph{self-union} is defined $\calM \lor \calM = \{S \cup T \subseteq M : S, T \in \calM\}$.
\end{definition}

\begin{fact}[Matroid Union Theorem~\cite{Edmonds68}]
    Let $\calM \subseteq 2^M$ be a matroid with rank function $v$, and let $v \lor v$ be the rank function of $\calM \lor \calM$. Then
    \[
        (v \lor v)(S) \quad = \quad \min_{T \subseteq S} \Big(\abs{S \setminus T} + 2v(T)\Big) \enspace .
    \]
\end{fact}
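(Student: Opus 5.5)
We prove the two inequalities separately. The upper bound is elementary. The lower bound we reduce to Edmonds' matroid intersection theorem, which we take as a known min-max result; invoking it is the step carrying the real content.

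\emph{Upper bound.} Let $R \subseteq S$ be independent in $\calM \lor \calM$, so $R = I \cup J$ with $I, J \in \calM$. Fix any $T \subseteq S$. Then
\[
    \abs{R} \quad \leq \quad \abs{S \setminus T} + \abs{I \cap T} + \abs{J \cap T} \quad \leq \quad \abs{S \setminus T} + 2v(T) \enspace ,
\]
because $I \cap T$ and $J \cap T$ are independent subsets of $T$, so each has size at most $v(T)$. Maximizing over $R$ and minimizing over $T$ gives $(v \lor v)(S) \leq \min_{T \subseteq S} (\abs{S \setminus T} + 2v(T))$.

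\emph{Lower bound: the intersection formulation.} Form the ground set $E = S \times \{1,2\}$ of two disjoint copies of $S$, and define two matroids on $E$.
\begin{itemize}[topsep=1pt,itemsep=1pt]
    \item $\calM_1$ is the direct sum of $\calM \given S$ on copy $1$ and on copy $2$. Its rank is $r_1(U) = v(U_1) + v(U_2)$, where $U_c = \{j : (j,c) \in U\}$.
    \item $\calM_2$ is the partition matroid in which each pair $\{(j,1),(j,2)\}$ contributes at most one element. Its rank is $r_2(W) = \abs{\{j \in S : (j,1) \in W \text{ or } (j,2) \in W\}}$.
\end{itemize}
A common independent set projects injectively onto a set $I \cup J \subseteq S$ with $I, J \in \calM$. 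Conversely, any $I \cup J$ with $I, J \in \calM$ lifts to a common independent set of the same size: put $I$ on copy $1$ and $J \setminus I$ on copy $2$. Hence $(v \lor v)(S)$ equals the maximum size of a common independent set of $\calM_1$ and $\calM_2$.

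\emph{Lower bound: applying the min-max theorem.} By Edmonds' matroid intersection theorem, this maximum equals $\min_{U \subseteq E} \big(r_1(U) + r_2(E \setminus U)\big)$. Fix any $U$ and set $T = U_1 \cap U_2$. An item $j$ appears in $E \setminus U$ exactly when some copy of $j$ lies outside $U$, which happens exactly when $j \notin T$. So $r_2(E \setminus U) = \abs{S \setminus T}$. By monotonicity of $v$, $r_1(U) = v(U_1) + v(U_2) \geq 2v(T)$. Therefore every $U$ yields a value at least $\abs{S \setminus T} + 2v(T)$ for some $T \subseteq S$, and the minimum is at least $\min_{T \subseteq S}(\abs{S \setminus T} + 2v(T))$. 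Combined with the upper bound, this proves the identity.

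The only nontrivial ingredient is the matroid intersection theorem itself; everything else is bookkeeping. If a self-contained argument were required, we would replace that step with the standard augmenting-path proof on the exchange graph of a maximum common independent set. When no augmenting path exists, the set of vertices reachable from the sources defines a $U$ attaining equality.
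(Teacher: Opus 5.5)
Your argument is correct. The paper does not prove this statement at all; it imports it as a black-box fact from Edmonds, so your proposal supplies a derivation where the paper has only a citation.

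\textbf{Your route.} You use the standard reduction to matroid intersection. The upper bound is the elementary weak-duality count. The lower bound comes from intersecting the direct sum of two copies of $\calM$ restricted to $S$ with the partition matroid that allows one copy of each item. Your identities $r_2(E \setminus U) = \abs{S \setminus T}$ and $r_1(U) \geq 2v(T)$ for $T = U_1 \cap U_2$ are right. So is the lifting of $I \cup J$ with $I$ on copy $1$ and $J \setminus I$ on copy $2$.

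\textbf{What this buys.} It makes clear where the depth lies. The paper only ever uses the inequality $(v \lor v)(S) \geq \min_{T \subseteq S} (\abs{S \setminus T} + 2v(T))$, to get $(u \lor u)(S) \geq 2\mu_1\abs{S}$ when sampling balanced pairs of bases. That is exactly the direction that needs the min-max theorem.

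\textbf{What it does not buy.} It is not self-contained. Matroid intersection and matroid union are equivalent in strength: each is routinely derived from the other, and classical proofs of the union theorem, such as Edmonds' matroid partition algorithm, argue directly by augmenting paths. So you have relocated the nontrivial step rather than removed it, as you acknowledge.

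\textbf{Side remark.} The paper's later use also needs $\calM \lor \calM$ to be a matroid, since it samples from the base polytope of $(\calU \lor \calU) \given U_1$. The statement presupposes this rather than asserting it. Your argument does not establish it, but it does not need to, because you compute the maximum over the downward-closed family $\calM \lor \calM$ directly.
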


\begin{fact}[\cite{FeigeV10}] \label{fact:HalfIntegralRounding}
    Let $u, v : 2^M \to \mathbb{R}_+$ be submodular. Let $A, B, C, D \subseteq M$ be such that no item is contained in more than two of the four sets. Then there is a randomized allocation depending only on $A, B, C, D$ which obtains welfare at least $(5/12)(u(A) + u(B) + v(C) + v(D))$ in expectation.
\end{fact}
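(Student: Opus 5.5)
We would construct the randomized allocation as a mixture of a few simple ``pick one of your two sets, then resolve conflicts'' schemes. For each we lower-bound the expected welfare using only submodularity and the hypothesis that no item lies in more than two of $A,B,C,D$, and then choose the mixing weights to reach $5/12$. Everything depends only on $A,B,C,D$, as required.

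\emph{Basic scheme $P_u$ (priority to $u$).} Draw $X$ uniformly from $\{A,B\}$ and give it to $u$. Draw $Y$ uniformly from $\{C,D\}$ and give $Y\setminus X$ to $v$. Then $\E[u(X)]=(u(A)+u(B))/2$. For a fixed $Y$, submodularity gives
\[
v(Y\setminus A)+v(Y\setminus B)\;\ge\; v\bigl(Y\setminus(A\cap B)\bigr)+v\bigl(Y\setminus(A\cup B)\bigr).
\]
No item of $Y\in\{C,D\}$ can lie in $A\cap B$, since it would then be in three sets, so $Y\setminus(A\cap B)=Y$. Hence $v$ gets at least $\tfrac12 v(Y)+\tfrac12 v(Y\setminus(A\cup B))$ in expectation over $X$. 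Averaging over $Y$, $P_u$ earns at least $\tfrac12(u(A)+u(B))+\tfrac14(v(C)+v(D))+\tfrac14\bigl(v(C\setminus(A\cup B))+v(D\setminus(A\cup B))\bigr)$. The symmetric scheme $P_v$ is defined the same way with the roles of $u$ and $v$ exchanged, and satisfies the symmetric bound. Averaging $P_u$ and $P_v$ already gives $3/8$ of the target sum plus the nonnegative ``uncontested'' terms.

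\emph{Third scheme $Q$ (correlated, anti-overlap pairing).} To get beyond $3/8$ we would add a scheme that exploits the leftover terms. We couple the choices: with probability $1/2$ give $(A\setminus C,\ D)$ or $(A,\ D\setminus A)$, and symmetrically $(B,\ C\setminus B)$ or $(B\setminus D,\ C)$. More generally, we pair each of $u$'s sets with the $v$-set it is less likely to overlap and resolve the overlap by a fair coin. The analogous inequalities are
\[
u(A\setminus C)+u(A\setminus D)\;\ge\; u(A)+u\bigl(A\setminus(C\cup D)\bigr),
\]
with $A\cap C\cap D=\emptyset$. These show that $Q$ gets a large fraction exactly when the $P$-schemes are weak. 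The $P$-schemes are weak when most of $C,D$ lies inside $A\cup B$, so that the terms $v(\cdot\setminus(A\cup B))$ are small. In that case $A$ and $B$ are nearly disjoint from each other, and a pairing that gives each bidder nearly disjoint sets loses little.

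\emph{Main obstacle.} The hard step is the final accounting. We must find a single convex combination of $P_u$, $P_v$ and $Q$ whose guaranteed welfare, written as a combination of $u(A),u(B),v(C),v(D)$ and the residual terms, dominates $\tfrac5{12}(u(A)+u(B)+v(C)+v(D))$ in the worst case. We would first parametrize the worst case by how much of each set lies in each pairwise intersection, i.e.\ by the mass in each of the six intersection types. We would then use submodularity to bound each term by this intersection profile and solve the resulting small linear program for the mixing weights. The natural first guess is weight $1/3$ on each of $P_u,P_v,Q$. If the residual terms cannot be controlled by set-level submodularity alone, the fallback is the item-wise rounding of~\cite{FeigeV10}: each item lies in at most two sets, so we decide each item's owner by an independent coin whose bias depends on which two sets contain it, and apply the standard bound for submodular functions under independent sampling.
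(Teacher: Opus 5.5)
\textbf{There is a genuine gap: the final accounting is never done, and it cannot be done with $P_u$, $P_v$, $Q$.}

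Your analysis of $P_u$ and $P_v$ is correct, but together they only give $\tfrac38$ of $u(A)+u(B)+v(C)+v(D)$. Everything beyond that is deferred to an accounting you never carry out. $Q$ is also garbled as written: $(A\setminus C,D)$ double-allocates $A\cap D$, and your displayed inequality is about $u$ receiving $A\setminus C$ and $A\setminus D$, which $Q$ never does.

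No mixture of your schemes reaches $\tfrac5{12}$. Take four items with $A=\{p,q\}$, $B=\{r,s\}$, $C=\{p,r\}$, $D=\{q,s\}$. Let $u(T)=\mathbf{1}[T\cap C\neq\emptyset]+\mathbf{1}[T\cap D\neq\emptyset]$ and $v(T)=\mathbf{1}[T\cap A\neq\emptyset]+\mathbf{1}[T\cap B\neq\emptyset]$; both are monotone submodular. Then $u(A)=u(B)=v(C)=v(D)=2$, so the target is $10/3$.
\begin{itemize}
\item Every outcome of $P_u$, $P_v$, $Q$ gives some bidder one of its own sets in full.
\item If $u$ holds $A$, then $v$'s bundle lies inside $B$ and is worth at most $1$; the other three cases are the same. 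So every such outcome has welfare at most $3$, however leftover items are assigned.
\item The only welfare-$4$ allocations are the crossing ones, where $u$ gets $\{p,s\}$ or $\{q,r\}$. Any valid scheme must use them with probability at least $1/3$ on this instance.
\end{itemize}

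Your fallback fails on the same instance. Let $x_j=\Pr[j\text{ goes to }u]$ be independent biases. Taking $u$ (respectively $v$) to be the OR of two items forces $(1-x_j)(1-x_k)\le\tfrac16$ for $j\in A$, $k\in B$, and $x_jx_k\le\tfrac16$ for $j\in C$, $k\in D$. The pair $\{p,s\}$ appears in both lists, so exactly one of $x_p,x_s$ is at least $\tfrac12$.
\begin{itemize}
\item If it is $x_p$, then $x_px_q\le\tfrac16$ gives $x_q\le\tfrac13$, while $(1-x_q)(1-x_s)\le\tfrac16$ gives $x_q>\tfrac23$.
\item If it is $x_s$, the same argument runs with $r$ in place of $q$, using $x_rx_s\le\tfrac16$ and $(1-x_p)(1-x_r)\le\tfrac16$.
\end{itemize}

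\textbf{The missing ingredient is the crossing allocations.} The paper does not prove this statement; it imports it from \cite{FeigeV10}. With the crossing allocations included, a short proof goes through.
\begin{itemize}
\item Give $X=(A\cup B)\setminus(C\cup D)$ to $u$ and $(C\cup D)\setminus(A\cup B)$ to $v$.
\item The contested items split into disjoint blocks $P=A\cap C$, $Q=A\cap D$, $R=B\cap C$, $S=B\cap D$.
\item Give $u$ a uniformly random two of these four blocks and $v$ the other two.
\end{itemize}

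For monotone submodular $f$:
\begin{itemize}
\item $f(P\cup R)+f(P\cup S)\ge f(P\cup R\cup S)+f(P)$ and $f(Q\cup R)+f(Q\cup S)\ge f(Q\cup R\cup S)+f(Q)$. With $f(P)+f(Q)\ge f(P\cup Q)$ and monotonicity, these four pairs sum to at least $2f(R\cup S)+f(P\cup Q)$.
\item Grouping by $R,S$ instead gives at least $2f(P\cup Q)+f(R\cup S)$.
\item Averaging, all six pairs sum to at least $\tfrac52\bigl(f(P\cup Q)+f(R\cup S)\bigr)$.
\end{itemize}

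Apply this to $u$ contracted by $X$. Since $u$ also receives $X$, its expected value is at least $u(X)+\tfrac5{12}\bigl(u(A\cup X)+u(B\cup X)-2u(X)\bigr)\ge\tfrac5{12}\bigl(u(A)+u(B)\bigr)$. The same holds for $v$ with $Q$ and $R$ exchanged, because $v$'s bundle is also a uniform pair of blocks.
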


\begin{corollary} \label{cor:HalfIntegralRounding}
    Let $\calU, \calV \subseteq 2^M$ be matroids and let $u, v : 2^M \to \mathbb{Z}_+$ be their rank functions. Let $A, B \in \calU$ and $C, D \in \calV$. Then there is a randomized allocation depending only on $A, B, C, D$ which obtains welfare at least $(\abs{A \cup B} + \abs{C \cup D})/2 - \abs{(A \diff B) \cap (C \diff D)}/6$ in expectation.
\end{corollary}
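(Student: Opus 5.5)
The plan is to prove a per-item accounting bound and to reduce the general configuration of $A,B,C,D$ to the setting of \Cref{fact:HalfIntegralRounding}. Because $A,B\in\calU$ and $C,D\in\calV$ are independent, every allocation $(S,\closure{S})$ satisfies $u(S)\ge \abs{S\cap X}$ for each $X\in\{A,B\}$, and $v(\closure{S})\ge\abs{\closure{S}\cap Y}$ for each $Y\in\{C,D\}$. Welfare lower bounds therefore reduce to counting items that land inside a fixed independent set. The target $(\abs{A\cup B}+\abs{C\cup D})/2-\abs{(A\diff B)\cap(C\diff D)}/6$ is a sum of per-item targets that depend only on the membership pattern of the item. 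I will classify items by which of the four sets contain them and check each class.

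\emph{Easy classes.} These are handled by a deterministic assignment, and each already meets or exceeds its target.
\begin{itemize}
\item Items in $A\cap B$ go to $u$. Their target is at most $1$; if such an item also lies in $C$ or $D$, the target is exactly $1$ and there is no crossing penalty.
\item Items in $C\cap D$ that are not in $A\cap B$ go to $v$, for the same reason.
\item Items in exactly one of $A,B$ and in neither $C$ nor $D$ go to $u$ (target $1/2$, gain $1$); the symmetric class goes to $v$.
\end{itemize}
What remains is the crossing set $P=(A\diff B)\cap(C\diff D)$. Each item of $P$ lies in exactly two of the four sets, one from each side, so $P$ satisfies the hypothesis of \Cref{fact:HalfIntegralRounding}. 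Its per-item target is $1-1/6=5/6$, which is exactly twice $5/12$. So applying \Cref{fact:HalfIntegralRounding} to $A\cap P$, $B\cap P$, $C\cap P$, $D\cap P$, and counting each crossing item twice (once per set containing it), would give precisely $\tfrac{5}{12}\cdot 2\abs{P}=\tfrac56\abs{P}$.

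\emph{Main obstacle: combining the two parts.} The difficulty is that $u$ and $v$ are not additive across the deterministic part $R$ and the crossing part $P$. To combine them, I would run the Feige--Vondr\'ak rounding on the contracted functions $u_R(T)=u(R_u\cup T)-u(R_u)$ and $v_R(T)=v(R_v\cup T)-v(R_v)$, where $R_u$ and $R_v$ are the deterministic bundles. Contractions of matroid rank functions are again matroid rank, hence submodular, so the fact applies. The problem is that $R_u\cup(A\cap P)$ need not be independent: $R_u$ contains items of $B\setminus A$, so $u_R(A\cap P)$ could be smaller than $\abs{A\cap P}$.

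My first fix is to choose the deterministic bundle more carefully. Replace $R_u$ by $R_u\cap A$ when certifying the $A$-term and by $R_u\cap B$ when certifying the $B$-term, using $u(R_u\cup T)\ge u((R_u\cap X)\cup (T\cap X))=\abs{R_u\cap X}+\abs{T\cap X}$ for $X\in\{A,B\}$. Then average over the two certificates, so that the deterministic items of $A\diff B$ contribute $1/2$ each (exactly their target) while the crossing items retain their independent-set guarantee.

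If this averaging does not line up with the form of \Cref{fact:HalfIntegralRounding}, which is stated in terms of $u(A)+u(B)$ rather than per item, my fallback is to open up the Feige--Vondr\'ak half-integral rounding and verify directly that each crossing item is received by the bidder for whose set it is counted with total probability at least $5/6$, summed over its two memberships. For independent sets this per-item guarantee is what the $5/12$ bound is built from, and it combines linearly with the deterministic assignment via the inequalities $u(S)\ge\abs{S\cap X}$.
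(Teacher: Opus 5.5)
Your first fix is the paper's argument. Your $R_u$, $R_v$, $P$ are exactly the paper's $X$, $Y$, $Z$; the paper runs Feige--Vondr\'ak on the contractions $u_X, v_Y$ and uses the same monotonicity bound $u(X)+u_X(A\cap Z)=u(X\cup(A\cap Z))\ge\abs{A\cap(X\cup Z)}$, averaged over $A$ and $B$. The only bookkeeping you left implicit is that the fact puts weight $\frac{5}{12}$, not $\frac12$, on each $u_X(\cdot)$ term; the paper handles this by writing $\frac{5}{12}=\frac12-\frac{1}{12}$ and bounding $u_X(T)\le\abs{T}$ in the subtracted part, which gives the penalty $\frac{1}{12}\cdot 2\abs{Z}=\frac16\abs{Z}$. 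Drop the fallback, which would not work as stated: per-item receipt probabilities do not convert into value for a matroid rank function, because $u(S)\ge\abs{S\cap X}$ can only be applied to one independent set $X$ at a time.
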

\begin{proof}
Define the disjoint sets
\begin{align*}
    X \quad &= \quad (A \cap B) \cup ((A \cup B) \setminus (C \cup D)) \\
    Y \quad &= \quad ((C \cap D) \setminus (A \cap B)) \cup ((C \cup D) \setminus (A \cup B)) \\
    Z \quad &= \quad (A \diff B) \cap (C \diff D) \enspace .
\end{align*}
Our rounding scheme allocates $X$ to $u$ and $Y$ to $v$, then runs the rounding scheme of~\Cref{fact:HalfIntegralRounding} on $A \cap Z$, $B \cap Z$, $C \cap Z$, $D \cap Z$.

Let $u_X, v_Y$ be the rank functions of $\calU \contract X, \calV \contract Y$. Our rounding scheme obtains welfare
\begin{align*}
    w \quad &\geq \quad u(X) + v(Y) + \frac{5}{12}\Big(u_X(A \cap Z) + u_X(B \cap Z) + v_Y(C \cap Z) + v_Y(D \cap Z)\Big) \\
    &\geq \quad \frac{1}{2}\Big(u(A \cap (X \cup Z)) + u(B \cap (X \cup Z)) + v(C \cap (Y \cup Z)) + v(D \cap (Y \cup Z))\Big) \\
    & \quad \qquad\qquad - \frac{1}{12}\Big(u_X(A \cap Z) + u_X(B \cap Z) + v_Y(C \cap Z) + v_Y(D \cap Z)\Big) \\
    &\geq \quad \frac{1}{2}\Big(u(A \cap (X \cup Z)) + u(B \cap (X \cup Z)) + v(C \cap (Y \cup Z)) + v(D \cap (Y \cup Z))\Big) \\
    & \quad \qquad\qquad - \frac{1}{12}\Big(\abs{A \cap Z} + \abs{B \cap Z} + \abs{C \cap Z} + \abs{D \cap Z}\Big) \\
    &\geq \quad \frac{1}{2}\Big(u(A \cap (X \cup Z)) + u(B \cap (X \cup Z)) + v(C \cap (Y \cup Z)) + v(D \cap (Y \cup Z))\Big) - \frac{1}{6} \abs{Z} \enspace .
\end{align*}

Observe that $X \cup Z$ contains every element of $A \cap B$ and every element of $(A \setminus B) \cap (C \diff D)$, and so from $A$, it is only missing the elements of $(A \setminus B) \cap (C \cap D)$. The same reasoning holds for $B$, so
\begin{align} \label{eq:SimultaneousWelfareU}
\begin{split}
    & u(A \cap (X \cup Z)) + u(B \cap (X \cup Z)) \\
    & \qquad\qquad\qquad \geq \quad \abs{A \setminus ((A \setminus B) \cap (C \cap D))} + \abs{B \setminus ((B \setminus A) \cap (C \cap D))} \\
    & \qquad\qquad\qquad = \quad \abs{A} + \abs{B} - \abs{(A \diff B) \cap (C \cap D)} \\
    & \qquad\qquad\qquad = \quad \abs{A} + \abs{B} - \abs{(A \cup B) \cap (C \cap D)} + \abs{A \cap B \cap C \cap D} \\
    & \qquad\qquad\qquad \geq \quad \abs{A} + \abs{B} - \abs{C \cap D} + \abs{A \cap B \cap C \cap D} \enspace .
\end{split}
\end{align}
The same argument on $Y \cup Z \cup (A \cap B \cap C \cap D)$ and $C, D$ yields
\begin{align} \label{eq:SimultaneousWelfareV}
\begin{split}
    & v(C \cap (Y \cup Z)) + v(D \cap (Y \cup Z)) \\
    & \qquad\qquad\qquad \geq \quad \abs{C} + \abs{D} - \abs{(A \cap B) \cap (C \diff D)} - 2\abs{A \cap B \cap C \cap D} \\
    & \qquad\qquad\qquad \geq \quad \abs{C} + \abs{D} - \abs{A \cap B} - \abs{A \cap B \cap C \cap D} \enspace .
\end{split}
\end{align}

Substituting~\eqref{eq:SimultaneousWelfareU} and~\eqref{eq:SimultaneousWelfareV} into the welfare bound yields
\begin{align*}
    w \quad &\geq \quad \frac{1}{2}\Big(\abs{A} + \abs{B} - \abs{C \cap D} + \abs{A \cap B \cap C \cap D} \\
    & \quad \qquad\qquad + \abs{C} + \abs{D} - \abs{A \cap B} - \abs{A \cap B \cap C \cap D}\Big) - \frac{1}{6}\abs{Z} \\
    &= \quad \frac{1}{2}\Big(\abs{A \cup B} + \abs{C \cup D}\Big) - \frac{1}{6}\abs{Z} \enspace . \qedhere
\end{align*}
\end{proof}

\subsection{Algorithm}

Let $u, v : 2^M \to \mathbb{Z}_+$ be the rank functions of the matroids $\calU, \calV \subseteq 2^M$. Let $U_1, \dots, U_s$ be the principal partition of $\calU$ with densities $\mu_1, \dots, \mu_s$, and let $V_1, \dots, V_t$ be the principal partition of $\calV$ with densities $\nu_1, \dots, \nu_t$. Let $\vb{p}, \vb{q} \in [0, 1]^m$ be the principal vectors of $\calU, \calV$, respectively. We begin by showing that for every matroid, it is possible to sample ``maximally disjoint'' bases with marginals equal to the principal vector.

\begin{definition}
    A distribution over pairs of bases $A, B \in \calU$ is \emph{balanced} if for all $j \in M$,
    \[
        \E[\mathbbm{1}(j \in A) + \mathbbm{1}(j \in B)] = 2p_j \qquad \text{and} \qquad \mathbbm{1}(j \in A) + \mathbbm{1}(j \in B) \in \{\lfloor 2p_j \rfloor, \lceil 2p_j \rceil\} \enspace .
    \]
\end{definition}

\begin{proposition} \label{prop:PairSampling}
    It is possible to sample from a balanced distribution for any matroid.
\end{proposition}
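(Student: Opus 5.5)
The plan is to reduce to a single block of the principal partition and then handle one ``uniformly dense'' matroid using the matroid union theorem and \Cref{fact:PolytopeSampling}.

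\textbf{Reduction to blocks.} Let $U_0, U_1, \dots, U_s$ be the principal partition of $\calU$ with densities $\mu_k$. For each $k \in [s]$, let $\calU_k = (\calU \contract U_{<k}) \given U_k$ with rank function $r_k$. I will sample, independently for each $k$, a pair $A_k, B_k$ of bases of $\calU_k$ that is balanced with respect to the constant vector $\mu_k \cdot \vb{1}$. I then output $A = \bigcup_k A_k$ and $B = \bigcup_k B_k$; elements of $U_0$ are loops, have $p_j = 0$, and are never used. To see that $A$ is a basis of $\calU$, I use the standard contraction fact: if $J$ is a basis of $\calU \given X$ and $I$ is independent in $\calU \contract X$, then $I \cup J$ is independent. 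By induction on $k$, the set $A \cap U_{\le k}$ is independent. By \Cref{fact:PrincipalPartition}, $r(U_{\le k}) - r(U_{<k}) = r_k(U_k) = \mu_k \abs{U_k}$, so the sizes telescope to $\abs{A} = r(M)$. The balance conditions are coordinatewise, so they follow from the per-block conditions. The main thing to check carefully here is the contraction bookkeeping.

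\textbf{One block with $\mu \le 1/2$.} Fix a block, drop the index $k$, and write $r, \mu, U$. \Cref{fact:PrincipalPartition} gives $r(S) \ge \mu\abs{S}$ for all $S \subseteq U$, with $r(U) = \mu\abs{U}$. Let $\calN$ be the self-union $\calU_k \lor \calU_k$, truncated to rank $2r(U)$. By the Matroid Union Theorem, for every $S \subseteq U$,
\[
    \mathrm{rank}_{\calN}(S) \;=\; \min\Big\{2r(U),\, \min_{T \subseteq S} \big(\abs{S \setminus T} + 2r(T)\big)\Big\} \;\geq\; 2\mu\abs{S} ,
\]
where the inequality uses $r(T) \ge \mu\abs{T}$ together with $1 \ge 2\mu$. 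Moreover $2\mu\abs{U} = 2r(U)$. Hence $2\mu\cdot\vb{1}$ lies in the base polytope of $\calN$. By \Cref{fact:PolytopeSampling}, I sample a basis $C$ of $\calN$ with $\Pr[j \in C] = 2\mu$ for every $j$. Since $\abs{C} = 2r(U)$ and $C$ is independent in the union, $C$ splits into two disjoint independent sets of $\calU_k$, each of size at most $r(U)$. Both must therefore be bases; call them $A_k, B_k$ (this split can be computed by matroid partitioning). Every element lies in $0$ or $1$ of the two sets, which equals $\lfloor 2\mu \rfloor$ or $\lceil 2\mu \rceil$ as required, and the marginals are exactly $2\mu$.

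\textbf{One block with $\mu > 1/2$.} Here I pass to the dual $\calU_k^*$. By \Cref{fact:DualRank},
\[
    r^*(S) \;=\; \abs{S} - r(U) + r(U \setminus S) \;\geq\; \abs{S} - \mu\abs{U} + \mu\abs{U \setminus S} \;=\; (1-\mu)\abs{S},
\]
with equality at $S = U$. So the dual is uniformly dense with density $1 - \mu < 1/2$. The previous case gives dual bases $A', B'$ with each element in $0$ or $1$ of them and marginals $2(1-\mu)$. Their complements $U \setminus A'$ and $U \setminus B'$ are bases of $\calU_k$. Each element lies in $1$ or $2$ of them, and the marginals are $2 - 2(1-\mu) = 2\mu$. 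This covers $\mu = 1$ trivially, since then both bases equal $U$.

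I expect the main obstacle to be the block reduction: confirming that gluing block bases yields a basis of $\calU$, and that the densities from \Cref{fact:PrincipalPartition} give exactly $r_k(U_k) = \mu_k\abs{U_k}$. The per-block polytope argument is routine once the union-theorem inequality is written down.
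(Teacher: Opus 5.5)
Your proposal is correct and follows essentially the same route as the paper: reduce to the blocks of the principal partition, handle a uniformly dense block with $\mu \le 1/2$ by showing via the Matroid Union Theorem that $2\mu\mathbf{1}$ lies in the base polytope of the self-union and then sampling with Cunningham's result, and handle $\mu > 1/2$ by passing to the dual and taking complements. Your truncation to rank $2r(U)$ is harmless, since the union already has rank $2r(U)$ on the block, and your explicit contraction bookkeeping spells out the step the paper only asserts ``inductively.''
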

\begin{proof}
For $k = 1, \dots, s$, we will sample a pair of bases $A_k, B_k \in (\calU \contract U_{< k}) \given U_k$, and output the pair of bases $A_1 \cup \dots \cup A_s, B_1 \cup \dots \cup B_s$.~\Cref{fact:PrincipalPartition} yields inductively that $A_1 \cup \dots \cup A_k$ is a basis of $\calU \given U_{\leq k}$ for every $k \in [s]$, and in particular that the final outputs are bases of $\calU$. Therefore, it suffices to describe the sampling procedure of bases $A, B \in \calU \given U_1$ satisfying the conditions for all $j \in U_1$. Recall that $p_j = \mu_1$ for all $j \in U_1$.

Suppose $\mu_1 \leq 1/2$. Let $u \lor u$ be the rank function of $\calU \lor \calU$. Then by the Matroid Union Theorem and~\Cref{fact:PrincipalPartition}, we have for all $S \subseteq U_1$ that
\begin{equation} \label{eq:RankRelation}
    (u \lor u)(S) \quad = \quad \min_{T \subseteq S} \Big(\abs{S \setminus T} + 2u(T)\Big) \quad \geq \quad \min_{T \subseteq S} \Big(2\mu_1\abs{S \setminus T} + 2\mu_1\abs{T}\Big) \quad = \quad 2\mu_1\abs{S} \enspace .
\end{equation}

By definition of self-union, we have $(u \lor u)(U_1) \leq 2u(U_1) = 2\mu_1\abs{U_1}$. On the other hand,~\eqref{eq:RankRelation} yields that $(u \lor u)(U_1) \geq 2\mu_1\abs{U_1}$, so $(u \lor u)(U_1) = 2\mu_1\abs{U_1}$. This, combined with the rank lower bound on all $S \subseteq U_1$ given by~\eqref{eq:RankRelation}, shows that the vector $2\mu_1 \mathbf{1}$ is contained in $\calP_B((\calU \lor \calU) \given U_1)$.

Hence, by~\Cref{fact:PolytopeSampling}, there exists a distribution $\calD$ over bases of $(\calU \lor \calU) \given U_1$ where $\Pr_{S \sim \calD}[j \in S] = 2\mu_1$ for all $j \in U_1$. By definition of $\calU \lor \calU$ and the fact that $(u \lor u)(U_1) = 2u(U_1)$, each basis supported in $\calD$ can be partitioned into a pair of bases of $\calU \given U_1$. As this procedure for sampling pairs of bases of $\calU \given U_1$ includes each item $j \in U_1$ in exactly one basis w.p.\ $2\mu_1$ and in neither basis otherwise, it satisfies the desired condition.

If $\mu_1 > 1/2$, then consider the dual matroid $(\calU \given U_1)^* \subseteq 2^{U_1}$ with rank function $u^* : 2^{U_1} \to \mathbb{Z}_+$. By~\Cref{fact:DualRank,fact:PrincipalPartition}, we have for all $S \subseteq U_1$ that
\[
    u^*(S) \quad = \quad \abs{S} - u(U_1) + u(U_1 \setminus S) \quad \geq \quad \abs{S} - \mu_1\abs{U_1} + \mu_1\abs{U_1 \setminus S} \quad = \quad (1 - \mu_1)\abs{S} \enspace .
\]

Since $1 - \mu_1 < 1/2$, we may repeat the argument above on the dual matroid, and take complements to obtain a pair of bases in the original matroid. In this case, an item will be excluded from the bases in the dual matroid (and hence included in both bases of the original matroid) w.p.\ $1 - 2(1 - \mu_1) = 2\mu_1 - 1$, and will be included in exactly one basis of the original matroid otherwise. This yields the correct support and the correct expectation $2(2\mu_1 - 1) + 1(1 - (2\mu_1 - 1)) = 2\mu_1$.
\end{proof}

We now describe the simultaneous algorithm (which only uses $4m$ communication).
\begin{enumerate}[topsep=1pt,itemsep=1pt]
    \item Bidder $u$ samples $A, B \in \calU$ from a balanced distribution and sends it to the referee.
    \item Bidder $v$ (independently from bidder $u$) samples $C, D \in \calV$ from a balanced distribution and sends it to the referee.
    \item The referee mixes between the following procedures:
    \begin{enumerate}[topsep=1pt,itemsep=1pt]
        \item Choose a uniformly random set among $A, B, C, D$ and allocate it to the sender and the complement to the other bidder.
        \item Run the rounding scheme of~\Cref{cor:HalfIntegralRounding} on the sets $A, B, C, D$.
    \end{enumerate}
\end{enumerate}

\subsection{Analysis}

We now analyze the approximation guarantee of each procedure. Let $X \in \calU$ and $Y \in \calV$ be disjoint sets attaining the optimal welfare, let $W = X \cup Y$, and define
\[
    \Delta_{\vb{p}} \;\; = \sum_{j \in W} (1/2 - p_j)^2 \qquad \text{and} \qquad \Delta_{\vb{q}} \;\; = \sum_{j \in W} (1/2 - q_j)^2 \enspace .
\]
Assume WLOG that $W \ne \emptyset$, and define $\Delta = \Delta_{\vb{p}} + \Delta_{\vb{q}}$ and $\delta = \sqrt{2\Delta/\abs{W}}$. We will show that the contention resolution scheme in Procedure (a) works well when $\Delta$ is large, while the submodular rounding scheme in Procedure (b) works well when $\Delta$ is small.

\begin{lemma} \label{lemma:Guarantee1}
    In expectation, Procedure (a) guarantees welfare at least
    \[
        \bigg(\frac{3}{4} + \frac{\delta^2}{4}\bigg) \abs{W} \enspace .
    \]
\end{lemma}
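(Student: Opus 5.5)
The plan is to average the welfare of Procedure (a) over its four branches (the random choice among $A,B,C,D$, together with the balanced randomness of each pair). I will lower bound each branch using two facts: the rank of a basis, and a fractional lower bound on the complementary bidder's value. Consider the branch where $A$ is chosen. Bidder $u$ receives a basis, so $u(A)=u(M)$. Bidder $v$ receives $M\setminus A$, and I will lower bound $v(M\setminus A)$ by $\vb{q}(W\setminus A)$. The latter bound follows from \Cref{fact:PrincipalPartition}: each level satisfies $r_k(S)\ge \nu_k\abs{S}$ on the contracted restriction, and chaining these through the levels $V_1,\dots,V_t$ by submodularity gives $v(S)\ge \vb{q}(S)$ for every $S$. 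Balancedness gives $\Pr[j\in A]=p_j$, and the same holds for $B$. Hence the average of the $A$- and $B$-branches is at least $u(M)+\sum_{j\in W} q_j(1-p_j)$. Symmetrically, the average of the $C$- and $D$-branches is at least $v(M)+\sum_{j\in W}p_j(1-q_j)$.

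Next I bound the ranks in two ways. First, $u(M)\ge \abs{X}$ because $X\in\calU$. Second, $u(M)=\vb{p}(M)\ge \vb{p}(W)$ by \Cref{fact:PrincipalPartition}. The same two bounds hold for $v(M)$ with $Y$ and $\vb{q}$. I also keep a second per-branch bound that does not go through $\vb{q}$: in the $A$-branch, $v(M\setminus A)\ge \abs{Y\setminus A}$, so that branch's welfare is at least $\abs{X}+\abs{Y}-\abs{Y\cap A}$. The analogous bound holds in the $C$-branch. For each bidder I will take a convex combination of these rank bounds and complement bounds, with weights tuned so that at $p_j=q_j=1/2$ every item of $W$ contributes exactly $3/4$. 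The preliminary calculation already suggests this: $\tfrac12\big[\tfrac12+\tfrac12(p_j+q_j)+p_j+q_j-2p_jq_j\big]=3/4$ at $p_j=q_j=1/2$. The end goal is a per-item lower bound of the form $\sum_{j\in W} g(p_j,q_j)$ with $g(p,q)\ge \tfrac34+c\,[(\tfrac12-p)^2+(\tfrac12-q)^2]$, using identities such as $p(1-p)=\tfrac14-(\tfrac12-p)^2$.

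Summing this per-item bound over $W$ gives welfare at least $\tfrac34\abs{W}+c\Delta$, and since $\delta^2\abs{W}/4=\Delta/2$, the claim follows once $c\ge 1/2$. If the final bound comes out only in a form like $\big(\sum_j\abs{\tfrac12-p_j}\big)^2/\abs{W}$, I will convert it using Cauchy--Schwarz and the definition $\delta=\sqrt{2\Delta/\abs{W}}$.

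The main obstacle is choosing the convex combination so that it works uniformly in $(p_j,q_j)$. Any single bound fails somewhere: for example, the $\vb{p}(W)$ rank bound is terrible when $p_j,q_j\approx 0$, where one must instead fall back on $\abs{X}$ and $\abs{Y}$. Moreover the bounds are global, with ranks measured against $\abs{X}$ versus $\vb{p}(W)$, and do not split item by item across $X$ and $Y$. So the bounds may need to be mixed differently on $X$ and on $Y$, possibly by writing $u(M)\ge \vb{p}(Y)+\abs{X}-(\text{overlap})$ via $u(X\cup Y)\ge \abs{X}+u_X(Y)$. My first attempt will be the symmetric mix with weights $1/2,1/2$ on the two rank bounds. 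I will then check whether the resulting quadratic in $(p_j,q_j)$ is minimized at $(1/2,1/2)$ with curvature at least $1$. If it is not, I will adjust the weights separately for $X$-items and $Y$-items.
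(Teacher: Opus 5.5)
Your plan has a genuine gap: the inequalities you intend to combine are too weak, so the final pointwise check fails for every choice of weights. The individual bounds are valid. In particular, $v(S)\ge\vb{q}(S)$ does follow from chaining \Cref{fact:PrincipalPartition} with submodularity. The trouble is what they leave out. Take $X=\{x\}$ and $Y=\{y\}$ with $(p_x,q_x)=(\epsilon,\tfrac12)$ and $(p_y,q_y)=(\tfrac12,\epsilon)$. Consider every inequality on your list:
\begin{itemize}
    \item the rank bounds $u(M)\ge\abs{X}$ and $u(M)\ge\vb{p}(W)$;
    \item the $A$/$B$ bounds $u(M)+\sum_{j\in W}q_j(1-p_j)$ and $u(M)+\abs{Y}-\vb{p}(Y)$, or its weakening $\abs{X}+\abs{Y}-\vb{p}(Y)$;
    \item the mirror images of these for $v$.
\end{itemize}
All of them hold simultaneously under the assignment $u(M)=v(M)=1$ with each pair of branches given expected welfare $3/2$. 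Hence any nonnegative combination of them certifies at most $3/2=\tfrac34\abs{W}$ at these $(p,q)$. This holds whether the weights are separate on $X$- and $Y$-items or even instance-dependent. The target, however, is $\tfrac34\abs{W}+\Delta/2=\tfrac32+(\tfrac12-\epsilon)^2\approx\tfrac74$. So no $g$ you can produce satisfies $g(p,q)\ge\tfrac34+c[(\tfrac12-p)^2+(\tfrac12-q)^2]$ for any $c>0$ on both items.

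Such profiles do arise from genuine matroids with disjoint independent $X,Y$. For example, take direct sums of rank-one uniform matroids in which $x$ lies in a large block of $\calU$ and a two-element block of $\calV$, and $y$ the other way around. There the true $u(M)$ strictly exceeds both of your rank bounds. Nothing in your plan detects this, and it never invokes optimality of $X,Y$ to rule it out.

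The missing idea is a structural inequality tying an independent set of $\calU$ to its own principal vector:
\[
    \sum_{j\in X}(1-p_j)\quad\le\quad\sum_{j\in M}p_j(1-p_j)\qquad\text{for every } X\in\calU \enspace .
\]
To prove it, write $1-p_j=\sum_{k'\ge k}(\mu_{k'+1}-\mu_{k'})$ for $j\in U_k$, with $\mu_{s+1}=1$ and $X\cap U_0=\emptyset$. Then bound $\abs{X\cap U_{\le k}}\le u(U_{\le k})=\vb{p}(U_{\le k})$ level by level, and resum.

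Your fallback $u(M)\ge\abs{X}+\vb{p}(Y)-(\text{overlap})$ has the right shape: this inequality yields it with overlap $\sum_{j\in W}p_j(1-p_j)$. But the contraction identity $u(X\cup Y)=\abs{X}+u_X(Y)$ alone does not produce that term; the level-by-level argument does.

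With this inequality, neither the $\vb{q}$-based bound nor any mixing is needed. Keep $u(A)=\abs{A}$, whose expectation is $\vb{p}(M)$, so $\vb{p}(\closure{W})$ is retained. Do not weaken it to $\abs{X}$ or $\vb{p}(W)$. The $A$/$B$ branches then give
\[
    \abs{Y}+\E[\abs{A\setminus Y}]\quad=\quad\abs{W}-\sum_{j\in X}(1-p_j)+\vb{p}(\closure{W})\quad\ge\quad\abs{W}-\sum_{j\in W}p_j(1-p_j)\quad=\quad\tfrac34\abs{W}+\Delta_{\vb{p}} \enspace .
\]
The inequality step uses $p_j\ge p_j(1-p_j)$ for $j\notin W$. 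Symmetrically, the $C$/$D$ branches give $\tfrac34\abs{W}+\Delta_{\vb{q}}$, and averaging yields $(\tfrac34+\tfrac{\delta^2}{4})\abs{W}$.
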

\begin{proof}
Suppose $R \in \{A, B\}$ is the set chosen by Procedure (a). Observe that $R$ is drawn from a distribution over bases with marginals $\vb{p}$. Recall that $U_0, U_1, \dots, U_s$ is the principal partition of $\calU$ with densities $0 = \mu_0 < \mu_1 < \dots < \mu_s \leq 1 \eqqcolon \mu_{s+1}$. By~\Cref{fact:PrincipalPartition},
\begin{align} \label{eq:PrincipalPartition}
\begin{split}
    \sum_{j \in X} (1 - p_j) \quad &= \quad \sum_{k \in [s]} (1 - \mu_k)\abs{X \cap U_k} \\
    &= \quad \sum_{k \in [s]} (\mu_{k+1} - \mu_k)\abs{X \cap U_{\leq k}} \\
    &\leq \quad \sum_{k \in [s]} (\mu_{k+1} - \mu_k) u(U_{\leq k}) \\
    &= \quad \sum_{k \in [s]} (\mu_{k+1} - \mu_k) \vb{p}(U_{\leq k}) \quad = \quad \sum_{j \in M} p_j(1 - p_j) \enspace .
\end{split}
\end{align}

Then we can lower bound the expected welfare of the algorithm by
\begin{align*}
    \abs{Y} + \E[\abs{R \cap \closure{Y}}] \quad &= \quad \abs{W} - \abs{X} + \E[\abs{R \cap X}] + \E[\abs{R \cap \closure{W}}] \\
    &= \quad \abs{W} - \sum_{j \in X} (1 - p_j) + \sum_{j \not\in W} p_j \\
    &\geq \quad \abs{W} - \sum_{j \in M} p_j (1 - p_j) + \sum_{j \not\in W} p_j \\
    &\geq \quad \abs{W} - \sum_{j \in W} p_j (1 - p_j) \\
    &= \quad \frac{3}{4}\abs{W} + \sum_{j \in W} (p_j - 1/2)^2 \\
    &= \quad \frac{3}{4}\abs{W} + \Delta_{\vb{p}}\enspace ,
\end{align*}
where the third line is by~\eqref{eq:PrincipalPartition}. Repeating the argument for $R \in \{C, D\}$ and averaging completes the proof.
\end{proof}

\begin{lemma} \label{lemma:Guarantee2}
    In expectation, Procedure (b) guarantees welfare at least
    \[
        \bigg(\frac{5}{6} - \frac{\delta}{6} - \frac{\delta^2}{6}\bigg) \abs{W} \enspace .
    \]
\end{lemma}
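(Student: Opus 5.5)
The plan is to take expectations of the welfare bound in \Cref{cor:HalfIntegralRounding} over the independent balanced samples $(A,B)$ and $(C,D)$. Then I will compute the three expected quantities coordinatewise from the balanced property. Finally, I will compare them with $\abs{W}$ using the principal-partition inequality \eqref{eq:PrincipalPartition}, as in \Cref{lemma:Guarantee1}.

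\textbf{Step 1: coordinatewise expectations.} By linearity, Procedure (b) gets at least $\tfrac12\E\abs{A\cup B}+\tfrac12\E\abs{C\cup D}-\tfrac16\E\abs{(A\diff B)\cap(C\diff D)}$. Balancedness gives the following.
\begin{itemize}
\item If $p_j\le 1/2$, then $\One(j\in A)+\One(j\in B)\in\{0,1\}$, so $\Pr[j\in A\cup B]=\Pr[j\in A\diff B]=2p_j$.
\item If $p_j>1/2$, the count lies in $\{1,2\}$, so $\Pr[j\in A\cup B]=1$ and $\Pr[j\in A\diff B]=2-2p_j$.
\end{itemize}
Writing $a_j=1-\abs{1-2p_j}$ and $b_j=1-\abs{1-2q_j}$, this yields
\[
\E\abs{A\cup B}=\sum_{j\in M}\min\{2p_j,1\}\quad\text{and}\quad \Pr[j\in A\diff B]=a_j .
\]
Since the two bidders sample independently, $\E\abs{(A\diff B)\cap(C\diff D)}=\sum_j a_jb_j$.

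\textbf{Step 2: discard items outside $W$.} For $j\notin W$, the net contribution is $\tfrac12\min\{2p_j,1\}+\tfrac12\min\{2q_j,1\}-\tfrac16a_jb_j$. Because $a_j\le\min\{2p_j,1\}$ and $b_j\le 1$, this is at least $\tfrac12 a_j+\tfrac12 b_j-\tfrac16 a_j\ge 0$, so these items can be dropped. For $j\in W$, I set $x_j=p_j-1/2$ and $y_j=q_j-1/2$, so that $\Delta=\sum_{W}(x_j^2+y_j^2)$ and $\delta\abs W=\sqrt{2\Delta\abs W}\ge\sum_W(\abs{x_j}+\abs{y_j})$ by Cauchy--Schwarz.

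\textbf{Step 3: the main obstacle.} The naive per-item bound is $\min\{2p_j,1\}\ge 1-2\abs{x_j}$. It gives $\tfrac56-\tfrac23(\abs{x_j}+\abs{y_j})-\tfrac23\abs{x_jy_j}$ per item, and the coefficient $2/3$ is too large. The obstacle is that items with $p_j<1/2$ lose union mass while items with $p_j>1/2$ lose only contention mass. To fix this, I would not lower bound $\E\abs{A\cup B}$ pointwise on $W$. Instead I would use that $A,B$ are bases, which brings in the matroid structure through \eqref{eq:PrincipalPartition}: $\sum_{j\in X}(1-p_j)\le\sum_{j\in M}p_j(1-p_j)$. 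This converts deficits of $u$ on $X$ into surplus from items with $p_j$ bounded away from $0$, including items outside $X$. I expect that for $p_j\le1/2$ one gets $\min\{2p_j,1\}-p_j(1-p_j)\ge p_j$, with the analogous bound for $p_j>1/2$. Together these should give
\[
\tfrac12\E\abs{A\cup B}\ \ge\ \tfrac12\abs X+\tfrac12\textstyle\sum_{W}(\text{nonnegative slack}),
\]
and symmetrically for $C\cup D$ with $Y$.

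\textbf{Step 4: remaining algebra.} I then need the collected slack to beat the contention term, namely
\[
\tfrac16\sum_W a_jb_j\ \le\ \tfrac16\abs W-\tfrac13\sum_W(\abs{x_j}+\abs{y_j})+\tfrac23\sum_W\abs{x_jy_j},
\]
up to an error of $\tfrac16\delta\abs W+\tfrac16\delta^2\abs W$. This would use $\sum_W(\abs{x_j}+\abs{y_j})\le\delta\abs W$ and $\sum_W\abs{x_jy_j}\le\tfrac12\sum_W(x_j^2+y_j^2)=\tfrac14\delta^2\abs W$. If the slack from Step 3 turns out too small on mixed items (one of $p_j,q_j$ above $1/2$ and the other below), my fallback is a case split on the sign pattern of $(x_j,y_j)$. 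In that case I would charge the $\tfrac16$-contention losses of such items directly to $\tfrac16\abs{x_j}+\tfrac16\abs{y_j}$.
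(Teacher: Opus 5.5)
\textbf{There is a genuine gap: Step 2 throws away the items outside $W$, and the lemma cannot be proved without them.} Your Step 1 computations and your Step 4 estimates are fine. The problem is the accounting for $M\setminus W$.

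\textbf{Why dropping $M\setminus W$ fails.} You discard these items after showing only that their net contribution is nonnegative. The proof needs each of them to contribute at least $\tfrac12(p_j+q_j)$. That mass, through $\abs{X}\le u(M)=\vb{p}(M)$ and $\abs{Y}\le\vb{q}(M)$, is what pays for the $\tfrac12\abs{W}$ part of the bound. Once the items are dropped, the remaining inequality is false:
\begin{itemize}
\item Take $\calV$ to be the rank-$0$ matroid, so $\vb{q}=\vb{0}$ and $C=D=\emptyset$.
\item Take $\calU$ to be the uniform matroid of rank $r$ on $n$ items with $r/n=\epsilon<1/2$, so $\vb{p}\equiv\epsilon$ and $W=X$.
\item Your restricted sum is $\tfrac12\sum_{j\in W}\min\{2p_j,1\}=\epsilon r$.
\item Since $\delta^2=1-2\epsilon+2\epsilon^2\le 1$, the target is at least $r/2$.
\end{itemize}
Almost all of $A\cup B$ lies outside $W$.

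\textbf{Step 3 does not repair this.} It tacitly re-imports $M\setminus W$, since both $\E\abs{A\cup B}$ and the right side of \eqref{eq:PrincipalPartition} are sums over $M$. But then the union mass on $M\setminus W$ is used twice: once in Step 2 against its own contention, and once against $\abs{X}$. Your plan never shows it suffices for both.

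The slack you extract is also too small. The inequalities you write, $\min\{2p_j,1\}-p_j(1-p_j)\ge p_j$ together with \eqref{eq:PrincipalPartition}, give only $\E\abs{A\cup B}\ge\abs{X}+\vb{p}(M\setminus X)$, with no slack on $X$. Consider two copies of the partition matroid ``at most one element from each of $n/2$ disjoint pairs,'' with $X,Y$ complementary transversals. Then $\vb{p}=\vb{q}\equiv\tfrac12$, $\delta=0$ and $W=M$, and your bound yields $\tfrac{7}{12}\abs{W}$ rather than $\tfrac56\abs{W}$. The Step 4 fallback, a case split on signs, is not an argument.

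\textbf{The missing idea} is an exact split of the union mass at every $j\in M$: $\min\{2p_j,1\}=p_j+(\tfrac12-\abs{x_j})$.
\begin{itemize}
\item The part $p_j+q_j$, summed over all of $M$, covers $\abs{W}$ via $\abs{W}\le\vb{p}(M)+\vb{q}(M)$. This follows directly from \Cref{fact:PrincipalPartition}, since $u(M)=\vb{p}(M)$; \eqref{eq:PrincipalPartition} is not needed.
\item The remaining part dominates contention at every item. Using $\abs{x_j},\abs{y_j}\le\tfrac12$,
\[
\pi_j \;:=\; \min\{2p_j,1\}+\min\{2q_j,1\}-\tfrac13 a_jb_j \;=\; p_j+q_j+\tfrac23-\tfrac13\abs{x_j}-\tfrac13\abs{y_j}-\tfrac43\abs{x_j}\abs{y_j} \;\ge\; p_j+q_j .
\]
\end{itemize}
By your Step 1, Procedure (b) earns at least $\tfrac12\vb{\pi}(M)$. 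Moreover
\[
\vb{\pi}(M)\ \ge\ \vb{\pi}(W)+\vb{p}(M\setminus W)+\vb{q}(M\setminus W)\ \ge\ \sum_{j\in W}(\pi_j+1-p_j-q_j),
\]
and each summand equals $\tfrac53-\tfrac13(\abs{x_j}+\abs{y_j})-\tfrac43\abs{x_j}\abs{y_j}$. Halve, then apply your bounds $\sum_{W}(\abs{x_j}+\abs{y_j})\le\delta\abs{W}$ and $\sum_{W}\abs{x_jy_j}\le\tfrac14\delta^2\abs{W}$. This gives exactly $(\tfrac56-\tfrac{\delta}{6}-\tfrac{\delta^2}{6})\abs{W}$, with no case analysis on the sign pattern of $(x_j,y_j)$. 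This is the paper's proof.
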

\begin{proof}
Define $\pi_j = \Pr[j \in A \cup B] + \Pr[j \in C \cup D] - \Pr[j \in (A \diff B) \cap (C \diff D)]/3$. By~\Cref{cor:HalfIntegralRounding}, Procedure (b) guarantees expected welfare at least
\[
    \frac{1}{2} \E[\abs{A \cup B}] + \frac{1}{2} \E[\abs{C \cup D}] - \frac{1}{6} \E[\abs{(A \diff B) \cap (C \diff D)}] \quad = \quad \frac{1}{2} \vb{\pi}(M) \enspace .
\]
Since we sample $A, B \in \calU$ from a balanced distribution, we have
\begin{align*}
    \Pr[j \in A \cup B] \quad &= \quad \min\{1, 2p_j\} \quad = \quad p_j + 1/2 - \abs{1/2 - p_j} \\
    \Pr[j \in A \cap B] \quad &= \quad \max\{0, 2p_j - 1\} \quad = \quad p_j - 1/2 + \abs{1/2 - p_j} \\
    \Pr[j \in A \diff B] \quad &= \quad \Pr[j \in A \cup B] - \Pr[j \in A \cap B] \quad = \quad 1 - 2\abs{1/2 - p_j} \enspace ,
\end{align*}
and similarly for $C, D$. Then observe that we have the lower bound
\begin{align}
\begin{split} \label{eq:ItemContributionLowerBound}
   \pi_j \quad &= \quad p_j + q_j + 1 - \abs{1/2 - p_j} - \abs{1/2 - q_j} - \frac{1}{3}(1 - 2\abs{1/2 - p_j})(1 - 2\abs{1/2 - q_j}) \\
   &= \quad p_j + q_j + \frac{2}{3} - \frac{1}{3}\abs{1/2 - p_j} - \frac{1}{3}\abs{1/2 - q_j} - \frac{4}{3}\abs{1/2 - p_j}\abs{1/2 - q_j} \\
   &\geq \quad p_j + q_j + \frac{2}{3} - \frac{1}{6} - \frac{1}{6} - \frac{1}{3} \\
   &= \quad p_j + q_j \enspace .
\end{split}
\end{align}
Finally, since $X \in \calU$ and $Y \in \calV$,
\begin{equation} \label{eq:TotalContributionBound}
    \abs{W} \quad \leq \quad \vb{p}(M) + \vb{q}(M) \quad = \quad \vb{p}(W) + \vb{q}(W) + \vb{p}(\closure{W}) + \vb{q}(\closure{W}) \enspace .
\end{equation}
Therefore, combining~\Cref{eq:ItemContributionLowerBound,eq:TotalContributionBound} yields
\begin{align*}
    \vb{\pi}(M) \quad &= \quad \vb{\pi}(W) + \vb{\pi}(\closure{W}) \quad \geq \quad \vb{\pi}(W) + \vb{p}(\closure{W}) + \vb{q}(\closure{W}) \quad \geq \quad \vb{\pi}(W) + \abs{W} - \vb{p}(W) - \vb{q}(W) \enspace .
\end{align*}

Substituting and simplifying, we have
\begin{align*}
    \vb{\pi}(M) \quad &\geq \quad \sum_{j \in W} \Big(\pi_j + 1 - p_j - q_j\Big) \\
    &= \quad \sum_{j \in W} \bigg(\frac{5}{3} - \frac{1}{3} \abs{1/2-p_j} - \frac{1}{3} \abs{1/2-q_j} - \frac{4}{3} \abs{1/2-p_j}\abs{1/2-q_j}\bigg) \\
    &\geq \quad \sum_{j \in W} \bigg(\frac{5}{3} - \frac{1}{3} \Big(\abs{1/2-p_j} + \abs{1/2-q_j}\Big) - \frac{2}{3} \Big(\abs{1/2-p_j}^2 + \abs{1/2-q_j}^2\Big)\bigg) \\
    &\geq \quad \frac{5}{3} \abs{W} - \frac{1}{3} \sum_{j \in W} \Big(\abs{1/2-p_j} + \abs{1/2-q_j}\Big) - \frac{2}{3} \Delta \\
    &\geq \quad \frac{5}{3} \abs{W} - \frac{1}{3} \sqrt{2\Delta\abs{W}} - \frac{2}{3} \Delta \enspace ,
\end{align*}
where the third line follows by $a^2 + b^2 \geq 2ab$, and the fifth line follows by Cauchy-Schwarz. Dividing by two and substituting in $\delta$ completes the proof.
\end{proof}

The optimized algorithm chooses Procedure (a) w.p.\ $(4+\sqrt{6})/10$ and Procedure (b) otherwise. By~\Cref{lemma:Guarantee1,lemma:Guarantee2}, this yields an approximation ratio of at least
\begin{align*}
    & \frac{4 + \sqrt{6}}{10}\bigg(\frac{3}{4} + \frac{\delta^2}{4}) + \frac{6 - \sqrt{6}}{10}\bigg(\frac{5}{6} - \frac{\delta}{6} - \frac{\delta^2}{6}\bigg) \\
    & \qquad\qquad\qquad\qquad\qquad = \quad \frac{96 - \sqrt{6}}{120} - \frac{6 - \sqrt{6}}{60} \delta + \frac{\sqrt{6}}{24} \delta^2 \\
    & \qquad\qquad\qquad\qquad\qquad = \quad \frac{96 - \sqrt{6}}{120} + \frac{\sqrt{6}}{24}\bigg(\delta^2 - 2\frac{\sqrt{6} - 1}{5}\delta\bigg) \\
    & \qquad\qquad\qquad\qquad\qquad = \quad \frac{96 - \sqrt{6}}{120} - \frac{\sqrt{6}}{24}\bigg(\frac{\sqrt{6} - 1}{5}\bigg)^2 + \frac{\sqrt{6}}{24}\bigg(\delta - \frac{\sqrt{6} - 1}{5}\bigg)^2 \\
    & \qquad\qquad\qquad\qquad\qquad \geq \quad \frac{96 - \sqrt{6}}{120} - \frac{\sqrt{6}}{24}\bigg(\frac{\sqrt{6} - 1}{5}\bigg)^2 \\
    & \qquad\qquad\qquad\qquad\qquad = \quad \frac{5(96 - \sqrt{6}) - 6\sqrt{6} + 12 - \sqrt{6}}{600} \\
    & \qquad\qquad\qquad\qquad\qquad = \quad \frac{492 - 12\sqrt{6}}{600} \\
    & \qquad\qquad\qquad\qquad\qquad = \quad \frac{41 - \sqrt{6}}{50} \enspace ,
\end{align*}
which completes the proof of~\Cref{thm:TwoBidderSimultaneousAlgorithm}.

\section{DSIC Mechanism for Two Submodular Valuations}

We first define some properties of submodular valuations. When convenient, we may denote a singleton set $\{j\}$ simply by $j$.

\begin{definition}
    For a submodular valuation $v : 2^M \to \mathbb{R}_+$, the \emph{contraction} of $v$ on $T \subseteq M$ is denoted $v(\cdot \given T) : 2^M \to \mathbb{R}_+$, where $v(S \given T) = v(S \cup T) - v(T)$.
\end{definition}

\begin{fact}
    Submodularity is preserved under contraction.
\end{fact}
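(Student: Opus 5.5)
The plan is to check the defining property of submodularity directly for $v(\cdot \given T)$, together with the standing requirements that a valuation be non-negative and monotone. The key observation is that marginal values of the contraction are marginal values of $v$ evaluated at shifted sets.

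Fix $T \subseteq M$, a set $S \subseteq M$ and an item $j \in M$. The first step is to compute the marginal value:
\[
    v(S \cup \{j\} \given T) - v(S \given T) \quad = \quad v(S \cup T \cup \{j\}) - v(S \cup T) \enspace ,
\]
because the $v(T)$ terms cancel. So the marginal value of $j$ for $v(\cdot \given T)$ at $S$ equals the marginal value of $j$ for $v$ at $S \cup T$.

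Next, take $S \subseteq S'$. Then $S \cup T \subseteq S' \cup T$. By submodularity of $v$, its marginal value of $j$ at $S' \cup T$ is at most its marginal value at $S \cup T$. Hence the marginal value of $j$ for $v(\cdot \given T)$ is non-increasing in $S$, which is exactly the definition of submodularity. The case $j \in T$ needs no separate treatment: both marginals are then $0$, and the identity above still covers it.

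Finally, I would verify that $v(\cdot \given T)$ is a valuation in the paper's sense. Since $S \cup T \supseteq T$, monotonicity of $v$ gives $v(S \given T) = v(S \cup T) - v(T) \geq 0$. The marginal values computed above are non-negative, again by monotonicity of $v$, so $v(\cdot \given T)$ is monotone non-decreasing. Also $v(\emptyset \given T) = 0$. I do not expect a genuine obstacle; the only point requiring care is recognizing that the shift $S \mapsto S \cup T$ preserves set inclusion.
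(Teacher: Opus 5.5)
Your proof is correct: the identity $v(S \cup \{j\} \given T) - v(S \given T) = v(S \cup T \cup \{j\}) - v(S \cup T)$, together with the inclusion $S \cup T \subseteq S' \cup T$, is the standard verification, and your checks of non-negativity, monotonicity, and $v(\emptyset \given T) = 0$ cover what the paper needs downstream (the contracted valuations are then also subadditive, as the EP2A requires). The paper states this as a standard fact without proof, so your argument does not depart from anything in the paper.
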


\begin{definition}[Extended-Parallel Second-Price Auction]
    The \emph{extended-parallel second-price auction} (EP2A) is the deterministic mechanism which does the following:
    \begin{enumerate}[topsep=1pt,itemsep=1pt]
        \item Ask each bidder $i$ to report $v_i(j)$ for all $j \in M$.
        \item For all $j \in M$, let $p_j$ be the second highest reported value for $j$.
        \item Let $M_i$ be the set of items for which bidder $i$ is the unique highest bidder. Allocate to each bidder $i$ the set $S_i \subseteq M_i$ which maximizes $v_i(S_i) - \vb{p}(S_i)$.
    \end{enumerate}
\end{definition}

\begin{fact}[\cite{FikiorisF20}] \label{fact:EP2ADSIC}
    The EP2A is DSIC for subadditive valuations.
\end{fact}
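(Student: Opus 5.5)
Since the EP2A is a single-round direct mechanism, I would show directly that truthful reporting is a best response against every fixed report of the other bidders. Whatever the others do, they submit fixed reports. So fix bidder $i$ and any reports of the other bidders. For each item $j$, let $q_j$ be the highest value reported for $j$ by a bidder other than $i$. Bidder $i$ cannot influence $q_j$.

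\emph{What any report of bidder $i$ can achieve.} Suppose bidder $i$ reports singleton values $b_j$, together with whatever reported valuation is used in Step 3.
\begin{itemize}
\item Bidder $i$ is the unique highest bidder on $j$ exactly when $b_j > q_j$, so $M_i = \{j : b_j > q_j\}$.
\item On such an item the second-highest report is exactly $q_j$, so the price is $p_j = q_j$, independent of $b_j$.
\item Bidder $i$ then receives some set $S \subseteq M_i$ and pays $\vb{q}(S)$.
\end{itemize}
Hence every strategy of bidder $i$ yields true utility $v_i(S) - \vb{q}(S)$ for some $S \subseteq M$. Consequently, no deviation can give more than $\max_{T \subseteq M} \{v_i(T) - \vb{q}(T)\}$.

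\emph{Truthful reporting attains this maximum.} Under truthful reporting, $M_i^* = \{j : v_i(j) > q_j\}$, and Step 3 selects $S \subseteq M_i^*$ maximizing the true utility $v_i(S) - \vb{q}(S)$ over subsets of $M_i^*$. It therefore suffices to prove
\[
\max_{T \subseteq M_i^*} \big(v_i(T) - \vb{q}(T)\big) \;=\; \max_{T \subseteq M} \big(v_i(T) - \vb{q}(T)\big).
\]
Take any $T \subseteq M$ and set $T' = T \cap M_i^*$. By subadditivity,
\[
v_i(T) \le v_i(T') + \sum_{j \in T \setminus T'} v_i(j).
\]
Every $j \in T \setminus T'$ satisfies $v_i(j) \le q_j$. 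Combining these gives
\[
v_i(T) - \vb{q}(T) \le v_i(T') - \vb{q}(T') + \sum_{j \in T \setminus T'} \big(v_i(j) - q_j\big) \le v_i(T') - \vb{q}(T'),
\]
which proves the claim.

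\emph{Expected obstacle.} The key step is exactly this use of subadditivity: it shows that items bidder $i$ does not truthfully win are never worth buying at price $q_j$. Everything else is bookkeeping. The one point needing care is the tie-breaking convention. Items with $v_i(j) = q_j$ are excluded from $M_i^*$ by the ``unique highest'' rule. This is harmless, because the displayed inequality already allows dropping such items at no loss.
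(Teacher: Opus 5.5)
Your argument is correct. It is the standard proof: the prices on bidder $i$'s winning items are the fixed values $q_j$ set by the others' bids, and subadditivity shows that items with $v_i(j) \le q_j$ are never worth buying. The paper does not prove this fact itself but imports it from Fikioris--Fotakis, so there is no in-paper argument to compare against.

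One point needs tightening for this paper's notion of DSIC. The EP2A is not single-round here: Step 3 is a demand query, and the paper counts two demand queries in its two-bidder mechanism. Because of this, you should say explicitly why contingent behavior by the other bidders cannot hurt bidder $i$:
\begin{itemize}
\item Their first-round reports are sent simultaneously with bidder $i$'s, so they fix $q_j$ no matter what bidder $i$ reports.
\item Their later messages may depend on bidder $i$'s report, but they only select subsets of their own sets $M_{i'}$. These sets are disjoint from $M_i$, so the later messages change neither bidder $i$'s allocation nor the prices on $M_i$.
\end{itemize}
With that remark added, your reduction to a single fixed price vector $q$ is fully justified.
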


We now describe a DSIC mechanism for two submodular valuations. Define $\varphi = (\sqrt{5}-1)/2$ to be the inverse golden ratio. Let $v_1, v_2 : 2^M \to \mathbb{R}_+$ be submodular. For disjoint sets $S, T \subseteq M$, define $v_i(S \given T) = v_i(S \cup T) - v_i(T)$ to be the marginal value of $S$ when holding $T$.
\begin{enumerate}[topsep=1pt,itemsep=1pt]
    \item Sample a random partition $A_1, A_2, B$, where independently for each $j \in M$, assign $j$ to $A_1, A_2, B$ w.p.\ $\varphi^2, \varphi^2, \varphi^3$, respectively. Note that $2\varphi^2 + \varphi^3 = 1$.
    \item Ask each bidder $i$ to report $v_i(j \given A_i)$ for all $j \in B$.
    \item Let $p_j = \min_{i \in \{1, 2\}} \{v_i(j \given A_i)\}$ for all $j \in B$, and define $B_i = \{j \in B : v_i(j \given A_i) > p_j\}$.
    \item Allocate $A_i \cup S_i$ to bidder $i$, where $S_i = \argmax_{S \subseteq B_i} \{v_i(S \given A_i) - \vb{p}(S)\}$. Bidder $i$ pays $\vb{p}(S_i)$.
\end{enumerate}
Since the only communication is $2\abs{B} \leq 2m$ marginal values and two demand queries, the mechanism runs in polynomial communication. The mechanism is IR because each bidder always has the option to purchase nothing from $B$, and the mechanism is NNT because $\vb{p}$ has non-negative entries. Finally, fixing any partition $A_1, A_2, B$, the mechanism runs the EP2A on items $B$ and valuations $v_1(\cdot \given A_1), v_2(\cdot \given A_2)$. Since submodularity is preserved under contraction,~\Cref{fact:EP2ADSIC} tells us that the mechanism is DSIC for all fixed partitions. As the only randomness is in the choice of partition, the mechanism is universally DSIC.

\subsection{Analysis}

Let $X_1, X_2$ be a partition of $M$ that attains the optimal welfare. Define $\vb{w}_i \in \mathbb{R}_+^m$ so that $\vb{w}_i(S) \leq v_i(S \cap X_i)$ for all $S \subseteq M$, with equality holding at $S = M$. Also define $\vb{w}_{i,A_i} \in \mathbb{R}_+^m$ so that $\vb{w}_{i,A_i}(S) \leq v_i(S \cap X_i \given A_i)$ for all $S \subseteq M \setminus A_i$, with equality holding at $S = M \setminus A_i$. Such vectors exist because all submodular valuations are XOS.

We define $\Delta_i = \E[v_i(A_i) - \vb{w}_i(A_i)]$ and $\Delta = \Delta_1 + \Delta_2$. Roughly speaking, $\Delta_i$ captures how much value the random endowment $A_i$ gives to bidder $i$ over the worst-case additive function with value only on $X_i$. We will show that when $\Delta$ is large, the contribution from $A_1, A_2$ is large, and when $\Delta$ is small, the contribution from $B$ is large.

Fix a realization of $A_1, A_2, B$. Consider the following deviation: when asked for $v_i(j \given A_i)$, report $w_{i,A_i,j}$ instead. Then since $w_{i,A_i,j} = 0$ for $j \not\in X_i$, bidder $i$ receives utility at least
\[
    v_i(A_i) + \sum_{j \in B \cap X_i} \Big(w_{i,A_i,j} - v_{-i}(j \given A_{-i})\Big) \enspace .
\]
Therefore, by DSIC, the utility of truthful play must be at least as much. By NNT, utility lower bounds value. Therefore, summing and taking an expectation yields the welfare lower bound
\begin{equation} \label{eq:SubmodularWelfareBound}
    \sum_{i=1}^2 \E[v_i(A_i)] + \sum_{i=1}^2 \E[\vb{w}_{i,A_i}(B \cap X_i)] - \sum_{i=1}^2 \E\bigg[\sum_{j \in B \cap X_i} v_{-i}(j \given A_{-i})\bigg] \enspace .
\end{equation}

\begin{lemma} \label{lemma:EndowmentWelfare}
    \[
        \E[v_i(A_i)] \quad \geq \quad \varphi^2 \cdot v_i(X_i) + \Delta_i \enspace .
    \]
\end{lemma}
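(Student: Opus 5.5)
This is essentially an unpacking of the definition of $\Delta_i$. By definition $\Delta_i = \E[v_i(A_i) - \vb{w}_i(A_i)]$, so
\[
    \E[v_i(A_i)] \quad = \quad \E[\vb{w}_i(A_i)] + \Delta_i \enspace .
\]
It therefore suffices to show $\E[\vb{w}_i(A_i)] \geq \varphi^2 \cdot v_i(X_i)$.

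First I will use that $\vb{w}_i$ is supported on $X_i$. Taking $S = \{j\}$ for $j \notin X_i$ gives $w_{i,j} \leq v_i(\emptyset) = 0$, so $\vb{w}_i$ vanishes off $X_i$. Taking $S = M$ gives equality, so $\vb{w}_i(M) = \vb{w}_i(X_i) = v_i(X_i)$.

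Next, by step (1) of the mechanism each item $j$ lands in $A_i$ independently with probability $\varphi^2$. Linearity of expectation then gives
\[
    \E[\vb{w}_i(A_i)] \quad = \quad \sum_{j \in M} \varphi^2 w_{i,j} \quad = \quad \varphi^2 \vb{w}_i(M) \quad = \quad \varphi^2 v_i(X_i) \enspace ,
\]
which is in fact an equality. Combining with the first display yields the lemma, with equality. Monotonicity of $v_i$ is not even needed, since the statement absorbs the gap into $\Delta_i$.

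The only thing to check is that $\vb{w}_i$ is a fixed (non-random) vector, chosen before sampling the partition. This holds because $\vb{w}_i$ depends only on $v_i$ and $X_i$ (it is the XOS clause supporting $v_i$ at $X_i$), so the expectation distributes as above. I do not expect a real obstacle here. The substantive work is presumably in the companion bound on the $B$-part of~\eqref{eq:SubmodularWelfareBound}, where $\Delta$ must be paid back.
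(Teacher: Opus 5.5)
Your proof is correct and matches the paper's: both write $\E[v_i(A_i)] = \E[\vb{w}_i(A_i)] + \Delta_i$ and use linearity of expectation with per-item probability $\varphi^2$ to get $\varphi^2 v_i(X_i)$, so the bound in fact holds with equality. Your support-on-$X_i$ step is harmless but unnecessary, since $\vb{w}_i(M) = v_i(M \cap X_i) = v_i(X_i)$ already follows from the equality condition at $S = M$.
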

\begin{proof}
By definition of $\Delta_i$ and the fact that each item is in $A_i$ w.p.\ $\varphi^2$,
\[
    \E[v_i(A_i)] \quad = \quad \E[\vb{w}_i(A_i)] + \Delta_i \quad = \quad \varphi^2 \cdot \vb{w}_i(X_i) + \Delta_i \quad = \quad \varphi^2 \cdot v_i(X_i) + \Delta_i \enspace . \qedhere
\]
\end{proof}

\begin{lemma} \label{lemma:EP2AWelfare}
    \[
        \E[\vb{w}_{i,A_i}(B \cap X_i)] \quad \geq \quad \frac{\varphi}{1 + \varphi} \E[v_i(X_i \given A_i)] \enspace .
    \]
\end{lemma}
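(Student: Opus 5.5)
The plan is to condition on the endowment $A_i$ and use the fact that, given $A_i$, the additive vector $\vb{w}_{i,A_i}$ is fixed, while membership of the remaining items in $B$ is still independent and unbiased across items. Since $\vb{w}_{i,A_i}$ is defined only in terms of $v_i$, $X_i$ and $A_i$, it is a deterministic function of $A_i$. The random set $B$, however, still carries fresh randomness on $M \setminus A_i$. So it should suffice to compute $\E[\vb{w}_{i,A_i}(B \cap X_i) \given A_i]$ exactly and then take an outer expectation.

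First, I record two properties of $\vb{w}_{i,A_i}$.

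\begin{itemize}[topsep=1pt,itemsep=1pt]
    \item Its support lies in $X_i \setminus A_i$. For $j \in M \setminus A_i$ with $j \notin X_i$, the defining inequality applied to $S = \{j\}$ gives $w_{i,A_i,j} \leq v_i(\emptyset \given A_i) = 0$, and the entries are non-negative.
    \item Equality at $S = M \setminus A_i$, together with the support property, gives $\sum_{j \in X_i \setminus A_i} w_{i,A_i,j} = v_i(X_i \given A_i)$.
\end{itemize}

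Next, I compute the conditional membership probability. In step (1) of the mechanism, each item is placed independently into $A_1, A_2, B$ with probabilities $\varphi^2, \varphi^2, \varphi^3$. Conditioned on $A_i$, each $j \notin A_i$ therefore lies in $B$ independently with probability
\[
    \frac{\varphi^3}{\varphi^2 + \varphi^3} \quad = \quad \frac{\varphi}{1+\varphi} \enspace .
\]
This uses only the identity $1 - \varphi^2 = \varphi^2 + \varphi^3$ from the construction. Since $\varphi^2 + \varphi = 1$, this probability also equals $\varphi^2$, but I keep the form $\varphi/(1+\varphi)$ to match the statement.

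By linearity of expectation, and because $\vb{w}_{i,A_i}$ is fixed given $A_i$,
\[
    \E[\vb{w}_{i,A_i}(B \cap X_i) \given A_i] \quad = \quad \frac{\varphi}{1+\varphi} \sum_{j \in X_i \setminus A_i} w_{i,A_i,j} \quad = \quad \frac{\varphi}{1+\varphi}\, v_i(X_i \given A_i) \enspace .
\]
Taking expectation over $A_i$ gives the lemma, in fact with equality.

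The only step needing care is the support claim. Once $\vb{w}_{i,A_i}$ is known to vanish off $X_i \setminus A_i$, summing over $B \cap X_i$ is the same as summing over $B \cap (M \setminus A_i)$. Without that claim, the equality condition at $M \setminus A_i$ could not be transferred to $X_i$, so I would check it first. The remaining point is bookkeeping: $B$ is disjoint from $A_i$, and its conditional law given $A_i$ is the product distribution computed above.
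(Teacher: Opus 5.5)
Your proof is correct and follows essentially the same route as the paper. Both condition on $A_i$, use that each item outside $A_i$ lands in $B$ independently with probability $\varphi^3/(\varphi^2+\varphi^3) = \varphi/(1+\varphi)$, and identify $\vb{w}_{i,A_i}(X_i \setminus A_i)$ with $v_i(X_i \given A_i)$. Your explicit check that $\vb{w}_{i,A_i}$ vanishes on $(M \setminus A_i) \setminus X_i$ spells out the step the paper uses implicitly in that identification.
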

\begin{proof}
Conditioned on $A_i$, each remaining item is in $B$ w.p.\ $\varphi^3/(\varphi^2+\varphi^3) = \varphi/(1+\varphi)$, so
\begin{align*}
    \E[\vb{w}_{i,A_i}(B \cap X_i)] \quad &= \quad \E\Big[\E[\vb{w}_{i,A_i}(B \cap X_i) \given A_i]\Big] \\
    &= \quad \frac{\varphi}{1+\varphi} \E\Big[\E[\vb{w}_{i,A_i}(X_i \setminus A_i) \given A_i]\Big] \\
    &= \quad \frac{\varphi}{1+\varphi} \E\Big[\E[v_i(X_i \given A_i) \given A_i]\Big] \\
    &= \quad \frac{\varphi}{1+\varphi} \E[v_i(X_i \given A_i)] \enspace . \qedhere
\end{align*}
\end{proof}

\begin{lemma} \label{lemma:EP2APayment}
    \[
        \E\bigg[\sum_{j \in B \cap X_i} v_{-i}(j \given A_{-i})\bigg] \quad \leq \quad \varphi \cdot \Delta_{-i} \enspace .
    \]
\end{lemma}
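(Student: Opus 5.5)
The plan is to bound $\Delta_{-i}$ from below by a sum of item marginals of the same form as the left-hand side, and then compare the two sums item by item using the independent assignment of items in Step (1). Throughout, write $A = A_{-i}$ and $v = v_{-i}$.

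\textbf{Step 1: lower bound on $\Delta_{-i}$.} Since $\vb{w}_{-i}(S) \leq v(S \cap X_{-i})$ for all $S$, taking $S = \{j\}$ with $j \in X_i$ gives $w_{-i,j} = 0$. Hence
\[
    \vb{w}_{-i}(A) \;=\; \vb{w}_{-i}(A \cap X_{-i}) \;\leq\; v(A \cap X_{-i}) \enspace .
\]
It follows that
\[
    \Delta_{-i} \;\geq\; \E\big[v(A) - v(A \cap X_{-i})\big] \;=\; \E\big[v(A \cap X_i \given A \cap X_{-i})\big] \enspace .
\]
Next, order the items of $A \cap X_i$ arbitrarily and telescope. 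By submodularity, the marginal of each item $j$ given its prefix (together with $A \cap X_{-i}$) is at least $v(j \given A \setminus \{j\})$, because that prefix is a subset of $A \setminus \{j\}$. This yields
\[
    \Delta_{-i} \;\geq\; \E\bigg[\sum_{j \in X_i} \One(j \in A)\, v\big(j \given A \setminus \{j\}\big)\bigg] \enspace .
\]

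\textbf{Step 2: item-by-item comparison.} Fix $j \in X_i$ and let $A'_j = A \setminus \{j\}$. Because items are assigned to $A_1, A_2, B$ independently, $A'_j$ is independent of where $j$ goes. Two cases arise.
\begin{itemize}[topsep=1pt,itemsep=1pt]
    \item On the event $j \in A$, the term in Step 1 is $v(j \given A'_j)$. Therefore $\E[\One(j \in A)\, v(j \given A'_j)] = \varphi^2\, \E[v(j \given A'_j)]$.
    \item On the event $j \in B$, we have $j \notin A$, so $A = A'_j$ and the left-hand side's term is $v(j \given A) = v(j \given A'_j)$. Therefore $\E[\One(j \in B)\, v(j \given A)] = \varphi^3\, \E[v(j \given A'_j)]$.
\end{itemize}
Dividing the second identity by the first shows that each item's contribution to the left-hand side is exactly $\varphi$ times its contribution to the lower bound from Step 1. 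Summing over $j \in X_i$ gives
\[
    \E\bigg[\sum_{j \in B \cap X_i} v_{-i}(j \given A_{-i})\bigg] \;\leq\; \varphi \cdot \Delta_{-i} \enspace .
\]

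\textbf{Main obstacle.} The delicate point is Step 1: the telescoping must produce marginals conditioned on $A \setminus \{j\}$ (the whole endowment minus $j$), not on a prefix. Only that form makes the conditioning set in Step 2 independent of $j$'s assignment, which is what allows the exact factor $\varphi^3/\varphi^2 = \varphi$ to come out.
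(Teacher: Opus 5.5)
Your proposal is correct and is essentially the paper's proof. The paper uses the same three ingredients: the diminishing-marginals inequality $v_{-i}(A_{-i}\cap X_{-i}) + \sum_{j\in A_{-i}\cap X_i} v_{-i}(j \given A_{-i}\setminus\{j\}) \le v_{-i}(A_{-i})$, the per-item reweighting from the event $j\in B$ to the event $j\in A_{-i}$ by independence (factor $\varphi^3/\varphi^2=\varphi$), and the bound $\vb{w}_{-i}(A_{-i})\le v_{-i}(A_{-i}\cap X_{-i})$ to pass to $\Delta_{-i}$; you just present these steps in the reverse order.
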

\begin{proof}
By the diminishing marginals property of submodular valuations,
\begin{equation} \label{eq:SubmodularContributions}
    v_{-i}(A_{-i} \cap X_{-i}) + \sum_{j \in A_{-i} \cap X_i} v_{-i}(j \given A_{-i} \setminus \{j\}) \quad \leq \quad v_{-i}(A_{-i}) \enspace .
\end{equation}
Then we have
\begin{align*}
    \E\bigg[\sum_{j \in B \cap X_i} v_{-i}(j \given A_{-i})\bigg] \quad &= \quad \varphi^3 \sum_{j \in X_i} \E[v_{-i}(j \given A_{-i}) \given j \in B] \\
    &= \quad \varphi^3 \sum_{j \in X_i} \E[v_{-i}(j \given A_{-i} \setminus \{j\}) \given j \in A_{-i}] \\
    &= \quad \varphi \cdot \E\bigg[\sum_{j \in A_{-i} \cap X_i} v_{-i}(j \given A_{-i} \setminus \{j\})\bigg] \\
    &\leq \quad \varphi \cdot \E[v_{-i}(A_{-i}) - v_{-i}(A_{-i} \cap X_{-i})] \\
    &\leq \quad \varphi \cdot \Delta_{-i} \enspace ,
\end{align*}
where the first line is because each item is in $B$ w.p.\ $\varphi^3$, the second line is because items are independent, the third line is because each item is in $A_{-i}$ w.p.\ $\varphi^2$, and the fourth line is by~\eqref{eq:SubmodularContributions}.
\end{proof}

Substituting \Cref{lemma:EP2AWelfare,lemma:EP2APayment} into~\eqref{eq:SubmodularWelfareBound} yields a welfare lower bound of
\begin{align*}
    & \E[v_1(A_1) + v_2(A_2)] + \frac{\varphi}{1 + \varphi} \E[v_1(X_1 \given A_1) + v_2(X_2 \given A_2)] - \varphi \cdot \Delta \\
    & \qquad\qquad \quad \geq \quad \frac{1}{1 + \varphi} \E[v_1(A_1) + v_2(A_2)] + \frac{\varphi}{1 + \varphi} \E[v_1(X_1 \cup A_1) + v_2(X_2 \cup A_2)] - \varphi \cdot \Delta \\
    & \qquad\qquad \quad \geq \quad \frac{\varphi^2}{1 + \varphi} (v_1(X_1) + v_2(X_2)) + \frac{1}{1 + \varphi} \Delta + \frac{\varphi}{1 + \varphi} (v_1(X_1) + v_2(X_2)) - \varphi \cdot \Delta \\
    & \qquad\qquad \quad = \quad \varphi (v_1(X_1) + v_2(X_2)) + \frac{1 - \varphi - \varphi^2}{1 + \varphi} \Delta \enspace ,
\end{align*}
where the third line comes from~\Cref{lemma:EndowmentWelfare}. Observing that $1 - \varphi - \varphi^2 = 0$ completes the proof of~\Cref{thm:TwoBidderDSICMechanism}.

\end{document}